\documentclass[12pt]{article}

\usepackage[margin=2in,tmargin=1.5in, bmargin=1.5in]{geometry}
\usepackage{float}
\usepackage[utf8]{inputenc}
\usepackage[UKenglish]{babel}
\usepackage[T1]{fontenc}
\usepackage{csquotes}
\usepackage{algorithm}
\usepackage{algpseudocode}
    \algrenewcommand\algorithmicrequire{\textbf{Input:}}
    \algrenewcommand\algorithmicensure{\textbf{Output:}}
\usepackage[dvipsnames]{xcolor}

\usepackage{amsfonts,amsthm,amsmath,amssymb,mathtools}
\allowdisplaybreaks
\usepackage{dsfont,newtxtext,newtxmath,microtype}
\usepackage{makecell}

\usepackage{authblk}
\usepackage{booktabs,multirow}
\usepackage{braket}
\usepackage{cancel}
\usepackage[font=small]{caption}
\usepackage{comment}
\usepackage{diagbox}
\usepackage{enumerate}
\usepackage{float}
\usepackage{fullpage}
\usepackage[section]{placeins}
\usepackage{subcaption}
\usepackage{slashed}
\usepackage{todonotes}
\usepackage{bbm} 
\usepackage{yfonts}
\usepackage{bbding,enumitem}
\usepackage{setspace}
\usepackage{yfonts}

\pgfdeclarelayer{background}
\pgfsetlayers{background,main}

\definecolor{navyblue}{rgb}{0.0, 0.0, 0.5}
\definecolor{LightPink}{rgb}{0.858, 0.188, 0.478}

\usepackage[colorlinks=true, linkcolor=Periwinkle, citecolor=MidnightBlue, urlcolor=black, linktocpage=true]{hyperref}

\newcommand{\ketbra}[1]{\ket{#1}\!\!\bra{#1}}
\DeclarePairedDelimiter\abs{\lvert}{\rvert}
\DeclarePairedDelimiter\norm{\lVert}{\rVert}

\DeclareMathOperator{\tr}{Tr}
\DeclareMathOperator{\polylog}{polylog}
\DeclareMathOperator{\E}{\mathbb{E}}
\newcommand{\vertiii}[1]{{\left\vert\kern-0.25ex\left\vert\kern-0.25ex\left\vert #1 
    \right\vert\kern-0.25ex\right\vert\kern-0.25ex\right\vert}}

\newcommand{\acomm}{\alpha_\mathrm{comm}}

\newcommand{\lcomm}{{\lambda}_\mathrm{comm}}
\newcommand{\lcommklocal}{\widetilde{\lambda}_{k\text{-}\mathrm{local}}}
\newcommand{\lcommi}{{\lambda}_{\mathrm{comm},i}}
\newcommand{\Heff}{H_\mathrm{eff}}
\newcommand{\theff}{\widetilde{H}_\mathrm{eff}}

\newcommand{\cP}{\PC}

\newcommand{\cC}{\mathcal{C}}

\newcommand{\cO}{O}
\newcommand{\sym}{\sigma}

\newcommand{\PC}{\mathcal{P}}

\usepackage[capitalise]{cleveref}
\Crefname{lemma}{Lemma}{Lemmas}
\Crefname{proposition}{Proposition}{Propositions}
\Crefname{definition}{Definition}{Definitions}
\Crefname{theorem}{Theorem}{Theorems}
\Crefname{conjecture}{Conjecture}{Conjectures}
\Crefname{corollary}{Corollary}{Corollaries}
\Crefname{example}{Example}{Examples}
\Crefname{section}{Section}{Sections}
\Crefname{appendix}{Appendix}{Appendices}
\Crefname{figure}{Fig.}{Figs.}
\Crefname{equation}{Eq.}{Eqs.}
\Crefname{table}{Table}{Tables}
\Crefname{item}{Property}{Properties}
\Crefname{remark}{Remark}{Remarks}
\Crefname{fact}{Fact}{Facts}

\newtheorem{theorem}{Theorem}
\newtheorem{definition}[theorem]{Definition}

\newtheorem{lemma}[theorem]{Lemma}

\usepackage{ltablex}

\newcommand\prob\textsc
\makeatletter
\newcommand{\probleminput}[1]{\gdef\@probleminput{#1}}
\newcommand{\problemquestion}[1]{\gdef\@problemquestion{#1}}
\newcommand{\problempromise}[1]{\gdef\@problempromise{#1}}
\DeclareDocumentEnvironment{problem}{}{
\probleminput{}\problempromise{}\problemquestion{}
\leavevmode
}{
  \par\addvspace{0\baselineskip}
  \noindent
  \begin{itemize}
      \item[\textbf{Input:}] \@probleminput
\ifx\@problempromise\empty%
\else%
      \item[\textbf{Promise:}] \@problempromise
\fi%
      \item[\textbf{Output:}] \@problemquestion
  \end{itemize}
}
\makeatother

\usepackage[backend=biber,style=alphabetic,doi=false,isbn=false,url=false,maxbibnames=7,maxcitenames=2]{biblatex}
\bibliography{sambib}
\newbibmacro{string+doi}[1]{\iffieldundef{doi}{#1}{\href{https://dx.doi.org/\thefield{doi}}{#1}}}
\DeclareFieldFormat{title}{\usebibmacro{string+doi}{\mkbibemph{#1}}}
\DeclareFieldFormat[article]{title}{\usebibmacro{string+doi}{\mkbibquote{#1}}}
\DeclareFieldFormat[incollection]{title}{\usebibmacro{string+doi}{\mkbibquote{#1}}}

\usepackage{tcolorbox}

\title{Resource-efficient quantum eigenvalue transform with commutator scaling}

\author[1,2]{\href{}{Arul Rhik Mazumder}\thanks{arul.rhik@gmail.com}}
\author[3]{\href{}{James D. Watson}\thanks{watsonjames@google.com}}
\author[1]{\href{}{Samson Wang}\thanks{samsonwang@outlook.com}}

\affil[1]{Caltech}
\affil[2]{Carnegie Mellon University}
\affil[3]{Google Quantum AI}

\date{}

\begin{document}

{\begingroup
\hypersetup{urlcolor=navyblue}
\maketitle
\endgroup}

\begin{abstract}
  We develop quantum algorithms for estimating properties of general matrix functions of Hermitian matrices, with applications to phase estimation, Green’s function evaluation, and estimating measurement distributions of time-evolved states. The resulting methods exhibit commutator scaling in matrix parameters similar to that usually found for product formulae, lower circuit depth in other parameters, and require only a single ancillary qubit. Our central primitive consists of classically postprocessing randomly chosen product formulae circuits, which mathematically corresponds to an approximation of a Richardson extrapolation. 
  Within our framework, we introduce a protocol for approximating the measurement distributions of quantum states, extending beyond standard observable estimation. We also provide tightened gate complexity bounds for practically relevant systems, including those with $k$-local interactions, long-tailed matrix ensembles, and conserved quantities. Finally, numerical experiments confirm that our method can achieve significantly shallower circuit depths than standard product formulae in certain parameter regimes, and highlight the potential of their heuristic application.
\end{abstract}

\vspace{1em}

\begin{center}
\small
\setstretch{0.85}{
\tableofcontents
}
\end{center}

\section{Introduction}

\yinipar{T}{\sc he} task of processing large matrices is pervasive in scientific computing. Within the realm of quantum algorithms, it can be unifying and useful to frame such problems as \textit{matrix function tasks} -- that is, some task where one extracts some property of some function of a matrix of interest. For instance, ground state problems in chemistry and many-body physics, time evolution, linear systems, and quantum walks can be framed as matrix function problems. In this picture, the quantum singular value transform (QSVT) allows efficient algorithms for smooth functions by instantiating matrix polynomials, and often obtains optimal complexity for such problems \cite{gilyen2018QSingValTransf, martyn2021GrandUnificationQAlgs}. Whilst the QSVT gives a powerful and general tool, it may not be the best suited to early quantum hardware; the QSVT requires many multi-qubit controlled gates and non-trivial ancillary space overhead. 

In this work we focus on \textit{early fault-tolerant} algorithms. Here, we pursue algorithms with analytical runtime guarantees, in a setting where small quantum computers are constrained by limited qubit count and circuit depth. We imagine such quantum computers could have modest use of error correction, or no error correction at all. Whilst overall runtime complexity may not be optimal, our goal is to fit algorithms to these constraints on \textit{quantum} resources by leveraging \textit{classical} power -- in performing preprocessing, in postprocessing data from many different quantum circuits, and in using the power of randomness. We follow a line of work with similar motivations which have proposed algorithms for spectral analysis \cite{somma2019QEE}, ground state problems \cite{lin2022HeisenbergLimited, wan2021RandPhaseEst, wang2023QAlgGSEE, wang2023fasterGSEE, dong2022groundStatePrep, kiss2025earlyFTinPractice}, quantum dynamics \cite{campbell2019randomCompiler, campbell2021earlyFTHubbard} Gibbs state preparation \cite{ding2025end2end}, general matrix functions \cite{wang2023qubitefflinalg, chakraborty2025quantum}, as well as work that has considered practical implementation of the aforementioned approaches \cite{blunt2023statistical, kiss2025earlyFTinPractice}.

An additional aim in our work not shared in all of the above approaches is to specify end-to-end algorithms. For the purpose of this work, by this we refer to algorithms that are specified with (1) efficient classical input and efficiently-obtainable classical output, and (2) all quantum oracles explicitly instantiated. As we aim to propose candidate algorithms for potential concrete application, we argue these two requirements should be satisfied at minimum in order to probe feasibility. 

The central message in our work is that product formulae augmented with classical pre- and post-processing could be a promising route towards early fault-tolerant algorithms. Product formulae (including Trotter-Suzuki formulae) are a method for approximating time evolution, which is itself an algorithm, but is also a pervasively used primitive in quantum algorithms more broadly. They are implementable for Hermitian matrices (such as physical Hamiltonians) whenever the matrix can be decomposed into a linear combination of constituent matrices each of whom can be time-evolved efficiently -- for instance, this includes matrices that can be decomposed into the Pauli basis, or sparse matrices with efficiently computable non-zero row entries \cite{aharonov2003adiabQStateGeneration}. Product formulae require no ancillary overhead to simulate time evolution and their simple structure allow widespread use already in quantum computing experiments today, including recent landmark demonstrations \cite{kim2023evidence, google2025observation, haghshenas2025digitalMagnetism}. 
A key advantage of product formulae is that their gate complexity depends on the commutator structure of the constituent matrices \cite{childs2021TheoryTrotter} -- this is in contrast to all other known methods of simulating time evolution, where complexity depends on often much larger (generic) matrix parameters. Further, this so called \textit{commutator scaling} can be refined when there is a prior on the input state; when it has a well-defined fermion number \cite{su2021NearlyTightTrottInerElect, mcardle2022ExploitingFermionNumber, low2023complexityTrotter} or belongs to a low-energy subspace \cite{Sahinoglu2021Hamiltonian, hejazi2024LowEnergyProduct, mizuta2025trotterizationLowEnergy}.

\subsection{Contributions}

We develop and refine analyses for two simple but general subroutines. The first 
enables the estimation of the following quantities:
\begin{equation*}\label{eq:task-1}
(1):\qquad \text{estimate (i)}:
\text{Tr}[\rho U f(H)] \;\ \text{and\; (ii)}: \text{Tr}[f(H) \rho f(H)^{\dagger} O] \; \text{to additive precision $\varepsilon$}\,,
\end{equation*}
where $\rho$ is a quantum state, $U$ is a unitary, $O$ is an observable, all of which we assume to be efficiently implementable. We denote $f(H)$ as the eigenvalue transform of the function $f$ on a Hermitian matrix $H$ (see Definition \ref{def:eigval-transform}). Our second subroutine allows us to approximate measurement statistics of the quantum state proportional to $f(H) \ket{\psi}$. Specifically, we ask to:
\begin{equation}\label{eq:task-2}
    (2):\qquad \text{return a vector $\vec{v}$ such that}\; \| \vec{v} - \vec{p} \|_2\leq \varepsilon \; \text{for} \; p_i := |\bra{i} U f(H) \ket{\psi} |^2.
\end{equation}
We stress that here we return a full vector whose size scales with the dimension of $H$.\footnote{While it may seem unusual that an efficient algorithm can be given here, we note that efficiency is possible due to the fact that we are asking for an approximation in $\ell_2$-norm, and for a large $\varepsilon$ will output a sparse approximation. This is the same statistical approximation guarantee available in estimating the probability distribution corresponding to any quantum state.} In both cases we are satisfied with an algorithm that succeeds with high success probability.

While it is not uncommon to view the output of a quantum algorithm as a quantum state, most truly end-to-end use cases ultimately require a \textit{classical} output~\cite{dalzell2023EndToEnd}. Thus, we view these two tasks as natural and broadly applicable frameworks that accommodate a lot of quantum algorithms targeting practical applications -- in both estimating scalar quantities or in sampling tasks. To make this concrete, consider a simple example: if \( f \) is the inverse function and \( \rho = \ket{\vec{y}}\!\bra{\vec{y}} \) is a pure state that encodes a data vector $\vec{y}$, this setup corresponds to the well-known quantum linear systems problem~\cite{harrow2009QLinSysSolver}. If the observable \( O \) in Task \hyperref[eq:task-1]{1(ii)} is taken to be a local measurement, then the resulting quantity gives a marginal of the solution to the linear system. If $U$ in Task \hyperref[eq:task-1]{1(i)} is chosen to be a state preparation unitary corresponding to some probe vector (and unpreparation of $\ket{\vec{y}}$), then the output is the overlap of the solution vector $H^{-1}\vec{y}$ with that probe vector. More generally, Tasks \hyperref[eq:task-1]{1} and \hyperref[eq:task-2]{2} can be used to extract spectral information about a target matrix \( H \) (see discussion surrounding Eq.~\eqref{eq:CDF}).


Our first contribution is to give algorithms to instantiate Tasks \hyperref[eq:task-1]{1} and \hyperref[eq:task-2]{2} whenever given a Fourier series approximation to the function $f$, and ability to Trotterize $H$. The central idea is to consider an extrapolation of circuits consisting of product formulae of different step sizes, and to extract estimates of this extrapolation in a sample-efficient manner via randomization. General matrix functions can be compiled as part of the randomization loop by sampling from the Fourier series. From hereon we will refer to this central primitive as \textit{Randomized Extrapolation Trotterization}, and we provide a schematic thereof in Figure \ref{fig:flowchart}. Using this technique, the gate complexity can be substantially improved over algorithms for matrix functions using regular product formulae, with a tradeoff that one needs to collect (what we will argue is) mildly more data from a quantum computer. Importantly, our algorithms exhibit commutator scaling in their dependence on Hamiltonian parameters, similar to that found in product formulae. Further, they only use a single ancillary qubit.

\begin{figure}[t]
    \centering
\includegraphics[width=0.9\linewidth]{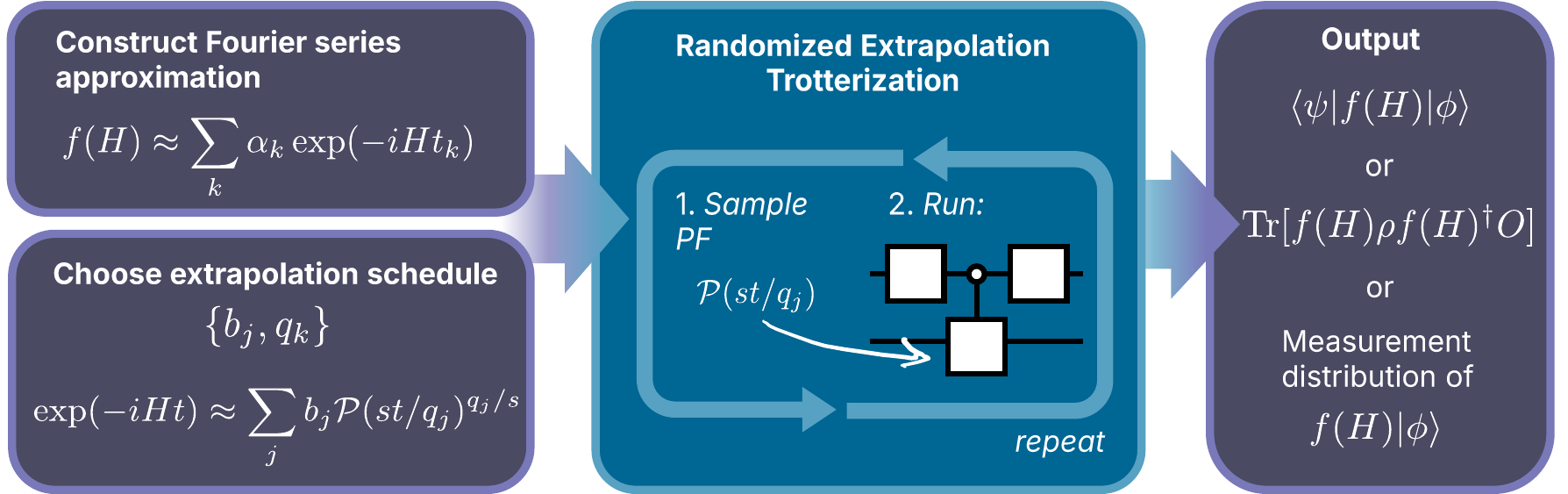}
    \caption{\textbf{Schematic of Randomized Extrapolation Trotterization.} Given any user-defined Fourier series approximation of $f$ on the domain of the eigenspectrum of $H$, given any extrapolation schedule, and given any product formula (PF), Randomized Extrapolation Trotterization enables algorithms that return an estimate of various properties of $f(H)$. Our algorithms randomly sample the product formula at different Trotter step sizes according to a carefully constructed distribution, implement each of them in a Hadamard test, and postprocess the measurement statistics classically.}
    \label{fig:flowchart}
\end{figure}

\begin{table}[h!]
\resizebox{\columnwidth}{!}{
\renewcommand{\arraystretch}{2}
    \centering
    \begin{tabular}{r|c|c|c|c}
        method & \makecell{ancillary \\qubits } & \makecell{max depth per sample \\ $\widetilde{\cO}(\cdot)$ } & \makecell{sample overhead \\${\cO}(\cdot)$} & \makecell{commutator \\ scaling?} \\\hline
        \makecell[r]{Qubitization \\\cite{low2016HamSimQSignProc}}& $\lceil \log(\Gamma) + 3\rceil$ & $\displaystyle \Gamma\Big(\Lambda T+\frac{\log (1 / \varepsilon)}{\log \log (1 / \varepsilon)} \Big)$ & $\displaystyle 1/\varepsilon^2$ & \\
        \makecell[r]{Multiproduct \\ formula \cite{aftab2024multi}}  & $\widetilde{O}(\log\log(\lcomm T/\varepsilon))$ & $\displaystyle \Gamma \lcomm T \log^2(\lcomm T/\varepsilon) $ & $\displaystyle 1/\varepsilon^2$ & \checkmark \\
        \makecell[r]{Product formula \\ \cite{childs2021TheoryTrotter}}  & 1 & $\displaystyle {\Gamma (\alpha^{(p+1)}_{\text{comm}})^{1/p}\, T^{1+1/p}}{(1/\varepsilon)^{1/p}}$ & $\displaystyle 1/\varepsilon^2$ & \checkmark \\
        \makecell[r]{Random compiler \\\cite{wan2021RandPhaseEst}}  & 1 & $\displaystyle \Lambda^2 T^2$ & $\displaystyle 1/\varepsilon^2$ & \\
        \makecell[r]{\textbf{Rand.~Extrap.} \\
        \textbf{Trotterization } } & 1 & $\displaystyle \Gamma \left(\lambda_{\text{comm}} T\right)^{1+1 / p} \log (1 / \varepsilon)$ & $\displaystyle\frac{(\log \log (1/\varepsilon))^2}{\varepsilon^2}$ & \checkmark 
    \end{tabular}
    }
    \caption{\textbf{Compilation of a time signal $\tr[\rho e^{-iHT}]$ to additive precision $\varepsilon$.} In order to allow comparison with the random compiler of \cite{wan2021RandPhaseEst} we have assumed that the matrix $H$ is given as a linear combination of $\Gamma$ Pauli matrices with $\ell_1$-norm of coefficients $\Lambda$, although all other methods in this table take a more general access model. Where product formulae are instantiated, we assume a $p$-th order formula with constant $p$ (see Definition \ref{def:staged_pf}). The quantity $\alpha^{(p+1)}_{\text{comm}} \leq \frac{1}{2}(2 \Lambda)^{p+1}$ is a sum of norms of nested commutators (see Definition \ref{def:acomm}), and $\lcomm \leq 4\Lambda$ is constructed from such sums (see Definition \ref{def:lcomm}), both of which can often be much smaller than $\Lambda$ for models of physical systems. We present a generic scaling for Randomized Extrapolation Trotterization in the final row of the table, with refined expressions to the gate depth presented in Table \ref{tab:time-signal-2}. All expressions generalize naturally to our algorithms for Task \hyperref[eq:task-1]{1} and \hyperref[eq:task-2]{2}. \label{tab:time-signal}}
\end{table}

The fine details of the complexity depend on the properties of the function of interest $f$ and are presented in Theorems~\ref{thm:overlap}, \ref{thm:overlap-with-observable} and \ref{thm:distribution-recovery}. Generally, both algorithms require $\widetilde{\cO}(1/\varepsilon^2)$ samples from quantum circuits to estimate both quantities to additive precision $\varepsilon$, with additional factors again depending on the function. In the following subsection and in Table \ref{tab:time-signal} we elucidate the complexity of a core subroutine which also serves as a warm-up example.

We instantiate our algorithm framework for a number of applications (summarized in Section \ref{sec:applications}): ground state energy estimation, evaluation of Green's functions, and distribution learning for a time-evolved state. For ground state energy estimation, this leads to an algorithm that (1) exhibits commutator scaling akin to that found for product formulae as applied to Hamiltonian time evolution; (2) requires circuit depths which depend only polylogarithmically on the ansatz state overlap; and (3) uses only one ancillary qubit. To the authors' knowledge so far these three properties are only simultaneously achievable via product formulae extrapolation.

Our second contribution is to demonstrate refined gate complexities for our methods in a number of settings that exploit commutator scaling. These were all previously unknown or not explicitly instantiated for extrapolated product formulae circuits -- our results are applicable to both Randomized Extrapolation Trotterization and also prior art in extrapolated product formulae \cite{watson2024exponentiallyReducedTrotter, chakraborty2025quantum}. We cover:

\begin{itemize}
    \item Scenarios where commutator scaling for a product formula is only understood for a fixed order (e.g.~$p=2$). Previous studies could only exploit commutator scaling when scaling is understood at every order.
    \item Interpolation of our scaling with that of randomized product formulae, exploiting the ideas of \cite{gunther2025phase}, which can be particularly advantageous for matrix ensembles with long tails.
    \item Commutator scaling for $k$-local systems, subsuming with it models of power-law interactions. Here, previous studies on extrapolation (\cite{watson2024exponentiallyReducedTrotter, chakraborty2025quantum}) could not exploit this physical assumption due to a divergent commutator factor. Results here are possible due to recent insights of Mizuta \cite{mizuta2026commutator}.
    \item 
    States that have a well-defined quantity corresponding to a matrix (Hamiltonian) symmetry, with total Fermion number as an example.
\end{itemize}

Finally, we present numerical results which demonstrate the advantageous performance of Randomized Extrapolation Trotterization in practice. We show that beyond asymptotic scaling, heuristic optimization of extrapolation schemes allows for strong improved performance over product formulae, whilst still retaining performance guarantees. We also show that outside of rigorous bounds, Randomized Extrapolation Trotterization shows promise in greatly reducing algorithmic error in practice. 

\subsection{Related work}

Recently, it has been shown that simulations of time-evolved observables using product formulae can be classically extrapolated, leading to exponentially improved gate complexities in the error parameter \cite{watson2024exponentiallyReducedTrotter, chakraborty2025quantum}. Such extrapolated product formulae were shown to have gate complexities governed by the quantity $\lcomm$ (see Definition \ref{def:lcomm}) which was found to enjoy commutator scaling in certain settings -- this is the same quantity that appears in the analysis of our methods. 
We note that extrapolated product formulae can be distinguished from (and yet are inspired by) similar mathematical results in multiproduct formulae \cite{aftab2024multi}, which require coherent implementation with a "linear combination of unitaries" subroutine that uses ancillary qubits and amplitude amplification steps. Multiproduct formulae can also be compiled in a randomized fashion
\cite{faehrmann2022randomizing}, though we note that for our applications, this leads to an unwieldy sample overhead that is exponential in the number of Trotter steps.

In \cite{watson2024exponentiallyReducedTrotter} the authors introduce a Trotter extrapolation algorithm for the simulation of time-evolved observables. This is a special case of Task \hyperref[eq:task-1]{1(ii)} where the function of interest is the complex exponential (real time evolution) function, and our algorithm reduces to this prior method.

We remark that Task \hyperref[eq:task-1]{1} is also instantiable with the same gate complexity as our approach using recent work \cite{chakraborty2025quantum}, which introduces a more general primitive in the setting of observable estimation for general quantum circuits. Nevertheless, to tell a self-contained story we still demonstrate how to instantiate Task \hyperref[eq:task-1]{1} using our randomized compiler, and present the matrix function framework in full whilst providing a simpler streamlined proof specified for this task. Further, our method of randomized extrapolation allows for a refined sample complexity, and our new analyses of gate complexity allow for novel refined scaling in Hamiltonian parameters across a number of new settings. Our Task \hyperref[eq:task-2]{2} has not traditionally been considered before in the literature of classical-quantum or early fault-tolerant algorithms, and it opens up a new way to probe output distributions of quantum states despite using an algorithm where matrix processing nor state preparation is fully instantiated coherently.

In the ensuing sections, we elucidate our results by instantiating different examples, and along the way compare the performance of our methods with prior art in the literature.

\subsection{Warm-up: compiling a time signal}

In this section we discuss a special case of Task \hyperref[eq:task-1]{1}: estimating the time signal $\tr[\rho e^{-iHT}]$ for a given time $T$ to precision $\varepsilon$. This serves both as a warm-up and as essentially a core primitive throughout our work (although to obtain refined complexities general functions are compiled with an outer randomization loop, rather than explicitly instantiating this as a subroutine). Further, this gives us a chance to compare the complexity of Randomized Extrapolation Trotterization with other known primitives to instantiate time evolution in a clean transparent way.

Given a method for Hamiltonian simulation, such as a product formula circuit, the time signal can be estimated by taking data from a Hadamard test circuit which uses one ancillary qubit (on top of any ancillary qubits used for the simulation method). The core idea we use, Randomized Extrapolation Trotterization, is simple algorithmically: collect data from randomly chosen product formulae circuits of different Trotter step sizes, from a carefully chosen distribution. This leads to strong suppression of algorithmic errors if step sizes are sufficiently small that they lie within a radius of convergence, to be defined. Mathematically, this procedure enables a randomized compilation of a Richardson extrapolation -- where the extrapolation parameter is the Trotter step size.


We suggest that this approach could be well-suited to early quantum hardware. Randomized Extrapolation Trotterization offloads complexity from coherent quantum circuit depth to increased circuit samples -- yielding shorter circuits by collecting more measurement data from circuits. Further, as we demonstrate below, the increase in sample overhead is very mild by virtue of our randomized protocol.

In order to present results, we first introduce some basic notation.

\begin{definition}\label{def:matrix}
We consider Hermitian matrices \( H \in \mathbb{C}^{2^n \times 2^n} \), expressed as a sum of \( \Gamma \) terms:
\begin{equation}\label{eq:matrix}
    H = \sum_{\gamma=1}^{\Gamma} H_\gamma\,,
\end{equation}
where it is assumed that the unitaries $\exp (-i H_{\gamma}t)$ are instantiable for all $t$ and $\gamma$ in $O(1)$ gate depth.
We define \( \Lambda := \sum_{\gamma=1}^{\Gamma} \|H_\gamma\| \) as the sum of the strength of individual terms, where $\|\cdot\|$ denotes the operator norm.
\end{definition}
Non-Hermitian matrices may be considered in our framework by means of a Hermitian dilation using a single ancillary qubit, and we hereon propagate the assumption of Hermiticity (see Section \ref{sec:preliminaries} for more discussion).

Generically, the gate complexity of algorithms for time evolution depends on $\Lambda$, which can be large for many models of physical systems. However, the complexity of $p$-th order product formulae can be shown to scale with a refined quantity $(\acomm^{(p+1)})^{1/p}$, which we define below.

\begin{definition}[Commutator factor]\label{def:acomm}
Given the decomposition \( H = \sum_{\gamma=1}^{\Gamma} H_\gamma \) from Definition \ref{def:matrix}, we define the order-\(j\) commutator factor
\begin{equation}
    \alpha_{\mathrm{comm}}^{(j)} 
:= 
\sum_{\gamma_1, \dots, \gamma_j = 1}^{\Gamma} \left\|
\bigl[ H_{\gamma_1}, \bigl[ H_{\gamma_2}, \dots, [ H_{\gamma_{j-1}}, H_{\gamma_j} ] \dots \bigr] \bigr] \right\|,
\end{equation}
that is, the sum over norms of each \(j\)-fold nested commutator formed from constituent matrix terms \(H_\gamma\).
\end{definition}

Whilst one always has the generic bound $(\acomm^{(p+1)})^{1/p} \leq 2\Lambda^{1+1/p}$, it is known that in many settings $(\acomm^{(p+1)})^{1/p} \ll \Lambda$ \cite{childs2021TheoryTrotter}, which leads to significant advantage of product formula over other methods in dependence on Hamiltonian parameters.

We present the resources used by Randomized Extrapolation Trotterization to estimate a time signal to additive precision $\varepsilon$ and constant success probability in Table \ref{tab:time-signal}, and contextualize the result against other approaches for Hamiltonian simulation. We remark that for now for this simple subroutine, the same expression for circuit depth is also obtainable with the results from \cite{chakraborty2025quantum}.  We also remark that for all algorithms the overall scaling with $O(1/\varepsilon)$ can be enhanced if coherent preparation access to $\rho$ is available, by means of amplitude estimation \cite{brassard2002AmpAndEst, montanaro2015QMonteCarlo}. We do not include discussions of such techniques in our table nor in the rest of our manuscript for simplicity and focus on optimizing quantum resources per coherent circuit run.

While qubitization achieves the best asymptotic scaling in gate depth in terms of precision and simulation time, as above we argue it may not be the strongest candidate for early fault-tolerant (EFT) devices. This is due to the confluence of a few properties: qubitization has a larger space overhead,\footnote{Here we account for space overhead only in logical qubits. If one wishes to implement error correction, resource counts may become more subtle.} it requires complex multi-qubit-controlled gates, and to the authors' knowledge there are not currently ways to improve its gate complexity by exploiting physical properties of the Hamiltonian such as locality in the way that product formulae do. Thus, we include it in Table \ref{tab:time-signal} as a standard-bearer of a fully-coherent protocol that attains optimal complexity.
Product formulae, conversely, can be more hardware-friendly (especially on non-error-corrected hardware) and can take advantage of commutator scaling in dependence on matrix parameters. However, they suffer from exponentially worse gate complexity with respect to the target precision. The randomized algorithm proposed in~\cite{wan2021RandPhaseEst} (built in the spirit of qDRIFT \cite{campbell2019randomCompiler} yet distinct and catered specifically for this task) sidesteps the exponentially worse error dependence, and is unique in having no dependence on the number of Hamiltonian terms $\Gamma$. However, this comes at the cost of forfeiting commutator scaling and incurring a quadratic scaling with evolution time.  Further, we note that the approach of \cite{wan2021RandPhaseEst} requires a decomposition of $H$ in the Pauli basis, which may not always be efficiently obtainable or be the most useful decomposition. The multiproduct formula of \cite{aftab2024multi} can also sidestep the exponentially worse error dependence of product formulae, but requires additional ancillary qubits and an amplitude amplification subroutine.

In discussion of our algorithm we first look at a generically available circuit depth bound stated in Table \ref{tab:time-signal} (also obtainable with methods of \cite{chakraborty2025quantum}), before discussing refined complexities in the following subsection. Our algorithm retains the advantageous properties of product formulae -- no additional space overhead, near-linear scaling in the time parameter \( T \), and what we will later argue is commutator-sensitive gate complexity -- while achieving precision \( \varepsilon \) using circuits with depth scaling logarithmically with \( 1/\varepsilon \). Thus, extrapolated product formulae yield an exponential improvement in precision dependence compared to standard product formulae methods. By exploiting a randomized scheme, the cost of this comes at a mild increase in the number of samples: a factor polylog-logarithmic in $1/\varepsilon$. Additionally, our approach works as long as $H$ can be decomposed into efficiently simulable Hamiltonians (i.e.~the same as all product formulae methods), which is a more general data access assumption than Pauli access.


\subsection{Gate complexity}\label{sec:into-commutator}

We will now discuss finer points on the gate complexity of our algorithms, remaining in the setting of Table \ref{tab:time-signal} as a guide. 

\subsubsection{Extrapolated commutator factor}

The complexity of our algorithms depends on a quantity $\lcomm$ which we define below.\footnote{In our detailed theorem statements we consider an adjusted definition where we mildly tighten the domain for the supremum, but for practical purposes this will not make a difference to any conclusions drawn in this manuscript.}

\begin{definition}[Extrapolated commutator factor]\label{def:lcomm}
\begin{equation}
\lambda_{\mathrm{comm}} := 
\sup_{\substack{j \in \mathbb{Z}^+ {\geq p} \\ 1 \leq \ell \leq \lfloor j/p \rfloor}}
\bigg(
\sum_{\substack{j_1, \dots, j_\ell \in \mathbb{Z}^+ {\geq p} \\
j_1 + \cdots + j_\ell = j}}
\prod_{\kappa=1}^{\ell}
\frac{\alpha_{\mathrm{comm}}^{(j_\kappa + 1)}}{(j_\kappa + 1)^2}
\bigg)^{\frac{1}{j + \ell}}.
\end{equation}
\end{definition}

This quantity requires understanding of $\acomm^{(j)}$ at every order $j$. For instance, if $\acomm^{(j)} \leq A\cdot c^j$ for constants $A\geq 1$ and $c\geq 0$, then $\lcomm^{1+1/p} = O((\acomm^{(p+1)})^{1/p})$ \cite[Remark 3.2]{chakraborty2025quantum},\footnote{This means that the respective contributions to gate complexity for extrapolated Trotter circuits and regular Trotter circuits are the same asymptotically.} which is known to be the case of second-quantized Hamiltonians in the plane wave dual basis \cite{childs2021TheoryTrotter} (see also Lemma \ref{lem:lcomm-tight-bound}). In general, knowledge of $\acomm^{(j)}$ at every order is not always available, or known bounds are too large and lead to a divergence in $\lcomm$, as is the case for $k$-local systems (see Sections \ref{sec:intro-k-local} and \ref{sec:k-local} for full discussion). In the rest of this subsection we discuss resolutions of these issues, as well as refinements and alternative bounds on general gate complexities, which we instantiate as rows (b) and (c) of Table \ref{tab:time-signal-2}.

\begin{table}[t]
\resizebox{\columnwidth}{!}{
\renewcommand{\arraystretch}{2.2}
    \centering
    \begin{tabular}{r|c|c|c}
        method & \makecell{max depth per sample \\ $\widetilde{\cO}(\cdot)$ } & \makecell{$k$-local \\ $\widetilde{\cO}(\cdot)$ } & \makecell{fixed $p$ bound \\ $\widetilde{\cO}(\cdot)$ } \\\hline
        \makecell[r]{\textbf{Rand.~Extrap.} \\
        \textbf{Trotterization \hyperref[sec:intro-max-bound]{(a)}} } & $\displaystyle \Gamma \left(\lambda_{\text{comm}} T\right)^{1+1 / p} $ & $\displaystyle \Gamma \Lambda^{1/p} T^{1+1 / p} $ & $\displaystyle \Gamma \left(\Lambda T\right)^{1+1 / p} $ \\
        \textbf{Max scaling \hyperref[sec:intro-max-bound]{(b)}} & $\displaystyle \Gamma \cdot \max\left\{  \Lambda T, (\acomm^{(p+1)})^{1 / p} T^{1+ 1 / p}\right\}  $ & $\displaystyle \Gamma \cdot \max\left\{  \Lambda T, \Lambda^{1/p} T^{1+ 1 / p}\right\}  $ & $\displaystyle \Gamma \cdot \max\left\{  \Lambda T, A_p T^{1+ 1 / p}\right\}  $ \\
        \makecell[r]{\textbf{Interpolative} \\
        \textbf{scaling \hyperref[sec:intro-partial]{(c)}}} & $\displaystyle \Gamma_A(\lcomm  T)^{1+ 1 / p} +\Lambda_B^2 T^2  $ & $\displaystyle \Gamma_A \Lambda^{1/p} T^{1+ 1 / p} +\Lambda_B^2 T^2 $ & $\displaystyle \Gamma_A (\Lambda T)^{1+ 1 / p} +\Lambda_B^2 T^2 $
    \end{tabular}
    }
    \caption{\textbf{Refinements to gate complexity.} We catalog two alternative expressions for the circuit depth using our methods for the same task in Table \ref{tab:time-signal} -- one which is independent of $\lcomm$, and the second which provides a strictly interpolative scaling between our generic scaling in \ref{tab:time-signal} and that of the randomized compiler of \cite{wan2021RandPhaseEst} for a matrix decomposition $H = H_A + H_B$. For terseness of presentation we have assumed constant error (thus dropping factors of $\log(1/\varepsilon)$ from each cell). The third column details complexities for the special case where $H$ is a $k$-local Hamiltonian with constant onsite energy and constant $k$. The fourth column details complexities where there is a known (non-trivial) commutator bound $(\acomm^{(p+1)})^{1/p}\leq A_p (< 4\Lambda)$ for some fixed order $p$, but not for generic order. In each column at least one of our refined scalings (row (b) or (c)) unconditionally improve the generic scaling in row (a), under the assumption that $(\acomm^{(p+1))^{1/p}} = \lcomm^{1+1/p}$. As in Table \ref{tab:time-signal}, all expressions generalize naturally to our algorithms for both Task \hyperref[eq:task-1]{1} and \hyperref[eq:task-2]{2}, and expressions for gate depth extend to prior art in extrapolated product formulae.  \label{tab:time-signal-2}}
\end{table}

\subsubsection{Max bound}\label{sec:intro-max-bound}

In many settings, one may only understand the behavior of $\acomm^{(j)}$ at some finite order $j$, rather than at every order. For instance, one may use some analytical or numerical tricks to understand commutator behavior at low order \cite{campbell2021earlyFTHubbard, blunt2025monteCarlo, maxwell2026practical}, without being able to elucidate a meaningful bound at every order. In this case, we can specify an alternative bound on the gate complexity, instantiated as row (b) in Table \ref{tab:time-signal-2}, and we present full discussion in Section \ref{sec:lcomm-fixed-order}. In general, we find that we can always replace 
\begin{equation}
    (\lcomm T)^{1+1/p} \rightarrow \max\left\{  \Lambda T,\,(\acomm^{(p+1)})^{1 / p} T^{1+ 1 / p}\right\}\,,
\end{equation}
i.e~the resulting gate complexity now depends instead on the standard commutator factor at a chosen fixed order $p$, rather than the extrapolated commutator factor.
This retains the exponentially-improved dependence on precision compared to standard product formulae, and the remaining contribution to the gate complexity is bounded by the maximum of the complexity of product formulae, or (optimal) linear dependence in the simulation time without commutator scaling. In this way, the performance of our algorithm either behaves in line with product formulae (with exponentially improved dependence on precision), or scaling similar to that of QSVT, all the while using minimal ancillary qubits. All together, this is a way to bound the performance of Randomized Extrapolation Trotterization without requiring any additional information beyond what is needed to bound the performance of regular product formulae circuits.

\subsubsection{Doubly-Randomized Extrapolation Trotterization}\label{sec:intro-partial}

Certain matrices, particularly physically-motivated matrices, may exhibit particular structure in the decomposition in Eq.~\eqref{eq:matrix}. For instance, electronic-structure Hamiltonians in second quantization can often be dominated by a few large terms and otherwise exhibit a long tail. Here, the randomized algorithm of \cite{wan2021RandPhaseEst} could be useful in exploiting the fact that in these circumstances $\Lambda$ may be smaller than the number of constituent terms $\Gamma$. One may go further -- \cite{gunther2025phase} show that it is possible to \textit{partially} randomize product formulae -- by partitioning $H$ as $H=H_A + H_B$ into an $H_A$ with a small number of terms $\Gamma_A << \Gamma$ of large size, and $H_B$ with the remaining long tail of terms where $\Lambda_B << \Lambda$ if the tail is small. We demonstrate that one can use this idea to compile product formulae that appear in Randomized Extrapolation Trotterization with an inner extrapolation loop. This leads to a doubly randomized compilation of Trotter extrapolation. For the task of compiling a time signal, the complexity is stated in row (c) of Table \ref{tab:time-signal-2}. In the general case, we can always replace
\begin{equation}
    \Gamma (\lcomm T)^{1+1/p} \rightarrow \Gamma_A(\lcomm  T)^{1+ 1 / p} +\Lambda_B^2 T^2
\end{equation}
We see that this interpolates in between our scaling in row (a) and that of the algorithm of \cite{wan2021RandPhaseEst}, thus giving the best scaling of all schemes that only use one ancillary qubit (assuming $\lcomm^{1+1/p} = O( (\acomm^{(p+1)})^{1/p})$). We present a full discussion and algorithm construction in Section \ref{sec:partial-randomization}. 

\subsubsection{$k$-local systems}\label{sec:intro-k-local}

For $k$-local systems on $n$ qubits with maximum single-site energy $g$, it is known that $(\acomm^{(p+1)})^{1/p} = O(kg p\Lambda^{1/p})$ \cite{childs2021TheoryTrotter}. When the product formula order satisfies $p>1$ this can be a significant improvement in dependence on Hamiltonian parameters over non product-formulae methods which scale linearly in the Hamiltonian norm $\Lambda$. However, for extrapolated methods, direct substitution of this bound into our expression for the extrapolated commutator factor $\lcomm$ leads to a divergent quantity. We show that in our extrapolation algorithms (as well as prior art \cite{watson2024exponentiallyReducedTrotter, chakraborty2025quantum}) we can instead consider $\lcomm \rightarrow \lcommklocal$ where for time signal compilation to additive precision $\varepsilon$ we have

\begin{equation}
        (\lcommklocal)^{1+1/p}  = {O}\!\left( kg (p\Lambda^{1/p} + g \log(n g T/\varepsilon)) \right) \,,
\end{equation}
and for general applications, there is an additional additive logarithmic term. 
Thus, the key features of commutator scaling for $k$-local systems are replicated, with mild logarithmic overhead. The analysis is based on recent insights of Mizuta \cite{mizuta2026commutator} on truncated BCH expansions, and we present this in detail in Section \ref{sec:k-local}.

\subsubsection{Exploiting Hamiltonian symmetries}

When product formulae are used on input states with a well-defined Fermion number, the gate complexity has been shown to depend on the Fermionic semi-norms of nested commutators rather than operator norms. This leads to a commutator dependence $\acomm \rightarrow \alpha_{\text{comm}}^{(\eta)}$ which is tighter for systems of low Fermion number compared to number of orbitals \cite{mcardle2022ExploitingFermionNumber, su2021NearlyTightTrottInerElect, low2023complexityTrotter}. We demonstrate that in our setting, gate complexities depend on an analogous quantity \( \lcomm \rightarrow \lambda_{\text{comm}}^{(\eta)}\), and leads to similar scaling benefits. We present the explicit form of \(\lambda_{\text{comm}}^{(\eta)}\) and comparative analysis in Section \ref{sec:Fermionic}. More generally, we show that our randomized extrapolation procedure attains refined scaling for Hamiltonians that possess a symmetry, given an input state that belongs to a particular symmetry sector.

\subsection{Using Trotter extrapolation in practice}\label{sec:intro-extrapolation-in-practice}

So far we have only discussed asymptotic complexities. Here we add some notes on using extrapolated product formulae in practice. Our ensuing discussion also applies to prior art in product formulae extrapolation and all settings that we consider in this work -- for simplicity of exposition, we continue using time signal compilation as an example.

First, extrapolation may be used at will without asking for rigorous error bounds, in the same way it is used extensively in numerical methods. However, one might encounter a scenario for which it is unclear if extrapolation improves the estimate beyond performing no extrapolation at all, for instance if verification is not efficient. For this intermediate setting, we provide the following lemma.

\begin{lemma}[Radius of convergence]\label{lem:radius-of-convergence}
    Let $\PC(sT)$ denote a product formula of fixed order $p$ and step size $sT$. We have 
    \begin{equation}
        \tr[\rho \PC(sT)^{1/s}] - \tr[\rho e^{-iHT}] = \sum_{j \in \sigma \mathbb{Z}_+ \geq p} s^j \delta_j\,, \end{equation}
    where
    \begin{equation}
        |\delta_j| \leq (c \lcomm T)^{j(1+1/p)}\,,
    \end{equation}
    for a specified constant $c$.
    Thus, if $s(c \lcomm T)^{(1+1/p)} \leq C$ for any $C<1$, the error takes the form of a geometrically decaying power series.
\end{lemma}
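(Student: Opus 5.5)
Following Watson and Chakraborty et al., I would expand the effective Hamiltonian of one Trotter step, pass from it to the full evolution through a Dyson series, and control the coefficients combinatorially.

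\textbf{Step 1 (effective Hamiltonian).} Write the $p$-th order formula as $\PC(sT)=\exp(-isT\,H_{\mathrm{eff}}(sT))$. The BCH expansion gives $H_{\mathrm{eff}}(sT)=H+\sum_{j\ge p}(sT)^{j}E_{j+1}$. Here $E_{j+1}$ is a linear combination of $(j+1)$-fold nested commutators of the $H_\gamma$. Using the staged structure of Definition~\ref{def:staged_pf} and the standard BCH coefficient bounds, I expect $\|E_{j+1}\|\le c_0^{\,j}\,\acomm^{(j+1)}/(j+1)^2$ for a constant $c_0$ depending only on $p$. The $(j+1)^{-2}$ factor comes from the Goldberg-type coefficient decay. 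Symmetric formulae contain only even (or, in general, $\sigma$-spaced) powers, which accounts for $j\in\sigma\mathbb{Z}_+$. Then $\PC(sT)^{1/s}=\exp(-iT(H+\sum_{j\ge p}s^jT^jE_{j+1}))$, an exact rewriting that holds whenever $1/s$ is an integer. I would treat $s$ formally, so the error is a function analytic in $s$.

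\textbf{Step 2 (Dyson expansion in the perturbation).} Expand the difference $e^{-iT(H+V_s)}-e^{-iTH}$, where $V_s=\sum_{j\ge p}(sT)^jE_{j+1}$. The $\ell$-th order term is a time-ordered integral over the simplex $\{0\le t_1\le\dots\le t_\ell\le T\}$. It contains $\ell$ insertions of $V_s$, separated by the unitaries $e^{-iH(t_{k+1}-t_k)}$. The simplex has volume $T^\ell/\ell!$, and the unitaries have norm one. Collecting the coefficient of $s^j$ gives
\[
|\delta_j|\le\sum_{\ell\ge1}\frac{T^{\ell}}{\ell!}\sum_{\substack{j_1+\dots+j_\ell=j\\ j_\kappa\ge p}}\prod_{\kappa=1}^{\ell}c_0^{j_\kappa}T^{j_\kappa}\frac{\acomm^{(j_\kappa+1)}}{(j_\kappa+1)^2}.
\]
The trace against $\rho$ only costs a factor $\|\rho\|_1=1$. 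Only $\ell\le\lfloor j/p\rfloor$ contribute, since each $j_\kappa\ge p$.

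\textbf{Step 3 (matching Definition~\ref{def:lcomm}).} By the definition of $\lcomm$, the inner sum is at most $\lcomm^{\,j+\ell}$. Hence $|\delta_j|\le c_0^{j}\sum_{\ell\le j/p}(\lcomm T)^{j+\ell}/\ell!$. Next I bound $(\lcomm T)^{j+\ell}\le \max(1,\lcomm T)^{j(1+1/p)}$, using $\ell\le j/p$. The sum of $1/\ell!$ is at most $e\le e^{j}$. This yields $|\delta_j|\le (c\,\lcomm T)^{j(1+1/p)}$ when $\lcomm T\ge1$. For $\lcomm T<1$, the dominant term has $\ell=1$ and is already at most $(c\lcomm T)^{j}$, so I would adjust the constant, or assume $\lcomm T\gtrsim1$ as in the regime of interest. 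The geometric-series conclusion is immediate: $\sum_j|s^j\delta_j|\le\sum_j C^j<\infty$.

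\textbf{Main obstacle.} I expect Step 1 to be the hardest: getting a clean per-order bound on the BCH coefficients with exactly the $(j+1)^{-2}$ weighting and a single geometric constant $c_0$. The route I would try first is to cite the bound used in \cite{chakraborty2025quantum}, or to rederive it from the integral representation of $\frac{d}{ds}H_{\mathrm{eff}}$ over the stages. A secondary subtlety is justifying the fractional power and the formal series in $s$. I would avoid it by working with the generating function $\exp(-iT(H+V_s))$ directly, which is analytic in $s$ within the stated radius.
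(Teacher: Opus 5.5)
Your proposal is correct and follows the paper's proof essentially step for step. The paper quotes the effective-Hamiltonian bound $\|E_{j+1}\|\le (a_{\max}\Upsilon)^{j+1}\acomm^{(j+1)}/(j+1)^2$ from Watson--Watkins, iterates the variation-of-parameters formula (your Dyson series) to get $\|\tilde E_{j+1}(T)\|\le\sum_{l\le\lfloor j/p\rfloor}(a_{\max}\Upsilon\lcomm T)^{j+l}/l!$, and bounds this by $(e-1)(a_{\max}\Upsilon\lcomm T)^j\max\{1,a_{\max}\Upsilon\lcomm T\}^{j/p}$.

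One correction: your first option for $\lcomm T<1$, "adjusting the constant", cannot work. The $\ell=1$ term scales like $T^{j+1}$, which is not $O(T^{j(1+1/p)})$ as $T\to 0$ when $j>p$. So the stated form genuinely needs your second option, and that is exactly what the paper does by restricting the claim to "late times".
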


Lemma \ref{lem:radius-of-convergence} gives an upper bound on the radius of convergence for Trotter error. Specifically, it states that above a specified Trotter step number $\propto (\lcomm T)^{1+1/p}$, extrapolation decreases the error at an exponential rate with each additional extrapolation point. An end user who does not need a priori error bounds, though wishes to guarantee that extrapolation is reducing algorithmic error may use Lemma \ref{lem:radius-of-convergence} along with their own extrapolation schedule of choice. We refer the reader to proof of Lemma \ref{lem:generic-Richardson-error} and surrounding discussion for the proof.

Second, we remark that in order to write down asymptotic complexities free of extrapolation parameters (as in discussion in earlier subsections), in this manuscript we use the extrapolation scheme of Low-Kliuchnikov-Wiebe \cite{low2019Multiproduct}. However, potentially stronger results for algorithm complexities are achievable with different extrapolation schemes. We also provide bounds on resources for a generic extrapolation scheme, which allows an end user to specify gate counts with rigorous approximation guarantees using their own extrapolation scheme.

\begin{theorem}[Time signal with user-defined extrapolation scheme (informal version of Theorem \ref{thm:time-signal-generic})]\label{thm:heuristic-extrapolation}
    Consider any $p$-th order product formula and consider any extrapolation scheme that cancels powers $p, p+\sigma, ..., p+(m-2)\sigma $ where $\sigma = 2$ if $\PC$ is a symmetric product formula and $\sigma=1$ otherwise. The time signal can be estimated to precision $\varepsilon$ using $C_{\mathrm{sample}}$ calls to Hadamard tests with gate depths at most $C_{\mathrm{gate}}$ where
    \begin{equation}\label{eq:depth-generic-extrapolation}
        C_{\mathrm{gate}} = O\left( \Gamma (\lcomm T)^{1+1/p} \left(\frac{4\|\vec{b}\|}{\varepsilon}\right)^{\frac{1}{\sigma(m-1)+p}} \right)\,,\quad C_{\mathrm{sample}} = O\left( \frac{\|\vec{b}\|_1^2}{\varepsilon^2}\right)\,,
    \end{equation}
    where  $\|\vec{b}\|_1$ is the condition number of the extrapolation schedule (see Definition \ref{def:extrapolation} to come). 
\end{theorem}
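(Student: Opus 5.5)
The plan is to combine the error expansion of Lemma \ref{lem:radius-of-convergence} with a randomized (importance-sampled) implementation of the Richardson combination.

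\textbf{Step 1 (the estimator).} Fix step sizes $s_1,\dots,s_m$ with $1/s_k\in\mathbb{Z}_+$. Let the schedule coefficients $\vec{b}$ satisfy $\sum_k b_k = 1$ and $\sum_k b_k s_k^{j}=0$ for $j\in\{p,p+\sigma,\dots,p+(m-2)\sigma\}$. The target quantity is the extrapolated signal
\begin{equation}
F:=\sum_{k=1}^m b_k \tr[\rho\,\PC(s_kT)^{1/s_k}]\,.
\end{equation}

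\textbf{Step 2 (bias).} Substituting Lemma \ref{lem:radius-of-convergence} into $F$, the constant term gives $\tr[\rho e^{-iHT}]$ because $\sum_k b_k=1$. All powers $s^j$ with $j<p+(m-1)\sigma$ cancel. Hence
\begin{equation}
|F-\tr[\rho e^{-iHT}]|\le \sum_k |b_k|\sum_{j\ge p+(m-1)\sigma} \bigl(s_k (c\lcomm T)^{1+1/p}\bigr)^{j}\,.
\end{equation}
Set $s_{\max}=\max_k s_k$ and $x=s_{\max}(c\lcomm T)^{1+1/p}\le 1/2$. The geometric tail is then at most $2\|\vec{b}\|_1 x^{\sigma(m-1)+p}$.

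Requiring this to be at most $\varepsilon/2$ means choosing
\begin{equation}
1/s_{\max}\asymp (c\lcomm T)^{1+1/p}\,(4\|\vec{b}\|_1/\varepsilon)^{1/(\sigma(m-1)+p)}\,.
\end{equation}
This is the schedule's smallest step number. The largest step number, $1/s_{\min}$, is fixed relative to it by the chosen schedule. Its depth, counted as $\Gamma$ times a constant (the number of stages) per step, gives $C_{\mathrm{gate}}$ up to the schedule-dependent ratio. In the informal statement I would absorb this ratio into the $O(\cdot)$, or state it in terms of the largest step number.

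\textbf{Step 3 (sampling).} Sample the index $k$ with probability $|b_k|/\|\vec{b}\|_1$. For that $k$, run a Hadamard test on $\PC(s_kT)^{1/s_k}$, with the real or imaginary variant chosen at random. Reweight the $\pm1$ outcome by $\|\vec{b}\|_1\,\mathrm{sgn}(b_k)$, including the factor from the real/imaginary selection.

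Each sample is then an unbiased estimator of $F$ bounded in magnitude by $O(\|\vec{b}\|_1)$. Hoeffding's inequality gives additive error $\varepsilon/2$ with constant success probability using $O(\|\vec{b}\|_1^2/\varepsilon^2)$ samples. Combining this with the bias bound from Step 2 gives total error $\varepsilon$.

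\textbf{Main obstacle.} The main obstacle is Step 2. It requires the coefficient bound of Lemma \ref{lem:radius-of-convergence} to hold uniformly across all steps $s_k$ and for all orders $j$, so that the tail can be summed. The estimate must also stay within the radius of convergence, $x<1$, which is what forces the $(\lcomm T)^{1+1/p}$ factor into the depth. Some care is needed to keep the bound in terms of $\|\vec{b}\|_1$ rather than a finer weighted sum $\sum_k|b_k|s_k^{j}$.
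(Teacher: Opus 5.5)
Your proposal is correct and follows essentially the same route as the paper. Theorem \ref{thm:time-signal-generic} rests on Lemma \ref{lem:specialized-Richardson-error}, where the remainder starts at $s^{\sigma(m-1)+p}$, is bounded by $\|\vec{b}\|_1$ times the tail at the largest step size, and is summed geometrically under $s_1(a_{\max}\Upsilon\lcomm T)^{1+1/p}\le 1/2$; it then applies Hoeffding to the importance-sampled Hadamard-test estimator of Algorithm \ref{algo:overlap}. The differences are cosmetic: you choose the Re/Im circuit at random rather than running both, and by quoting the informal Lemma \ref{lem:radius-of-convergence} you drop the factor $\max\{1,(a_{\max}\Upsilon\lcomm T)^{1+1/p}\}$ that the rigorous bound needs when $\lcomm T\lesssim 1$, which is harmless for this informal statement.
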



We remark that this theorem actually exploits a refined analysis of the error expansion of time evolution operators that the extrapolation scheme of Low-Kliuchnikov-Wiebe is unable to take advantage of \cite{low2019Multiproduct} -- this is articulated in a stronger second exponent in the expression in Eq.~\eqref{eq:depth-generic-extrapolation} which is never larger than $1/p$. We provide further discussion and proofs in Section \ref{sec:extrapolation-error}.

\begin{figure}[t]
    \centering
    \includegraphics[width=\linewidth]{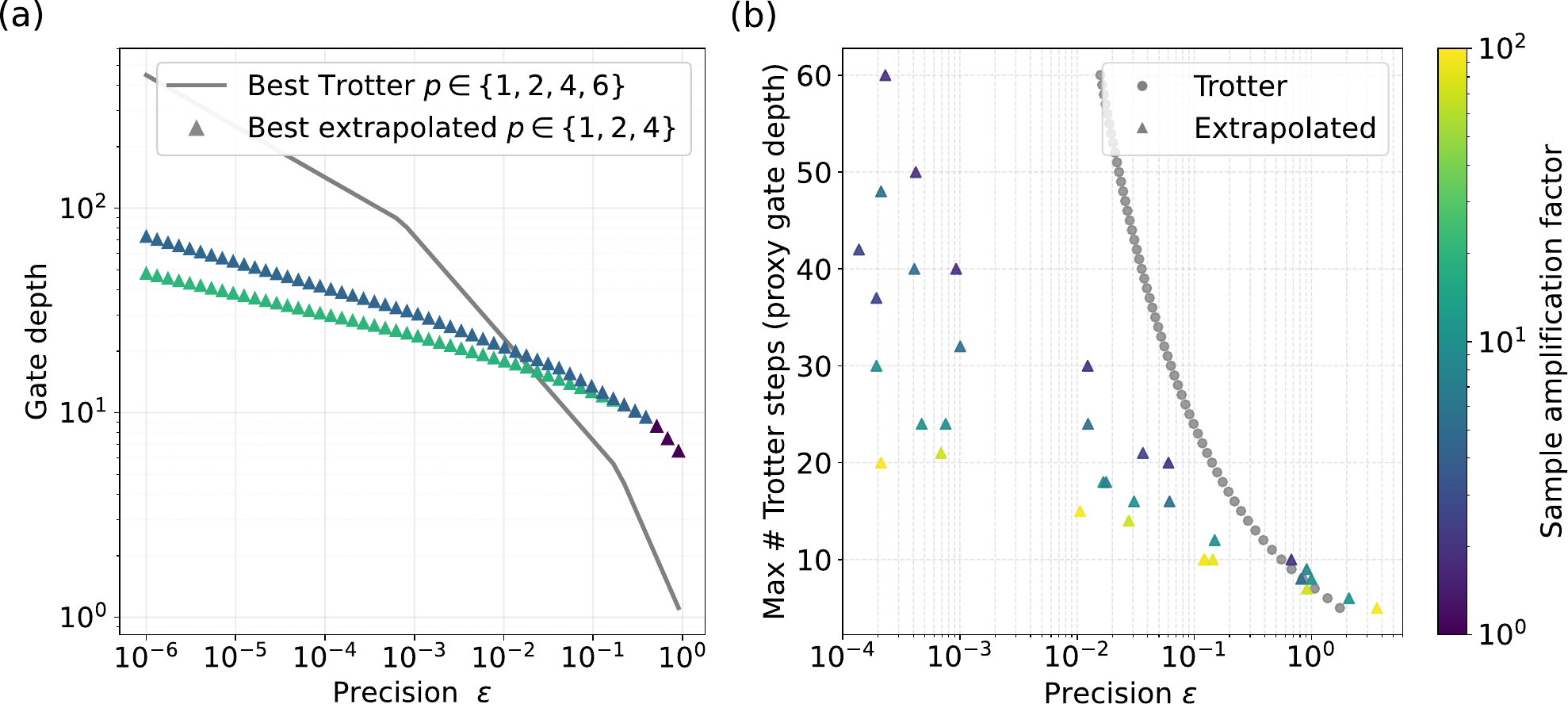}
    \caption{\textbf{Summary of numerical results.}  \textit{(a): Bounds.} For an electronic structure problem we plot the best available bounds on (a quantity proportional to and smaller than) the gate depth for regular Suzuki-Trotter circuits against Randomized Extrapolation Trotterization across orders $p \in {1,2,4,6}$. Here, all extrapolated data requires a multiplicative sample overhead increase of factor 10.  \textit{(b): Empirical Error.} For a spin chain physics problem on 8 qubits we extract the exact precision for $p=2$. In both plots we use the color of the data points to indicate the multiplicative sample amplification factor required of Randomized Extrapolation Trotterization for Task \hyperref[eq:task-1]{1(i)}.}
    \label{fig:intro-numerics}
\end{figure}

Finally, in Section~\ref{sec:numerical-comparison} we numerically demonstrate the promise of heuristic choice of extrapolation schemes using Theorem \ref{thm:heuristic-extrapolation}, as well as heuristic use of Randomized Extrapolation Trotterization. First, we observe stronger performance over the well-conditioned extrapolation scheme of \cite{low2019Multiproduct} (from which our asymptotic bounds are derived) by performing heuristic optimization over extrapolation parameters. For the plane-wave electronic structure Hamiltonian considered, we then compare gate complexities over the optimal Trotter-Suzuki formula for any order for a particular problem posed as a candidate for quantum advantage \cite{babbush2018LowDepthQSimMaterial}. There, under certain assumptions made to generate constant factors, we find (bounds on) gate complexities for extrapolated product formulae fall below that of standard Trotter formulae at any order for any mild target precision below $10^{-2}$, if a 10$\times$ factor additional sample overhead is allowed (flexible tradeoff of more/fewer samples is possible). Moreover, we find that extrapolation schedules generated by simple brute-force search can improve on the asymptotically well-conditioned LKW extrapolation schedule greatly. Second, to gain insight into performance of Randomized Extrapolation Trotterization in practice, for a small system size spin Hamiltonian we exactly evaluate the empirical (true) error and observe that Randomized Extrapolation Trotterization attains a much smaller precision over regular Trotter circuits using circuits of smaller gate depth. 

We present a summary of numerical results in Figure \ref{fig:intro-numerics}, and present full numerical details in Section~\ref{sec:numerical-comparison}. In Figure \hyperref[fig:intro-numerics]{2(a)} we plot bounds on gate depth for the best performing extrapolated product formulae compared to the best performing Suzuki-Trotter product formula over a range of orders for the task of compiling a time signal. In Figure \hyperref[fig:intro-numerics]{2(b)} we plot the empirical error for various extrapolation schemes against empirical Trotter error. In both cases we see improvement over regular product formulae whilst using mild sample overhead (denoted by the color of the data points).

\subsection{Applications}\label{sec:applications}

In this subsection we discuss applications for our algorithm framework. First, we demonstrate an application for ground state energy estimation. Second, we show how to estimate Green's functions in many-body physics. Finally, we demonstrate an application for estimating distribution information of time-evolved states.

\subsubsection{Ground state energy estimation}

We consider the task of
estimating the ground energy $E_{0}$ of a given Hamiltonian $H$. Such algorithms require an ansatz state $\rho$ of good overlap with the ground space, and we denote this overlap $\eta = \tr[\Pi_0 \rho]$, where $\Pi_0$ is the projector onto the ground space. We use the approach proposed by Lin and Tong \cite{lin2022HeisenbergLimited} in which we approximate the Heaviside function using a Fourier series approximation. The key idea is the Heaviside function serves as a filter for eigenvalues, and prepares a cumulative distribution function corresponding to the Hamiltonian's eigenspectrum, weighted by $\rho$. Thus, by probing values of the (approximate) Heaviside function, one can determine the ground state energy by finding the first step of the cumulative distribution. It is important to note that Lin and Tong \cite{lin2022HeisenbergLimited} specify their algorithm in terms of time evolution oracles, which needs to be instantiated to perform full end-to-end analysis. We instantiate their oracle with Randomized Extrapolation Trotterization, compiling the cumulative distribution function in a randomized fashion all in one go as a special case of Task \hyperref[eq:task-1]{1(i)}.

\begin{table}[H]
\renewcommand{\arraystretch}{2.2}
    \centering
    \resizebox{\columnwidth}{!}{
    \begin{tabular}{r|c|c|c||c}
        method & \makecell{ancillary \\qubits } & \makecell{max depth per sample \\ $\widetilde{\cO}(\cdot)$ } & \makecell{samples \\${\widetilde{\cO}}(\cdot)$} & \makecell{max depth per sample \\ $k$-local\ $\widetilde{O}(\cdot)$} \\\hline
        \makecell[r]{Textbook QFT + \\ qubitization \cite{low2016HamSimQubitization}}  & \makecell[c]{$\lceil \log(\Gamma) \rceil + $\\ $ \lceil \log(\varepsilon^{-1}) \rceil$ + 2} & $\displaystyle \Gamma \Lambda\, {\eta}^{-1} {\varepsilon}^{-1}  $ & $\displaystyle  {\eta}^{-1}$ & $\displaystyle \Gamma \Lambda\, {\eta}^{-1} {\varepsilon}^{-1}  $ \\
        \makecell[r]{QET-U + qubitization \\\cite{dong2022groundStatePrep, low2016HamSimQubitization}}  & $\lceil \log(\Gamma) \rceil +3$ & $\displaystyle \Gamma \Lambda \varepsilon^{-1} $ & $\displaystyle  {\eta}^{-1}$ & $\Gamma \Lambda \varepsilon^{-1}$ \\
        \makecell[r]{Textbook QFT \\+ Trotter}  & $\lceil \log(\varepsilon^{-1}) \rceil$ & $\displaystyle \Gamma ({\acomm^{(p+1)}})^{1/p} {\eta}^{-(1+2/p)} {\varepsilon}^{-(1+1 / p)}  $ & $\displaystyle  {\eta}^{-1}$ & $\Gamma \Lambda^{1/p} {\eta}^{-(1+2/p)} {\varepsilon}^{-(1+1 / p)}$ \\
        \makecell[r]{QET-U + Trotter \\\cite{dong2022groundStatePrep}}  & 1 & $\displaystyle \Gamma ({\acomm^{(p+1)}})^{1/p} {\eta}^{-1/p} {\varepsilon}^{-(1+1 / p)}  $ & $\displaystyle  {\eta}^{-1}$ & $\Gamma \Lambda^{1 / p} {\eta}^{-1/p} {\varepsilon}^{-(1+1 / p)}$ \\
        \makecell[r]{Random compiler \\\cite{wan2021RandPhaseEst}}  & 1 & $ {\Lambda^2}{\varepsilon^{-2}}$ & $\displaystyle  {\eta}^{-2}$ & ${\Lambda^2}{\varepsilon^{-2}}$ \\
        \makecell[r]{\textbf{Rand.~Extrap.} \\ \textbf{Trotterization}} & 1 & $\displaystyle \Gamma  \lcomm^{1+1 / p} {\varepsilon}^{-(1+1 / p)} $ & $\displaystyle  {\eta}^{-2} $ & $\Gamma  \Lambda^{1 / p} {\varepsilon}^{-(1+1 / p)} $
    \end{tabular}
    }
    \caption{\textbf{Ground state energy estimation complexity comparison.} We consider a Hamiltonian $H = \sum_{\ell=1}^{\Gamma} h_{\ell} $, with $\Lambda = \sum_{\ell=1}^{\Gamma} \|h_{\ell}\|$, and $\eta = \tr[\rho \Pi_0]$ the initial state overlap. In the final column we explicitly give the gate depth for $k$-local Hamiltonians, assuming constant per-site energy and constant $k$.  \label{tab:phase-estimation}} 
\end{table}

We present our algorithm complexity in Table \ref{tab:phase-estimation} and contextualize it with other approaches in the literature. We compare our algorithm against textbook phase estimation based on the quantum Fourier transform (QFT), the QET-U eigenvalue processing algorithm \cite{dong2022groundStatePrep}, and the statistical phase estimation algorithm of \cite{wan2021RandPhaseEst}. The former two algorithms were originally presented with a time evolution oracle, so we instantiate that oracle with qubitization \cite{low2016HamSimQubitization} and Trotter formulae. In a similar comparison of our method with other modern approaches for compiling time evolution in Table \ref{tab:time-signal}, we observe the following trade-offs. Disregarding Hamiltonian parameters, the algorithm with the best gate complexity is the QET-U algorithm with time evolution instantiated by qubitization. However, this may not be the most amenable approach for early hardware due to arguments made prior. On the other hand, algorithms that utilize regular Trotter circuits introduce an additional gate overhead of ${\eta}^{-1/p}$, whereas other approaches have gate complexity polylogarithmic in $\eta$. 

Our approach combines many favorable aspects of prior art: it uses one ancillary qubit, it attains close to Heisenberg scaling in $\varepsilon$ (for higher order product formulae), and the gate complexity depends logarithmically on the ground state overlap. To our knowledge, our approach is the only method that attains these three features simultaneously. Moreover, our algorithm can have significantly refined dependence on Hamiltonian parameters, as is showcased in the final column of Table \ref{tab:phase-estimation} for the example of $k$-local systems. We recall from Subsection \ref{sec:intro-partial} that interpolative scaling with that of \cite{wan2021RandPhaseEst} is always available, so our approach achieves the best gate depth with respect to Hamiltonian parameters for $k$-local systems (up to logarithmic factors).

\subsubsection{Green's Functions}

Green's functions characterize how a quantum many-body system responds to external perturbations. It can be used directly to evaluate physical quantities such as the spectral density or dynamic structure factor, and its evaluation is a central computational step in DMFT calculations. In this section we focus on Green's function in the frequency domain, although estimation of the time-domain Green's function is also readily available with our methods. We consider the resolvent operator, defined as
\begin{equation}
R(\omega, \eta_{\text{broad}}) = (\omega + i\eta_{\text{broad}} - \hat{H})^{-1},
\end{equation}
where $\hat{H} = H-E_0$ is a normalized Hamiltonian with ground state energy 0, and $\omega$ is the energy variable. A small broadening factor ($\eta_{\text{broad}} > 0$) is introduced to shift the poles of the resolvent into the complex plane -- this determines the resolution of spectral quantities. The (retarded) Green's function is defined as 
\begin{equation}\label{eq:greens-function}
    G(\omega, \eta_{\text{broad}}) = \bra{\psi_0} \hat{a}\, R(\omega, \eta_{\text{broad}}) \hat{a}^{\dag} \ket{\psi_0}
\end{equation}
where $\ket{\psi_0}$ is the ground state and $\hat{a}^{\dag}$ and $\hat{a}$ are creation and annihilation operators respectively.

We observe that the resolvent operator is nothing other than a shifted inverse function. Thus, by considering standard mappings of creation/annihilation operators to Pauli operators, the estimation of Green's function falls neatly into that of Task \hyperref[eq:task-1]{1(i)}. We use the Fourier series approximation of the resolvent found in \cite{keen2021quantumGSGreens} and our algorithm framework to give the following result.

\begin{theorem}[Green's function evaluation (informal version of Theorem~\ref{thm:resolvent_estimate})]
The Green's function defined in Eq.~\eqref{eq:greens-function} can be estimated to additive error~$\varepsilon$ and arbitrary constant success probability with $\widetilde{O}\big(1/\eta_{\text{broad}}^2 \, \varepsilon^2\big)$ runs of circuits with maximum circuit depth $\widetilde{O}\big(\Gamma (  \lambda_{\text{comm}}/{\eta_{\text{broad}}} )^{1+1/p}\big)$.
\end{theorem}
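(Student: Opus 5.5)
We would reduce the Green's function estimate to an instance of Task 1(i) and then apply the general overlap theorem (Theorem~\ref{thm:overlap}) with a Fourier series approximation of the shifted resolvent. The first step is to write $\hat a^\dagger$ and $\hat a$ via a Jordan--Wigner (or similar) mapping as linear combinations of two Pauli strings each, $\hat a^\dagger = \tfrac12(X_j - iY_j)Z_{<j}$. This expresses $G$ as a sum of at most four terms of the form $\bra{\psi_0} P\, R(\omega,\eta_{\text{broad}})\, Q \ket{\psi_0}$, with coefficient $\ell_1$-norm $O(1)$. Each term is of the form $\tr[\rho\, U f(\hat H)]$ after conjugating, with $\rho = Q\ketbra{\psi_0}Q^\dagger$ and $U = P Q^\dagger$ a Pauli unitary. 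Since $\hat H = H - E_0$, the shift by $E_0$ only contributes a global phase $e^{iE_0 t}$ to each Fourier mode, so we can Trotterize $H$ itself. We take $\ket{\psi_0}$ (or a preparation of it) as given, as the statement implicitly does.

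Next we would import the Fourier representation of \cite{keen2021quantumGSGreens}:
\begin{equation*}
(\omega + i\eta_{\text{broad}} - x)^{-1} = -i\int_0^\infty e^{i(\omega + i\eta_{\text{broad}} - x)t}\,dt .
\end{equation*}
We truncate at $T_{\max} = O(\eta_{\text{broad}}^{-1}\log(1/(\eta_{\text{broad}}\varepsilon)))$ and discretize with a quadrature fine enough for the spectral range $[0,\|\hat H\|]$. This gives $f(x) = \sum_k c_k e^{-ixt_k}$ with $\max_k |t_k| = \widetilde O(1/\eta_{\text{broad}})$ and $\|c\|_1 \le \int_0^\infty e^{-\eta_{\text{broad}}t}dt = 1/\eta_{\text{broad}}$, up to constants. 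The approximation error is bounded uniformly on the spectrum by $\varepsilon/2$. Because the quadrature grid only enters logarithmically in the cost, the node count does not affect depth or samples.

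We then plug these parameters into Theorem~\ref{thm:overlap}. The overall estimator samples a Fourier index $k$ with probability $|c_k|/\|c\|_1$, and an extrapolation node according to the randomized Richardson distribution with weight $\|\vec b\|_1 = O(\log\log)$ under the LKW schedule. It runs the Hadamard test on $\mathcal{P}(s_i t_k)^{1/s_i}$ and rescales by $\|c\|_1\|\vec b\|_1$. The output is therefore a bounded random variable of range $O(\|c\|_1 \|\vec b\|_1)$. By Hoeffding, $\widetilde O(\|c\|_1^2/\varepsilon^2) = \widetilde O(1/(\eta_{\text{broad}}^2\varepsilon^2))$ samples suffice, with median-of-means for constant success probability. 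For the bias, Lemma~\ref{lem:radius-of-convergence} applied at each $t_k$ with $|t_k|\le T_{\max}$ requires Trotter step count $\widetilde O((\lcomm T_{\max})^{1+1/p})$ per extrapolation node. We need the extrapolation residual weighted by $\|c\|_1$ to be at most $\varepsilon/2$, which costs only a $\log(1/(\eta_{\text{broad}}\varepsilon))$ factor. This gives depth $\widetilde O(\Gamma(\lcomm/\eta_{\text{broad}})^{1+1/p})$.

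The main obstacle is controlling the bias uniformly over all Fourier modes. The extrapolation error bound must hold simultaneously for every $t_k$, with the largest time dominating, and the $\|c\|_1 = 1/\eta_{\text{broad}}$ amplification must be absorbed into the number of extrapolation points rather than into the depth polynomially. We would handle this by choosing the extrapolation order $m = O(\log(\|c\|_1/\varepsilon))$, so that the geometric decay from Lemma~\ref{lem:radius-of-convergence} beats the $1/\eta_{\text{broad}}$ prefactor. We would also check that the tail truncation and quadrature errors are both $O(\varepsilon)$ once multiplied against the $O(1)$-norm operators $P$ and $Q$.
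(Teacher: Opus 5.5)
Your proposal is correct and follows the paper's route. You take the truncated, discretized resolvent series of \cite{keen2021quantumGSGreens}, with $\|\vec c\|_1=\widetilde{O}(1/\eta_{\text{broad}})$ and maximal time $\widetilde{O}(1/\eta_{\text{broad}})$, substitute it into Theorem~\ref{thm:overlap}, and your Jordan--Wigner Pauli decomposition of $\hat a,\hat a^\dagger$ and the $e^{iE_0 t_k}$ phase bookkeeping make explicit steps the paper only asserts. One small slip: to get $\bra{\psi_0}P f(\hat H) Q\ket{\psi_0}=\tr[\rho\, U f(\hat H)]$ with $\rho=Q\ketbra{\psi_0}Q^\dagger$ you need $U=QP$ rather than $PQ^\dagger$, which differs by a sign whenever $P$ and $Q$ anticommute.
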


We present full details and discussion in Section \ref{sec:greens}.

\subsubsection{Time-evolved states}

In our final application, we estimate the measurement distribution of a time-evolved state in the computational basis. This can also be simply generalized to measurement distributions in any (efficiently-reachable) basis.

We remark that
our algorithm for Task \hyperref[eq:task-1]{1(ii)} allows for the estimation of any probability mass $p_i := |\bra{i}e^{-iHT}\ket{\psi}|^2$ to additive error with the same complexity as in Table \ref{tab:time-signal}. However, by using our algorithm for Task \hyperref[eq:task-2]{2} we can learn about the full probability distribution at once.

\begin{theorem}[Sampling distribution of time-evolved states (informal version of Theorem \ref{thm:distribution-time-evol})]\label{thm:intro-distribution-time-evol}
    We give an algorithm to approximate the probability distribution with values $p_i := |\bra{i}e^{-iHT}\ket{\psi}|^2$ with error $\varepsilon$ in $\ell_2$-distance, with arbitrary constant success probability. The algorithm uses $\widetilde{O}(1/\varepsilon^2)$ runs of circuits with one ancillary qubit and maximum gate depth $ O\big( \Gamma \left( \lambda_{\mathrm{comm}} \, T \right)^{1 + \frac{1}{p}} \log(1/\varepsilon) \big)$.
\end{theorem}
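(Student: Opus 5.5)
The idea is to write the target amplitudes as a Richardson-type combination of Trotterized amplitudes, estimate each probability vector from randomized Hadamard-test samples, and bound the $\ell_2$ error by a statistical term plus an extrapolation bias term. Write $\ket{\phi} = e^{-iHT}\ket{\psi}$, so that $p_i = |\bra{i}\phi\rangle|^2$. Choose an extrapolation schedule $\{(b_k, s_k)\}_{k=1}^m$ (Low--Kliuchnikov--Wiebe) and set $\ket{\phi_k} := \PC(s_k T)^{1/s_k}\ket{\psi}$. By Lemma \ref{lem:radius-of-convergence}, applied entrywise with the rank-one "state'' $\ket{\psi}\!\bra{i}$ in place of $\rho$, we have $\sum_k b_k \bra{i}\phi_k\rangle = \bra{i}\phi\rangle + \bra{i} e\rangle$. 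The residual vector $\ket{e}$ has norm $\|\ket{e}\| \le \varepsilon'$ once the step sizes lie inside the radius of convergence, i.e.\ $s_k (c\lcomm T)^{1+1/p} \le C < 1$, and $m = O(\log(1/\varepsilon))$.

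To make the Lemma's vector version honest, I would check that the BCH/Trotter error expansion holds at the operator level. The coefficients $\delta_j$ are matrix elements of operators $\Delta_j$ with $\|\Delta_j\| \le (c\lcomm T)^{j(1+1/p)}$. Hence the bound holds in operator norm, and the bias vector is controlled in $\ell_2$ directly rather than coordinate by coordinate. This is what lets the error avoid scaling with the dimension.

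The target becomes $p_i \approx |a_i|^2$ with $a_i := \sum_k b_k \bra{i}\phi_k\rangle$. The product $|a_i|^2 = \sum_{k,l} b_k \bar b_l \bra{i}\phi_k\rangle\langle\phi_l\ket{i}$ is a bilinear quantity, exactly of the Task 1(ii) type with $O = \ketbra{i}$. I would estimate it by a randomized one-ancilla circuit. Sample $(k,l)$ with probability $|b_k||b_l|/\|\vec b\|_1^2$ and prepare $\frac{1}{\sqrt2}(\ket0\ket{\phi_k} + \ket1\ket{\phi_l})$ via controlled product formulae. Then measure the ancilla in the $X$ or $Y$ basis and the system in the computational basis, obtaining outcome $(x,i)$.

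Reweighting by the phase of $b_k\bar b_l$ and the $\pm1$ ancilla outcome gives, for each shot, a random vector $\hat v = \|\vec b\|_1^2 \cdot(\pm 1)\, e_i$ times an $O(1)$ constant. Its expectation is exactly $(|a_i|^2)_i$ up to the interference terms, which are real-part combinations and are standard. Averaging $N$ i.i.d.\ copies, the variance in $\ell_2$ is $\E\|\hat v\|^2 / N = O(\|\vec b\|_1^4/N)$. This is dimension-free because each sample is a scaled single basis vector. Chebyshev or median-of-means (boosting to constant success) then gives statistical error $\varepsilon/2$ with $N = O(\|\vec b\|_1^4/\varepsilon^2) = \widetilde O(1/\varepsilon^2)$, since $\|\vec b\|_1 = O(\log m)$ for LKW.

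For the bias, the target is $\big\| (|a_i|^2) - (|\bra{i}\phi\rangle|^2) \big\|_2$. I would bound this by $\|\,|a|^2 - |\phi|^2\|_2 \le \|a-\phi\|_2\,(\|a\|_2 + \|\phi\|_2) \le \varepsilon'(2+\varepsilon')$, using the entrywise identity $|x|^2 - |y|^2 = \mathrm{Re}[(x-y)\overline{(x+y)}]$ together with Cauchy--Schwarz and $\ell_\infty \le \ell_2$. Taking $\varepsilon' = \Theta(\varepsilon)$ finishes the error analysis.

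Depth is the largest Trotter step count, $1/s_k = O((\lcomm T)^{1+1/p}\log(1/\varepsilon))$, times $\Gamma$ per step. This matches the statement, with only the control ancilla needed.

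The main obstacle is the operator-norm bias control. The paper's Lemma \ref{lem:radius-of-convergence} is phrased for a trace with $\rho$, so I must verify that the extrapolated operator $\sum_k b_k \PC(s_kT)^{1/s_k}$ is $\varepsilon'$-close to $e^{-iHT}$ in operator norm, not just in expectation values. I expect the same series argument to work at the operator level, since the $\delta_j$ are derived from operator expansions.
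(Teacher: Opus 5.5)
Your proposal is correct and follows essentially the paper's route, namely Theorem \ref{thm:distribution-recovery} specialized to $f(H)=e^{-iHt}$ with $c=1$. Both use LKW-extrapolated product formulae as a linear combination with $\|\vec b\|_1=O(\log\log(1/\varepsilon))$, randomized generalized Hadamard tests with computational-basis readout needing $O(\|\vec b\|_1^4/\varepsilon^2)$ samples, and an operator-norm bias bound transferred to the distribution. The differences are minor: you bound the bias directly in $\ell_2$ via $|x|^2-|y|^2=\mathrm{Re}[(x-y)\overline{(x+y)}]$, whereas the paper bounds it in $\ell_1$ via the trace norm and then uses $\ell_2\le\ell_1$. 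Your stated ``main obstacle'' is also already settled, since Lemmas \ref{lem:generic-Richardson-error} and \ref{lem:richardson-well-conditioned} are stated in operator norm.
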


We see that the complexity here is equivalent to that of the task of compiling a time signal in Table \ref{tab:time-signal}, up to logarithmic factors in the sample complexity. Let us briefly discuss how this result compares to the standard route of time-evolving a state with a chosen Hamiltonian simulation algorithm directly and measuring the approximated quantum state in the computational basis. In this case, it is sufficient to take the standard gate complexity and ancillary overhead to instantiate the time evolution unitary $e^{-iHT}$ to operator norm error $\varepsilon$, and the sample overhead is $O(1/\varepsilon^2)$. Thus, the comparison of our approach to other algorithms is almost entirely equivalent to that of Table \ref{tab:time-signal} for the task of compiling a time signal -- the only difference is that the sample overhead of our algorithm for Theorem \ref{thm:intro-distribution-time-evol} is a poly-log-logarithmic factor larger than that for our algorithm to compile the time signal. We present further details in Section \ref{sec:time-evol-distribution}.

\subsection{Discussion}

We have performed a broad study of algorithms for matrix function problems where data is extracted from quantum circuits consisting of Hadamard tests of product formulae circuits. 
By choosing a particular (randomized) scheduling of Trotter step numbers and classical postprocessing, we can obtain exponentially reduced gate complexities over using product formulae alone in a parameter corresponding to the precision of simulating time evolution. 
Mathematically, this corresponds to extrapolating product formulae towards zero \textit{algorithmic error}. Meanwhile, we show that the other contributions to the gate complexity behave in the same refined manner as standard product formulae in many settings -- this extends the observations of \cite{chakraborty2025quantum} which demonstrates refined performance when the commutator factor obeys a certain analytical scaling. We also demonstrate an interpolative scaling with the randomized algorithms of \cite{wan2021RandPhaseEst, wang2023qubitefflinalg}, meaning performance can never be worse than this prior art.

Whilst we have explicitly written circuits that utilize controlled time evolution, we remark that in implementation in certain settings, the product formula itself does not have to be explicitly controlled, and can instead be conjugated by controlled gates \cite[Appendix E]{lin2022HeisenbergLimited}; \cite{yoshioka2025krylovDiagonalization}. Recent study has also asked what properties of $A$ can be determined from direct time evolution and no control gates at all \cite{clinton2024quantumPhaseEstimation}, or limited control \cite{patel2026quantumPhaselift}. We frame the following interesting open question in the general context of matrix functions: what properties can be gleaned from $f(H)$ given non-controlled time evolution of $A$?

Our methods may yield promising performance in practice in a heuristic or semi-heuristic fashion in a similar spirit to the efficacy of product formulae observed in experiment and simulation. For this, we provide intermediate results for users who may not need error guarantees (but wish to know that extrapolation is improving performance), or for users who wish to numerically optimize their own extrapolation schedules. For instance, one can optimize an extrapolation scheme along whatever one's preferred axis is between gate complexity and sample complexity. Our own numerical analysis demonstrates that heuristic optimization can be useful in practice and significantly improve upon gate complexities of known well-conditioned extrapolation schemes, particularly at larger product formula order. We leave open space for exploration of refined optimization schemes beyond the simple brute-force search considered in our work.

A final important direction is to see if additional useful properties of product formulae also hold for extrapolated product formulae. It is known for $k$-local systems product formulae perform better on input states that lie in a low-energy subspace \cite{Sahinoglu2021Hamiltonian, hejazi2024LowEnergyProduct, mizuta2025trotterizationLowEnergy}, and that product formulae have stronger average-case guarantees than worst-case bounds provide \cite{chen2021conTrotter, Zhao2021HamiltonianSW}. Whether the same applies for extrapolated product formulae is an open question which we leave for future work.


\section{Preliminaries}\label{sec:preliminaries}

\textit{Basic notation.} For vectors, we denote $\|\vec{v}\|_p$ as the $\ell_p$ norm.
For operators, we denote $\|H\|$ as the spectral norm and denote $\|H\|_p$ as the Schatten-$p$ norm. To facilitate colloquial exposition, we will use the notation $a \overset{\varepsilon}{\approx} b$ to signify $|a-b|\leq \varepsilon$ for scalars and $A \overset{\varepsilon}{\approx} B$ to signify $\|A-B\|\leq \varepsilon$ for operators. We will often discuss nested commutators, for which we adopt the simplifying notation
\begin{equation}
    \left[H_{\gamma_{1}} ,\cdots, H_{\gamma_j}\right]:= \left[H_{\gamma_{1}}, \cdots\left[H_{\gamma_{j-1}}, H_{\gamma_j}\right]\right]\,.    
\end{equation}
We use $\widetilde{O}(\cdot)$ as big-$O$ notation with polylogarithmic factors omitted. 

\textit{Product formulae.} In our work we consider general product formulae, which encompass Lie--Trotter and higher-order Suzuki formulas.

\begin{definition}[Staged Product Formula {\cite{childs2021TheoryTrotter}}]\label{def:staged_pf}
Given a matrix $H$ of the form in Definition \ref{def:matrix}, a staged product formula \( \PC(t) \) of symmetry class $\sigma$ is an approximation to \( e^{-iHt} \) of the form
\begin{equation}\label{eq:staged_product_formula}
\PC(t) := \prod_{\upsilon=1}^{\Upsilon} \prod_{\gamma=1}^{\Gamma} e^{-i t \, a_{(\upsilon, \gamma)} \, H_{\pi_{\upsilon}(\gamma)}},
\end{equation}
where:
\begin{itemize}
  \item \( \Upsilon \in \mathbb{Z}^+ \) is the number of \emph{stages},
  \item \( a_{(\upsilon, \gamma)} \in \mathbb{R} \) are real-valued \emph{coefficients},
  \item \( \pi_\upsilon\) are {permutations} on the $\Gamma$ constituent terms of $H$,
  \item $\sigma$ is the symmetry class where $\sigma=2$ if $\PC(t)^{-1} = \PC(-t)$, or $\sigma=1$ otherwise
\end{itemize}
We denote $a_\mathrm{max} \coloneqq \max_{\upsilon,\gamma} \abs{a_{(\upsilon,\gamma)}}$.
\end{definition}

The order of a product formula denotes how well it approximates \( e^{-iHt} \) in the small-time limit. Specifically, a product formula is defined to be of order \( p \in \mathbb{Z}^+ \) if it satisfies
\begin{equation}\label{eq:product_formula_error}
\|\PC(t) - e^{-iHt}\| = O(t^{p+1}) \quad \text{as } t \to 0.
\end{equation}
To give a concrete example, the first order Trotter formula takes the form 
\begin{equation}
    \PC_1(t) := \prod_{\gamma=1}^{\Gamma} e^{-itH_{\gamma}}\,,
\end{equation}
i.e.~it has $\Upsilon = 1$, and it is non-symmetric. Higher order Suzuki formulae are symmetric, and the order $2k$-th formula has number of stages $\Upsilon = 2\cdot 5^{k-1}$. Throughout, we assume a cost model where each $e^{-itH_{\gamma}}$ can be implemented in unit time (gate depth) for any $t$.\footnote{In practice, if $H_{\gamma}$ is a Pauli string with $w$ non-identity Paulis, the depth of implementing $e^{-itH_{\gamma}}$ is $\log(w)$, with $O(w)$ $T$ gates.} When using product formulae for Hamiltonian simulation of some target time $T$, it is common to divide $T$ into $r$ equal pieces, so that 
\begin{equation}
    \PC(T/r)^r \approx e^{-iHT}\,,
\end{equation}
where the error in approximation is $O(T^{p+1}/r^p)$ and thus controllable by setting a large enough $r$. We refer to $r$ throughout as the \textit{Trotter step number}, and we will frequently use its inverse which we denote as $s=1/r$. It will be useful to consider the effective Hamiltonian $\Heff$ that a product formula generates, defined by the relation
\begin{align}\label{eq:effective-hamiltonian}
        \cP(t) = e^{-it\Heff(t)}\,.
\end{align}
Throughout all our analytical results we assume that $p = O(1)$.

\textit{Matrix functions.} When we refer to matrix functions throughout our manuscript we consider the eigenvalue transform.

\begin{definition}[Eigenvalue transform]\label{def:eigval-transform}
For Hermitian matrix $H$ with eigendecomposition $H = \sum_{i} e_i \ketbra{e_i}$ we denote
    \begin{equation}
        f(H) = \sum_{i} f(e_i) \ketbra{e_i}\,.
    \end{equation}
\end{definition}

Non-hermitian matrices can be embedded in a larger Hermitian matrix with the aid of one ancillary qubit as $\bar{H} = \ket{0}\!\!\bra{1}\otimes H + \ket{1}\!\!\bra{0} \otimes H^{\dag}$. In this dilation, the eigenvalue transformation of $\bar{H}$ for odd functions gives the singular value transform of $H$ in its off-diagonal block.

\textit{Richardson extrapolation.} Finally, we introduce some preliminaries about Richardson extrapolation. 

\begin{definition}[Extrapolation schedule]\label{def:extrapolation}
    Consider a functional form
    \begin{equation}
        g(\delta) = g(0)  + \sum_{j=1}^n g_j \delta^{\sigma_j} + O(\delta^{\sigma_{n+1}}) \,,
    \end{equation}
    where $\sigma_j$ are ascending integer powers and $\delta$ is some small parameter. We say that an extrapolation schedule of order $m\leq n$ is a list of tuples $\{(b_k, q_k) \}_{k=1}^m$ that satisfy 
    \begin{equation}\label{eq:vandermonde-system}
        V \vec{b} = (1,0,...,0)
    \end{equation}
    where $V$ is the $m \times m$ (generalized) Vandermonde matrix with entries $V_{jk} = (\delta/q_k)^{\sigma_{j-1}}$ with $\sigma_0 :=0$. This guarantees that
    the quantity $G^{(m)}(\delta) = \sum_{k=1}^m b_k\, g(\delta/q_k)$ satisfies
    \begin{equation}
        \left|G^{(m)}(\delta) - g(0)\right| \leq \|\vec{b}\|_1\,\Big(\sum_{j=m}^n g_j \delta^{\sigma_j} + O(\delta^{\sigma_{n+1}})\Big)\,,
    \end{equation}
    i.e.~that $G^{(m)}(\delta)$ serves as an estimator of $g(0)$ to leading order error $O(\delta^{\sigma_m})$, amplified by $\|\vec{b}\|_1$. We refer to $\|\vec{b}\|_1$ as the condition number of the extrapolation schedule.
\end{definition}

Throughout our manuscript we will index extrapolation factors in ascending order $q_1 \leq \dots \leq q_m$. A useful fact is that for structured error series where the error powers to be canceled are equally spaced, $\sigma_j = a j$ with $a \in \mathbb{Z} \setminus \{0\}$, the system~\eqref{eq:vandermonde-system} admits the closed-form solution
\begin{equation}\label{eq:exact-formula}
b_k = (V^{-1})_{k1} = \prod_{i \neq k}\frac{c_i^{a}}{c_i^{ a} - c_k^{a}}\,,\qquad c_k := \delta/q_k\,.
\end{equation}
Otherwise, one may solve~\eqref{eq:vandermonde-system} numerically.

\section{Series expansion for Trotterized time evolution}

In this section, we derive a error series expansion for product formulae, and use it to demonstrate conditions under which extrapolated product formulae obtain exponentially suppressed errors. This will form the theoretical basis of our Randomized Extrapolation Trotterization protocol. 

\subsection{Operator error series}

Our first goal is to express Trotterized time evolution as a power series in the inverse Trotter step number \( s:=1/r \). Characterization of this series expansion will be the key tool to allow use of Richardson extrapolation. We start by quoting a specific incarnation of the Magnus expansion.


\begin{lemma}[Effective Hamiltonian Error Series (\cite{watson2024exponentiallyReducedTrotter}, Lemma 2)]\label{lem:Effective_Hamiltonian_Error_Series}
    Let $\cP(t)$ be an $\Upsilon$-staged $p$-th order product formula (as defined in Definition \ref{def:staged_pf}) of symmetry class $\sigma$, and let $\Heff(t)$ be the corresponding effective Hamiltonian of $\cP$ defined by the relation in Eq.~\eqref{eq:effective-hamiltonian}. Suppose that there exists $J\in \mathbb{Z}_+$ and $C\in\mathbb{R}_+$ such that $\sup_{j\geq J} \acomm^{(j)}\left(a_\mathrm{max}\Upsilon \abs{t}\right)^j\leq C$.
    Then the effective Hamiltonian can be written as a convergent series
    \begin{align}
        \Heff(t) = H + \sum_{j = 1}^\infty E_{j+1} t^j \,,
    \end{align}
    where 
    $E_j$ has bounded size
    \begin{align}
        \norm{E_j} \leq \frac{(a_\mathrm{max}\Upsilon)^j}{j^2}\acomm^{(j)}\,.
    \end{align}
\end{lemma}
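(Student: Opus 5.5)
The plan is to obtain the series for $\Heff(t)$ from the BCH (equivalently Magnus) expansion applied to the staged product formula, and to bound each order by nested-commutator norms.

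\textbf{Step 1: time-dependent representation.} Write $\cP(t)$ as the time-ordered exponential of a piecewise-constant generator. Reparametrize the $\Upsilon\Gamma$ exponentials $e^{-ita_{(\upsilon,\gamma)}H_{\pi_\upsilon(\gamma)}}$ as evolution over a unit "fictitious time" interval split into $\Upsilon\Gamma$ slots. In slot $(\upsilon,\gamma)$ the generator is $t\,a_{(\upsilon,\gamma)}\Upsilon\Gamma\, H_{\pi_\upsilon(\gamma)}$; alternatively use slots of uneven length so that $|a|$ is absorbed cleanly. The Magnus expansion then gives
\[
-it\Heff(t)=\sum_{j\ge1}\Omega_j(t),
\]
where $\Omega_j$ is a sum of $j$-fold nested commutators of the $-ita_{(\upsilon,\gamma)}H_{\gamma'}$, weighted by simplex integrals.

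\textbf{Step 2: organize by order.} The $j$-th Magnus term is homogeneous of degree $j$ in $t$. Setting $E_j t^{j-1}:=i\Omega_j(t)/t$ gives $\Heff=H+\sum_j E_{j+1}t^j$, because the first-order term equals $\sum_{\upsilon,\gamma} a_{(\upsilon,\gamma)}H_{\pi_\upsilon(\gamma)}=H$. The consistency condition $\sum_\upsilon a_{(\upsilon,\gamma)}=1$ is implied by the formula having order $p\ge1$. The order-$p$ conditions would further give $E_2=\dots=E_p=0$, but this is not needed for the stated bound.

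\textbf{Step 3: bound $\norm{E_j}$.} Expand each $j$-fold nested commutator multilinearly in the $\Upsilon$ stages. The ordered choices of terms $(\gamma_1,\dots,\gamma_j)$ across stages give at most $\Upsilon^j$ stage assignments. Each carries a coefficient product bounded by $a_\mathrm{max}^j$ and a commutator norm $\norm{[H_{\gamma_1},\dots,H_{\gamma_j}]}$. Summing over $\gamma$'s yields $\acomm^{(j)}$.

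The Magnus/BCH combinatorial prefactor (the integral over the ordered simplex together with the known coefficient structure) must then be controlled by $1/j^2$. I would take this from the Goldberg/Dynkin-type representation of BCH coefficients, or from the bound $|\text{coefficient}|\le 1/j^2$ established for staged formulae in \cite{childs2021TheoryTrotter}, after rewriting the Dynkin expansion in right-nested form.

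\textbf{Step 4: convergence.} By Step 3,
\[
\norm{E_{j+1}t^j}\le \frac{(a_\mathrm{max}\Upsilon|t|)^{j+1}}{|t|\,(j+1)^2}\,\acomm^{(j+1)}.
\]
Under the hypothesis this is at most $C/(|t|(j+1)^2)$ for $j+1\ge J$, which is summable, so the series converges absolutely. This is the statement to be proved, which is just Lemma 2 of \cite{watson2024exponentiallyReducedTrotter}, so that argument could be cited directly.

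\textbf{Main obstacle.} The subtle point is that convergence of the series does not by itself identify its sum with $\Heff$ (the logarithm), since the Magnus expansion's usual convergence criterion is $\int\norm{A}<\pi$. I would handle this by analytic continuation. The series defines an analytic function on the disk where it converges. It agrees with $\frac{i}{t}\log\cP(t)$ for small $|t|$, where the standard Magnus convergence holds. Hence $\exp(-it\cdot\text{series})=\cP(t)$ on the whole disk by the identity theorem, and this relation is all that Eq.~\eqref{eq:effective-hamiltonian} requires.
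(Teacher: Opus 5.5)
The gap is in Step~3. The factor $1/j^2$ is the only non-routine content of this lemma, and you defer it to sources that do not deliver it in the required form.

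Dynkin's formula combined with the triangle inequality provably falls short. Take the first-order formula $\cP(t)=e^{X}e^{Y}$ with $X=-itH_1$, $Y=-itH_2$ (so $\Upsilon=a_\mathrm{max}=1$). Its degree-two Dynkin terms are $\tfrac12[X,Y]-\tfrac14[X,Y]-\tfrac14[Y,X]$. Bounding term by term therefore gives only $\norm{E_2}\le\norm{[H_1,H_2]}$. The lemma asserts $\norm{E_2}\le\tfrac14\acomm^{(2)}=\tfrac12\norm{[H_1,H_2]}$, and this holds with equality here. Since the constant is tight, no bookkeeping that loses anything will do.

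The other sources you name do not close this step either. A Goldberg-coefficient route would need a uniform bound on the multi-letter Goldberg coefficients of a product of $\Upsilon\Gamma$ exponentials, which you have not established. The commutator-scaling analysis of \cite{childs2021TheoryTrotter} is built on leading-order error representations, not an all-orders bound on BCH coefficients.

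The ingredient you need is the Strichartz (Mielnik--Pleba\'nski) formula for the Magnus terms of $\cP(t)=\mathcal{T}\exp\int_0^1A(s)\,ds$:
\[ \Omega_j=\sum_{\sigma\in S_j}\frac{(-1)^{d_\sigma}}{j^2\binom{j-1}{d_\sigma}}\int_{1\geq s_1\geq\cdots\geq s_j\geq 0}\bigl[A(s_{\sigma(1)}),\ldots,A(s_{\sigma(j)})\bigr]\,ds\,, \]
where $d_\sigma$ is the number of descents of $\sigma$ and the commutator is right-nested as in the paper's notation. The argument then runs as follows:
\begin{enumerate}
\item Bound $\binom{j-1}{d_\sigma}^{-1}\le 1$.
\item The map $(\sigma,s)\mapsto(s_{\sigma(1)},\ldots,s_{\sigma(j)})$ sends $S_j$ times the ordered simplex onto $[0,1]^j$ bijectively up to a null set. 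Hence $\norm{\Omega_j}\le j^{-2}\int_{[0,1]^j}\norm{[A(u_1),\ldots,A(u_j)]}\,du$.
\item Use your slots of length $1/(\Upsilon\Gamma)$ carrying $A=-it\,\Upsilon\Gamma\,a_{(\upsilon,\gamma)}H_{\pi_\upsilon(\gamma)}$. The cube integral then equals $|t|^j\sum_{\upsilon_1,\ldots,\upsilon_j}\sum_{\gamma_1,\ldots,\gamma_j}\prod_k|a_{(\upsilon_k,\gamma_k)}|\,\norm{[H_{\pi_{\upsilon_1}(\gamma_1)},\ldots,H_{\pi_{\upsilon_j}(\gamma_j)}]}$.
\item Because each $\pi_\upsilon$ is a bijection, this is at most $(a_\mathrm{max}\Upsilon|t|)^j\acomm^{(j)}$. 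With $E_j=i\,\Omega_j/t^j$ you get exactly the claimed bound.
\end{enumerate}

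With this in place, the rest of your outline goes through. The first-order term is $H$, and absolute convergence follows from the hypothesis. Your analytic-continuation step makes explicit something the paper leaves to its citation of \cite{watson2024exponentiallyReducedTrotter}; the paper gives no proof of its own.

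One small addition is needed there. The hypothesis at $t$ also holds for every $t'$ with $|t'|\le|t|$, so the M-test gives uniform convergence on the closed disk. You need this because $t$ itself may lie on the boundary of the disk of convergence, and the identity theorem only reaches the open disk; continuity carries the identity to $t$.
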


We can refine the above statement for higher order or symmetric product formulae.

\begin{lemma}[Refined effective Hamiltonian error series](\cite[Lemma 3]{watson2024exponentiallyReducedTrotter})\label{lem:hamiltonian-error-series2}
    Assume the conditions of Lemma \ref{lem:Effective_Hamiltonian_Error_Series}. The error operators $E_{j+1}$ are zero for all $j<p$, and for symmetric product formulae $E_{j+1}=0$ for odd $j$. Thus, we can write
    \begin{align}
        \Heff(t) = H + \sum_{j \in \sigma \mathbb{Z} \geq p}E_{j+1} t^j\,, 
    \end{align}
    where $\sigma \in \{1,2\}$ denotes the symmetry class of the product formula.
\end{lemma}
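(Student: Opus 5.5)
The statement to prove is Lemma \ref{lem:hamiltonian-error-series2}: the vanishing of $E_{j+1}$ for $j<p$, and for odd $j$ in the symmetric case. The plan is to use the convergent series of Lemma \ref{lem:Effective_Hamiltonian_Error_Series} together with uniqueness of power series coefficients. By Lemma \ref{lem:Effective_Hamiltonian_Error_Series}, for $|t|$ small enough $\Heff(t)=H+\sum_{j\geq1}E_{j+1}t^j$ converges absolutely in operator norm, so $\Heff(t)$ is real-analytic near $t=0$ with Taylor coefficients $E_{j+1}$. It therefore suffices to show that $\Heff(t)-H=O(t^p)$ as $t\to0$, and that in the symmetric case $\Heff$ is an even function of $t$.

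\textbf{Order condition.} Write $\PC(t)=e^{-it\Heff(t)}$ and $D(t):=\Heff(t)-H$. Suppose, for contradiction, that the first nonzero coefficient is $E_{j_0+1}$ with $j_0<p$, so $D(t)=E_{j_0+1}t^{j_0}+O(t^{j_0+1})$. Apply the Duhamel formula
\[
e^{-it(H+D)}-e^{-itH}=-i\int_0^1 e^{-i(1-u)t(H+D)}\,tD\,e^{-iutH}\,du .
\]
Expanding the right-hand side in $t$, its lowest-order term is $-i\,t^{j_0+1}E_{j_0+1}+O(t^{j_0+2})$, because the exponentials equal $I+O(t)$. The order-$p$ assumption \eqref{eq:product_formula_error} says the left-hand side is $O(t^{p+1})$. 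Since $j_0+1\leq p<p+1$, matching coefficients forces $E_{j_0+1}=0$, a contradiction. Hence $E_{j+1}=0$ for all $j<p$.

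\textbf{Symmetric case.} Take $\sigma=2$, so $\PC(t)^{-1}=\PC(-t)$. Then
\[
e^{it\Heff(t)}=\PC(t)^{-1}=\PC(-t)=e^{it\Heff(-t)} .
\]
For $|t|$ small, both $t\Heff(t)$ and $t\Heff(-t)$ have norm below $\pi$, so the principal matrix logarithm is single-valued on them and determines them uniquely. Therefore $t\Heff(t)=t\Heff(-t)$, which gives $\Heff(t)=\Heff(-t)$ for $t\neq0$ and, by continuity, at $t=0$ as well. An even analytic function has only even powers in its Taylor series, so $E_{j+1}=0$ for all odd $j$.

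Combining the two parts, the only surviving indices are $j\geq p$ with $j\in\sigma\mathbb{Z}$, which is the stated form of the series.

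The main technical point is the logarithm-uniqueness step. It requires $\|t\Heff(t)\|<\pi$, which holds for small $t$ because $\Heff(t)\to H$ as $t\to0$. It also requires that $\Heff$ in Lemma \ref{lem:Effective_Hamiltonian_Error_Series} is defined through this principal branch. Both are guaranteed in the convergence regime of that lemma, and agreement for small $t$ is enough to pin down the Taylor coefficients.
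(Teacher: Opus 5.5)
Your proof is correct. The paper gives no argument of its own for this lemma; it imports it from \cite[Lemma 3]{watson2024exponentiallyReducedTrotter}. Your two steps are the standard argument behind that result: first, match the lowest nonvanishing term $E_{j_0+1}t^{j_0+1}$ against the order-$p$ condition via Duhamel; second, use $\PC(t)^{-1}=\PC(-t)$ together with injectivity of the exponential near zero to get $\Heff(t)=\Heff(-t)$.

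Your closing caveat about branches is unnecessary. Any $X$ with $\|X\|<\pi$ satisfies $\mathrm{Log}(e^{iX})=iX$ for the principal logarithm. Hence $e^{it\Heff(t)}=e^{it\Heff(-t)}$ already forces $t\Heff(t)=t\Heff(-t)$ for small $t$, regardless of how $\Heff$ was constructed.
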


For convenience, we express the effective Hamiltonian as
\begin{equation}
    H_{\mathrm{eff}}(t) = H + \Delta(t)\,,
\end{equation}
where $\Delta(t) = \sum_{j=1}^{\infty}E_{j+1}t^{j}$ and $H$ is the Hamiltonian of interest that we are trying to simulate. We can apply the variation of parameters formula to this expression to write
\begin{equation}\label{eq:var-of-params-1}
    \cP^{1/s}(sT) = e^{-iTH_{\mathrm{eff}}(sT)} = e^{-iHT} + \int_{0}^{T} e^{-i(T-\tau)H} i\Delta(sT) e^{-i\tau H_{\mathrm{eff}}(t)}d\tau.
\end{equation}
This formula can then be iterated to prove the following lemma.
\begin{lemma}\label{lem:time-evol-error-series}
    Let $\cP(t)$ be an $\Upsilon$-staged $p$-th order product formula of symmetry class $\sigma$ and subsume the conditions in Lemma \ref{lem:Effective_Hamiltonian_Error_Series}. 
    Then, the approximation error (operator) of $\cP^{1/s}(sT)$ with inverse Trotter step number $s$ compared with an exact evolution of $H$
    may be expressed as
    \begin{equation} \label{Eq:Approx_Time_Evolution_High_Order}
       \cP^{1/s}(sT) - e^{-iHT}= \sum_{j\in \sym\mathbb{Z}_+\geq p} s^j \tilde{E}_{j+1}(T)\,,
    \end{equation}
    where, $\tilde{E}_{j+1}(T)$ are operators whose operator norm is bounded as 
    \begin{align}
        \norm{\tilde{E}_{j+1}(T)} 
        &\leq \sum_{l=1}^{\lfloor j/p \rfloor  } \frac{(a_\mathrm{max}\Upsilon T \lcomm)^{j+l}}{l!}\,.
    \end{align}
    where we have denoted
    \begin{equation}
        \lcomm:= \sup _{\substack{j \in \sigma \mathbb{Z}_{+} \geq p, \\ 1 \leq l \leq \lfloor j/p \rfloor}}\lambda_{j, l}\,, \qquad
        \lambda_{j, l}:=\bigg(\sum_{\substack{j_1 \ldots j_1 \in \sigma \mathbb{Z}_{+} \geq p \\ j_1+\cdots+j_l=j}} \prod_{\kappa=1}^l \frac{\acomm^{\left(j_\kappa+1\right)}}{\left(j_\kappa+1\right)^2}\bigg)^{1 /(j+l)}\,.
    \end{equation}
\end{lemma}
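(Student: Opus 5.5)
Our plan is to iterate the variation-of-parameters identity \eqref{eq:var-of-params-1} into a Dyson series in $\Delta(sT)$. We then expand each $\Delta$ using the refined series of Lemma~\ref{lem:hamiltonian-error-series2} and collect powers of $s$. Concretely, $\Delta(sT) = \sum_{j\in\sigma\mathbb{Z}\geq p} E_{j+1} s^j T^j$, and iterating the integral equation $l$ times gives
\begin{equation*}
\cP^{1/s}(sT) - e^{-iHT} = \sum_{l=1}^{\infty} i^l \int_{0\le \tau_1\le\cdots\le\tau_l\le T} e^{-i(T-\tau_l)H}\Delta(sT)e^{-i(\tau_l-\tau_{l-1})H}\cdots \Delta(sT) e^{-i\tau_1 H}\, d\tau_1\cdots d\tau_l .
\end{equation*}
Convergence of the iteration follows because $\|\Delta(sT)\|$ is finite under the hypothesis of Lemma~\ref{lem:Effective_Hamiltonian_Error_Series}. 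Each remainder term carries a factor $T^l\|\Delta\|^l/l!$, so the Dyson series converges absolutely.

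Substituting the series for $\Delta$ into the $l$-th term yields a sum over tuples $(j_1,\dots,j_l)$ with each $j_\kappa\in\sigma\mathbb{Z}_+\geq p$. Such a tuple contributes $s^{j_1+\cdots+j_l}T^{j_1+\cdots+j_l}$ times a time-ordered integral of a product of the $E_{j_\kappa+1}$ interleaved with unitaries. We define $\tilde E_{j+1}(T)$ as the sum of all contributions with $j_1+\cdots+j_l=j$. Since each $j_\kappa\geq p$, we need $l\le\lfloor j/p\rfloor$. Since each $j_\kappa$ is a multiple of $\sigma$, so is $j$; for $\sigma=1$ this is trivial. Hence the expansion runs over $j\in\sigma\mathbb{Z}_+\geq p$, as claimed. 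Rearranging the double sum is justified by absolute convergence.

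For the norm bound, we use unitarity and the simplex volume $T^l/l!$, together with $\|E_{j_\kappa+1}\|\le (a_\mathrm{max}\Upsilon)^{j_\kappa+1}\acomm^{(j_\kappa+1)}/(j_\kappa+1)^2$ from Lemma~\ref{lem:Effective_Hamiltonian_Error_Series}. Then
\begin{equation*}
\|\tilde E_{j+1}(T)\| \le \sum_{l=1}^{\lfloor j/p\rfloor}\frac{T^{j+l}}{l!}(a_\mathrm{max}\Upsilon)^{j+l}\sum_{j_1+\cdots+j_l=j}\prod_{\kappa}\frac{\acomm^{(j_\kappa+1)}}{(j_\kappa+1)^2} = \sum_{l=1}^{\lfloor j/p\rfloor}\frac{(a_\mathrm{max}\Upsilon T\lambda_{j,l})^{j+l}}{l!},
\end{equation*}
and bounding $\lambda_{j,l}\le\lcomm$ finishes the proof.

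The main obstacle is bookkeeping rather than any deep estimate. The power $T^{j+l}$ must combine correctly: $T^j$ comes from $t=sT$ inside $\Delta$, and $T^l$ from the time integrals. The exponent $j+l$ on $a_\mathrm{max}\Upsilon$ comes from $E_{j_\kappa+1}$ carrying power $j_\kappa+1$.

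We must also justify the interchange of summations. We would do this by showing absolute convergence of $\sum_j s^j \sum_l (\cdots)$ under the stated hypothesis, which bounds $\acomm^{(j)}(a_\mathrm{max}\Upsilon|t|)^j$ uniformly. Alternatively, we can treat everything as formal power series in $s$ and note that the lemma only asserts the coefficient identity and bounds.
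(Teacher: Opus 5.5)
Your proposal is correct and follows the paper's proof essentially step for step. Both iterate the variation-of-parameters identity into a Dyson series in $\Delta(sT)$, expand $\Delta$ via Lemma~\ref{lem:hamiltonian-error-series2}, regroup by total power $j$ (which forces $l\le\lfloor j/p\rfloor$), and bound the result using unitarity, the simplex volume $T^l/l!$, the estimate on $\|E_{j+1}\|$ from Lemma~\ref{lem:Effective_Hamiltonian_Error_Series}, and $\lambda_{j,l}\le\lcomm$. The only difference is cosmetic: the paper iterates $K$ times and sends the remainder (bounded by $T^K\|\Delta(sT)\|^K/K!$) to zero, whereas you use the full series and justify the regrouping by absolute convergence, a step the paper leaves implicit.
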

We provide the full proof in \cref{sec:proof_approx_series}.

\subsection{Extrapolation error}\label{sec:extrapolation-error}

In this subsection we use the error series on the time evolution operator to characterize the error incurred for any extrapolation scheme.


\begin{lemma}[Extrapolation error for time signal]\label{lem:generic-Richardson-error}
    Let $\PC$ be an $\Upsilon$-staged $p$-th order product formula of symmetry class $\sigma$. Take any $m$-term extrapolation schedule $R=\{(b_k, q_k)\}_{k=1}^m$ of inverse Trotter step sizes $s_k=s/q_k$, which cancels the powers $\{\sigma, {2 \sigma}, \ldots, {(m-1)\sigma} \}$, where $\sigma m \geq p$. For a target evolution time $T$ denote
    \begin{equation}
        \cP^{(R)}_{p,m}(T):= \sum_{k=1}^m b_k \cP^{1/s_k}(s_kT)\,,
    \end{equation}
    as the extrapolation of product formulae according to this schedule. We consider a Hamiltonian which assumes the conditions of Lemma \ref{lem:time-evol-error-series}. Choose the largest inverse Trotter step number $s_1$ such that  $s_1 (a_{\max } \Upsilon \lcomm T)^{1+1/p}\leq 1/2$. 
    Then, the error in the extrapolation of a Trotterized time signal, as compared to an exact time signal, satisfies
    \begin{equation}
    \left\| \cP^{(R)}_{p,m}(T) - e^{iHT} \right\| \leq 4 \|\vec{b}\|_1 s_1^{\sigma m} \cdot \max\big\{1,  \left(a_{\max } \Upsilon \lcomm T \right)^{\sigma m(1+1/p)} \big\}\,.
    \end{equation}
    for any quantum state $\rho$, where $\|\vec{b}\|_1=\sum_k\left|b_k\right|$, and we denote
    \begin{equation}\label{eq:lcomm}
        \lcomm:= \sup _{\substack{j \in \sigma \mathbb{Z}_{+} \geq \sigma m \\ 1 \leq l \leq \lfloor j/p \rfloor}}\lambda_{j, l}\,, \qquad
        \lambda_{j, l}:=\bigg(\sum_{\substack{j_1 \ldots j_1 \in \sigma \mathbb{Z}_{+} \geq p \\ j_1+\cdots+j_l=j}} \prod_{\kappa=1}^l \frac{\alpha_{\text {comm }}^{\left(j_\kappa+1\right)}}{\left(j_\kappa+1\right)^2}\bigg)^{1 /(j+l)}\,.
    \end{equation}
\end{lemma}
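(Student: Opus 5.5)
We would start from the operator error series of Lemma \ref{lem:time-evol-error-series}, applied at each step size $s_k = s_1/q_k$ (with $q_1\le\dots\le q_m$ and, as in the statement, $s_1$ the largest inverse Trotter step number). For each $k$ we write $\cP^{1/s_k}(s_kT) = e^{-iHT} + \sum_{j\in\sigma\mathbb{Z}_+\geq p} s_k^j \tilde E_{j+1}(T)$. The operators $\tilde E_{j+1}(T)$ do not depend on $s$, so taking the linear combination with weights $b_k$ gives
\[
\cP^{(R)}_{p,m}(T) - e^{-iHT} = \Big(\sum_k b_k - 1\Big)e^{-iHT} + \sum_{j\in\sigma\mathbb{Z}_+\geq p} \Big(\sum_k b_k q_k^{-j}\Big) s_1^j\, \tilde E_{j+1}(T).
\]
The Vandermonde conditions of Definition \ref{def:extrapolation} say that $\sum_k b_k = 1$ and that $\sum_k b_k q_k^{-j}=0$ for $j\in\{\sigma,\dots,(m-1)\sigma\}$. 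Every admissible power with $j<\sigma m$ is a multiple of $\sigma$ in this range, so it is cancelled. Only the terms with $j\ge \sigma m$ survive. For these we use $q_k\ge 1$, which gives $|\sum_k b_k q_k^{-j}|\le \|\vec b\|_1$. It is exactly the restriction to $j\geq\sigma m$ that lets the supremum defining $\lcomm$ in \eqref{eq:lcomm} range only over $j\ge\sigma m$. The bounds of Lemma \ref{lem:time-evol-error-series} go through with this smaller supremum, because only the $\lambda_{j,l}$ with $j\ge \sigma m$ are ever used.

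Next we bound the tail. Write $x := a_{\max}\Upsilon\lcomm T$. Lemma \ref{lem:time-evol-error-series} gives $\|\tilde E_{j+1}\|\le \sum_{l=1}^{\lfloor j/p\rfloor} x^{j+l}/l!$. Using $l\le j/p$ we bound $x^{j+l}\le \max\{1,x\}^{j(1+1/p)}$: if $x\ge1$ we use $l\le j/p$, and if $x<1$ we have $x^{j+l}\le 1$. The factor $\sum_l 1/l!$ is at most $e-1<2$. The surviving error is then at most
\[
2\|\vec b\|_1 \sum_{j\ge\sigma m} \big(s_1 \max\{1,x\}^{1+1/p}\big)^j .
\]

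The main obstacle is the regime $x<1$, where the stated prefactor is $\max\{1,\cdot\}=1$ and the hypothesis only controls $s_1 x^{1+1/p}$, which does not bound $s_1$ itself. In that regime we would instead keep the finer bound $x^{j+l}\le x^{j}$ and hope to obtain the form $s_1^{\sigma m}$ times a geometric series in $s_1x$. This requires $s_1$ bounded, which holds automatically since $s_1=1/r\le 1$. That is enough to control $\sum_{j\ge\sigma m}s_1^j x^j$ only when $s_1x\le 1/2$, so we may need $x\le1/2$ or a case split with a mildly worse constant. In the regime $x\geq 1$, the ratio of the geometric series is $y:=s_1x^{1+1/p}\le 1/2$. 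The sum is then at most $2y^{\sigma m}=2s_1^{\sigma m}x^{\sigma m(1+1/p)}$, and with the factor $2$ from $\sum_l 1/l!$ this gives the constant $4$. Combining the two regimes yields $4\|\vec b\|_1 s_1^{\sigma m}\max\{1,x^{\sigma m(1+1/p)}\}$.

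Finally, since $\|\rho\|_1=1$, the operator-norm bound transfers immediately to time signals: $|\tr[\rho A]|\le\|A\|$ for any quantum state $\rho$.
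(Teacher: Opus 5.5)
Your route is the paper's. You expand each $\mathcal{P}^{1/s_k}(s_kT)$ with the error series of Lemma~\ref{lem:time-evol-error-series} and let the Vandermonde conditions remove every power below $\sigma m$. You bound each surviving coefficient by $\|\vec b\|_1 s_1^j$ and bound $\|\tilde E_{j+1}(T)\|\le (e-1)x^j\max\{1,x\}^{j/p}$, with $x:=a_{\max}\Upsilon\lcomm T$. Summing a geometric series of ratio at most $1/2$ then gives the constant $2(e-1)<4$. Your remark that only the $\lambda_{j,l}$ with $j\ge\sigma m$ are ever used is exactly what licenses the restricted supremum in $\lcomm$; the paper uses this tacitly. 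There is one minor slip: with $s_k=s/q_k$ the coefficient bound comes from $s_k\le s_1$, i.e.\ $q_k\ge q_1$, whereas writing $s_k=s_1/q_k$ and invoking $q_k\ge 1$ silently assumes $q_1=1$.

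The one real gap is the regime $x<1$. You leave it open, so as written the constant $4$ is not established under the stated hypothesis. The paper's proof has the same hole: its geometric-series step needs $s_1x\max\{1,x\}^{1/p}\le 1/2$, which follows from $s_1x^{1+1/p}\le 1/2$ only when $x\ge 1$. For example, $p=1$, $s_1=1$, $x=0.7$ satisfies the hypothesis but has $s_1x=0.7$.

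You can close it with what you already have, provided you keep the powers of $x$ instead of bounding $x^{j+l}\le 1$. For $x<1$ and $l\ge 1$ we have $x^{j+l}\le x^{j+1}$. Writing $y:=s_1x$, the error is at most $(e-1)\|\vec b\|_1\,x\,y^{\sigma m}/(1-y^{\sigma})$. The target is $4\|\vec b\|_1 s_1^{\sigma m}=4\|\vec b\|_1\,y^{\sigma m}x^{-\sigma m}$, so it suffices that $(e-1)x^{1+\sigma m}\le 4(1-y^{\sigma})$.

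\begin{itemize}
\item If $y\le 1/2$, this is immediate since $e-1<2$.
\item If $y>1/2$, the hypothesis reads $x^{1/p}\le 1/(2y)$, so $x^{1+\sigma m}\le x^{1+p}\le (2y)^{-p(p+1)}$. Also $y\le x$ (because $s_1\le 1$), which forces $y\le 2^{-p/(p+1)}$. The function $(e-1)(2y)^{-p(p+1)}-4(1-y^{\sigma})$ is convex on $[1/2,\,2^{-p/(p+1)}]$ and negative at both endpoints, so the constant $4$ survives.
\end{itemize}

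If any constant will do, $y\le 2^{-1/2}$ gives the bound $(e-1)/(1-2^{-1/2})<6$ at once. Alternatively, under the stronger reading $s_1\max\{1,x\}^{1+1/p}\le 1/2$ of the hypothesis, which guarantees what the paper's step needs, your unified display already finishes both regimes.
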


\begin{proof}
From Lemma \ref{lem:time-evol-error-series}, recall we have the error series in $s$ as
\begin{equation} 
   \cP^{1/s}(sT) - e^{-iHT}= \sum_{j\in \sym\mathbb{Z}_+\geq p} s^j  \tilde{E}_{j+1}(T) \,.
\end{equation}
Extrapolation removes all terms of size up to and including $O(s^{\sigma(m-1)})$ in the series. We denote extrapolation error (operator) as 
\begin{equation}
    \cP^{(R)}_{p,m}(T) - e^{-iHT} = \sum_{k=1}^m b_k   R_{\sigma(m-1)}(T, s_k)\,.
\end{equation}
where $R_{\sigma(m-1)}(T, s_k)$ denotes the Taylor remainder of degree $\sigma(m-1)$, taking the form
\begin{equation}\label{eq:R-exp}
    R_{\sigma(m-1)}(T, s_k):=\sum_{\substack{j \in \sigma \mathbb{Z}_{+} \\ j \geq \sigma m}} s_k^j \tilde{E}_{j+1}(T) \,,
\end{equation}
for each inverse Trotter step $s_k$. The Taylor remainder for inverse Trotter step $s_k$ has size 
\begin{align}\label{eq:remainder-bound}
    \left\|R_{\sigma(m-1)}(T, s_k)\right\| &\leq  \sum_{\substack{j \in \sigma \mathbb{Z}_{+} \\ j \geq \sigma m}} s_k^j\left\|\tilde{E}_{j+1}(T)\right\|  \leq \sum_{\substack{j \in \sigma \mathbb{Z}_{+} \\ j \geq \sigma m}} s_k^j \sum_{l=1}^{\lfloor j/p\rfloor} \frac{\left(a_{\max } \Upsilon \lcomm T \right)^{j+l}}{l!} \,,
\end{align}
where we have used Lemma \ref{lem:time-evol-error-series}. Further, note that the inner sum (bound on $\|\tilde{E}_{j+1}(T)\| $) satisfies
\begin{align}
    \sum_{l=1}^{\lfloor j/p\rfloor} \frac{\left(a_{\max } \Upsilon \lcomm T \right)^{j+l}}{l!} &\leq  \left(a_{\max } \Upsilon \lcomm T \right)^j \sum_{l=1}^{\lfloor j/p\rfloor} \frac{\left(a_{\max } \Upsilon \lcomm T \right)^{l}}{l!} \\
    & \leq \left(a_{\max } \Upsilon \lcomm T \right)^j \eta^{j/p} (e-1)\,,
\end{align}
where in the final line we have used the fact that $\sum_{l=1}^{\infty} \frac{1}{l!} \leq (e-1)$, and denoted $\eta = \max\{1, a_{\max } \Upsilon \lcomm T\}$.

We can bound the sum in the remainder in Eq.~\eqref{eq:remainder-bound} using the largest inverse step number $s_1$.  Using this fact and denoting $\vec{b}:=(b_1,..,b_m)$ as the vector of extrapolation coefficients, the size of the error can be bounded as 
\begin{align}\label{eq:richardson-error-intermediate}
    \| \cP^{(R)}_{p,m}(T) - e^{-iHT} \| &\leq \|\vec{b}\|_1 \cdot \left\|R_{\sigma(m-1)}(T, s_1)\right\| \\ 
    &\leq \|\vec{b}\|_1 \cdot (e-1)  \sum_{\substack{j \in \sigma \mathbb{Z}_{+} \\ j \geq \sigma m}} s_1^j \eta^{ j/p} \left(a_{\max } \Upsilon \lcomm T \right)^j \\
    &\leq \|\vec{b}\|_1 \cdot (e-1)  \left(s_1 a_{\max } \Upsilon \lcomm T \right)^{\sigma m} \eta^{ \sigma m/p} \sum_{\substack{j \in \sigma \mathbb{Z}_{+} \\ j \geq 0}} s_1^j \left(a_{\max } \Upsilon \lcomm T \right)^j \eta^{ j/p} \\
    &\leq \|\vec{b}\|_1 \cdot (e-1)  \left(s_1 a_{\max } \Upsilon \lcomm T \right)^{\sigma m} \eta^{ \sigma m/p} \sum_{\substack{j \in \sigma \mathbb{Z}_{+} \\ j \geq 0}}  \left(\frac{1}{2} \right)^j \\
    &\leq 4 \|\vec{b}\|_1 \eta^{\sigma m/p}  \left(s_1 a_{\max } \Upsilon \lcomm T \right)^{\sigma m}\,,
\end{align}
where the penultimate line is true if $s_1 (a_{\max } \Upsilon \lcomm T)^{1+1/p}\leq 1/2$.
\end{proof}

We remark that within the above analysis we have also provided a proof of Lemma \ref{lem:radius-of-convergence}. Namely, we have shown that there is an error series 
\begin{equation} 
   \cP^{1/s}(sT) - e^{-iHT}= \sum_{j\in \sym\mathbb{Z}_+\geq p} s^j  \tilde{E}_{j+1}(T) \,,
\end{equation}
where for late times, $\|\tilde{E}_{j+1}(T)\| \leq (e-1) (a_{\max}\Upsilon\lcomm T)^{j(1+1/p)}$. Thus, if  the inverse Trotter step size $s$ satisfies $s(a_{\max}\Upsilon\lcomm T)^{j(1+1/p)} \leq 1/(e-1)$, the error series is geometrically decaying.

Returning to our above bound on extrapolation error, we can now simply derive a sufficient number of Trotter steps to attain a desired precision $\varepsilon$.

\begin{lemma}[Trotter step number for extrapolated circuits]\label{lem:generic-Richardson-error-2}
Subsume the setting of Lemma \ref{lem:generic-Richardson-error}. In order to attain precision $\| \cP^{(R)}_{p,m}(T) - e^{-iHT} \| \leq \varepsilon \leq 4$, it is sufficient for the costliest product formula to use a number of Trotter steps no more than 
    \begin{equation}\label{eq:generic-Richardson-error-2}
        r_{\max} = q_m \left\lceil \frac{1}{q_1} \max \Big\{ 1,\left(a_{\max } \Upsilon \lcomm T\right)^{1+1/p} \Big\}\left(\frac{4\|\vec{b}\|_1}{\varepsilon}\right)^{\frac{1}{\sigma m}}\right\rceil\,.
    \end{equation}
\end{lemma}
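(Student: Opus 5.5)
The plan is to read Lemma \ref{lem:generic-Richardson-error} as a bound on the error in terms of the largest inverse step $s_1$, and then choose $s_1$ as large as possible subject to two conditions. Abbreviate $\Theta := \max\{1, (a_{\max}\Upsilon\lcomm T)^{1+1/p}\}$. Since $\max\{1,x^{\sigma m(1+1/p)}\} = \max\{1,x^{1+1/p}\}^{\sigma m}$, the conclusion of Lemma \ref{lem:generic-Richardson-error} reads $\|\cP^{(R)}_{p,m}(T) - e^{-iHT}\| \leq 4\|\vec{b}\|_1 (s_1\Theta)^{\sigma m}$, valid whenever $s_1 (a_{\max}\Upsilon\lcomm T)^{1+1/p} \leq 1/2$. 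Strictly the lemma only needs the hypothesis; it does not need $s_1$ to be the largest such value. It therefore suffices to pick $s_1$ so that both $s_1\Theta \leq (\varepsilon/(4\|\vec{b}\|_1))^{1/(\sigma m)}$ and the convergence condition hold.

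\textbf{Choice of $s_1$.} I would set
\begin{equation}
s_1 = \left\lceil \frac{1}{q_1}\,\Theta \left(\frac{4\|\vec{b}\|_1}{\varepsilon}\right)^{\frac{1}{\sigma m}}\right\rceil^{-1} \frac{1}{q_1}\,.
\end{equation}
Equivalently, $1/s_1 = q_1 N$ with $N$ the ceiling above. Then $s_1 \leq \Theta^{-1}(\varepsilon/(4\|\vec{b}\|_1))^{1/(\sigma m)}$, which gives error at most $\varepsilon$. The ceiling ensures that the Trotter step numbers $r_k = 1/s_k = q_k/s_1 = q_k q_1 N$ are integers, provided the $q_k$ are integers, possibly after rescaling the schedule by $q_1$. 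I will adopt the convention that the base step number is $N$ times the factor $q_k/q_1$, giving $r_k = q_k N / q_1 \cdot q_1$. The bookkeeping should be arranged so that $r_{\max} = r_m = q_m N$, matching Eq.~\eqref{eq:generic-Richardson-error-2}. Concretely, I would define $s_k = 1/(q_k N)$ and interpret $s_1$ in Lemma \ref{lem:generic-Richardson-error} as $1/(q_1 N)$. The factor $1/q_1$ inside the ceiling then appears exactly as stated.

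\textbf{Convergence condition.} We need $s_1 (a_{\max}\Upsilon\lcomm T)^{1+1/p} \leq 1/2$. Since $s_1\Theta \leq (\varepsilon/(4\|\vec{b}\|_1))^{1/(\sigma m)}$ and $\Theta \geq (a_{\max}\Upsilon\lcomm T)^{1+1/p}$, it suffices that $\varepsilon/(4\|\vec{b}\|_1) \leq 2^{-\sigma m}$. This is the main obstacle: the hypothesis only gives $\varepsilon \leq 4$. Since $\|\vec{b}\|_1 \geq 1$ (because $\sum_k b_k = 1$), we get only $\varepsilon/(4\|\vec{b}\|_1) \leq 1$, which does not yield the factor $1/2$.

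I would try first to revisit the last steps of the proof of Lemma \ref{lem:generic-Richardson-error}. There the geometric sum was bounded using ratio $1/2$; with ratio $x = s_1\Theta \leq 1$ one needs $x<1$ strictly. The cleaner fix is to insert a constant into the choice, replacing $\Theta$ by $2\Theta$ inside the ceiling. That changes $r_{\max}$ only by a constant, which is absorbed if the statement is read up to constants. Otherwise, I would argue that well-conditioned schedules have $\|\vec{b}\|_1 \geq 2^{\sigma m}/4$ or so, though that is not generally true. So I expect to need either the factor-2 adjustment or a mild strengthening of the assumption on $\varepsilon$. With that settled, monotonicity $r_k \leq r_m$ gives the claimed bound on the costliest circuit.
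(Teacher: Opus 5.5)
Your route is the paper's route. Its proof simply inserts the bound of Lemma~\ref{lem:generic-Richardson-error} into the requirement $\le\varepsilon$, takes the base step number $N=1/s=\lceil \Theta X/q_1\rceil$ with $X:=(4\|\vec b\|_1/\varepsilon)^{1/(\sigma m)}$, and reads off $r_{\max}=q_m N$. Tidy your bookkeeping to match: with $s_k=s/q_k$ one has $r_k=q_k/s=q_kN$, not $q_k/s_1$. Then $q_1N\ge\Theta X$ gives $4\|\vec b\|_1(s_1\Theta)^{\sigma m}\le\varepsilon$ directly.

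The obstacle you flag is genuine, and it is a gap in the paper's own argument rather than a shortcoming of yours. The paper never checks that this $s_1$ satisfies the convergence hypothesis $s_1(a_{\max}\Upsilon\lcomm T)^{1+1/p}\le 1/2$ of Lemma~\ref{lem:generic-Richardson-error}. In the analogous Lemma~\ref{lem:specialized-Richardson-error} it merely asserts that the accuracy requirement is the more stringent one.

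The choice only yields $s_1\Theta\le X^{-1}$. This guarantees the convergence condition only when $X\ge 2$, i.e.\ when $\varepsilon\le 4\|\vec b\|_1 2^{-\sigma m}$, and $\varepsilon\le 4$ does not imply that. For example, with the LKW schedule at fixed $\varepsilon$ and growing $m$, $X^{-1}\to 1$. The geometric-series factor in the proof of Lemma~\ref{lem:generic-Richardson-error} is then unbounded, so the constant exactly as stated does not follow from the paper's bounds.

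Your factor-$2$ repair is the right fix. Take $N=\lceil\Theta\max\{2,X\}/q_1\rceil$ (or replace $\Theta$ by $2\Theta$). This gives $s_1\Theta\le\min\{1/2,X^{-1}\}$, which delivers both the accuracy bound and the convergence condition. It also covers the regime $a_{\max}\Upsilon\lcomm T<1$. There the proof of Lemma~\ref{lem:generic-Richardson-error} actually needs $s_1 a_{\max}\Upsilon\lcomm T\le 1/2$, which the stated hypothesis does not supply but $s_1\Theta\le1/2$ does.

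The cost of the repair is small:
\begin{itemize}
\item it at most doubles $r_{\max}$;
\item it leaves $r_{\max}$ unchanged when $\varepsilon\le 4\|\vec b\|_1 2^{-\sigma m}$;
\item it does not affect Lemma~\ref{lem:richardson-well-conditioned} or any downstream asymptotics.
\end{itemize}

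Drop the alternative of lower-bounding $\|\vec b\|_1$ by $2^{\sigma m}/4$, since the LKW schedule has $\|\vec b\|_1=O(\log m)$.
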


\begin{proof}
    Following on from Lemma \ref{lem:generic-Richardson-error}, we obtain the desired error guarantee if 
    \begin{align}
        &4 \|\vec{b}\|_1 s_1^{\sigma m} \cdot \max\big\{1,  \left(a_{\max } \Upsilon \lcomm T \right)^{\sigma m (1+1/p)} \big\} \leq \varepsilon\,,
    \end{align}
    where $s_1 = s/q_1$ and so it is sufficient to take a base number of Trotter steps 
    \begin{equation}
        r_{\min} = 1/s = \left\lceil \frac{1}{q_1} \max\Big\{1, \left(a_{\max } \Upsilon \lcomm T\right)^{1+1/p} \Big\}\left(\frac{4\|\vec{b}\|_1}{\varepsilon}\right)^{\frac{1}{\sigma m}}\right\rceil
    \end{equation}
    where we have taken a ceiling function so that there are an integer number of steps for all the product formulae in the extrapolation. The lemma is completed by observing that in our extrapolation scheme the largest number of steps taken is $q_m/s$
\end{proof}

In the above we have presumed access to an extrapolation scheme which cancels integer powers for non-symmetric product formulae and even powers for symmetric product formulae. However, we have not used the fact that the error series in Lemma \ref{lem:Effective_Hamiltonian_Error_Series} has lowest order term $O(s^{p})$, meaning that such an extrapolation scheme is canceling error terms that are already trivial for powers below $p$. We will use our above lemmas to write specific asymptotic results using the extrapolation scheme of \cite{low2019Multiproduct} which shares this property. However, we can always find alternative specialized extrapolation schemes which also cancel error terms starting from $O(s^{p})$, which we detail below. This result will be used in our numerical experiments. 

\begin{lemma}[Error for user-specified schedules]\label{lem:specialized-Richardson-error}
    Let $\PC$ be an $\Upsilon$-staged $p$-th order product formula of symmetry class $\sigma$. Take any $m$-term extrapolation schedule $R=\{(b_k, q_k)\}_{k=1}^m$ of inverse Trotter step sizes $s_k=s/q_k$, which cancels the powers $\{{p}, {p +\sigma}, \ldots, {p+(m-2)\sigma}\}$. For a target evolution time $T$ denote
    \begin{equation}
        \cP^{(R)}_{p,m}(T):= \sum_{k=1}^m b_k \cP^{1/s_k}(s_kT)
    \end{equation}
    as the extrapolation of product formulae according to this schedule. We consider a Hamiltonian which assumes the conditions of Lemma \ref{lem:time-evol-error-series}. 
    The error in the extrapolated time signal satisfies
    \begin{equation}
    \left\| \cP^{(R)}_{p,m}(T) - e^{-iHT} \right\| \leq \varepsilon\,,
    \end{equation}
    for any $\varepsilon\leq 4$, by taking maximum number of Trotter steps
    \begin{equation}
        r_{\max} = q_m \left\lceil \frac{1}{q_1} \max \Big\{ 1,\left(a_{\max } \Upsilon \lcomm T\right)^{1+1/p} \Big\}\left(\frac{4\|\vec{b}\|_1}{\varepsilon}\right)^{\frac{1}{\sigma (m-1) + p}}\right\rceil\,.
    \end{equation}
    where we have denoted $\|\vec{b}\|_1=\sum_k\left|b_k\right|$.
\end{lemma}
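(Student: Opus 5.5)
The plan mirrors the proofs of Lemma \ref{lem:generic-Richardson-error} and Lemma \ref{lem:generic-Richardson-error-2}. The only change is that the surviving Taylor remainder now begins at a higher power of $s$. By Lemma \ref{lem:time-evol-error-series}, the error series $\cP^{1/s}(sT) - e^{-iHT} = \sum_{j\in\sigma\mathbb{Z}_+\geq p} s^j \tilde{E}_{j+1}(T)$ contains only the powers $p, p+\sigma, p+2\sigma, \dots$. An $m$-term schedule satisfying the Vandermonde conditions of Definition \ref{def:extrapolation} for exactly these exponents ($\sigma_1=p$, $\sigma_2=p+\sigma$, \dots, $\sigma_{m-1}=p+(m-2)\sigma$) satisfies $\sum_k b_k=1$ and $\sum_k b_k q_k^{-j}=0$ for each canceled $j$. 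First step: write
\begin{equation*}
\cP^{(R)}_{p,m}(T) - e^{-iHT} = \sum_{k=1}^m b_k \sum_{\substack{j\in\sigma\mathbb{Z}_+\\ j\geq p+(m-1)\sigma}} s_k^j\,\tilde{E}_{j+1}(T),
\end{equation*}
where the canceled powers drop out exactly. There is a subtlety in the symmetric case: the error powers are then $p+2i$ with $p$ even, and the lowest surviving power is $p+(m-1)\sigma$, which is the exponent in the statement. The remainder is taken over the same lattice.

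Second step: bound each remainder exactly as in Eq.~\eqref{eq:remainder-bound}. I would use $\|\tilde{E}_{j+1}(T)\|\leq (e-1)(a_{\max}\Upsilon\lcomm T)^j\eta^{j/p}$ with $\eta=\max\{1,a_{\max}\Upsilon\lcomm T\}$, together with $s_k\leq s_1$ since $q_1\leq q_k$. This gives
\begin{equation*}
\|\cP^{(R)}_{p,m}(T) - e^{-iHT}\| \leq \|\vec{b}\|_1 (e-1)\,\big(s_1 \eta^{1+1/p}\big)^{p+(m-1)\sigma}\sum_{i\geq 0}\big(s_1\eta^{1+1/p}\big)^{\sigma i}.
\end{equation*}
Here I bound $(a_{\max}\Upsilon\lcomm T)^j\eta^{j/p}\leq \eta^{j(1+1/p)}$. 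If $s_1\eta^{1+1/p}\leq 1/2$, the geometric tail is at most $2$, so the error is at most $4\|\vec{b}\|_1 (s_1\eta^{1+1/p})^{\sigma(m-1)+p}$.

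Third step: choose $s$ so that this is at most $\varepsilon$. Setting $s_1 = s/q_1$, it suffices to take
\begin{equation*}
r_{\min}=1/s=\Big\lceil \tfrac{1}{q_1}\eta^{1+1/p}\big(4\|\vec{b}\|_1/\varepsilon\big)^{1/(\sigma(m-1)+p)}\Big\rceil .
\end{equation*}
Because $\varepsilon\leq 4$ and $\|\vec{b}\|_1\geq 1$ (since $\sum_k b_k=1$), the factor $(4\|\vec{b}\|_1/\varepsilon)^{1/(\cdot)}$ is at least $1$. Hence this choice also enforces the convergence condition $s_1\eta^{1+1/p}\leq 1$. To get the factor $1/2$, I would either absorb a constant or reuse the precondition exactly as Lemma \ref{lem:generic-Richardson-error} does, which assumes it. The costliest formula uses $q_m/s$ steps, which gives the stated $r_{\max}$.

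The main obstacle is the bookkeeping between the remainder exponent and the lattice $\sigma\mathbb{Z}_+\geq p$. One must check that, after canceling $m-1$ lattice powers starting at $p$, the smallest surviving exponent is $p+(m-1)\sigma$. One must also check that the $\lcomm$ supremum, which here needs $j\geq p+(m-1)\sigma$, is covered by the definition in Lemma \ref{lem:generic-Richardson-error} (at worst by enlarging its domain to $j\geq p$). The only other delicate point is ensuring the geometric-series convergence condition holds for the chosen $s$, which I would handle as described above.
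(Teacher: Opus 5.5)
Your proposal is correct and follows the same route as the paper's proof. Both arguments cancel the lattice powers $p,\dots,p+(m-2)\sigma$, bound the surviving remainder via $\|\tilde{E}_{j+1}(T)\|\le (e-1)(a_{\max}\Upsilon\lcomm T)^j\eta^{j/p}$ and $s_k\le s_1$, sum the geometric tail under $s_1\eta^{1+1/p}\le 1/2$, and solve for $r$. You are also right that the chosen $r$ only guarantees $s_1\eta^{1+1/p}\le 1$ rather than $\le 1/2$. The paper glosses over this point with its remark that the final requirement is ``more stringent'', so the same constant-factor caveat applies to its proof as well as yours.
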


\begin{proof}
    The Taylor remainder for inverse Trotter step $s_k$ has size 
\begin{align}
    \left\|R_{\sigma(m-1)}(T, s_k)\right\| &\leq  \sum_{\substack{j \in \sigma \mathbb{Z}_{+} \\ j \geq  \sigma(m-1)+p}} s_k^j\left\|\tilde{E}_{j+1}(T)\right\|  \\
    & \leq \sum_{\substack{j \in \sigma \mathbb{Z}_{+} \\ j \geq \sigma(m-1)+p}} s_k^j \sum_{l=1}^{\lfloor j/p\rfloor} \frac{\left(a_{\max } \Upsilon \lcomm T\right)^{j+l}}{l!} \\
    & \leq (e-1) \sum_{\substack{j \in \sigma \mathbb{Z}_{+} \\ j \geq \sigma(m-1)+p}} s_k^j \left(a_{\max } \Upsilon \lcomm T \right)^j \eta^{ j/p} \,. 
\end{align}
where the first two inequalities come from Lemma \ref{lem:time-evol-error-series}, and the final inequality was established in the proof of Lemma \ref{lem:Effective_Hamiltonian_Error_Series}. Using this, the size of the error can be bounded as 
\begin{align}
    \|\cP^{(R)}_{p,m}(T&)- e^{-iHT}\| \leq \\ 
    &\leq \|\vec{b}\|_1 \cdot \left\|R_{\sigma(m-1)}(T, s_1)\right\| \\ 
    &\leq \|\vec{b}\|_1 \cdot (e-1)  \sum_{\substack{j \in \sigma \mathbb{Z}_{+} \\ j \geq \sigma(m-1)+p}} s_1^j \eta^{ j/p} \left(a_{\max } \Upsilon \lcomm T \right)^j \\
    &\leq \|\vec{b}\|_1 \cdot (e-1)  \left(s_1 a_{\max } \Upsilon \lcomm T \right)^{\sigma(m-1)+p} \eta^{ \sigma (m-1)/p + 1} \sum_{\substack{j \in \sigma \mathbb{Z}_{+} \\ j \geq 0}} s_1^j \left(a_{\max } \Upsilon \lcomm T \right)^j \eta^{ j/p} \\
    &\leq \|\vec{b}\|_1 \cdot (e-1)  \left(s_1 a_{\max } \Upsilon \lcomm T \right)^{\sigma(m-1)+p} \eta^{ \sigma (m-1)/p + 1}  \sum_{\substack{j \in \sigma \mathbb{Z}_{+} \\ j \geq 0}}  \left(\frac{1}{2} \right)^j \\
    &\leq 4 \|\vec{b}\|_1 \eta^{ \sigma (m-1)/p + 1}  \left(s_1 a_{\max } \Upsilon \lcomm T \right)^{\sigma (m-1)+p}\,.
\end{align}
where we have denoted $\vec{b}:=(b_1,..,b_m)$ as the vector of extrapolation coefficients, and the penultimate line is true if $s_1 (a_{\max } \Upsilon \lcomm T)^{1+1/p}\leq 1/2$. Our final requirement on $s_1$ will be more stringent. Namely, the error is less than $\varepsilon$ if we take reference Trotter step number equal to
\begin{equation}
    r = 1/s = \left\lceil \frac{1}{q_1} (a_{\max } \Upsilon \lcomm T)^{1+1/p} \left(\frac{4\|\vec{b}\|}{\varepsilon}\right)^{\frac{1}{\sigma(m-1)+p}} \right\rceil\,.
\end{equation}

\end{proof}

\noindent We see that there is a stronger exponent on $(1/\varepsilon)$ compared to Lemma \ref{lem:generic-Richardson-error-2}. Thus, this lemma is advantageous and should be used in place of Lemma \ref{lem:generic-Richardson-error-2} for any satisfying extrapolation scheme. As extrapolation schemes that cancel powers up to arbitrary order can be found rather easily numerically, we use this Lemma as the basis for Theorem \ref{thm:time-signal-generic} when discussing user-defined extrapolation schedules.

\subsection{Well-conditioned extrapolation strategy}
In order to make asymptotic statements independent of extrapolation parameters we now adopt a particular Richardson extrapolation strategy found in \cite{low2019Multiproduct} -- this allows an approximation with error \(\widetilde{O}(s^{\sigma m} )\), using coefficients whose condition number grows only as \(O(\log m)\). This will satisfy the extrapolation properties of Lemmas \ref{lem:generic-Richardson-error} and \ref{lem:generic-Richardson-error-2}, but not Lemma \ref{lem:specialized-Richardson-error}.

\begin{lemma}(LKW Well-conditioned Richardson extrapolation \cite{low2019Multiproduct, watkins2022TimeDependentClockSimulation} (Lemma 5 in \cite{watson2024exponentiallyReducedTrotter}))\label{lemma:well-conditioned-Richardson}
Let $f$  be an even, real-valued function of $s$, and let $p_j$ and $r_j$ be the degree $j$ Taylor polynomial and Taylor remainder, respectively, such that $f(s)=p_j(s)+r_j(s)$. Let   
\begin{align}
    F^{(m)}(s)=\sum_{k=1}^m b_k f(s_k)
\end{align}
be the unique Richardson extrapolation $R=\{b_k, q_k\}$ of $f$ at points $s_1, s_2, \ldots s_m$ given by
\begin{equation}
    s_k=\frac{s}{q_k}\,;\qquad  q_k=q_{\mathrm{scale }}\left\lceil\frac{\sqrt{8}m/\pi}{\sin (\pi(2 k-1) / 8 m)}\right\rceil, \quad k \in\{1, \ldots, m\}\,,
\end{equation}
where $q_{\mathrm{scale}} \in \mathbb{Z}_+$ satisfies $m \leq q_k/q_{\mathrm{scale}}\leq 3m^2$ \cite{low2019Multiproduct}. Let $b_k$ be the entries of the solution vector of the $m \times m$ linear system $V\vec{b}=(1,0,\cdots,0)$ for $V_{jk} = \sigma_k^{2(j-1)}$. Then, the extrapolated function satisfies
\begin{align}
    F^{(m)}(s)=f(0)+\sum_{k=1}^m b_k r_{2 m}(s_k)\,,
\end{align}
and $\|\vec{b}\|_1=O(\log m)$.
\end{lemma}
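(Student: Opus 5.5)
The first claim is purely algebraic, and I would prove it directly from the defining linear system. Write $f(s)=p_{2m-1}(s)+r_{2m}(s)$, interpreting $r_{2m}$ as everything from order $s^{2m}$ upward (the $s^{2m}$ term is not cancelled). Since $f$ is even, $p_{2m-1}(s)=\sum_{j=0}^{m-1}c_j s^{2j}$. The condition $V\vec b=(1,0,\dots,0)$ with $V_{jk}=s_k^{2(j-1)}$ says exactly that $\sum_k b_k s_k^{2j}=\delta_{j0}$ for $0\le j\le m-1$. Hence $\sum_k b_k p_{2m-1}(s_k)=c_0=f(0)$, and linearity then gives $F^{(m)}(s)=f(0)+\sum_k b_k r_{2m}(s_k)$. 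Uniqueness follows because the $s_k$ are distinct and positive, so the $s_k^2$ are distinct and $V$ is an invertible Vandermonde matrix in the variables $x_k=s_k^2$. Distinctness of the $q_k$ I would check from the spacing bound below.

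For $\|\vec b\|_1=O(\log m)$ I would use the closed form from Eq.~\eqref{eq:exact-formula} with $a=2$:
\begin{equation*}
b_k=\prod_{i\neq k}\frac{q_k^2}{q_k^2-q_i^2}.
\end{equation*}
This identifies $b_k$ as the Lagrange basis polynomial $\ell_k$, in the variable $x=s^2$, evaluated at the extrapolation point $x=0$.

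First I would treat the idealised nodes without the ceiling, $q_k\propto 1/y_k$ with $y_k=\sin\theta_k$ and $\theta_k=\pi(2k-1)/(8m)$. Then $b_k=\prod_{i\ne k} y_i^2/(y_i^2-y_k^2)$. The key observation is that the $2m$ points $\pm y_k$ are precisely the $2m$ Chebyshev nodes of degree $4m$ (that is, $\cos((2j-1)\pi/8m)$) lying closest to the origin. So $b_k$ is the sum of the two Lagrange basis values at $\pm y_k$ for even interpolation on these nodes, evaluated at the interior point $y=0$. Using $\sin^2 A-\sin^2 B=\sin(A+B)\sin(A-B)$, I would write
\begin{equation*}
|b_k|=\prod_{i\neq k}\frac{\sin^2\theta_i}{|\sin(\theta_i+\theta_k)\sin(\theta_i-\theta_k)|}.
\end{equation*}
All angles lie in $(0,\pi/4)$, so $\sin x\asymp x$ uniformly. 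The product therefore reduces, up to factors $(1+O(\theta^2))$ whose product must be controlled, to the rational product
\begin{equation*}
\prod_{i\ne k}\frac{(2i-1)^2}{|(i+k-1)(i-k)|}\cdot\frac{1}{4}.
\end{equation*}
This can be evaluated with factorials and central binomial coefficients, and Stirling's formula should give $|b_k|=O(1/(2k-1))$, or possibly $O(1/k)$ with some geometric tail. Summing the harmonic series then yields $O(\log m)$.

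Next I would handle the ceiling in $q_k$. Since $q_k/q_{\mathrm{scale}}\ge m$ (which follows from $\sin\theta_k\le\sin(\pi/4)$), rounding changes each $q_k$ by a relative amount $O(1/m)$. Each factor $q_k^2/(q_k^2-q_i^2)$ is therefore perturbed by a relative amount $O\bigl(q_k^2/(m\,|q_k^2-q_i^2|)\bigr)$. I would show that the sum over $i$ of these relative perturbations is $O(1)$, using the gaps $|\theta_i-\theta_k|\gtrsim|i-k|/m$; the product of perturbations then contributes only a constant factor.

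I expect the main obstacle to be making the sine-product estimate genuinely uniform in $k$ with absolute constants, in particular bounding the neglected $\sin x/x$ corrections and the rounding errors simultaneously across all $m-1$ factors. If the direct product estimate becomes messy, my fallback is to cite the Chebyshev Lebesgue-function bound $\sum_j|\ell_j(y)|=O(\log m)$ on $[-1,1]$. I would then argue that deleting the outer $2m$ nodes of the degree-$4m$ Chebyshev set only shrinks the basis values at $y=0$ by a bounded factor, since for $|y_i|\ge\sin(\pi/4)$ each removed node contributes a factor $|y_i^2-y_k^2|/y_i^2\in(0,1]$, controlled via its product bound.
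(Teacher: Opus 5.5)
Your algebraic part is correct, including the reading of $r_{2m}$ as the tail from $s^{2m}$ onward. The literal degree-$2m$ remainder would leave an uncancelled $s^{2m}$ term, and the paper implicitly uses your reading (cf.\ $j\ge\sigma m$ in Lemma~\ref{lem:generic-Richardson-error}). The paper gives no proof of this lemma and simply imports it from LKW, so a self-contained argument is a genuinely different route.

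For the unrounded nodes, make your fallback the main argument, because it is exact. With $\theta_k=\pi(2k-1)/(8m)$, the deleted degree-$4m$ Chebyshev nodes are $\pm\cos\theta_r$ for $r=1,\dots,m$. Using $T_{4m}(0)=1$ and $|T_{4m}'(\pm\sin\theta_k)|=4m/\cos\theta_k$ gives
\begin{equation*}
|b_k|=\frac{\cot\theta_k}{2m}\prod_{r=1}^{m}\Bigl(1-\frac{\sin^2\theta_k}{\cos^2\theta_r}\Bigr)\le\frac{4}{\pi(2k-1)}\,e^{-m\sin^2\theta_k},
\end{equation*}
hence $\|\vec b\|_1\le\frac{2}{\pi}\ln(2m)+O(1)$. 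Your primary route understates its obstacle. Since $\log(\sin x/x)$ is concave, every $\sin x/x$ correction factor is $\ge 1$, and their product is $e^{\Theta(k^2/m)}$, not $O(1)$. That route can only be closed by showing that the rational product (a ratio of binomial coefficients) decays like $e^{-k^2/m}$, which is faster than this growth.

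The genuine gap is the ceiling step. Your claim that the summed relative perturbations are $O(1)$ uniformly in $k$ is false for $k$ comparable to $m$.

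\emph{Why it fails.} Write $Q(\kappa)=\sqrt8\,m/(\pi\sin\theta)$ for the unrounded values, in units of $q_{\mathrm{scale}}$. Then $|dQ/d\kappa|=\cos\theta/(\sqrt2\sin^2\theta)$, which exceeds $1$ on $(0,\pi/4)$ but tends to $1$ as $\theta\to\pi/4$. This sharp constant, not the asymptotic spacing $|\theta_i-\theta_k|\gtrsim|i-k|/m$, is what keeps the $q_k$ distinct after rounding. It also means that near $k=m$, where $q_k\approx 4m/\pi$, consecutive gaps are barely above $1$, the same size as the rounding error. So the near-diagonal factors change by $\Theta(1)$ multiplicatively, and linearisation is invalid there. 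Even your own estimate $q_k^2/(m|q_k^2-q_i^2|)\approx 2/(\pi|i-k|)$ sums to $\Theta(\log m)$. Rounding can therefore inflate such $|b_k|$ by a factor $m^{\Theta(1)}$.

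\emph{How to repair it.} Use the decay in the display above. Write $q_i=Q_i(1+\delta_i)$ with $0\le Q_i\delta_i\le 1$. The first-order change of $\log|b_k|$ is then
\begin{equation*}
\sum_{i\ne k}\frac{2Q_i^2(\delta_i-\delta_k)}{Q_k^2-Q_i^2}=O\Bigl(\frac{k^2\log m}{m^2}\Bigr).
\end{equation*}
\begin{itemize}
\item For $k\le C\sqrt{m\log m}$ this is $o(1)$, so rounding changes $|b_k|$ by a factor $1+o(1)$.
\item For larger $k$, only $O(1)$ near-diagonal factors are non-linearisable. Each changes by at most about a factor $2$, since rounded and unrounded gaps are both $\ge 1$. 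The total inflation is therefore $m^{O(1)}$, which $e^{-m\sin^2\theta_k}\le e^{-(2k-1)^2/(16m)}$ swamps once $C$ is large.
\end{itemize}
Without invoking this decay, the rounding argument does not go through.
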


We can use this extrapolation strategy to write down the asymptotic scaling we use throughout our main results.

\begin{lemma}[Time evolution gate complexity with well-conditioned extrapolation]\label{lem:richardson-well-conditioned}
Suppose that $\cP^{(R)}_{p,m}(T)$ is constructed using the well-conditioned extrapolation strategy of Lemma \ref{lemma:well-conditioned-Richardson}. We attain extrapolation error $\| e^{-iHT} -  \cP^{(R)}_{p,m}(T)\| \leq \varepsilon$  
using $m=O(\log(1/\varepsilon))$  extrapolation circuits, each with at most
$O\!\left(\left(a_{\max } \Upsilon \lcomm T\right)^{(1+1 / p)} \log (1 / \varepsilon)\right)$ Trotter steps.
\end{lemma}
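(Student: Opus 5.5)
The plan is to plug the LKW schedule of Lemma \ref{lemma:well-conditioned-Richardson} into Lemma \ref{lem:generic-Richardson-error-2} and then choose $m$ so that the $\varepsilon$-dependence collapses to a logarithm. First, check that the LKW schedule satisfies the hypotheses of Lemma \ref{lem:generic-Richardson-error}. It cancels the powers $\sigma,2\sigma,\dots,(m-1)\sigma$ in the variable $s$; for symmetric formulae this is immediate from the even-function version of the lemma. For non-symmetric formulae I would apply the same Chebyshev-node construction with integer instead of even powers, which is what \cite{low2019Multiproduct} also provides, with the same $O(\log m)$ conditioning. Its condition number is $\|\vec b\|_1 = O(\log m)$. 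Its nodes obey $m q_{\mathrm{scale}} \le q_k \le 3m^2 q_{\mathrm{scale}}$, so the ratio $q_m/q_1 \le 3m$. We also need $\sigma m\ge p$, which holds once $m\ge p$, and $p=O(1)$.

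Next, substitute into Eq.~\eqref{eq:generic-Richardson-error-2}:
\begin{equation}
r_{\max} \le \frac{q_m}{q_1}\max\{1,(a_{\max}\Upsilon\lcomm T)^{1+1/p}\}\Big(\frac{4\|\vec b\|_1}{\varepsilon}\Big)^{1/(\sigma m)} + q_m .
\end{equation}
Choose $m = \lceil \log(1/\varepsilon)\rceil$ (and at least $p$). Then $(1/\varepsilon)^{1/(\sigma m)}\le e$, and $(4\|\vec b\|_1)^{1/(\sigma m)} = (O(\log m))^{1/m} = O(1)$. The first term is therefore $O(m\,(a_{\max}\Upsilon\lcomm T)^{1+1/p}) = O((a_{\max}\Upsilon\lcomm T)^{1+1/p}\log(1/\varepsilon))$, absorbing the max with $1$ into the big-$O$ by the convention that the simulation is nontrivial.

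The step I expect to be the main obstacle is the additive $q_m$ term coming from the ceiling. With $q_{\mathrm{scale}}=1$ this term is $O(m^2)=O(\log^2(1/\varepsilon))$, which would spoil the claimed bound whenever $(a_{\max}\Upsilon\lcomm T)^{1+1/p}\lesssim \log(1/\varepsilon)$. I would handle it by noting that the ceiling is only needed for $r_{\min}=1/s$ to be an integer. Concretely, set $q_{\mathrm{scale}}=1$ and take the base step number $r_{\min}=\lceil\cdot\rceil$ directly as the argument of the ceiling, without dividing by $q_1$. Since $q_1\ge m$, the step size $s_1=s/q_1$ then only becomes smaller, so the error condition of Lemma \ref{lem:generic-Richardson-error} still holds. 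The largest circuit then uses $q_m r_{\min}$ steps. This costs at most an extra factor of $q_m$ relative to the bound above, which is too much, so instead I would keep $1/s$ real-valued: each $s_k = s/q_k$ only needs $1/s_k$ to be an integer. Choose $1/s_1$ to be the integer ceiling $\lceil \max\{1,(\cdot)^{1+1/p}\}(4\|\vec b\|_1/\varepsilon)^{1/(\sigma m)}\rceil$, and set the other steps $r_k = r_1 q_k/q_1$, rounding $q_k$ to multiples of $q_1$ if necessary. This keeps the ratio $q_m/q_1 = O(m)$ while changing $\|\vec b\|_1$ only by constants, which is the point that needs checking against \cite{low2019Multiproduct}. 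This gives $r_{\max}=O(m\,r_1)$, and hence the claimed $O((a_{\max}\Upsilon\lcomm T)^{1+1/p}\log(1/\varepsilon))$, with $m=O(\log(1/\varepsilon))$ circuits.
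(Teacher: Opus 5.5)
Your core computation is the paper's proof. You substitute the LKW properties $q_m/q_1\le 3m$ and $\|\vec b\|_1=O(\log m)$ into Eq.~\eqref{eq:generic-Richardson-error-2}, take $m=\lceil\log(1/\varepsilon)\rceil$, and use $(1/\varepsilon)^{1/(\sigma m)}=O(1)$. The paper then replaces $q_m\lceil X/q_1\rceil$ by $O\big((q_m/q_1)X\big)$, where $X=\max\{1,(a_{\max}\Upsilon\lcomm T)^{1+1/p}\}(4\|\vec b\|_1/\varepsilon)^{1/(\sigma m)}$. That replacement is valid only when $X\gtrsim q_1=\Theta(m)$, and the paper states this regime assumption only in its numerical section.

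You are right that outside this regime the ceiling leaves an additive $q_m=\Theta(m^2)$. However, your repair does not work. Forcing every $q_k$ to be an integer multiple of $q_1$ with $q_k/q_1\le 3m$ collapses the grid. The LKW nodes are $\propto 1/\sin\theta$ with $\theta$ equally spaced in $(0,\pi/4)$, so more than half of them satisfy $q_k/q_1<2$. All of those would be rounded to $q_1$ or $2q_1$, which makes the Vandermonde system singular, and nothing then keeps $\|\vec b\|_1=O(\log m)$.

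The $\Omega(m^2)$ floor is intrinsic to this grid. Near the coarse end the nodes have relative spacing $\Theta(1/m)$. Realizing them with distinct integer step numbers therefore forces $r_{\min}=\Omega(m)$, and hence $r_{\max}=\Theta(m)\,r_{\min}=\Omega(m^2)$. You have two correct options:
\begin{itemize}
\item State the assumption $(a_{\max}\Upsilon\lcomm T)^{1+1/p}\gtrsim\log(1/\varepsilon)$ explicitly. It gives $X/q_1\ge 1$, so $\lceil X/q_1\rceil\le 2X/q_1$, and it also makes the $\max\{1,\cdot\}$ inactive.
\item State the bound as $O\big((a_{\max}\Upsilon\lcomm T)^{1+1/p}\log(1/\varepsilon)+\log^2(1/\varepsilon)\big)$.
\end{itemize}

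Separately, drop your remark about non-symmetric formulae, because it is false. The $O(\log m)$ conditioning of the LKW grid relies on the error series being even in $s$, so that the effective nodes $\pm s_k$ straddle the evaluation point $s=0$. With one-sided nodes and all integer powers, evaluating at $0$ is a genuine extrapolation, and $\|\vec b\|_1$ grows exponentially in $m$. The paper itself notes that the LKW grid is unavailable for $p=1$. The hypothesis of Lemma~\ref{lemma:well-conditioned-Richardson} (an even function) already restricts the statement to $\sigma=2$, so the remark is also unnecessary.
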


\begin{proof}
    We see that the extrapolation scheme of \cite{low2019Multiproduct} has properties that $q_{\max}/q_{\min} \leq 3m$. We can directly substitute this into Eq.~\eqref{eq:generic-Richardson-error-2} with choice $m=\lceil\log(1/\varepsilon) \rceil$. We thus have
    \begin{equation}
        r_{\max} = O\left(  \log(1/\varepsilon) \left(a_{\max } \Upsilon \lcomm T\right)^{1+1/p}\left(\frac{4\|\vec{b}\|_1}{\varepsilon}\right)^{\frac{1}{\sigma \lceil\log(1/\varepsilon) \rceil}}\right)\,.
    \end{equation}
    As $(1/\varepsilon)^{1/\log(1/\varepsilon)} = O(1)$ and $\|\vec{b}\|_1 = O(\log m)$ we thus have the largest number of Trotter steps satisfying
    \begin{equation}
        r_{\max} =O\!\left(\left(a_{\max } \Upsilon \lcomm T\right)^{(1+1 / p)} \log (1 / \varepsilon)\right)\,.
    \end{equation}
\end{proof}

\section{Randomized algorithm framework for matrix functions}

So far we have seen how to approximate the time evolution operator mathematically as a linear combination of $m$ instances of a $p$-th order product formulae, specifically seeking conditions when $\| e^{-iHT} -  \cP^{(R)}_{p,m}(T)\| \leq \varepsilon_R$. We will hereon refer to $\varepsilon_R$ as the (Richardson) extrapolation error. In this section we will demonstrate how to compile the time evolution operator as part of an algorithm, which will allow us to instantiate Tasks \hyperref[eq:task-1]{1} and \hyperref[eq:task-2]{2} as laid out in the introduction.

\subsection{A sketch for the warm-up problem}

To give intuition for how our algorithms work, we first sketch how to give an algorithm for our warm-up task: approximating $\tr[\rho e^{-iHT}]$ to additive error $\varepsilon$. Later, our warm-up task will be formally instantiable as a special case of our algorithm for Task \hyperref[eq:task-1]{1}.

Lemma \ref{lem:richardson-well-conditioned} allows us to write

\begin{equation}
    \tr[\rho \cP^{(R)}_{p,m}(T)] = \sum_{k=1}^m b_k \tr[\rho \cP^{1/s_k}(s_kT)] \overset{\varepsilon_R}{\approx} \tr[\rho e^{-iHT}]  \,,
\end{equation}
where the approximation guarantee is possible due to Holder's inequality ($\|\rho \|_1 = 1$). Note that $\tr[\rho \cP^{(R)}_{p,m}(T)]$ can be simply framed as a quasiprobability distribution over quantities of the form $\tr[\rho U_k]$, where $U_k$ is an efficiently implementable unitary (product formula). Each $\tr[\rho U_k]$ can be simply obtained via a Hadamard test -- specifically, it is the expectation value of a Hadamard test protocol.

Our algorithm then becomes apparent: sample according to a probability distribution with probability masses proportional to $\{b_k \}_k$, obtaining index $j$; run one instance of a Hadamard test corresponding to the quantity $\tr[\rho \cP^{1/s_k}(s_kT)]$; record the measurement statistic multiplied by the weighted phase $b_j/\|\vec{b}\|_1$; repeat a sufficient number of times to constrain statistical error and take the mean.

As found in Lemma \ref{lem:richardson-well-conditioned}, the worst case gate depth is $O(\Gamma\Upsilon \left(a_{\max } \Upsilon \lcomm T\right)^{(1+1 / p)} \log (1 / \varepsilon_R))$. We can place guarantees on the statistical error using Hoeffding's inequality as our random variable has bounded size $O(1)$ -- namely, it is sufficient to take $O(\|\vec{b}\|^2_1/\varepsilon_s^2)$ samples to obtain additive approximation error $\varepsilon_S$ with arbitrary constant success probability. A simple choice of $\varepsilon_s, \varepsilon_R = O(\varepsilon)$ gives the desired error guarantee. We provide a formal statement of algorithm complexities, and note this comes as a corollary of Theorem \ref{thm:overlap} to come later.

\begin{theorem}[Time signal compilation]
\label{thm:time-signal}
Consider a matrix $H\in \mathbb{C}^{2^n \times 2^n}$ of the form in Eq.~\eqref{eq:matrix}.
Take $R(\varepsilon/2)$ as the Richardson extrapolation scheme of Lemma \ref{lem:richardson-well-conditioned} with extrapolation error $\varepsilon/2$. Take any even $p$-th order product formula $\PC$. Using this choice of $R(\varepsilon/2)$ and $\PC$, Algorithm~\ref{algo:overlap} returns an estimate of \( \operatorname{Tr}[\rho e^{-iHT}] \) to additive error at most $\varepsilon$ with success probability at least \( (1 - \delta) \) using:

\begin{itemize}
    \item \textbf{Gate depth (per sample):}
    \begin{equation}
    C_{\mathrm{gate}} = O\!\left( \Gamma \Upsilon \cdot \left( a_{\max} \Upsilon \lambda_{\mathrm{comm}} \, T \right)^{1 + 1/p} \log(1/\varepsilon) \right),
    \end{equation}

    \item \textbf{Sample complexity:}
    \begin{equation}
    C_{\mathrm{sample}} = O\!\left( \frac{  (\log \log(1/ \varepsilon))^2}{\varepsilon^2} \cdot \log\left( \frac{1}{\delta} \right) \right),
    \end{equation}

    \item \textbf{Classical preprocessing time:}
    \begin{equation}
    C_{\mathrm{pre}} = O\big(\log(1/\varepsilon)\big),
    \end{equation}
\end{itemize}
where each circuit uses $(n+1)$ qubits.
\end{theorem}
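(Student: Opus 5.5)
The plan is to follow the warm-up sketch and make each of its three ingredients quantitative: the bias coming from extrapolation, the statistical error of the randomized estimator, and the circuit cost.

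\textbf{Bias.} First I would invoke Lemma~\ref{lem:richardson-well-conditioned} with target extrapolation error $\varepsilon/2$. This gives the scheme $R(\varepsilon/2)=\{(b_k,q_k)\}_{k=1}^m$ with $m=O(\log(1/\varepsilon))$ and $\|\vec b\|_1=O(\log m)=O(\log\log(1/\varepsilon))$. It satisfies $\|\cP^{(R)}_{p,m}(T)-e^{-iHT}\|\le\varepsilon/2$, and every circuit uses at most $r_{\max}=O((a_{\max}\Upsilon\lcomm T)^{1+1/p}\log(1/\varepsilon))$ Trotter steps. Since $\PC$ is a symmetric (even-order) formula, the error series of Lemma~\ref{lem:time-evol-error-series} has $\sigma=2$. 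The LKW scheme cancels even powers, so Lemma~\ref{lemma:well-conditioned-Richardson} applies. By H\"older's inequality with $\|\rho\|_1=1$,
\begin{equation*}
\Big|\sum_{k=1}^m b_k\tr\big[\rho\,\cP^{1/s_k}(s_kT)\big]-\tr[\rho e^{-iHT}]\Big|\le \varepsilon/2 .
\end{equation*}

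\textbf{Estimator.} Next I would define the random variable produced by Algorithm~\ref{algo:overlap}.
\begin{enumerate}[label=(\roman*)]
\item Sample an index $k$ with probability $|b_k|/\|\vec b\|_1$.
\item Sample $X\in\{\text{Re},\text{Im}\}$.
\item Run a Hadamard test on the controlled unitary $\cP^{1/s_k}(s_kT)$ with the appropriate phase on the ancilla, obtaining an outcome $z\in\{\pm1\}$.
\item Output $Y=\|\vec b\|_1\,\mathrm{sgn}(b_k)\,z$, with the factor of $i$ and the factor $2$ from the random choice of real or imaginary part absorbed into the combination.
\end{enumerate}
Conditioned on $k$, the Hadamard test has expectation $\mathrm{Re}$ or $\mathrm{Im}$ of $\tr[\rho\,\cP^{1/s_k}(s_kT)]$. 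Averaging over $k$ then shows that $Y$ is an unbiased estimator of $\tr[\rho\,\cP^{(R)}_{p,m}(T)]$. Moreover $|Y|\le 2\|\vec b\|_1$. Alternatively, I would estimate the real and imaginary parts separately, each with bound $\|\vec b\|_1$.

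\textbf{Statistical error.} Hoeffding's inequality then shows that the empirical mean of $N=O(\|\vec b\|_1^2\log(1/\delta)/\varepsilon^2)$ samples lies within $\varepsilon/2$ of the mean with probability at least $1-\delta$. Combining this with the bias bound through the triangle inequality gives total error $\varepsilon$. Substituting $\|\vec b\|_1=O(\log\log(1/\varepsilon))$ gives the stated $C_{\mathrm{sample}}$.

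\textbf{Costs.} For the gate depth, each Trotter step of a $\Upsilon$-stage formula consists of $\Gamma\Upsilon$ exponentials, each of unit depth. Controlled versions add only constant overhead per exponential, using the one ancilla. Multiplying by $r_{\max}$ gives $C_{\mathrm{gate}}$, and the qubit count is $n+1$. For classical preprocessing, we need the $m$ values $q_k$ (closed form) and the $b_k$, which come from the product formula in Eq.~\eqref{eq:exact-formula}. A naive evaluation costs $O(m^2)$. To reach $O(m)=O(\log(1/\varepsilon))$ I would count the preprocessing as $O(m)$ arithmetic to set up the sampling table.

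\textbf{Main obstacle.} I expect this preprocessing accounting to be the one delicate point: either the $b_k$ are computed via a barycentric-weight recurrence, or the stated bound is understood up to this polynomial-in-$m$ factor. The remaining steps are direct substitutions.
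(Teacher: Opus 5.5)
Your argument is correct and matches the paper's: the theorem is the $K=1$, $c_1=1$, $U=I$ case of Theorem~\ref{thm:overlap}. It is proved by splitting the error into extrapolation bias (Lemma~\ref{lem:richardson-well-conditioned} plus H\"older with $\|\rho\|_1=1$) and statistical error (Hoeffding, with a union bound over real and imaginary parts, on a variable bounded by $O(\|\vec{b}\|_1)$). On the preprocessing point you flag, the paper simply charges $O(Km)=O(m)$ for forming the sampling distribution and does not count computing the $b_k$, so your second reading is the one the stated bound adopts; note that the standard barycentric-weight computation for general nodes is itself $O(m^2)$, so that route would not give $O(m)$.
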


Further, rather than using the LKW well-conditioned extrapolation scheme, we can also explicitly write out the resource costs of using a generic extrapolation scheme as in Lemma \ref{lem:generic-Richardson-error-2}.

\begin{theorem}[Time signal compilation -- user defined extrapolation]
\label{thm:time-signal-generic}
Consider a matrix $H\in \mathbb{C}^{2^n \times 2^n}$ of the form in Eq.~\eqref{eq:matrix}.
Let $\PC$ be an $\Upsilon$-staged $p$-th order product formula of symmetry class $\sigma$. Take any $m$-term extrapolation schedule $R=\{(b_k, q_k)\}_{k=1}^m$ of inverse Trotter step sizes $s_k=s/q_k$, which cancels the powers $\{{p}, {p +\sigma}, \ldots, {p+(m-2)\sigma}\}$. Algorithm~\ref{algo:overlap} returns an estimate of \( \operatorname{Tr}[\rho e^{-iHT}] \) to additive error at most $\varepsilon$ with success probability at least \( (1 - \delta) \) using:

\begin{itemize}
    \item \textbf{Gate depth (per sample):}
    \begin{equation}
    C_{\mathrm{gate}} \leq  \Gamma \Upsilon \cdot q_m \left\lceil \frac{1}{q_1} \max \Big\{ 1,\left(a_{\max } \Upsilon \lcomm T\right)^{1+1/p} \Big\}\left(\frac{4\|\vec{b}\|_1}{\varepsilon}\right)^{\frac{1}{\sigma (m-1) + p}} \right\rceil \,,
    \end{equation}

    \item \textbf{Sample complexity:}
    \begin{equation}
    C_{\mathrm{sample}} \leq \frac{  2\|\vec{b}\|_1^2 }{\varepsilon^2} \log\left( \frac{2}{\delta} \right) \,,
    \end{equation}

    \item \textbf{Classical preprocessing time:}
    \begin{equation}
    C_{\mathrm{pre}} = O\big(\log(1/\varepsilon)\big)\,,
    \end{equation}
\end{itemize}
where each circuit uses $(n+1)$ qubits.
\end{theorem}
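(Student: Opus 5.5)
The proof splits the total error $\varepsilon$ into a deterministic extrapolation (bias) part and a statistical part, each at most $\varepsilon/2$.

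\emph{Bias.} Apply Lemma~\ref{lem:specialized-Richardson-error} with target precision $\varepsilon/2$. This gives $\|\cP^{(R)}_{p,m}(T)-e^{-iHT}\|\leq\varepsilon/2$, provided the largest step number is
\[
r_{\max}= q_m \Big\lceil \tfrac{1}{q_1}\max\{1,(a_{\max}\Upsilon\lcomm T)^{1+1/p}\}\,(8\|\vec b\|_1/\varepsilon)^{1/(\sigma(m-1)+p)}\Big\rceil .
\]
By H\"older's inequality with $\|\rho\|_1=1$, this operator bound transfers to the scalar bound $\abs{\sum_k b_k\tr[\rho\,\cP^{1/s_k}(s_kT)]-\tr[\rho e^{-iHT}]}\leq\varepsilon/2$.

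\emph{Gate depth.} Each circuit is a Hadamard test on $\cP^{1/s_k}(s_kT)$. One step of $\cP$ consists of $\Upsilon\Gamma$ exponentials, each of unit cost, so the depth is $\Gamma\Upsilon$ times the number of steps $q_k/s\leq q_m/s$. This yields the stated $C_{\mathrm{gate}}$; the mismatch between $8$ and $4$ inside the bracket is absorbed into constants, or removed by running Lemma~\ref{lem:specialized-Richardson-error} at precision $\varepsilon$ and splitting the budget differently. Only one ancilla is used, so each circuit has $n+1$ qubits.

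\emph{Sampling.} Let $\pi_k=|b_k|/\|\vec b\|_1$. In each round:
\begin{itemize}
\item draw $k\sim\pi$;
\item run a Hadamard test measuring $X$ to obtain $x\in\{\pm1\}$ with $\mathbb E[x]=\mathrm{Re}\,\tr[\rho U_k]$, and a second test measuring $Y$ to obtain $y$ with $\mathbb E[y]=\mathrm{Im}\,\tr[\rho U_k]$;
\item record $Z=\|\vec b\|_1\,\mathrm{sgn}(b_k)(x+iy)$.
\end{itemize}
Then $\mathbb E Z=\sum_k b_k\tr[\rho U_k]$, and both the real and imaginary parts lie in $[-\|\vec b\|_1,\|\vec b\|_1]$. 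Apply Hoeffding's inequality to each part with tolerance $\varepsilon/(2\sqrt2)$ and failure probability $\delta/2$, then take a union bound. This gives $O(\|\vec b\|_1^2\log(2/\delta)/\varepsilon^2)$ samples. Matching the exact constant $2\|\vec b\|_1^2/\varepsilon^2$ requires careful bookkeeping: allocate the real and imaginary errors, count each Hadamard-test pair as one sample, and balance against the bias split. This constant-tracking is the main obstacle, and I expect it to need reparametrising the $\varepsilon$ split rather than any new idea.

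\emph{Preprocessing.} Classically we only list the schedule $\{(b_k,q_k)\}$ and build the sampling distribution over $m$ entries. Since the schedule is user-supplied, this takes $O(m)$ time, and $O(\log(1/\varepsilon))$ for the relevant choices of $m$.
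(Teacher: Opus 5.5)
Your argument is the one the paper intends. The paper gives no separate proof; the theorem is the $K=1$, $c_1=1$ case of the proof of Theorem~\ref{thm:overlap}, with Lemma~\ref{lem:specialized-Richardson-error} in place of the LKW lemma (H\"older for the bias, sampling $k$ with probability $|b_k|/\|\vec{b}\|_1$ and running Hadamard tests, then Hoeffding with a union bound over real and imaginary parts). Your version is correct up to constant factors.

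One correction to your closing expectation: no re-split of $\varepsilon$ will reproduce the literal constants. The factor $(4\|\vec{b}\|_1/\varepsilon)$ is exactly what Lemma~\ref{lem:specialized-Richardson-error} gives at bias $\varepsilon$. The count $2\|\vec{b}\|_1^2\log(2/\delta)/\varepsilon^2$ is exactly a single real-valued Hoeffding bound at tolerance $\varepsilon$ for a variable of range $2\|\vec{b}\|_1$. Each of these spends the whole error budget, and the second ignores the imaginary part. So the statement should be read as holding up to constants, which is what your proof establishes. The paper's parallel proof of Theorem~\ref{thm:overlap} is in the same position: it yields $M \geq 4S^2\log(4/\delta)/\varepsilon_s^2$ with $\varepsilon_s=\varepsilon/3$.
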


\subsection{Algorithms via Fourier approximation}\label{sec:fourier-algos}

Our central idea to construct simple algorithms for general matrix functions \( f(H) \) is to approximate our desired function using a truncated discrete Fourier series:
\begin{equation}\label{eq:fourier-series}
    f(H) \approx f_K(H) := \sum_{k=1}^K c_k e^{-i H t_k},
\end{equation}
where the coefficients \( c_k \in \mathbb{C} \) and time parameters \( t_k \in \mathbb{R} \) depend on the desired approximation accuracy, specifically on the Fourier truncation error $\varepsilon_{\mathrm{F}}$, which we denote as any known bound
\begin{equation}\label{eq:fourier-series-approximation} 
    \left\| f(H) - f_F(H) \right\| \leq \varepsilon_{\mathrm{F}}.
\end{equation}
We note the this Fourier series $F(\varepsilon_F)=\{c_k(\varepsilon_F),t_k(\varepsilon_F)\}_{k=1}^K$ can be any "out-of-the-box" scalar Fourier series which approximates the scalar function $f(\cdot)$ to approximation error $\varepsilon_{\mathrm{F}}$ on any domain which subsumes the spectrum of $H$. From hereon we drop explicit dependence on $\varepsilon_F$ in notation unless emphasis is needed.

We will refer often to the maximal time parameter which we denote as 
\begin{equation}
    T := \max_{1 \leq k \leq K} |t_k|\,,
\end{equation}
and the vector of Fourier coefficients as \(\vec{c} = (c_1, \ldots, c_K)\) whose \( \ell_1 \)-norm which we denote as
\begin{equation}
c := \sum_{k=1}^K |c_k|\,.
\end{equation}

As we found in the previous section, each exponential \( e^{-i H t_k} \) can then be approximated using an \( m \)-term Richardson extrapolation schedule $R=\{b_j, q_j \}_{j=1}^m$ based on a \( p \)-th order product formula \( \PC \):
\begin{equation}
    e^{-i H t_k} \overset{\varepsilon_R}{\approx}  \sum_{j=1}^m b_j \, \PC^{1/s_j}(s_j t_k) =: \cP^{(R)}_{p,m}(t_k), \label{eq:function-quasiprob}
\end{equation}
where \( s_j = s/q_j \) are scaled Trotter step sizes, and the approximation guarantee is possible if the conditions of Lemma \ref{lem:specialized-Richardson-error} or Lemma \ref{lemma:well-conditioned-Richardson} are satisfied. We denote this Fourier--Richardson approximation as
\begin{equation}\label{eq:fourier-richardson-approx}
f_{F,R}(H) := \sum_{k=1}^K \sum_{j=1}^m c_k b_j \, \PC^{1/s_j}(s_j t_k),
\end{equation}
and we can see that $f_{F,R}(H) \overset{\varepsilon_F + c\cdot \varepsilon_R}{\approx} f(H)$. 

\begin{lemma}[Fourier-Richardson approximation]\label{lem:fourier-richardson}
    Take the Fourier series approximation $F$ in Eq.~\eqref{eq:fourier-series-approximation} and take the conditions of Lemma \ref{lem:specialized-Richardson-error} or Lemma \ref{lemma:well-conditioned-Richardson} with $\varepsilon = \varepsilon_R$ and $T = \max_k t_k$. We have 
    \begin{equation}
        \left\|f_{F,R}(H) - f(H) \right\| \leq \varepsilon_F + c\cdot \varepsilon_R\,.
    \end{equation}
\end{lemma}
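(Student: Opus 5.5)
The plan is a triangle-inequality argument that splits the total error into a Fourier truncation part and an extrapolation part. Inserting the intermediate operator $f_K(H)=\sum_{k=1}^K c_k e^{-iHt_k}$ (the truncated Fourier series, denoted $f_F$ in Eq.~\eqref{eq:fourier-series-approximation}), we write
\begin{equation}
\left\|f_{F,R}(H) - f(H)\right\| \leq \left\|f_{F,R}(H) - f_K(H)\right\| + \left\|f_K(H) - f(H)\right\|\,.
\end{equation}
The second term is at most $\varepsilon_F$ directly by the assumed Fourier approximation guarantee in Eq.~\eqref{eq:fourier-series-approximation}. For this to be valid as an operator-norm statement, the scalar Fourier approximation must hold uniformly on a domain containing the spectrum of $H$. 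Then the eigenvalue transform (Definition~\ref{def:eigval-transform}) turns the scalar sup-norm bound into an operator-norm bound, since $H$ is Hermitian and unitarily diagonalizable.

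For the first term, we regroup the double sum in Eq.~\eqref{eq:fourier-richardson-approx} by Fourier index:
\begin{equation}
f_{F,R}(H) - f_K(H) = \sum_{k=1}^K c_k\Big(\cP^{(R)}_{p,m}(t_k) - e^{-iHt_k}\Big)\,.
\end{equation}
The triangle inequality then gives at most $\sum_k |c_k| \, \|\cP^{(R)}_{p,m}(t_k) - e^{-iHt_k}\|$.

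The remaining step, and the only point needing care, is to show that each summand error is at most $\varepsilon_R$ uniformly in $k$, even though the extrapolation schedule and base step $s$ were fixed using $T=\max_k|t_k|$ rather than $t_k$. I would argue this from the bounds in Lemma~\ref{lem:specialized-Richardson-error} (or Lemma~\ref{lem:generic-Richardson-error} via Lemma~\ref{lem:richardson-well-conditioned}). The error bound $4\|\vec b\|_1 \eta^{\cdot}(s_1 a_{\max}\Upsilon\lcomm |t_k|)^{\cdot}$ is monotonically nondecreasing in $|t_k|$. Likewise the convergence condition $s_1(a_{\max}\Upsilon\lcomm |t_k|)^{1+1/p}\le 1/2$ is implied by the same condition at $T$. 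Hence the choice of $s$ made for time $T$ certifies error $\le\varepsilon_R$ for every $|t_k|\le T$.

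Two points must be checked in this step. First, negative $t_k$ must be handled: the error series of Lemma~\ref{lem:time-evol-error-series} depends only on $|t|$, since the bounds on $\tilde E_{j+1}$ involve $|T|$. Second, the definition of $\lcomm$ must not depend on $T$, which holds. Summing then gives $c\cdot\varepsilon_R$, and combining the two terms yields $\varepsilon_F + c\,\varepsilon_R$.
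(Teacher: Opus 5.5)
Your proposal is correct and follows the paper's proof. Both use the same triangle-inequality split through the truncated Fourier series $f_F(H)$: the Fourier term is bounded by $\varepsilon_F$, and the extrapolation term is bounded termwise by $|c_k|\,\varepsilon_R$ using the Richardson lemmas. Your explicit check fills in a point the paper passes over: by monotonicity of the error bound and the convergence condition in the time parameter, the schedule and step size fixed at $T=\max_k|t_k|$ also give error at most $\varepsilon_R$ for every $|t_k|\le T$, including negative $t_k$. The paper's proof just writes the extrapolated operator at time $T$ rather than $t_k$.
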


\begin{proof}
    Using the triangle inequality, we have 
    \begin{align}
        \left\|f_{F,R}(H) - f(H) \right\| &\leq \left\|f_{F,R}(H) - f_F(H) \right\| + \left\|f(H) - f_F(H) \right\| \\
        &\leq \sum_{k=1}^K |c_k| \left\| e^{-iAt_k} - \cP^{(R)}_{p,m}(T) \right\| + \varepsilon_F\,,
    \end{align}
    where the second inequality follows by a triangle inequality on the definitions of $f_{F,R}(H)$ and $f_{F}(H)$, and the proof follows by directly imposing the conditions of Lemma \ref{lem:specialized-Richardson-error} or Lemma \ref{lemma:well-conditioned-Richardson}.
\end{proof}

For convenience, we denote $S := \sum_{k=1}^K \sum_{j=1}^m |c_k b_j|$.
This quantity bounds the negativity of the quasiprobability distribution in Eq.~\eqref{eq:fourier-richardson-approx} and will characterize the sample overhead of our algorithms.
When $R$ is the Low-Kliuchnikov-Wiebe extrapolation procedure of Lemma \ref{lemma:well-conditioned-Richardson} this is simply bounded as
\begin{equation}\label{eq:normalization-bound}
S = \left( \sum_{k=1}^K |c_k| \right) \left( \sum_{j=1}^m |b_j| \right) = O(c \log m)\,.
\end{equation}

The rest of this section will discuss how to instantiate $f_{F,R}(H)$ in each of the contexts of Tasks \hyperref[eq:task-1]{1} and \hyperref[eq:task-2]{2}, while controlling the approximation error to our true desired quantity $f(H)$.

\subsection{Circuit primitives}

As a preamble to discussing our full algorithms, we prove a basic result about the circuits we will use. All of our algorithms use a Hadamard test or a generalization thereof, which we draw in Figure \ref{fig:circuits}.

\begin{figure}[h]
  \centering
    \includegraphics[width=\columnwidth]{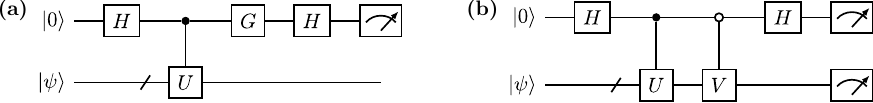}
  \caption{\textbf{(a): Hadamard test circuit}. Measuring the expectation value of $Z$ on the first register returns \textrm{Re}($\bra{\psi}U\ket{\psi}$) and \textrm{Im}($\bra{\psi}U\ket{\psi})$ for choices of $G = I$ and $G = S^{\dag}:=\ketbra{0}-i\ketbra{1}$ respectively. \textbf{(b): Generalized Hadamard test circuit}. Applying controlled-$U$ and anticontrolled-$V$, followed by measurement corresponding to the observable $Z \otimes O$ returns $\frac{1}{2}\left(\bra{\psi}U^{\dag}OV\ket{\psi}+ \bra{\psi}V^{\dag}OU\ket{\psi} \right)$ in expectation.\label{fig:circuits}}
\end{figure}

The output of a Hadamard test is a standard result. For any $n$-qubit unitary $U$, taking the $n+1$ qubit circuit in Figure \ref{fig:circuits} and measuring the expectation value of $Z$ on the first register returns \textrm{Re}($\bra{\psi}U\ket{\psi}$) when the gate $G$ is chosen to be the identity, and \textrm{Im}($\bra{\psi}U\ket{\psi})$ when $G = S^{\dag}:=\ketbra{0}-i\ketbra{1}$ is a phase gate. Here we prove a similar result for the generalized Hadamard test.

\begin{lemma}[Expectation value of generalized Hadamard test]
\label{lemma:hadamard-test-2}
Let \( \rho \) be a quantum state, and let \( U, V\) be unitaries and $O$ an observable. Then, the generalized Hadamard test (see Figure \hyperref[fig:circuits]{1(b)}) that
\begin{itemize}
    \item Initializes the first control qubit in \( \ket{+} \) with a Hadamard gate,
    \item Applies a controlled-\( U \),
    \item Applies an anti-controlled-\(V \) (gate implemented when control is in state \( \ket{0} \)),
    \item Measures corresponding to the observable \( X \otimes O \) 
\end{itemize}
returns the quantity $\frac{1}{2} \left( \mathrm{Tr}[O V \rho U^\dagger] + \mathrm{Tr}[O U \rho V^\dagger] \right)$
in expectation value.
\end{lemma}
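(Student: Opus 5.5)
This is a direct computation. I will track the joint state through the circuit and then evaluate the expectation of $X\otimes O$. By linearity in $\rho$, it suffices to treat a pure state $\rho=\ketbra{\psi}$ and extend by convexity. One could also work with density matrices throughout, which avoids the purification step.

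Start from $\ket{0}\ket{\psi}$. The Hadamard gives $\frac{1}{\sqrt2}(\ket{0}+\ket{1})\ket{\psi}$. The controlled-$U$ acts on the $\ket{1}$ branch and the anti-controlled-$V$ acts on the $\ket{0}$ branch, so the two commute and the output is
\begin{equation}
\ket{\Phi} = \tfrac{1}{\sqrt2}\big(\ket{0}\otimes V\ket{\psi} + \ket{1}\otimes U\ket{\psi}\big).
\end{equation}
In density-matrix form this is $\frac12\sum_{a,b\in\{0,1\}}\ket{a}\!\bra{b}\otimes W_a\rho W_b^\dagger$, with $W_0=V$ and $W_1=U$.

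Next compute $\tr[(X\otimes O)\,\Phi]$. Since $X=\ket{0}\!\bra{1}+\ket{1}\!\bra{0}$, only the off-diagonal blocks $a\neq b$ survive the trace. The term $\ket{0}\!\bra{1}\otimes V\rho U^\dagger$ contributes $\tr[O V\rho U^\dagger]$, and the term $\ket{1}\!\bra{0}\otimes U\rho V^\dagger$ contributes $\tr[O U\rho V^\dagger]$. Together these give $\frac12\big(\tr[OV\rho U^\dagger]+\tr[OU\rho V^\dagger]\big)$, which is the claimed value. The two terms are complex conjugates of each other because $O$ is Hermitian, so the sum is real, as it must be for an observable expectation.

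The only real obstacle is bookkeeping: the figure caption writes $Z\otimes O$, while the lemma specifies $X\otimes O$. Measuring $Z$ on the control directly after the controlled gates would only yield the diagonal terms $\frac12\tr[O(V\rho V^\dagger - U\rho U^\dagger)]$. So I will use the lemma's $X$ version, equivalently applying a final Hadamard before a $Z$ measurement, since $HZH=X$. I will also note that a phase gate $S^\dagger$ on the control gives the imaginary parts, as in the standard Hadamard test.
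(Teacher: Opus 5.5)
Your proposal is correct and follows essentially the same route as the paper's proof. Both propagate the state through the controlled-$U$ and anti-controlled-$V$ gates and note that only the off-diagonal blocks $\ket{0}\!\bra{1}\otimes V\rho U^\dagger$ and $\ket{1}\!\bra{0}\otimes U\rho V^\dagger$ survive the trace against $X\otimes O$; the paper works with density matrices gate by gate rather than reducing to pure states. Your remark about the caption is also right: the figure says $Z\otimes O$, but the proof requires $X\otimes O$, equivalently a final Hadamard on the control before a $Z$ measurement, and that is what the paper's proof uses.
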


\begin{proof}
We trace the circuit after each gate to verify its outcome:
\begin{align}
    \ketbra{+} \otimes \rho 
    &= \frac{1}{2} \left( \ketbra{0} \otimes \rho + \ketbra{1} \otimes \rho + \ket{0}\!\!\bra{1} \otimes \rho + \ket{1}\!\!\bra{0} \otimes \rho \right) \tag{initial state} \\
    &\mapsto \frac{1}{2} \left( 
        \ketbra{0} \otimes \rho + 
        \ket{1}\bra{1} \otimes U \rho U^\dagger +
        \ket{0}\!\!\bra{1} \otimes \rho U^\dagger + 
        \ket{1}\!\!\bra{0} \otimes U \rho
    \right) \tag{controlled-$U$} \\
    &\mapsto \frac{1}{2} \left( 
        \ketbra{0} \otimes V \rho V^\dagger + 
        \ket{1}\bra{1} \otimes U \rho U^\dagger +
        \ket{0}\!\!\bra{1} \otimes V \rho U^\dagger + 
        \ket{1}\!\!\bra{0} \otimes U \rho V^\dagger
    \right) \tag{anti controlled-$V$} \\
    &\mapsto \mathrm{Tr}\left[
        (X \otimes O) \cdot \frac{1}{2} \left( 
            \ket{0}\!\!\bra{1} \otimes V \rho U^\dagger + 
            \ket{1}\!\!\bra{0} \otimes U \rho V^\dagger 
        \right)
    \right] \tag{expectation value for $X \otimes O$} \\
    &= \frac{1}{2} \left( \mathrm{Tr}[O V \rho U^\dagger] + \mathrm{Tr}[O U \rho V^\dagger] \right)\,.  
\end{align}
\end{proof}

\subsection{Estimating $\tr[\rho f(H)]$}

We now present our full algorithm for Task (\hyperref[eq:task-1]{1(i)}) for estimating quantities of the form $\tr[\rho U f(H)]$.  

\begin{algorithm}[h]
\caption{Randomized algorithm for estimating $\operatorname{Tr}[\rho U f(H)]$}
\label{algo:overlap}
\begin{algorithmic}[1]
\Require Fourier series $F=\{c_k, t_k\}_{k=1}^K$; Richardson extrapolation schedule $R=\{b_j,q_j\}_{j=1}^m$; base Trotter step number $1/s$; input state $\rho$ and unitary $U$ with efficient preparations; matrix with decomposition $H=\sum_{\ell} H_{\ell}$ and product formula $\PC$.
\State \textbf{Classical preprocessing:} Compute the probability distribution:
\[
\left\{ \frac{ \left| c_k b_j \right| }{ S } \right\}_{k,j} \quad \text{for } k = 1,\ldots,K\,;\ j = 1,\ldots,m\,;\quad \text{with } S = \|\vec{c}\|_1 \|\vec{b}\|_1\,.
\]
\For{$i = 1$ to $C_{\mathrm{sample}}$}:
    \State Sample an index pair $(k', j')$ from the probability distribution
    \State Take $s_{j'} \leftarrow s/q_{j'}$ 
    \State Prepare Hadamard test circuits corresponding to 
    \[\operatorname{Re} \left[ \operatorname{Tr}\left( \rho U \PC^{1/s_{j'}}(s_{j'} t_{k'}) \right) \right] \quad \text{and} \quad  \operatorname{Im} \left[ \operatorname{Tr}\left( \rho U \PC^{1/s_{j'}}(s_{j'} t_{k'}) \right) \right]\,,\]
    \hspace{1.25em} collect one measurement outcome (single-shot statistic) from each circuit $(X^{(k',j')}_{\text{Re}}, X^{(k',j')}_{\text{Im}})$
    \State Take $\mu_{(k',j')} \leftarrow S \cdot \operatorname{sign}(c_{k'} b_{j'}) \cdot (X^{(k',j')}_{\text{Re}} + iX^{(k',j')}_{\text{Im}})$ and store value
\EndFor
\State \textbf{Return:} $\widehat{\mu} \leftarrow$ the sample mean over all $C_{\mathrm{sample}}$ values
\end{algorithmic}
\end{algorithm}

We first show that \Cref{algo:overlap} is approximately correct. Specifically, we show the expectation value of the estimator over the probability distribution defined in the algorithm exactly equals the Fourier-Richardson approximation to our desired quantity $\operatorname{Tr}[\rho U f_{K,R}(H)]$.

\begin{lemma}[Correctness for overlap estimator]
\label{lem:overlap-correctness}
Let $(k,j)$ be sampled with probability \( \Pr(k,j) = \frac{|c_k b_j|}{S} \) for some Fourier series $F=\{c_k, t_k\}_{k=1}^K$ and Richardson extrapolation schedule $R=\{b_j,q_j\}_{j=1}^m$, and let
\begin{equation}
    \mu_{(k,j)} \leftarrow S \cdot \operatorname{sign}(c_{k} b_{j}) \cdot (X^{(k,j)}_{\text{Re}} + iX^{(k,j)}_{\text{Im}})
\end{equation}
be the random variable that Algorithm \ref{algo:overlap} instantiates which generates one outcome each iteration of the for loop, where $\operatorname{sign}(c_{k} b_{j}) = c_{k} b_{j}/|c_{k} b_{j}|$ denotes the complex phase.
Then, \( \mu_{(k,j)} \) is an unbiased estimator of \( \operatorname{Tr}[\rho f_{F,R}(H)] \) with $f_{F,R}(H)$ as defined in Eq.~\eqref{eq:fourier-richardson-approx}, i.e.,
\begin{equation}
    \mathbb{E}_{k,j,C}[\mu_{(k,j)}] = \operatorname{Tr}[\rho U f_{F,R}(H)]\,,
\end{equation}
where $\mathbb{E}_{k,j,C}$ denotes an expectation value over indices $k,j$, and over quantum circuit outputs.
\end{lemma}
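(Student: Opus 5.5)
The plan is to compute the expectation by the tower rule: first condition on the sampled index pair $(k,j)$ and average over the quantum measurement outcomes, then average over the classical distribution $\Pr(k,j)=|c_kb_j|/S$.

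\textbf{Step 1 (condition on $(k,j)$).} Fix $(k,j)$ and write $W_{k,j}:=U\,\PC^{1/s_j}(s_j t_k)$, which is a unitary. By the standard Hadamard test result recalled before \Cref{lemma:hadamard-test-2}:
\begin{itemize}
\item the single-shot $\pm1$ outcome $X^{(k,j)}_{\text{Re}}$ of the circuit with $G=I$ has mean $\operatorname{Re}\operatorname{Tr}[\rho W_{k,j}]$;
\item the outcome $X^{(k,j)}_{\text{Im}}$ of the circuit with $G=S^\dagger$ has mean $\operatorname{Im}\operatorname{Tr}[\rho W_{k,j}]$.
\end{itemize}
For mixed $\rho$ I would note that the pure-state statement extends by linearity. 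Equivalently, one can reuse the density-matrix trace in the proof of \Cref{lemma:hadamard-test-2} with $V=I$, $O=I$ and measurement $X$ (or $Y$ for the imaginary part). The off-diagonal block is $\tfrac12\ket{1}\!\bra{0}\otimes W\rho$, which yields $\operatorname{Re}\operatorname{Tr}[W\rho]$ (respectively the imaginary part). By linearity of expectation, $\mathbb{E}_C[X^{(k,j)}_{\text{Re}}+iX^{(k,j)}_{\text{Im}}\mid k,j]=\operatorname{Tr}[\rho W_{k,j}]$.

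\textbf{Step 2 (average over indices).} Combining with Step 1,
\begin{equation}
\mathbb{E}_{k,j,C}[\mu_{(k,j)}]=\sum_{k,j}\frac{|c_kb_j|}{S}\,S\,\frac{c_kb_j}{|c_kb_j|}\operatorname{Tr}[\rho W_{k,j}]=\sum_{k=1}^K\sum_{j=1}^m c_kb_j\operatorname{Tr}\big[\rho U\PC^{1/s_j}(s_jt_k)\big].
\end{equation}
Pairs with $c_kb_j=0$ have zero probability, so they contribute nothing on either side and the sign is never undefined. The normalization $\sum_{k,j}|c_kb_j|=\|\vec c\|_1\|\vec b\|_1=S$ makes the distribution valid. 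Pulling the finite double sum inside the trace gives $\operatorname{Tr}[\rho U f_{F,R}(H)]$ by Eq.~\eqref{eq:fourier-richardson-approx}.

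\textbf{Main obstacle.} The argument is essentially bookkeeping. The one point needing care is Step 1: the phase-gate convention $G=S^\dagger$ must produce $+\operatorname{Im}$ rather than $-\operatorname{Im}$, and the two circuits are run independently but share the same sampled $(k,j)$. Conditioning on $(k,j)$ before taking expectations handles the shared index cleanly.
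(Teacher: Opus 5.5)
Your proposal is correct and follows essentially the same route as the paper. The paper also averages $S\cdot\operatorname{sign}(c_kb_j)\,U\PC^{1/s_j}(s_jt_k)$ over the indices to get $Uf_{F,R}(H)$, uses that the Hadamard tests return $\operatorname{Tr}[\rho U\PC^{1/s_j}(s_jt_k)]$ in expectation, and concludes by linearity; your explicit conditioning on $(k,j)$, your check that $G=S^\dagger$ yields $+\operatorname{Im}$, and your remark on zero-weight pairs are just more careful bookkeeping of that argument.
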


\begin{proof}
We first prove a fact that we will use throughout this section, which is that Algorithm \ref{algo:overlap} instantiates a random operator whose expectation value is $f_{F,R}(H)$. This is simply seen by noting that
\begin{align}
    \mathbb{E}_{k,j}\big[S \cdot \operatorname{sign}(c_{k} b_{j}) \cdot U \PC^{1/s_{j}}(s_{j} t_{k})\big] &= \sum_{k,j} \frac{|c_k b_j|}{S} \cdot S \cdot \operatorname{sign}(c_{k} b_{j}) \cdot U \PC^{1/s_{j'}}(s_{j'} t_{k'}) \\
    & = \sum_{k,j} {c_k b_j} U \PC^{1/s_{j}}(s_{j} t_{k})
\end{align}
which is exactly $U f_{F,R}(H)$ as defined in Eq.~\eqref{eq:fourier-richardson-approx}. Further, we note that the expectation over outputs of Hadamard tests yields
\begin{equation}
    \mathbb{E}_{C}\big[ X^{(k,j)}_{\text{Re}} + iX^{(k,j)}_{\text{Im}} \big] = \tr[\rho U \PC^{1/s_{j}}(s_{j} t_{k})]\,.
\end{equation}
The proof then follows due to linearity of expectation values.
\end{proof}

\begin{theorem}[Overlap estimation]
\label{thm:overlap}
Consider a matrix $H\in \mathbb{C}^{2^n \times 2^n}$ of the form in Eq.~\eqref{eq:matrix}.
Take any Fourier series $F(\varepsilon/3)=\{c_k(\varepsilon/3), t_k(\varepsilon/3)\}_{k=1}^K$ approximation of function $f$ with Fourier approximation error $\varepsilon/3$ as defined in Eq.~\eqref{eq:fourier-series-approximation}, and take $R\big(\varepsilon/(3c(\varepsilon/3))\big)$ as the Richardson extrapolation scheme of Lemma \ref{lem:richardson-well-conditioned} with extrapolation error $\varepsilon/(3c(\varepsilon/3))$. Take any even $p$-th order product formula $\PC$. Using this choice of $F(\varepsilon/3)$, $R\big(\varepsilon/(3c(\varepsilon/3))\big)$ and $\PC$, Algorithm~\ref{algo:overlap} returns an estimate of \( \operatorname{Tr}[\rho U f(H)] \) to additive error at most $\varepsilon$ with success probability at least \( (1 - \delta) \) using:

\begin{itemize}
    \item \textbf{Gate depth (per sample):}
    \begin{equation}
    C_{\mathrm{gate}} = O\!\left( \Gamma \Upsilon \cdot \log\left( \frac{c(\varepsilon/3)}{\varepsilon} \right) \cdot \left( a_{\max} \Upsilon \lambda_{\mathrm{comm}} \, T(\varepsilon/3) \right)^{1 + \frac{1}{p}} \right),
    \end{equation}

    \item \textbf{Sample complexity:}
    \begin{equation}
    C_{\mathrm{sample}} = O\!\left( \frac{ c(\varepsilon/3)^2 \cdot \big(\log \log\big(\frac{c(\varepsilon/3)}{\varepsilon}\big)\big)^2}{\varepsilon^2} \cdot \log\left( \frac{1}{\delta} \right) \right),
    \end{equation}

    \item \textbf{Classical preprocessing time:}
    \begin{equation}
    C_{\mathrm{pre}} = O\left(K(\varepsilon/3) \cdot \log\bigg(\frac{c(\varepsilon/3)}{\varepsilon}\bigg)\right),
    \end{equation}
\end{itemize}
where \( T(\varepsilon/3) := \max_k |t_k(\varepsilon/3)| \), \( c(\varepsilon/3) = \sum_k |c_k(\varepsilon/3)| \), and each circuit uses $(n+1)$ qubits.
\end{theorem}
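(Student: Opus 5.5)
We split the total error $\varepsilon$ into three equal parts: Fourier truncation, Richardson extrapolation, and statistical sampling. By the triangle inequality,
\begin{equation}
\big|\widehat{\mu} - \operatorname{Tr}[\rho U f(H)]\big| \leq \big|\widehat{\mu} - \operatorname{Tr}[\rho U f_{F,R}(H)]\big| + \big|\operatorname{Tr}[\rho U (f_{F,R}(H) - f(H))]\big| .
\end{equation}
Since $\|\rho\|_1 = 1$ and $\|U\|=1$, Hölder's inequality bounds the second term by $\|f_{F,R}(H)-f(H)\|$. Lemma \ref{lem:fourier-richardson} then bounds this by $\varepsilon_F + c\,\varepsilon_R$. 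With $\varepsilon_F = \varepsilon/3$ and $\varepsilon_R = \varepsilon/(3c(\varepsilon/3))$, the deterministic (bias) error is at most $2\varepsilon/3$. It remains to show that $\widehat{\mu}$ is within $\varepsilon/3$ of its mean with probability at least $1-\delta$. Lemma \ref{lem:overlap-correctness} already identifies that mean as $\operatorname{Tr}[\rho U f_{F,R}(H)]$.

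\textbf{Statistical error.} Each single-shot outcome $X_{\mathrm{Re}}, X_{\mathrm{Im}} \in \{\pm 1\}$, so the real and imaginary parts of $\mu_{(k,j)}$ are bounded in $[-S, S]$. Apply Hoeffding's inequality separately to the real and imaginary parts, each to precision $\varepsilon/(3\sqrt{2})$ and failure probability $\delta/2$, and combine with a union bound. This gives $C_{\mathrm{sample}} = O(S^2 \log(1/\delta)/\varepsilon^2)$. By Eq.~\eqref{eq:normalization-bound}, $S = O(c \log m)$. Lemma \ref{lem:richardson-well-conditioned} at precision $\varepsilon_R$ gives $m = O(\log(1/\varepsilon_R)) = O(\log(c(\varepsilon/3)/\varepsilon))$. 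Substituting yields the stated factor $c(\varepsilon/3)^2(\log\log(c/\varepsilon))^2$.

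\textbf{Gate depth.} We apply Lemma \ref{lem:richardson-well-conditioned} at extrapolation error $\varepsilon_R$. The one subtlety is that the Richardson schedule must work uniformly for all $t_k$. We fix the base step $s$ using the worst case $T = \max_k |t_k|$. The bounds in Lemma \ref{lem:generic-Richardson-error} are monotone in $T$, and the error series for smaller $|t_k|$ is dominated by the one at $T$ (possibly negative times are handled via $|t_k|$). Hence every $k$ achieves error $\le \varepsilon_R$ with at most $O((a_{\max}\Upsilon\lcomm T)^{1+1/p}\log(1/\varepsilon_R))$ Trotter steps. Each step of $\PC$ costs depth $O(\Gamma\Upsilon)$, which gives $C_{\mathrm{gate}}$ after using $\log(1/\varepsilon_R) = O(\log(c/\varepsilon))$. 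Control of the product formula adds only constant overhead per exponential, and the Hadamard test uses one ancilla, so each circuit has $n+1$ qubits.

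\textbf{Preprocessing.} We compute the $m$ LKW coefficients $b_j$ and then the $Km$ products $|c_k b_j|/S$. This takes $O(K \log(c/\varepsilon))$ time. The product-form distribution lets us sample $k$ and $j$ independently, so there is no extra cost.

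The main obstacle is the uniformity-in-$t_k$ step: checking that a single choice of $s$ and schedule controls the extrapolation error for all Fourier times simultaneously. I would handle it by tracing the proof of Lemma \ref{lem:generic-Richardson-error} with $T$ replaced by $|t_k| \le T$ and noting each bound only increases in $T$. The remaining steps are routine bookkeeping of logarithms.
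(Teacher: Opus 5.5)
Your proposal is correct and follows the paper's proof essentially step for step. Both split the error by the triangle inequality into two parts. The Fourier--Richardson bias is bounded via Lemma~\ref{lem:fourier-richardson} with $\varepsilon_F=\varepsilon/3$ and $\varepsilon_R=\varepsilon/(3c)$. The statistical error is handled by Hoeffding plus a union bound over real and imaginary parts, using $S=O(c\log m)$ and $m=O(\log(1/\varepsilon_R))$. The gate depth comes from Lemma~\ref{lem:richardson-well-conditioned} at $T=\max_k|t_k|$.

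Your explicit check that a single base step $s$ and schedule control the extrapolation error for every $t_k$ at once is sound. It rests on the monotonicity of the bounds in Lemma~\ref{lem:generic-Richardson-error} in $|t_k|$, a point the paper leaves implicit.
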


\begin{proof}
Let \( \widehat{\mu}_{M} = \frac{1}{M} \sum_{\ell=1}^M X^{(\ell)} \) be the empirical mean over $M$ samples of the unbiased estimator defined in Lemma~\ref{lem:overlap-correctness}. We can bound the total error into a contribution of two pieces using the triangle inequality:
\begin{equation}
\left| \operatorname{Tr}[\rho U f(H)] - \widehat{\mu}_M \right| \leq \left| \operatorname{Tr}[\rho U f_{K,R}(H)] - \operatorname{Tr}[\rho U f(H)] \right| + \left| \widehat{\mu}_M -  \operatorname{Tr}[\rho U f_{K,R}(H)] \right| \,,
\end{equation}
where the first part characterizes the Fourier-Richardson approximation, and the second part characterizes the statistical error. In Lemma \ref{lem:fourier-richardson} we bounded the first term by $\varepsilon_F + c(\varepsilon_F) \cdot \varepsilon_R$, where $\varepsilon_F$ is the Fourier approximation error controlled completely by choice of Fourier parameters, and $\varepsilon_R$ is the Richardson extrapolation error. We can thus bound the first term by $2\varepsilon/3$ by setting the Fourier approximation error to $\varepsilon_F =\varepsilon/3$ and the Richardson approximation error to $\varepsilon_R = \varepsilon/(3c(\varepsilon/3))$. We use the extrapolation schedule of Lemma \ref{lem:richardson-well-conditioned}, which requires product formulae of worst-case Trotter step number $O\!\left( \log\left( \frac{c(\varepsilon/3)}{\varepsilon} \right) \cdot \left( a_{\max} \Upsilon \lambda_{\mathrm{comm}} \, T(\varepsilon/3) \right)^{1 + \frac{1}{p}} \right)$, from which our stated worst-case gate depth follows.

Now we derive the sample complexity. Each sample outcome recorded in the algorithm is a complex number with real and imaginary parts in $[-S,S]$. By Hoeffding's inequality (and a union bound to account for real and imaginary parts), for any \( \varepsilon_S > 0 \),
\begin{equation}
\mathbb{P}\left[ \left| \widehat{\mu}_M - \mathbb{E}[\widehat{\mu}_M] \right| \geq \varepsilon_s \right] 
\leq 4 \exp\left( - \frac{M \varepsilon_s^2}{4 S^2} \right)\,,
\end{equation}
and we recall that Lemma \ref{lem:overlap-correctness} demonstrates that $\E[\hat{\mu}_M]= \operatorname{Tr}[\rho U f_{K,R}(H)]$. To ensure that this failure probability is at most \( \delta \), it suffices to choose any
\begin{equation}
M \geq \frac{4S^2}{\varepsilon_s^2} \cdot \log\left( \frac{4}{\delta} \right).
\end{equation}
To obtain the stated sample complexity we set the statistical error to $\varepsilon_s = \varepsilon/3$, recall (see equation~\eqref{eq:normalization-bound}) that $S = \sum_{k,j} |c_k b_j| = O(c \log m)$ and take the fact that $ m = O(\log(1/\varepsilon_R))$ is sufficient to maintain the extrapolation error using Lemma \ref{lem:richardson-well-conditioned}. This guarantees that the total error is bounded by $2\varepsilon/3 + \varepsilon/3 = \varepsilon$ as required.

Finally, we see the classical complexity of forming the probability distribution $\{|c_k b_j / S \}_{k,j}^{K,m}$ is simply $O(K(\varepsilon_F) \cdot m(\varepsilon_R))$.
\end{proof}

\subsection{Estimating $\text{Tr}[ f(H)\rho f(H)^{\dagger} O]$}

Similar to before, we approximate our desired quantity with a quasiprobability distribution over implementable circuits with product formulae, using $f_{F,R}$ in  Eq.~\eqref{eq:fourier-richardson-approx} to approximate $f(H)$. We present the algorithm in Algorithm \ref{algo:observable}. We first show that the algorithm returns $\text{Tr}[ f_{F,R}(H)\rho f_{F,R}(H)^{\dagger} O]$ if given unlimited samples.

\begin{lemma}[Correctness for observable estimator]
\label{lem:observable-correctness}
Let $(k,j)$ be sampled with probability \( \Pr(k,j) = \frac{|c_k b_j|}{S} \) for some Fourier series $F=\{c_k, t_k\}_{k=1}^K$ and Richardson extrapolation schedule $R=\{b_j,q_j\}_{j=1}^m$, and let
\begin{equation}
    \mu_{(k,k',j,j')} \leftarrow S^2 \cdot \operatorname{sign}(c_{k} c^*_{k'} b_{j} b_{j'}) \cdot X^{(k,k',j,j')}
\end{equation}
be the random variable that Algorithm \ref{algo:observable} instantiates which generates one outcome each iteration of the for loop.
Then, \( \mu_{(k,k',j,j')} \) is an unbiased estimator of \( \tr[f_{F,R}(H) \rho f_{F,R}(H)^{\dag} O] \), i.e.,
\begin{equation}
    \mathbb{E}_{k,k',j,j',C}[\mu_{(k,k',j,j')}] = \tr[f_{F,R}(H) \rho f_{F,R}(H)^{\dag} O]\,,
\end{equation}
with $f_{F,R}(H)$ as defined in Eq.~\eqref{eq:fourier-richardson-approx}, where $\mathbb{E}_{k,k',j,j',C}$ denotes an expectation value over indices $k,k',j,j'$, and over quantum circuit outputs.
\end{lemma}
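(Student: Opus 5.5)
The plan mirrors the proof of Lemma~\ref{lem:overlap-correctness}: separate the classical randomness (choice of indices) from the quantum randomness (measurement outcomes), and conclude by linearity of expectation. I read Algorithm~\ref{algo:observable} as drawing two independent index pairs $(k,j)$ and $(k',j')$, each from $\Pr(k,j)=|c_kb_j|/S$. With these indices fixed, it runs the generalized Hadamard test of Lemma~\ref{lemma:hadamard-test-2} with $V=W_{k,j}:=\PC^{1/s_j}(s_jt_k)$ and $U=W_{k',j'}$, measuring $X\otimes O$ (and, by analogy with the phase gate $G$ of Figure~\ref{fig:circuits}(a), the variant giving the imaginary part). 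The sample $X^{(k,k',j,j')}$ then has conditional expectation $\tr[O\,W_{k,j}\,\rho\,W_{k',j'}^\dagger]$, once the real and imaginary pieces are combined.

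\textbf{Step 1 (quantum expectation).} For fixed indices, Lemma~\ref{lemma:hadamard-test-2} gives conditional expectation $\tfrac12(\tr[OV\rho U^\dagger]+\tr[OU\rho V^\dagger])=\operatorname{Re}\tr[OV\rho U^\dagger]$, using that $O$ and $\rho$ are Hermitian. Adding $i$ times the output of the phase-shifted version yields $\tr[O\,W_{k,j}\rho W_{k',j'}^\dagger]$.

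If the algorithm instead records only the real part, there is a fallback. The index distribution is symmetric under swapping $(k,j)\leftrightarrow(k',j')$, and under that swap the weight $\operatorname{sign}(c_kc^*_{k'}b_jb_{j'})$ becomes its complex conjugate. Symmetrizing the sum over index pairs therefore recovers the full quantity, because $\tr[f_{F,R}\rho f_{F,R}^\dagger O]$ is itself real.

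\textbf{Step 2 (classical expectation).} Average over the independent index pairs:
\[
\mathbb{E}_{k,k',j,j'}\Big[S^2\operatorname{sign}(c_kc_{k'}^*b_jb_{j'})\,W_{k,j}\rho W_{k',j'}^\dagger\Big]=\sum_{k,j}\sum_{k',j'}c_kb_j\,\big(c_{k'}b_{j'}\big)^*\,W_{k,j}\rho W_{k',j'}^\dagger .
\]
The $b_j$ are real, and $|c_kb_j||c_{k'}b_{j'}|\operatorname{sign}(c_kc_{k'}^*b_jb_{j'})=c_kb_jc_{k'}^*b_{j'}$, so the right-hand side factorizes as $f_{F,R}(H)\,\rho\,f_{F,R}(H)^\dagger$ by Eq.~\eqref{eq:fourier-richardson-approx}.

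\textbf{Step 3 (conclude).} Multiply by $O$, take the trace, and apply the tower property $\mathbb{E}_{k,k',j,j',C}=\mathbb{E}_{k,k',j,j'}\mathbb{E}_{C\mid k,k',j,j'}$. Since the trace is linear, this gives the claimed unbiasedness.

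The main obstacle is Step 1, specifically the bookkeeping of which register carries the $\ket{0}$ (anti-)control versus the $\ket{1}$ control, and of conjugations. Getting this right ensures the unconjugated factor $c_k$ multiplies the operator to the left of $\rho$ and $c_{k'}^*$ multiplies the operator to the right. If the assignment comes out reversed, the fix is to relabel $U\leftrightarrow V$, equivalently to use the swap symmetry of the sampling distribution.
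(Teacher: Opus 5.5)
Your Steps 2--3 are the paper's proof: factor the index average into $f_{F,R}(H)\rho f_{F,R}(H)^\dagger$ using independence of the two draws, then apply linearity. Your Step 2 is in fact cleaner, since it keeps the conjugate on $c_{k'}$ that the paper's display drops.

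\textbf{The gap is in Step 1.} Algorithm~\ref{algo:observable} runs one generalized Hadamard test and records one real outcome of $X\otimes O$, and there is no phase-shifted circuit. Write $\alpha=(k,j)$, $\beta=(k',j')$, $W_\alpha=\PC^{1/s_j}(s_jt_k)$, $z_{\alpha\beta}:=\tr[O\,W_\alpha\rho W_\beta^\dagger]$ and $w_{\alpha\beta}:=c_kb_j\,(c_{k'}b_{j'})^*$. The recorded outcome then has conditional mean $\operatorname{Re} z_{\alpha\beta}$. So your main line proves unbiasedness of a different estimator, and for the estimator in the statement everything rests on your fallback, which is false. Both $w_{\alpha\beta}$ and $z_{\alpha\beta}$ are conjugated by the swap $\alpha\leftrightarrow\beta$, so symmetrizing gives
\[
\mathbb{E}[\mu]=\sum_{\alpha,\beta} w_{\alpha\beta}\operatorname{Re} z_{\alpha\beta}=\sum_{\alpha,\beta}\operatorname{Re}(w_{\alpha\beta})\operatorname{Re}(z_{\alpha\beta}).
\]
Because the target is real, however,
\[
\tr[f_{F,R}(H)\rho f_{F,R}(H)^\dagger O]=\sum_{\alpha,\beta}\operatorname{Re}(w_{\alpha\beta}z_{\alpha\beta})=\sum_{\alpha,\beta}\big(\operatorname{Re}(w_{\alpha\beta})\operatorname{Re}(z_{\alpha\beta})-\operatorname{Im}(w_{\alpha\beta})\operatorname{Im}(z_{\alpha\beta})\big).
\]
The cross term is a product of two swap-antisymmetric quantities, so it is swap-symmetric and symmetrization cannot remove it. Realness of the target does not help, because $\operatorname{Re}(wz)\neq\operatorname{Re}(w)\operatorname{Re}(z)$.

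Here is a concrete counterexample. Take one qubit with $H=Z$, so $\PC(t)=e^{-iZt}$ exactly, and $m=1$, $b_1=1$. Use Fourier data $(c_1,t_1)=(1,0)$ and $(c_2,t_2)=(i,t)$, with $\rho=\ketbra{+}$ and $O=Z$. Then $z_{11}=z_{22}=0$ and $z_{12}=i\sin t$. This gives $\mathbb{E}[\mu]=0$, while the target equals $2\sin t$.

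The paper's own proof has the same hole. After displaying the symmetrized Hadamard-test mean $\tfrac12(\tr[\cdot]+\tr[\cdot])$, it simply invokes linearity. That step is valid only when every $c_kc_{k'}^*$ is real, for example with real Fourier coefficients or a single term as in pure time evolution. For genuinely complex coefficients, such as the Heaviside and resolvent series used later, the lemma needs a modified circuit. There are two options:
\begin{itemize}
\item Use your extra phase-shifted test. Then $\mu$ is complex with real and imaginary parts each bounded by $S^2$, which a union bound handles as in Theorem~\ref{thm:overlap}.
\item Rotate the control-qubit measurement basis by $\theta_{\alpha\beta}=\arg w_{\alpha\beta}$. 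A single real outcome then has mean $\operatorname{Re}(e^{i\theta_{\alpha\beta}}z_{\alpha\beta})$ and is weighted by $S^2$ alone.
\end{itemize}
So keep your primary route, but present it as a modification of Algorithm~\ref{algo:observable} rather than a reading of it, and drop the fallback. The branch bookkeeping you flagged as the main obstacle is not the real issue: putting $V=W_{k,j}$ on the $\ket{0}$ branch already aligns the unconjugated $c_k$ with the left factor.
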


\begin{proof}
In the proof of Lemma \ref{lem:overlap-correctness} we showed that $\mathbb{E}_{k,j}\big[S \cdot \operatorname{sign}(c_{k} b_{j}) \PC^{1/s_{j}}(s_{j} t_{k})\big] = f_{F,R}(H)$.
From this it follows simply that 
\begin{align}
    \mathbb{E}_{k,k',j,j'}\big[S^2 \cdot \operatorname{sign}(c_{k}c_{k'} b_{j} b_{j'})  \PC^{1/s_{j}}(s_{j} t_{k}) \rho \PC^{1/s_{j}}(s_{j} t_{k})^{\dag} \big] &= f_{F,R}(H) \rho f_{F,R}(H)^{\dag}\,.
\end{align}
Further, from Lemma \ref{lemma:hadamard-test-2} we note that the expectation over outputs of the generalized Hadamard test yields
\begin{equation}
    \mathbb{E}_{C}\big[ X^{(k,k',j,j')} \big] = \frac{1}{2} \left( \operatorname{Tr}\left[ \PC^{1/s_{j}}(s_{j} t_{k})  \rho (\PC^{1/s_{j'}}(s_{j'} t_{k'}))^{\dag} O \right] + \operatorname{Tr}\left[ \PC^{1/s_{j'}}(s_{j'} t_{k'})  \rho (\PC^{1/s_{j}}(s_{j} t_{k}))^{\dag} O \right] \right)\,.
\end{equation}
The proof then follows again due to linearity of expectation values.
\end{proof}

\begin{theorem}[Observable estimation]
\label{thm:overlap-with-observable}
Consider a matrix $H\in \mathbb{C}^{2^n \times 2^n}$ of the form in Eq.~\eqref{eq:matrix}. Denote $\tilde{f} = \max(1, \|f(H)\|)$.
Take any Fourier series  $F(\varepsilon_F)=\{c_k(\varepsilon_F), t_k(\varepsilon_F)\}_{k=1}^K$ approximation of function $f$ (defined in Eq.~\eqref{eq:fourier-series-approximation}) with approximation error $\varepsilon_F = \varepsilon/9\tilde{f}$, and take $R(\varepsilon_R)$ as the Richardson extrapolation scheme of Lemma \ref{lem:richardson-well-conditioned} with extrapolation error $\varepsilon_R = \varepsilon / \big(9\tilde{f}\cdot c(\varepsilon/9\tilde{f})\big)$. Take any order $p$ product formula $\PC$. Using this choice of $F(\varepsilon_F)$, $R(\varepsilon_R)$ and $\PC$, Algorithm~\ref{algo:observable} returns an estimate of \( \operatorname{Tr}[f(H)\rho f(H)^{\dagger}O] \), for $\|O\| \leq  1$, to additive error at most $\varepsilon \leq 1$ with success probability at least \( (1 - \delta) \) using:

\begin{itemize}
    \item \textbf{Gate depth (per sample):}
    \begin{equation}
    C_{\mathrm{gate}} = 
    O\!\left( \Gamma \Upsilon \cdot \log\left( \frac{9\tilde{f}\cdot c(\varepsilon/9\tilde{f})}{\varepsilon} \right) \cdot \left( a_{\max} \Upsilon \lambda_{\mathrm{comm}} \, T(\varepsilon/9\tilde{f}) \right)^{1 + \frac{1}{p}} \right),
    \end{equation}

    \item \textbf{Sample complexity:}
    \begin{equation}
    C_{\mathrm{sample}} = O\!\left( \frac{c(\varepsilon/9\tilde{f})^4 \cdot \Big(\log \log \Big(\frac{9\tilde{f}\cdot c(\varepsilon/9\tilde{f})}{\varepsilon} \Big) \Big)^4}{\varepsilon^2} \cdot \log\left(\frac{1}{\delta}\right) \right).
    \end{equation}

    \item \textbf{Classical preprocessing time:}
    \begin{equation}
    C_{\mathrm{pre}} = O\left( \Big(K(\varepsilon/9\tilde{f}) \cdot \log \Big(\frac{9\tilde{f}\cdot c(\varepsilon/9\tilde{f})}{\varepsilon}\Big) \Big)^2 \right),
    \end{equation}
\end{itemize}
where \( T(\cdot) := \max_k |t_k(\cdot)| \), \( c(\cdot) = \sum_k |c_k(\cdot)| \), and each circuit uses $(n+1)$ qubits.
\end{theorem}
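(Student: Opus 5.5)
The proof follows the template of Theorem \ref{thm:overlap}. I split the total error by the triangle inequality into a deterministic approximation term $|\tr[f(H)\rho f(H)^\dagger O]-\tr[f_{F,R}(H)\rho f_{F,R}(H)^\dagger O]|$ and a statistical term $|\widehat{\mu}_M-\tr[f_{F,R}(H)\rho f_{F,R}(H)^\dagger O]|$. I allot the first at most $2\varepsilon/3$ (in fact less) and the second $\varepsilon/3$. By Lemma \ref{lem:observable-correctness}, the sample mean of Algorithm \ref{algo:observable} is an unbiased estimator of the Fourier--Richardson quantity, so only these two bounds are required.

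\textbf{Approximation term.} Write $A=f(H)$, $B=f_{F,R}(H)$ and $\Delta=B-A$. By Lemma \ref{lem:fourier-richardson} with the stated choices, $\|\Delta\|\le \varepsilon_F+c\,\varepsilon_R\le 2\varepsilon/(9\tilde f)$. Expanding,
\begin{equation}
B\rho B^\dagger - A\rho A^\dagger = \Delta\rho A^\dagger + A\rho\Delta^\dagger + \Delta\rho\Delta^\dagger .
\end{equation}
Using Hölder with $\|\rho\|_1=1$ and $\|O\|\le 1$, the difference of traces is at most $2\|\Delta\|\,\|A\|+\|\Delta\|^2$. Since $\|A\|\le\tilde f$ and $\|\Delta\|\le 2\varepsilon/9\le 1$ (here $\varepsilon\le1$ and $\tilde f\ge1$ are used), this is at most $4\varepsilon/9+4\varepsilon^2/81\le 2\varepsilon/3$. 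This is the only step where the new normalisation by $\tilde f$ enters, and it explains the factor $9\tilde f$ in the parameter choices.

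\textbf{Statistical term and resources.} Each recorded sample is $S^2$ times a bounded single-shot outcome of the generalized Hadamard test (Lemma \ref{lemma:hadamard-test-2}), because $\|X\otimes O\|\le1$. The samples are therefore real and lie in $[-S^2,S^2]$, or in a complex version handled by a union bound as before. Hoeffding's inequality then gives that $M=O(S^4\varepsilon^{-2}\log(1/\delta))$ samples achieve error $\varepsilon/3$ with probability $1-\delta$. Substituting $S=O(c\log m)$ from \eqref{eq:normalization-bound} and $m=O(\log(1/\varepsilon_R))=O(\log(9\tilde f c/\varepsilon))$ from Lemma \ref{lem:richardson-well-conditioned} yields the stated fourth powers of $c$ and of $\log\log$.

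The gate depth is the sum of the depths of the two controlled product formulae. Lemma \ref{lem:richardson-well-conditioned} with $\varepsilon_R$ and $T=T(\varepsilon_F)$ shows each has $O((a_{\max}\Upsilon\lcomm T)^{1+1/p}\log(1/\varepsilon_R))$ Trotter steps, and each step has depth $O(\Gamma\Upsilon)$. Preprocessing consists of forming the product distribution over pairs of index pairs, whose support has size $(Km)^2$.

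The main obstacle is the approximation term. Because the map $f\mapsto f\rho f^\dagger$ is quadratic, the error in $B$ is amplified by $\|A\|$, and one must check that the constant allocation closes. The remaining steps carry over from Theorem \ref{thm:overlap}, with the exponents squared.
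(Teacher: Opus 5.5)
Your proposal is correct and follows the paper's proof essentially step for step. It uses the same split into a Fourier--Richardson approximation term and a Hoeffding-controlled statistical term (range $S^2$, $S=O(c\log m)$, $m=O(\log(1/\varepsilon_R))$), and the same quadratic error bound $2\tilde{f}\|\Delta\|+\|\Delta\|^2$, which the paper writes as $\|f(H)\|\,\|\Delta\|+\|f_{F,R}(H)\|\,\|\Delta\|\le 3\tilde{f}(\varepsilon_F+c\,\varepsilon_R)$; your explicit check $4\varepsilon/9+4\varepsilon^2/81\le 2\varepsilon/3$ is only slightly tighter bookkeeping.
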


\begin{proof}
Let \( \widehat{\mu}_M := \frac{1}{M} \sum_{\ell=1}^M \widehat{X}^{(\ell)} \) denote the empirical mean of the unbiased estimator from Algorithm~\ref{algo:observable} over $M$ samples. Similar to the proof of Theorem \ref{thm:overlap}, we can start by bounding the total error by two contributions (Fourier-Richardson approximation, and statistical):

\begin{align}
\left| \operatorname{Tr}[f(H)\rho f(H)^{\dagger} O] - \widehat{\mu}_M \right| &\leq 
\left| \operatorname{Tr}[f(H)\rho f(H)^{\dagger}O] - \operatorname{Tr}[f_{F,R}(H)\rho f_{F,R}(H)^{\dagger}O] \right| \nonumber \\
&\quad + 
\left| \operatorname{Tr}[f_{K,R}(H)\rho f_{K,R}(H)^{\dagger}O] - \widehat{\mu}_M \right|.
\end{align}

Let us investigate the first term, denoting it as $\Delta_{F,R}$. We have 

\begin{align}
    \Delta_{F,R} &\leq \left| \operatorname{Tr}[f(H)\rho (f(H)-f_{F,R}(H))^{\dagger}O] \right | + \left| \operatorname{Tr}[(f_{F,R}(H)-f(H))\rho f_{F,R}(H)^{\dagger}O] \right| \\
    &\leq \|f(H)\| \cdot \|f(H)-f_{F,R}(H) \| + \|f_{F,R}(H)\| \cdot \|f(H)-f_{F,R}(H) \| \\
    &\leq \tilde{f} (\varepsilon_F + c\cdot \varepsilon_R) + \tilde{f} (\varepsilon_F + c\cdot \varepsilon_R) + (\varepsilon_F + c\cdot \varepsilon_R)^2 \\
    &\leq 3\tilde{f}(\varepsilon_F + c\cdot \varepsilon_R)\,,
\end{align}
where the first inequality is a triangle inequality, the second inequality comes from repeated uses of H\"older's inequality and using the fact that $\|\rho\|_1 = 1$ and $\|O\|\leq 1$, the third inequality is due to Lemma \ref{lem:fourier-richardson}, and the final inequality is true if $\varepsilon_F + c\cdot \varepsilon_R \leq 1$. We can ensure this error is bounded by $2\varepsilon/3$ (and thus also subsume the previous condition in the fact that $\varepsilon\leq 1$) if we set the Fourier series error as $\varepsilon_F = \varepsilon/(9\tilde{f})$ and the Richardson extrapolation error as $\varepsilon_R = \varepsilon/(9\tilde{f}\cdot c(\varepsilon_F))$. The stated gate complexity comes by recalling that a sufficient maximum Trotter step number specified by Lemma \ref{lem:richardson-well-conditioned} is $O\!\left( \log\left( 1/\varepsilon_R \right) \cdot \left( a_{\max} \Upsilon \lambda_{\mathrm{comm}} \, T(\varepsilon_F) \right)^{1 + \frac{1}{p}} \right)$.

The statistical error can be upper bounded by $\varepsilon_s$ by choice of 

\begin{equation}
    M \geq \frac{2S^4}{\varepsilon_s^2} \cdot \log\left( \frac{2}{\delta} \right)\,,
\end{equation}
where similarly to our proof of overlap estimation have used Hoeffding's inequality and the fact that the random variable has magnitude no larger than $S^2$. As in the proof of Theorem \ref{thm:overlap} we use the fact that $S = O(c(\varepsilon_F)\log\log(1/\varepsilon_R))$ and set the statistical error to $\varepsilon_s=\varepsilon/3$ to obtain our stated sample complexity. This also guarantees that the total error is bounded by $\varepsilon$ as required.

The classical preprocessing cost of forming the probability distribution is $O(K^2 m^2)$ where $m=O(\log(1/\varepsilon_R))$ for the extrapolation schedule of Lemma \ref{lem:richardson-well-conditioned}.
\end{proof}

\begin{algorithm}[h]
\caption{Randomized algorithm for estimating $\operatorname{Tr}[f(H) \rho f(H)^\dag O]$}
\label{algo:observable}
\begin{algorithmic}[1]
\Require Fourier series $F=\{c_k, t_k\}_{k=1}^K$; Richardson extrapolation schedule $R=\{b_j,q_j\}_{j=1}^m$; base Trotter step number $1/s$; input state $\rho$ and observable $O$ with efficient preparations; matrix with decomposition $H=\sum_{\ell} H_{\ell}$ and product formula $\PC$.
\State \textbf{Classical preprocessing:} Compute the probability distribution:
\[
\left\{ \frac{ \left| c_k c_{k'} b_j b_{j'} \right| }{ S^2 } \right\}_{k,j} \quad \text{for } k,k' = 1,\ldots,K;\ j,j' = 1,\ldots,m
\]
\For{$i = 1$ to $C_{\mathrm{sample}}$}:
    \State Sample two index pairs $(k'', j''), (k''', j''')$ from the probability distribution
    \State Take $s_{j''} \leftarrow s/q_{j''}$, $s_{j'''} \leftarrow s/q_{j'''}$ 
    \State Prepare a generalized Hadamard test circuit corresponding to 
    \[ 
    \frac{1}{2} \left( \operatorname{Tr}\left[ \PC^{1/s_{j''}}(s_{j''} t_{k''})  \rho (\PC^{1/s_{j'''}}(s_{j'''} t_{k'''}))^{\dag} O \right] + \operatorname{Tr}\left[ \PC^{1/s_{j'''}}(s_{j'''} t_{k'''})  \rho (\PC^{1/s_{j''}}(s_{j''} t_{k''}))^{\dag} O \right] \right) \,,\]
    \hspace{1.25em} collect one measurement outcome (single-shot statistic)  $X^{(k'',k''',j'',j''')}$
    \State Take $\mu_{(k'',k''',j'',j''')} \leftarrow S^2 \cdot \operatorname{sign}(c_{k''} c_{k'''} b_{j''} b_{j'''}) \cdot X^{(k'',k''',j'',j''')}$ and store value
\EndFor
\State \textbf{Return:} $\widehat{\mu} \leftarrow$ the sample mean over all $C_{\mathrm{sample}}$ values
\end{algorithmic}
\end{algorithm}

\subsection{Estimating distributions}

Our final task is to estimate the measurement distribution corresponding to the (non-normalized) state $f(H)\ket{\psi}$. Specifically, we wish to statistically approximate the vector 
\begin{equation}\label{eq:f-distribution}
    \vec{f}:= \sum_{\vec{z}_{n} \in (0,1)^n} \big| \bra{\vec{z}_{n}} U f(H) \ket{\psi} \big|^2 \ket{\vec{z}_{n}}
\end{equation}
where $U$ is some efficiently implementable unitary. We remark that this is proportional to the probability distribution that one obtains from sampling the state $f(H) \ket{\psi}/\|f(H) \ket{\psi} \|_2$ in a basis defined by $U$, and is exactly the probability distribution for normalized $f(H)$.

In Appendix A6 of \cite{wang2023qubitefflinalg} the following Lemma is essentially given.

\begin{lemma}[Distribution estimation -- adapted from \cite{wang2023qubitefflinalg}]\label{lem:distribution-estimation}
    Suppose an operator $G$ can be decomposed into a linear combination of unitaries as $G=\sum_i g_i U_i$. Let $U$ be an implementable unitary. We have an algorithm to return a vector $\vec{v}$ that, with probability at least $(1-\delta)$, satisfies
    \begin{equation}
        \| \vec{v} - \vec{G} \|_2 \leq \varepsilon 
    \end{equation}
    for $\vec{G}:= \sum_{\vec{z}_{n} \in (0,1)^n} \big| \bra{\vec{z}_{n}} U G \ket{\psi} \big|^2 \ket{\vec{z}_{n}}$, using $O \big( \frac{g^4}{\varepsilon^2} \log\big(\frac{1}{\delta} \big) \big)$ circuit samples, where $g:= \sum_i |g_i|$ and each circuit consists of generalized Hadamard tests of $U U_i$.
\end{lemma}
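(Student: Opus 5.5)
The plan is to estimate every entry $G_{\vec z}:=|\bra{\vec z}UG\ket{\psi}|^2$ at once from a single batch of computational-basis samples. The estimator is an empirical histogram, weighted by signs and normalisations, over samples drawn from a quasiprobability decomposition. Expanding the square gives
\begin{equation}
G_{\vec z} = \sum_{i,i'} g_i g_{i'}^* \bra{\psi}U_{i'}^\dagger U^\dagger \ketbra{\vec z} U U_i\ket{\psi}.
\end{equation}
Each summand is, up to taking real and imaginary parts, the output of the generalized Hadamard test of Lemma~\ref{lemma:hadamard-test-2} with the two branch unitaries $UU_i$ and $UU_{i'}$ and the observable $O=\ketbra{\vec z}$. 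The key observation is that we need not fix $\vec z$ in advance. We measure the control qubit in the $X$ basis (or, with an $S^\dagger$ gate, the $Y$ basis) and the system register in the computational basis. A single shot then returns a bit $x\in\{\pm1\}$ and a string $\vec z$. The random vector $x\,\ket{\vec z}$ has expectation whose $\vec z$-th entry is exactly $\tfrac12\big(\bra{\psi}U_{i'}^\dagger U^\dagger\ketbra{\vec z}UU_i\ket{\psi}+\text{c.c.}\big)$, by the computation in the proof of Lemma~\ref{lemma:hadamard-test-2} applied simultaneously to all diagonal projectors.

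The algorithm proceeds in four steps. First, we sample $(i,i')$ with probability $|g_ig_{i'}|/g^2$. Second, we choose uniformly between the real-part and imaginary-part circuits. Third, we run one shot of the corresponding generalized Hadamard test. Fourth, we record the vector $Y:=g^2\,\omega\, x\,\ket{\vec z}$, where $\omega$ is the appropriate phase (or $\pm$ sign and factor accounting for the real/imaginary split). Combining the symmetric pairs $(i,i')$ and $(i',i)$ together with the real and imaginary channels, the expectation $\mathbb{E}[Y]$ equals $\vec G$. This parallels Lemma~\ref{lem:observable-correctness}, the only difference being that the observable is now vector-valued. The output $\vec v$ is the sample mean of $M$ such vectors. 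It is sparse, with support at most $M$, so it can be stored efficiently.

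For concentration we use a vector Bernstein / Hilbert-space Hoeffding inequality. Each $Y$ is supported on a single basis vector, so $\|Y\|_2\le 2g^2$ surely, accounting for the factor from the real/imaginary combination. For bounded i.i.d. vectors in a Hilbert space, Pinelis' inequality, or equivalently McDiarmid's inequality applied to $\|\bar Y-\mathbb{E}Y\|_2$, gives
\begin{equation}
\Pr\big[\|\bar Y - \mathbb{E}Y\|_2 \ge \mathbb{E}\|\bar Y-\mathbb{E}Y\|_2 + t\big]\le e^{-Mt^2/(8g^4)},
\end{equation}
with $\mathbb{E}\|\bar Y-\mathbb{E}Y\|_2\le \sqrt{\mathbb{E}\|Y\|_2^2/M}\le 2g^2/\sqrt M$. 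Taking $M=O(g^4\log(1/\delta)/\varepsilon^2)$ makes both terms at most $\varepsilon/2$. A cruder alternative is to bound the mean squared error alone and then apply median-of-means over $O(\log(1/\delta))$ batches in $\ell_2$, using the geometric median. This achieves the same scaling.

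The main obstacle is the dimension-free concentration step. A naive entrywise Hoeffding bound followed by a union bound over $2^n$ entries would introduce an $n$ factor and would not control the $\ell_2$ norm. The essential point is that each sample is a $1$-sparse vector of norm $O(g^2)$, so its second moment is dimension-independent. A secondary but routine check is that the sign and phase bookkeeping for complex $g_i$ makes the real and imaginary contributions assemble into $|\cdot|^2$ exactly, as in Lemma~\ref{lem:observable-correctness}.
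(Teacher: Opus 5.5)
Your proposal is correct and follows essentially the same route as the paper (which adapts the argument of Wang--McArdle--Berta): sample index pairs with probability $|g_ig_{i'}|/g^2$, run one shot of the generalized Hadamard test with the system register measured in the computational basis, average the signed $1$-sparse vectors of norm $O(g^2)$, and apply a dimension-free vector concentration bound (the paper appeals to vector Bernstein, and your McDiarmid/Pinelis argument is an equivalent substitute). Your explicit real/imaginary split for complex coefficients is actually more careful than the paper's sketch. That sketch records $\operatorname{sign}(g_ig_j)$ from the real-part test alone, and so implicitly assumes the products $g_ig_{i'}^*$ are real.
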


The algorithm can be simply stated and is a small detour from Algorithm \ref{algo:observable}. First, we classically construct the probability distribution  $p_i := |g_i|/g$. Then, sample two indices $(i,j)$ from this distribution and implement the generalized Hadamard test in Figure \hyperref[fig:circuits]{1(b)} with $U_i$ and $U_j$. Next, perform a measurement on all qubits in the computational basis, obtaining an $n$-qubit string which we denote $(z, \vec{z}_n) \in \{ 0,1\}^{n+1}$. Based on this outcome output the vector $g^2 \operatorname{sign}(g_ig_j) (-1)^{z} \ket{\vec{z}_n}$ where we denote $\operatorname{sign}(g_ig_j)=g_ig_j/|g_ig_j|$. Repeat the sampling procedure and take the sample mean of all outcomes.

Note that in the previous section we already found a operator $G$ in this form that approximates $f(H)$ to additive error in operator norm. This means that the probability distributions corresponding to $f(H)\ket{\psi}$ and $G\ket{\psi}$ are separated by essentially the same error in total variation distance. This allows us to write the following result calling Algorithm \ref{algo:distribution}, and we provide the formal proof below.

\begin{algorithm}[ht]
\caption{Randomized algorithm for estimating distributions}
\label{algo:distribution}
\begin{algorithmic}[1]
\Require Fourier series $F=\{c_k, t_k\}_{k=1}^K$; Richardson extrapolation schedule $R=\{b_j,q_j\}_{j=1}^m$; base Trotter step number $1/s$; input state $\ketbra{\psi}$ and unitary $U$ with efficient preparations; matrix with decomposition $H=\sum_{\ell} H_{\ell}$ and product formula $\PC$.
\State \textbf{Classical preprocessing:} Compute the probability distribution:
\[\left\{ \frac{ \left| c_k c_{k'} b_j b_{j'} \right| }{ S^2 } \right\}_{k,k',j,j'} \quad \text{for } k,k' = 1,\ldots,K;\ j,j' = 1,\ldots,m\,.\]
\For{$i = 1$ to $C_{\mathrm{sample}}$}:
    \State Sample two index pairs $(k'', j''), (k''', j''')$ from the probability distribution.
    \State Take $s_{j''} \leftarrow s/q_{j''}$, $s_{j'''} \leftarrow s/q_{j'''}$ .
    \State Prepare a generalized Hadamard test circuit corresponding to unitaries 
    \[ 
    U \PC^{1/s_{j''}}(s_{j''} t_{k''})\,,\quad U \PC^{1/s_{j'''}}(s_{j'''} t_{k'''}) \,,\]
    \hspace{1.25em} and  measure all qubits to obtain  $(z_0,\vec{z}_n) \in \{0,1\}^{n+1}$.
    \State Take $\vec{f}_{(k'',k''',j'',j''')} \leftarrow S^2 \cdot \operatorname{sign}(c_{k''} c_{k'''} b_{j''} b_{j'''}) \cdot (-1)^{z_0} \cdot \vec{z}_n$.
\EndFor
\State \textbf{Return:} $\widehat{\mu} \leftarrow$ the sample mean over all $C_{\mathrm{sample}}$ values.
\end{algorithmic}
\end{algorithm}

\begin{theorem}[Distribution estimation]
\label{thm:distribution-recovery}
Consider a matrix $H\in \mathbb{C}^{2^n \times 2^n}$ of the form in Eq.~\eqref{eq:matrix}. Denote $\hat{f} =: \max(1,\|f(H)\ket{\psi}\|_2)$. Take any Fourier series  $F(\varepsilon_F)=\{c_k(\varepsilon_F), t_k(\varepsilon_F)\}_{k=1}^K$ approximation of function $f$ (defined in Eq.~\eqref{eq:fourier-series-approximation}) with approximation error $\varepsilon_F = \varepsilon/9\hat{f}$, and take $R(\varepsilon_R)$ as the Richardson extrapolation scheme of Lemma \ref{lem:richardson-well-conditioned} with extrapolation error $\varepsilon_R = \varepsilon / \big(9\hat{f}\cdot c(\varepsilon/9\hat{f})\big)$. Take any $p$-th order product formula $\PC$. Using this choice of $F(\varepsilon_F)$, $R(\varepsilon_R)$ and $\PC$, Algorithm~\ref{algo:distribution} returns an estimate of \( \vec{f} \) (defined in Eq.~\eqref{eq:f-distribution}) to additive error at most $\varepsilon \leq 1$ with success probability at least \( (1 - \delta) \) using:

\begin{itemize}
    \item \textbf{Gate depth (per sample):}
    \begin{equation}
    C_{\mathrm{gate}} = 
    O\!\left( \Gamma \Upsilon \cdot \log\left( \frac{9\hat{f}\cdot c(\varepsilon/9\hat{f})}{\varepsilon} \right) \cdot \left( a_{\max} \Upsilon \lambda_{\mathrm{comm}} \, T(\varepsilon/9\hat{f}) \right)^{1 + \frac{1}{p}} \right),
    \end{equation}

    \item \textbf{Sample complexity:}
    \begin{equation}
    C_{\mathrm{sample}} = O\!\left( \frac{c(\varepsilon/9\hat{f})^4 \cdot \Big(\log \log \Big(\frac{9\hat{f}\cdot c(\varepsilon/9\hat{f})}{\varepsilon}\Big)\Big)^4}{\varepsilon^2} \cdot \log\left(\frac{1}{\delta}\right) \right),
    \end{equation}

    \item \textbf{Classical preprocessing time:}
    \begin{equation}
    C_{\mathrm{pre}} = O(\Big(K(\varepsilon/9\hat{f})\cdot \log\Big(\frac{9\hat{f}\cdot c(\varepsilon/9\hat{f})}{\varepsilon}\Big)\Big)^2)\,,
    \end{equation}
\end{itemize}
where \( T(\varepsilon_F) := \max_k |t_k(\varepsilon_F)| \) is the largest time parameter in the Fourier approximation with error \( \varepsilon/9\hat{f} \), and \( c(\varepsilon_F) := \sum_k |c_k(\varepsilon_F) | \) is the \( \ell_1 \)-norm of the Fourier coefficients.
\end{theorem}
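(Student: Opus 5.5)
The plan mirrors the proof of Theorem~\ref{thm:overlap-with-observable}. We split the error into an approximation part and a statistical part. Let $G := f_{F,R}(H)$ be the Fourier--Richardson operator of Eq.~\eqref{eq:fourier-richardson-approx}; it is a linear combination of unitaries $\PC^{1/s_j}(s_j t_k)$ with $\ell_1$-weight $S=\|\vec c\|_1\|\vec b\|_1$. Write $\vec{G}$ for the corresponding vector $\sum_{\vec z_n}|\bra{\vec z_n}UG\ket{\psi}|^2\ket{\vec z_n}$. By the triangle inequality,
\begin{equation}
\|\widehat{\mu}-\vec f\|_2 \leq \|\vec f - \vec G\|_2 + \|\widehat{\mu}-\vec G\|_2 ,
\end{equation}
and we aim to make each term at most $2\varepsilon/3$ and $\varepsilon/3$ respectively.

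For the approximation term, set $a_z := \bra{\vec z_n}Uf(H)\ket{\psi}$ and $b_z := \bra{\vec z_n}UG\ket{\psi}$. Using $\ell_1\geq\ell_2$ and $||a|^2-|b|^2|\leq |a-b|(|a|+|b|)$, we get $\|\vec f-\vec G\|_2\leq \sum_z |a_z-b_z|(|a_z|+|b_z|)$. Cauchy--Schwarz then gives
\begin{equation}
\|\vec f-\vec G\|_2\leq \|(f(H)-G)\ket\psi\|_2\big(\|f(H)\ket\psi\|_2+\|G\ket\psi\|_2\big).
\end{equation}
Lemma~\ref{lem:fourier-richardson} bounds $\|(f(H)-G)\ket\psi\|_2\leq \varepsilon_F+c\,\varepsilon_R=:\epsilon'$. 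Hence the whole term is at most $\epsilon'(2\hat f+\epsilon')\leq 3\hat f\epsilon'$ whenever $\epsilon'\leq1$. The choices $\varepsilon_F=\varepsilon/9\hat f$ and $\varepsilon_R=\varepsilon/(9\hat f c)$ give $\epsilon'\leq 2\varepsilon/(9\hat f)$, so the term is at most $2\varepsilon/3$. The condition $\epsilon'\leq 1$ follows from $\varepsilon\leq1$.

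The gate depth then follows directly from Lemma~\ref{lem:richardson-well-conditioned} with $\varepsilon_R$ as above and $T=T(\varepsilon_F)$. The maximum Trotter step number is $O(\log(1/\varepsilon_R)(a_{\max}\Upsilon\lcomm T)^{1+1/p})$, and each step costs $\Gamma\Upsilon$ exponentials. Preprocessing consists of forming the $K^2m^2$-entry distribution, with $m=O(\log(1/\varepsilon_R))$.

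For the statistical term, we invoke Lemma~\ref{lem:distribution-estimation} with $g=S$. This is the step needing the most care, because Algorithm~\ref{algo:distribution} outputs a vector-valued estimator in dimension $2^n$. Three points must be checked. First, by Lemma~\ref{lemma:hadamard-test-2} applied with $O=\ketbra{\vec z_n}$ for each $\vec z_n$ and linearity over the sampled index pairs, the estimator has expectation exactly $\vec G$. Second, each sample has $\ell_2$-norm at most $S^2$. Third, we need a dimension-free concentration bound: either a vector Bernstein or Hoeffding inequality in Hilbert space, or Chebyshev to bound variance by $S^4/M$ followed by median-of-means to boost to $\log(1/\delta)$. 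Either route gives $M=O(S^4\log(1/\delta)/\varepsilon^2)$ for error $\varepsilon/3$. Substituting $S=O(c(\varepsilon_F)\log m)=O(c\,\log\log(1/\varepsilon_R))$ from Eq.~\eqref{eq:normalization-bound} yields the stated sample complexity, and summing the two error terms gives total error $\varepsilon$.
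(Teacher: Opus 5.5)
Your proposal is correct and follows the paper's proof essentially step for step. It uses the same split into approximation and statistical error, the same bound $\|(f(H)-G)\ket{\psi}\|_2\big(\|f(H)\ket{\psi}\|_2+\|G\ket{\psi}\|_2\big)\le 3\hat f(\varepsilon_F+c\,\varepsilon_R)$ combined with $\ell_2\le\ell_1$, the same parameter choices, and the same appeals to Lemma~\ref{lem:distribution-estimation} with $g=S$ and to Lemma~\ref{lem:richardson-well-conditioned} for the depth. The only difference is cosmetic: you derive the $\ell_1$ bound entrywise via $\big||a_z|^2-|b_z|^2\big|\le|a_z-b_z|(|a_z|+|b_z|)$ and Cauchy--Schwarz, whereas the paper passes through the trace norm of $G\ketbra{\psi}G^\dagger-f(H)\ketbra{\psi}f(H)^\dagger$ and H\"older's inequality.
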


\begin{proof}
    Recall as before from Lemma \ref{lem:fourier-richardson} we have an operator $f_{F,R}(H)$ which satisfies $\|f_{F,R}(H) - f(H)\| \leq \varepsilon_F + c(\varepsilon_F) \cdot \varepsilon_R$
    where $f_{K,R}(H) = \sum_{k=1}^K \sum_{j=1}^m c_k b_j \PC^{1/s_j}(s_j t_k)$ is a linear combination of product formulae, $\varepsilon_F$ is an error controlled purely by Fourier coefficients (defined in Eq.~\eqref{eq:fourier-series-approximation}), and $\varepsilon_R$ is an error purely controlled by extrapolation parameters (c.f.~Section \ref{sec:extrapolation-error}). Thus, $f_{F,R}$ takes the form of $G$ in Lemma \ref{lem:distribution-estimation}, with weight of linear combination 
    $S = O(c(\varepsilon_F) \log \log (1/\varepsilon_R))$, where $c:= \sum_k |c_k|$.
    
    
    Recall further from Lemma \ref{lem:richardson-well-conditioned} that the maximum Trotter step number of the product formulae is $O( \log\left(c(\varepsilon_F)/\varepsilon_R \right) \cdot ( a_{\max} \Upsilon \lambda_{\mathrm{comm}} t_{\max}(\varepsilon_F) )^{1 + 1/p} )$.

    Denote $\vec{f}_{K,m}(H)$ as the vector corresponding to the (non-normalized probability) distribution with $i$th entry $|\bra{i}{f}_{K,m}(H)\ket{\psi}|^2$ for all $i$, and likewise $\vec{f}(H)$ as the distribution with $i$th entry $|\bra{i}{f}(H)\ket{\psi}|^2$. We can check that the $\ell_1$ distance between these distribution satsifies
    \begin{align}
        \| \vec{f}_{K,m}(H) - \vec{f}(H) \|_1 &= \sum_i \left| \bra{i} \left({f}_{K,m}(H)\ketbra{\psi}{f}^{\dag}_{K,m}(H) - {f}(H)\ketbra{\psi}{f}^{\dag}(H) \right) \ket{i} \right| \\
        &\leq  \tr\left[ \left| {f}_{K,m}(H)\ketbra{\psi}{f}^{\dag}_{K,m}(H) - {f}(H)\ketbra{\psi}{f}^{\dag}(H)  \right| \right] \\
        &= \left\| {f}_{K,m}(H)\ketbra{\psi}{f}^{\dag}_{K,m}(H) - {f}(H)\ketbra{\psi}{f}^{\dag}(H)  \right\|_1 \\
        &\leq \big\| ({f}_{K,m}(H) - {f}(H))\ketbra{\psi}{f}_{K,m}^{\dag}(H) \big\|_1 + \big\| {f}(H)\ketbra{\psi}({f}_{K,m}^{\dag}(H) - {f}^{\dag}(H)) \big\|_1  \\
        &\leq \left\|({f}_{K,m}(H) - {f}(H))\ket{\psi} \right\|_2 \cdot  \big( \|{f}(H)\ket{\psi} \|_2 + \|{f}_{K,m}\ket{\psi} \|_2\big) \\
        &\leq 2\|{f}_{K,m}(H) - {f}(H) \| \cdot \|f\ket{\psi}\|_2  + \|{f}_{K,m}(H) - {f}(H) \|^2 \\
        &\leq 3 \hat{f} (\varepsilon_F + c(\varepsilon_F) \cdot \varepsilon_R) \qquad (\text{if}\ \varepsilon_F + c(\varepsilon_F) \cdot \varepsilon_R \leq 1)\,, \label{eq:subsumed-error-inequality}
    \end{align}
    where the first equality is an unravelling of definitions, the first inequality is due to the fact that any operator can be written as the difference of two positive operators $X_+$ and $X_-$ and that $|\bra{i} X_+ - X_- \ket{i}| \leq \bra{i} X_+ \ket{i} + \bra{i} X_- \ket{i} = \bra{i} | X_+ - X_-| \ket{i}$, the following equality introduces the Schatten 1-norm, the second inequality is obtained by adding and subtracting terms followed with a triangle inequality, the third inequality is due to the tracial matrix H\"older inequality, and the final inequality follows by definition of the operator norm and denoting $\hat{f} := \max(1,\|f\ket{\psi}\|_2)$.
    
    We now have all the tools to put together the theorem statement. Lemma \ref{lem:distribution-estimation} demonstrates that Algorithm \ref{algo:distribution} returns a vector $\vec{v}$ can be obtained which has statistical error $\varepsilon_s$ characterized as
    \begin{equation}
        \|\vec{v} - \vec{f}_{K,m}(H)\|_2 \leq \varepsilon_s\,,
    \end{equation}
    with probability at least $(1-\delta)$ using $O\!\left( \frac{S^4}{\varepsilon_s^2} \log\left(\frac{1}{\delta} \right) \right) = O\!\left( \frac{(c(\varepsilon_F) \log \log (1/\varepsilon_R))^4}{\varepsilon_s^2} \log\left(\frac{1}{\delta} \right) \right)$ samples. As
    \begin{equation}
        \|\vec{v} - \vec{f}(H)\|_2 \leq \|\vec{v} - \vec{f}_{K,m}(H)\|_2 + \| \vec{f}_{K,m}(H) - \vec{f}(H) \|_1,
    \end{equation}
    (due to triangle inequality and monotonicity of $\ell_p$ norms) we can thus guarantee $\|\vec{v} - \vec{f}(H)\|_2 \leq \varepsilon \leq 1$ by setting $\varepsilon_s = \varepsilon/3$, $\varepsilon_F =  \varepsilon/(9\hat{f}), \varepsilon_R = \varepsilon/(9\hat{f}c(\varepsilon_F))$. Further, we note that $\varepsilon\leq 1$ subsumes the condition in Eq.~\eqref{eq:subsumed-error-inequality}. Substituting these values in, we obtain the stated quantum complexities. The classical preprocessing cost to compute the probability distribution is $O((K(\varepsilon_F)\log(\varepsilon_R))^2)$ as in the proof of Theorem \ref{thm:overlap}.
\end{proof}

\section{$k$-local systems}\label{sec:k-local}

In this section we discuss how to obtain commutator scaling for $k$-local systems for our algorithms, as well as other product formula extrapolation algorithms. The key technical tool is based on an insight of Mizuta to exploit the fact that the BCH formula can be tightly truncated whilst maintaining good approximation of Trotterized time evolution \cite{mizuta2026commutator}.

\begin{lemma}[Truncated BCH formula \cite{mizuta2026commutator}]
    For a $k$-local system on $n$ qubits with maximum energy per site $g$, define the truncated effective Trotter Hamiltonian
    \begin{equation}
        \theff(t,\varepsilon):= H + \sum_{j = 1}^{p_0(\varepsilon)} E_{j+1} t^j\,.
    \end{equation}
    If we set the truncation order as $p_0(\varepsilon) = \lceil \log (2n/\varepsilon) \rceil$ for some parameter $\varepsilon \in (0,1)$, then we have 
    \begin{equation}
        \| e^{-i\Heff(t) t} - e^{-i\theff(t,\varepsilon) t}  \| \leq \varepsilon\,.
    \end{equation}
\end{lemma}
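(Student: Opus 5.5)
The plan is to compare the two evolutions through the variation-of-parameters (Duhamel) formula, as in Eq.~\eqref{eq:var-of-params-1}, and then bound the BCH tail using locality. Writing $\Heff(t) = \theff(t,\varepsilon) + \Delta_{>p_0}(t)$ with $\Delta_{>p_0}(t) := \sum_{j>p_0} E_{j+1}t^j$, we get
\begin{equation}
    e^{-i\Heff(t)t} - e^{-i\theff(t,\varepsilon)t} = -i\int_0^t e^{-i(t-\tau)\Heff(t)}\,\Delta_{>p_0}(t)\, e^{-i\tau\theff(t,\varepsilon)}\,d\tau .
\end{equation}
This gives $\|e^{-i\Heff(t)t} - e^{-i\theff t}\| \le |t|\,\|\Delta_{>p_0}(t)\|$. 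It therefore suffices to show $|t|\sum_{j>p_0}\|E_{j+1}\|\,|t|^j \le \varepsilon$, where $|t|$ is a single Trotter step, assumed small compared with $1/(kg)$ up to constants.

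For the tail I will use the $k$-local structure rather than Lemma~\ref{lem:Effective_Hamiltonian_Error_Series}. That lemma's bound grows too quickly for $k$-local systems, which is exactly why $\lcomm$ diverges. Instead I bound $E_{j+1}$ through nested commutators of local terms. Each term of $H$ touching a given site has total weight at most $g$. A $j$-fold nested commutator anchored at a term $H_{\gamma_j}$ can grow its support only by adding terms that overlap the current support, of size at most $jk$. Summing over the other indices therefore gives a contribution bounded by $\|H_{\gamma_j}\| \prod_{i} (2 i k g)$. Summing the anchor over all terms then gives
\begin{equation}
    \alpha^{(j+1)}_{\mathrm{comm}} \lesssim \Lambda\, (2kg)^j j! \le n g (2kg)^j j!.
\end{equation}
The $j!$ is offset by the $1/j^2$ and by the BCH coefficients. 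The more careful accounting of Mizuta~\cite{mizuta2026commutator} yields per-order coefficients decaying like $(c\,kg|t|)^j$ with no factorial, times a prefactor $ng$. This gives $\|E_{j+1}\||t|^{j+1} \le n\,(c\,kg|t|)^{j+1}/(j+1)$, a sum of local contributions of size $O(1)$ per site.

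The final step sums the geometric series. In the regime $c\,kg|t|\le 1/2$ we get
\begin{equation}
    \sum_{j>p_0} n\,2^{-(j+1)} \le 2n\cdot 2^{-p_0}.
\end{equation}
With $p_0=\lceil\log(2n/\varepsilon)\rceil$ (base 2, or natural log with adjusted constants), this is at most $\varepsilon$.

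The main obstacle is the second step. I need a clean bound on $\|E_{j+1}\|$ showing that the $n$-dependence enters only linearly and only through the anchoring sum, with no factorial blow-up in $j$. I would first try to import this bound directly from \cite{mizuta2026commutator} as a cited lemma. Failing that, I would redo the cluster-expansion (connected-support) counting on the BCH/Magnus series. The condition on $|t|$, namely that the step size lies within the convergence radius relative to $kg$, is implicit in the statement and will need to be made explicit.
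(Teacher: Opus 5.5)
\textbf{There is a genuine gap.} Your Duhamel reduction is fine, but the argument breaks at the bound on $\|E_{j+1}\|$, and the whole tail-summation strategy rests on that bound. You claim the $j!$ in $\acomm^{(j+1)}\lesssim \Lambda(2kg)^j j!$ is ``offset by the $1/j^2$ and by the BCH coefficients.'' This is not correct. The BCH coefficients are already accounted for in Lemma~\ref{lem:Effective_Hamiltonian_Error_Series}, as the \emph{exponential} factor $(a_{\max}\Upsilon)^{j+1}$. The factor $1/(j+1)^2$ is only polynomial. So what you actually have is $\|E_{j+1}\|\lesssim \Lambda\, j!\,(2a_{\max}\Upsilon kg)^{j}\,a_{\max}\Upsilon/(j+1)^2$.

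No factorial-free bound uniform in $n$ is available. The paper only cites Mizuta for this lemma, but its next lemma reproduces Mizuta's analysis: it keeps the factorial via $\acomm^{(j)}\le (j-1)!(2kg)^{j-1}\Lambda$ and pays for it with a $\log(n/\varepsilon)$ term. A bound like yours would make the BCH/Floquet--Magnus series converge at $|t|\sim 1/(kg)$ in the thermodynamic limit. That divergence is precisely why truncation is needed at all, and why $\lcomm$ blows up for $k$-local systems.

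With the factorial in place, $\sum_{j>p_0}\|E_{j+1}\|\,|t|^{j+1}$ is infinite for every $t$ outside the trivial radius $|t|\lesssim 1/(a_{\max}\Upsilon\Lambda)$, where there is no locality gain. Outside that radius, Lemma~\ref{lem:Effective_Hamiltonian_Error_Series} does not even guarantee $\Heff(t)=H+\sum_j E_{j+1}t^j$. So you cannot write $\Heff-\theff$ as a convergent tail. For the same reason, your regime $c\,kg|t|\le 1/2$, with a constant independent of $p_0$, is too generous.

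The missing idea is to avoid the infinite tail and compare $\PC(t)$ with $e^{-i\theff(t,\varepsilon)t}$ directly, as a finite-order object.
\begin{itemize}
\item Since $\theff t$ reproduces the BCH exponent through order $t^{p_0+1}$, the two unitaries agree up to $O(t^{p_0+2})$.
\item Write the generator of $e^{i\theff t}\,\PC(t)$ (or use an equivalent integral representation) and take its Taylor remainder. The error becomes a finite combination of nested commutators of order $p_0+2$.
\item This gives a bound of the form $\Lambda|t|\,(p_0+1)!\,(c\,kg|t|)^{p_0+1}$.
\end{itemize}
The factorial is then controlled by a step-size condition that depends on the truncation order, $|t|\le 1/\big(2c\,kg(p_0+1)\big)$. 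This yields an error $\lesssim n\,2^{-p_0}\le\varepsilon$ for $p_0=\lceil\log(2n/\varepsilon)\rceil$, with constants adjusted to the base of the logarithm.

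The statement leaves this $p_0$-dependent restriction, a step size of order $1/(kg\log(n/\varepsilon))$, implicit. It is exactly what reappears as the additive $kg\log(n/\varepsilon)$ contribution to $\lcomm^{(\varepsilon)}$ in the lemma that follows.
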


The above lemma implies that for an extrapolation schedule $\{s_i \}_{i}$ with coefficients $\{b_i \}_{i}$, the approximation error can be bounded as 

\begin{equation}
    \big\| \sum_{i} b_i e^{-i\Heff(s_iT) T} - \sum_{i} b_i e^{-i\theff(s_iT, \varepsilon_{\mathrm{tr}}\hat{s}/\|\vec{b}\|_1 ) T}  \big\| \leq \varepsilon_{\mathrm{tr}}\,,
\end{equation}
where $\hat{s}$ is any given lower bound on all $s_i$,
by using a triangle inequality and bounding the telescoping sum. That is, we can control the error by setting a truncation parameter $p_0(\varepsilon_{\mathrm{tr}}\hat{s}/\|\vec{b}\|_1) = \lceil \log(2n\|\vec{b}\|_1/\hat{s}\varepsilon_{\mathrm{tr}})\rceil$. This will be useful to us as it means we only require knowledge of error operators to bounded order $p_0(\varepsilon_{\mathrm{tr}}\hat{s}/\|\vec{b}\|_1)$, and allows us to avoid divergences in the extrapolated commutator factor. Specifically, for any error parameter $\varepsilon$ we can simply repeat the same analysis of Lemma \ref{lem:time-evol-error-series} for $\theff(t,\varepsilon)$ and obtain an error series  

\begin{equation}\label{eq:error-series-k-local}
       e^{-i\theff(sT, \varepsilon)T} - e^{-iHT}= \sum_{j\in \sym\mathbb{Z}_+\geq p} s^j \tilde{E}_{j+1}(T,\varepsilon)\,,
\end{equation}
with error operators now of the form
\begin{equation}
        \tilde{E}_{j+1}(T,\varepsilon) \coloneqq
        \sum_{l=1}^{ \lfloor j/p \rfloor} T^{j+l} \int^1_0 ds_1 \int^{s_1}_0 ds_2 \dots \int^{s_{l-1}}_0 ds_l \sum_{\substack{ p\leq j_1\dots j_l \leq p_0(\varepsilon) \\ \in \sym\mathbb{Z}_+ \\ j_1+\dots+j_l=j}}  \left(\prod_{\kappa=l}^1 e^{-i(s_{\kappa-1}-s_\kappa)TH} i{E_{j_\kappa+1}}\right) e^{-is_l T H}\,,
    \end{equation}
    \begin{align}
        \norm{\tilde{E}_{j+1}(T,\varepsilon)} &\leq   (a_\mathrm{max}\Upsilon T)^j \sum_{l=1}^{ \lfloor j/p \rfloor  }  \frac{(a_\mathrm{max}\Upsilon T)^l}{l!} \sum_{\substack{ p\leq j_1\dots j_l \leq p_0(\varepsilon) \\ \in \sym\mathbb{Z}_+ \\ j_1+\dots+j_l=j}}  \left(\prod_{\kappa=1}^l  \frac{\acomm^{(j_\kappa + 1)}}{(j_\kappa + 1)^2}\right) = \sum_{l=1}^{\lfloor j/p \rfloor  } \frac{(a_\mathrm{max}\Upsilon T \lcomm^{(s\varepsilon_{\mathrm{tr}})} )^{j+l}}{l!}\,,
    \end{align}    
where $\lcomm^{(\varepsilon)}$ is defined as 
\begin{equation}
    \lcomm^{(\varepsilon)}:= \sup_{\substack{j\in \sigma \mathbb{Z}_{+} \geq \sigma m \\ 1 \leq l \leq \lfloor j/p \rfloor}} \Bigg( \sum_{\substack{ p\leq j_1\dots j_l \leq p_0(\varepsilon) \\ \in \sym\mathbb{Z}_+ \\ j_1+\dots+j_l=j}}  \left(\prod_{\kappa=1}^l  \frac{\acomm^{(j_\kappa + 1)}}{(j_\kappa + 1)^2}\right) \Bigg)^{\frac{1}{j+l}}\,.
\end{equation}

\begin{lemma}
    For a $k$-local Hamiltonian $H$ on $n$ qubits with maximum on site energy $g$, we have
    \begin{equation}
        \lcomm^{(\varepsilon)} = O\!\left( kg (p(\Lambda/g)^{1/(p+1)} + \log(n/\varepsilon)) \right)\,.
    \end{equation}
\end{lemma}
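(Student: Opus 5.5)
We bound each $\acomm^{(j+1)}$ with the $k$-local commutator estimate of \cite{childs2021TheoryTrotter}, and then control the constrained sum in $\lcomm^{(\varepsilon)}$ by elementary counting.

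\textbf{Step 1 (commutator bound).} For a $k$-local Hamiltonian with maximal on-site energy $g$, the standard bound is
\begin{equation}
\acomm^{(j+1)} \le c_0\, (2kg)^{j} j!\, \Lambda
\end{equation}
for an absolute constant $c_0$. It follows by fixing the outermost term and counting overlapping terms at each nesting level. The factor $j!$ comes from the growing support: the $q$-th nested commutator has support of size at most $qk$, so it overlaps terms of total strength at most $qkg$.

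\textbf{Step 2 (bound a single block).} Using $j!\le j^j$ gives
\begin{equation}
\frac{\acomm^{(j+1)}}{(j+1)^2}\le \Lambda\,(2kgj)^j .
\end{equation}
Inside $\lcomm^{(\varepsilon)}$ every $j_\kappa$ satisfies $j_\kappa\le p_0(\varepsilon)=\lceil\log(2n/\varepsilon)\rceil$. Hence each block factor is at most $\Lambda (2kg\,p_0)^{j_\kappa}$.

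\textbf{Step 3 (sum over compositions).} For $j_1+\dots+j_l=j$, the product of block factors is at most $\Lambda^l (2kg\,p_0)^j$. The number of compositions of $j$ into $l$ parts of size at least $p$ is at most $\binom{j}{l}\le (ej/l)^l$. Raising to the power $1/(j+l)$ gives
\begin{equation}
\lambda_{j,l}\le (2kg\,p_0)^{\frac{j}{j+l}}\,\bigl(\Lambda e j/l\bigr)^{\frac{l}{j+l}} .
\end{equation}
Write $x=2kg\,p_0$ and $y=\Lambda e j/l$, so that $\lambda_{j,l}\le x^{1-\theta} y^{\theta}$ with $\theta=l/(j+l)\le 1/(p+1)$.

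\textbf{Step 4 (weighted AM--GM).} By weighted AM--GM, $x^{1-\theta}y^\theta\le x + y^{\theta}x^{1-\theta}$. We handle the second piece as follows.
\begin{itemize}
\item When $\Lambda\ge g$, the exponent $\theta\le 1/(p+1)$ with $j/l\ge p$ gives $(ej/l)^{\theta}=O(1)$. Using $j\ge pl$, we can then bound by $O\bigl(kg\,p\,(\Lambda/g)^{1/(p+1)}\bigr)$ after trading a factor $g^{\theta}$ against $x^{-\theta}$.
\item Otherwise the bound $O(kg\log(n/\varepsilon))$ dominates.
\end{itemize}
Splitting into these two cases should give $\lcomm^{(\varepsilon)}=O\bigl(kg(p(\Lambda/g)^{1/(p+1)}+\log(n/\varepsilon))\bigr)$.

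\textbf{Main obstacle.} The delicate point is Step 4. The mixed term $(2kg\,p_0)^{1-\theta}(\Lambda j/l)^{\theta}$ must be split so that the $\log(n/\varepsilon)$ factor never multiplies $(\Lambda/g)^{1/(p+1)}$. This likely requires a sharper per-block estimate: bound $\acomm^{(j+1)}/(j+1)^2$ by $\Lambda g^{j}(2k)^j j!/j^2$ and compare $j_\kappa!$ against $p^{j_\kappa}$ when $j_\kappa$ is small. When $j_\kappa$ is large, compare it against $p_0^{j_\kappa}$ instead.

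The cleanest route I would try first is the split
\begin{equation}
(j_\kappa)!^{1/(j_\kappa+1)} \le \max\{p,\,p_0\}\cdot O(1)
\end{equation}
together with $\Lambda^{1/(j_\kappa+1)}\le \max\{1,\Lambda/g\}^{1/(p+1)}\,g^{1/(j_\kappa+1)}$. This yields a bound of the form $kg\max\{p,p_0\}\max\{1,\Lambda/g\}^{1/(p+1)}$. I would then need to refine it to the additive form by case analysis on whether $j_\kappa\le p\,(\Lambda/g)^{1/(p+1)}$.
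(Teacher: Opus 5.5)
There is a genuine gap, and it starts in Step 2, not only in Step 4. When you replace every $j_\kappa$ by $p_0$, you decouple each block's factorial growth from the attenuation of its $\Lambda$ factor, and nothing afterwards recovers the additive form. Take $l=1$, $j=p$. Your Step 3 bound becomes $(2kg\,p_0)^{p/(p+1)}(ep\Lambda)^{1/(p+1)}$, which is of order $kg\,p_0^{p/(p+1)}(\Lambda/g)^{1/(p+1)}$. For $\Lambda\sim ng$ and constant $\varepsilon$, this exceeds the claimed $kg\bigl(p(\Lambda/g)^{1/(p+1)}+\log(n/\varepsilon)\bigr)$ by a factor of about $(\log n)^{p/(p+1)}$.

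The first bullet of Step 4 silently drops this factor. Indeed $x^{1-\theta}\Lambda^{\theta}=2kg\,p_0\,(\Lambda/2kgp_0)^{\theta}$, and no trading of $g^\theta$ removes the $p_0^{1-\theta}$. Your fallback has the same defect, since it is explicitly a product of two maxima. The proposed case split at $j_\kappa\le p(\Lambda/g)^{1/(p+1)}$ does not close the gap with naive bounds either. On that range, combining $j_\kappa\le p(\Lambda/g)^{1/(p+1)}$ with $(\Lambda/g)^{1/(j_\kappa+1)}\le(\Lambda/g)^{1/(p+1)}$ only gives $p(\Lambda/g)^{2/(p+1)}$.

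The missing idea is to keep $j_\kappa$ and $\Lambda$ coupled inside each block, by spreading $\Lambda$ over the block's $j_\kappa+1$ powers:
\[
\frac{\acomm^{(j_\kappa+1)}}{(j_\kappa+1)^2}\le j_\kappa!\,(2kg)^{j_\kappa}\Lambda\le\Bigl(2kg\,(j_\kappa+1)\,(\Lambda/g)^{1/(j_\kappa+1)}\Bigr)^{j_\kappa+1}.
\]
Since $\sum_\kappa (j_\kappa+1)=j+l$, the product over blocks is at most $M^{j+l}$, where
\[
M=\max_{p+1\le j'\le p_0+1} 2kg\,j'(\Lambda/g)^{1/j'}.
\]
The count of compositions (at most $2^{j+l}$, or your $(ej/l)^l$) contributes only $O(1)$ after taking the $(j+l)$-th root. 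So $\lambda_{j,l}=O(M)$ uniformly in $j$ and $l$.

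It remains to bound $M$. The function $j'\mapsto j'(\Lambda/g)^{1/j'}$ decreases for $j'<\log(\Lambda/g)$ and increases afterwards, so its maximum on $[p+1,p_0+1]$ is attained at an endpoint. The two candidates are $(p+1)(\Lambda/g)^{1/(p+1)}$ and $(p_0+1)(\Lambda/g)^{1/(p_0+1)}$. The latter is $O(p_0)$: from $\Lambda\le ng$ and $p_0\ge\log(2n/\varepsilon)\ge\log(\Lambda/g)$ we get $(\Lambda/g)^{1/(p_0+1)}\le e$.

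This endpoint argument is what turns the bound into the sum $p(\Lambda/g)^{1/(p+1)}+p_0$ rather than a product. If you prefer a case split, the right threshold is $j'\approx\log(\Lambda/g)$. Below it, the block is dominated by its value at $j'=p+1$; above it, $(\Lambda/g)^{1/j'}\le e$.
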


\begin{proof}
The analysis is identical to that found in \cite[Lemma 5]{mizuta2026commutator}, and we provide a proof here for completeness. For $k$-local systems, we have $\acomm^{(j)} \leq (j-1)! (2kg)^{j-1} \Lambda$ \cite{childs2021TheoryTrotter}. Thus, we can write 
\begin{align}
    \sum_{\substack{ p\leq j_1\dots j_l \leq p_0(\varepsilon) \\ \in \sym\mathbb{Z}_+ \\ j_1+\dots+j_l=j}}  \left(\prod_{\kappa=1}^l  \frac{\acomm^{(j_\kappa + 1)}}{(j_\kappa + 1)^2}\right) 
    &\leq  \sum_{\substack{ p\leq j_1\dots j_l \leq p_0(\varepsilon) \\ \in \sym\mathbb{Z}_+ \\ j_1+\dots+j_l=j}}  \left(\prod_{\kappa=1}^l  \frac{j!}{(j_{\kappa}+1)^2} (2kg)^{j_{\kappa}} \Lambda \right) \\
    &\leq \sum_{\substack{ p\leq j_1\dots j_l \leq p_0(\varepsilon) \\ \in \sym\mathbb{Z}_+ \\ j_1+\dots+j_l=j}} \prod_{\kappa=1}^l \left( 2kg (j_{\kappa}+1) (\Lambda/g)^{1/(j_{\kappa}+1)} \right)^{(j_{\kappa}+1)} \\
    &\leq \left( \max_{p+1 \leq j' \leq p_0(\varepsilon)+1}\left( 2kg j' (\Lambda/g)^{1/j'} \right) \right)^{j+l} \sum_{\substack{ j_1\dots j_l  \geq 0 \\ j_1+\dots+j_l=j}}  1 \\
    &\leq (4kg)^{j+l} \left( \max_{p+1\leq j' \leq p_0(\varepsilon)+1}\left( j' (\Lambda/g)^{1/j'} \right) \right)^{j+l}\,,
\end{align}
where in the final line we have used the stars-and-bars formula to bound the sum by $\binom{j+l-1}{l-1} \leq 2^{j+l}$. The function $x (\Lambda/g)^{1/x}$ monotonically decreases in $x$ for all $0< x < \log (\Lambda/g)$ and increasing for $x \geq \log (\Lambda/g)$. Thus, we have 

\begin{align}
    \max_{p+1\leq j' \leq p_0(\varepsilon)+1}\left( j' (\Lambda/g)^{1/j'} \right) &\leq \max\left\{ (p+1)(\Lambda/g)^{1/(p+1)},  (p_0(\varepsilon)+1) (\Lambda/g)^{1/(p_0(\varepsilon)+1)}\right\} \\
    &= O\!\left(p(\Lambda/g)^{1/(p+1)} + p_0(\varepsilon)\right)\,,
\end{align}
due to the fact that $\Lambda \leq ng$ and $x^{1/\log (x)} = O(1)$. Putting everything together, we have 
\begin{equation}
    \lcomm^{(\varepsilon)} = O\left(kg (p(\Lambda/g)^{1/(p+1)} + p_0(\varepsilon)) \right)\,,
\end{equation}
as required.
\end{proof}

\begin{lemma}[Generic Richardson extrapolation error for time signal, $k$-local]\label{lemma:generic-Richardson-error-2}
    Let $\PC$ be an $\Upsilon$-staged $p$-th order product formula of symmetry class $\sigma$, where $\sigma=2$ if $\PC$ is symmetric, 1 otherwise. For a target evolution time $T$ let
    \begin{equation}
        \cP^{(R)}_{p,m}(T):= \sum_{k=1}^m b_i \cP^{1/s_i}(s_iT) = \sum_{i=1}^m b_i e^{-i\Heff(s_iT) T}
    \end{equation}
    denote an $m$-term Richardson extrapolation, with ascending sequence of Trotter steps $r_k=1 / s_k \in \mathbb{Z}_{+}$, which cancel the powers $s^\sigma, s^{2\sigma}, \ldots, s^{\sigma(m-1)}$. We consider a Hamiltonian which assumes the conditions of Lemma \ref{lem:time-evol-error-series}. Then, the error in the extrapolation of a Trotterized time signal, as compared to an exact time signal, satisfies
    \begin{equation}
    \big|\tr[ \rho(\cP^{(R)}_{p,m}(T) - e^{-iHT}) ]\big| \leq 4\|\vec{b}\|_1 \eta^{\lfloor \sigma m/p \rfloor} \left(s_1 a_{\max } \Upsilon \lcomm^{(\hat{s}\varepsilon/2\|\vec{b}\|_1)} T\right)^{\sigma m} + \varepsilon/2 \,.
    \end{equation}
    for any quantum state $\rho$, where $\|\vec{b}\|_1=\sum_k\left|b_k\right|$, and $\hat{s}$ is a lower bound on all $s_k$. 
\end{lemma}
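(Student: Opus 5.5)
The proof splits the error into a truncation piece and an extrapolation piece for the truncated series. Set $\varepsilon' := \hat{s}\varepsilon/(2\|\vec{b}\|_1)$ and write $\theff_i := \theff(s_iT,\varepsilon')$. By the triangle inequality,
\begin{equation}
\big|\tr[\rho(\cP^{(R)}_{p,m}(T) - e^{-iHT})]\big| \leq \Big\|\sum_i b_i \big(e^{-i\Heff(s_iT)T} - e^{-i\theff_i T}\big)\Big\| + \Big\|\sum_i b_i e^{-i\theff_i T} - e^{-iHT}\Big\|,
\end{equation}
where H\"older's inequality with $\|\rho\|_1=1$ converts the trace into operator norms.

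\textbf{Truncation piece.} I would use the truncated BCH lemma step by step. Write $e^{-i\Heff(s_iT)T} = (e^{-i\Heff(s_iT)s_iT})^{1/s_i}$, and similarly for $\theff_i$. Telescoping over the $1/s_i \le 1/\hat{s}$ steps gives
\begin{equation}
\big\|e^{-i\Heff(s_iT)T} - e^{-i\theff_i T}\big\| \le \varepsilon'/s_i \le \varepsilon/(2\|\vec{b}\|_1).
\end{equation}
Summing with weights $|b_i|$ bounds the first term by $\varepsilon/2$. This is exactly the displayed telescoping claim preceding the lemma.

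\textbf{Extrapolation piece.} I would invoke the truncated error series \eqref{eq:error-series-k-local}. Since $\sum_i b_i = 1$, the $e^{-iHT}$ terms recombine. The schedule cancels the powers $s^\sigma,\dots,s^{\sigma(m-1)}$, and the series only contains powers $j\in\sigma\mathbb{Z}_+$ with $j\ge p$. So only the remainder terms with $j\ge\sigma m$ survive, each bounded by $\|\tilde{E}_{j+1}(T,\varepsilon')\| \le \sum_{l=1}^{\lfloor j/p\rfloor} (a_{\max}\Upsilon T\lcomm^{(\varepsilon')})^{j+l}/l!$. Here $\lcomm^{(\varepsilon')}$ is the truncated supremum, which by definition ranges over $j\ge \sigma m$, and this matches what we need.

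From there I would repeat the summation in the proof of Lemma \ref{lem:generic-Richardson-error}. Bound each $s_k\le s_1$, use $\sum_l x^l/l! \le (e-1)\eta^{j/p}$ with $\eta=\max\{1,a_{\max}\Upsilon\lcomm^{(\varepsilon')}T\}$, factor out the leading $j=\sigma m$ term, and sum the geometric tail under $s_1(a_{\max}\Upsilon\lcomm^{(\varepsilon')}T)^{1+1/p}\le 1/2$. This yields $4\|\vec{b}\|_1\eta^{\sigma m/p}(s_1a_{\max}\Upsilon\lcomm^{(\varepsilon')}T)^{\sigma m}$.

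\textbf{Main obstacle.} The expected difficulty is the exponent $\eta^{\lfloor\sigma m/p\rfloor}$ rather than $\eta^{\sigma m/p}$. Since $l\le\lfloor j/p\rfloor$, the inner sum is at most $(e-1)\eta^{\lfloor j/p\rfloor}$. However, when the tail is factored out, $\eta^{\lfloor j/p\rfloor}\le \eta^{\lfloor\sigma m/p\rfloor}\eta^{(j-\sigma m)/p+1}$ picks up a possible extra factor of $\eta$.

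I would first try to absorb this extra factor into the geometric ratio. Failing that, I would note the stated bound only matters up to this floor and state the hypothesis on $s_1$ accordingly. The remaining subtlety is that the geometric tail must converge under an implicit step-size condition mirroring Lemma \ref{lem:generic-Richardson-error}, which I would assume is part of "the conditions of Lemma \ref{lem:time-evol-error-series}".
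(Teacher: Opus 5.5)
Your proposal follows the paper's proof. It uses the same triangle-inequality split into a truncation error, with one uniform $\varepsilon'=\hat s\varepsilon/2\|\vec b\|_1$ telescoped over at most $1/\hat s$ Trotter steps, plus the extrapolation error of the truncated series, bounded by rerunning the summation of Lemma~\ref{lem:generic-Richardson-error} with $\lcomm^{(\varepsilon')}$. The step-size condition you worry about is not part of the conditions of Lemma~\ref{lem:time-evol-error-series}; the paper simply imposes it inside its proof. The paper writes that condition with $\lcomm$, but $\lcomm^{(\varepsilon')}$, as you use, is what the argument needs, and it is the only sensible choice in the $k$-local setting where $\lcomm$ diverges.

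The obstacle you flag is genuine, but it is a defect of the statement, not of your argument. The paper reaches $\eta^{\lfloor\sigma m/p\rfloor}$ only by saying ``proceed as before'', and that route gives $\eta^{\sigma m/p}$, exactly as yours does. Neither of your fallbacks helps.
\begin{itemize}
\item \emph{Absorbing the factor into the geometric ratio fails.} Take the boundary $s_1(a_{\max}\Upsilon\lcomm^{(\varepsilon')}T)^{1+1/p}=1/2$ with $\lfloor(\sigma m+\sigma)/p\rfloor>\lfloor\sigma m/p\rfloor$ (e.g.\ $\sigma=2$, $p=4$, $m=3$). Then the ratio of the bound on the $j=\sigma m+\sigma$ tail term to the claimed right-hand side grows like $\eta^{1-\sigma/p}$, so no constant works.
\item \emph{Strengthening the hypothesis on $s_1$ is too costly.} Removing this factor requires roughly $s_1\lesssim(a_{\max}\Upsilon\lcomm^{(\varepsilon')}T)^{-(1+1/\sigma)}$. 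Whenever the floor actually matters (which needs $p>\sigma$), this forfeits the $1+1/p$ exponent that the lemma exists to deliver.
\end{itemize}
State and prove the $\eta^{\sigma m/p}$ version instead. It is weaker only by the factor $\eta^{\sigma m/p-\lfloor\sigma m/p\rfloor}<\eta$. It is also exactly what the subsequent Trotter-step count uses, through the $\max\{1,(\cdot)^{1+1/p}\}$ factor of Lemma~\ref{lem:generic-Richardson-error-2}.
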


\begin{proof}
Set the truncation parameter as $p_0(\varepsilon')= p_0(\varepsilon \hat{s}/2\|\vec{b}\|_1) = \lceil \log(4n\|\vec{b}\|_1/\hat{s}\varepsilon)\rceil$. The extrapolation error takes the form 
\begin{equation}
    \left| \tr[\rho ( \cP^{(R)}_{p,m}(T) - e^{-iHT}) ] \right| \leq \sum_{k=1}^m |b_k| \cdot \|  R_{\sigma(m-1)}(T, s_k, \varepsilon')\|.
\end{equation}
where $R_{\sigma(m-1)}(T, s_k)$ denotes the Taylor remainder of degree $\sigma(m-1)$ which takes the form
\begin{equation}
    R_{\sigma(m-1)}(T, s_k):=\sum_{\substack{j \in \sigma \mathbb{Z}_{+} \\ j \geq \sigma m}} s_k^j \tilde{E}_{j+1}(T,\varepsilon') \,,
\end{equation}
for each inverse Trotter step $s_k$. The size of this is bounded as
\begin{align}
    \left\|R_{\sigma(m-1)}(T, s_k,\varepsilon')\right\| &\leq  \sum_{\substack{j \in \sigma \mathbb{Z}_{+} \\ j \geq \sigma m}} s_k^j\left\|\tilde{E}_{j+1}(T,\varepsilon')\right\|  \\
    & \leq \sum_{\substack{j \in \sigma \mathbb{Z}_{+} \\ j \geq \sigma m}} s_k^j \sum_{l=1}^{\lfloor j/p\rfloor} \frac{\left(a_{\max } \Upsilon \lcomm^{(\varepsilon')} T\right)^{j+l}}{l!} \,,
\end{align}
where as before we have used Lemma \ref{lem:time-evol-error-series} and denoted $\lambda_{j, l}$ as in the statement of this Lemma. 
As $\lcomm^{(\varepsilon')}$ is independent of the extrapolation index $i$, we can proceed as before and bound the extrapolation error simply as 
\begin{align}
    \big|\tr[ \rho(\cP^{(R)}_{p,m}(T) - e^{-iHT}) ]\big| &\leq \|\vec{b}\|_1 \cdot \left\|R_{\sigma(m-1)}(T, s_1)\right\| \leq 4 \|\vec{b}\|_1 \eta^{\lfloor \sigma m/p\rfloor}  \left(s_1 a_{\max } \Upsilon \lcomm^{(\varepsilon')} T \right)^{\sigma m}\,.
\end{align}
on the condition that $s_1 (a_{\max } \Upsilon \lcomm T)^{1+1/p}\leq 1/2$. Finally, one can check that setting $\varepsilon'= \varepsilon \hat{s}/2\|\vec{b}\|_1$ guarantees that the truncation error is bounded as 
\begin{equation}
    \big\| \cP^{(R)}_{p,m}(T) - \sum_{i} b_i e^{-i\theff(s_iT, \varepsilon\hat{s}/2\|\vec{b}\|_1 ) T}  \big\| \leq \varepsilon/2\,,
\end{equation}
and the statement of the lemma follows by a triangle inequality.
\end{proof}

The above lemma allows us to repeat prior analysis with $\lcomm^{(\hat{s}\varepsilon/2\|\vec{b}\|_1)}$ for $k$-local systems in place of $\lcomm$. In the case of compiling time evolution, we have $1/\hat{s} = \mathrm{poly}(n,g,T,\varepsilon^{-1})$ (note that $\Lambda \leq ng$).

Moreover, we have $\lcomm^{(\hat{s}\varepsilon/2\|\vec{b}\|_1)} \leq \lcommklocal$, where 
    \begin{equation}
        \lcommklocal  = O\!\left( kg (p\Lambda^{1/(p+1)} + g \log(n g T/\varepsilon)) \right) \,.
    \end{equation}

Further, when compiling other functions $T$ is simply taken to be the maximum time parameter of the Fourier series, and an additive term of $g\log\log(c(\varepsilon/3))$ is inherited for the Fourier weight $c(\varepsilon/3)$.

\section{Exploiting Hamiltonian symmetries}\label{sec:Fermionic}

\subsection{General result}

In this section we demonstrate that our algorithms (as well as prior art that uses extrapolated product formulae \cite{watson2024exponentiallyReducedTrotter, chakraborty2025quantum}) inherit the ability of product formulae to exploit Hamiltonian symmetries in commutator scaling.

\begin{theorem}[Hamiltonian symmetries]\label{thm:symmetries}
    Consider a Hamiltonian $H = \sum_{\gamma=1}^{\Gamma} H_{\gamma}$ with symmetry $S$ such that $[H_{\gamma},S]=0$ for all $\gamma$. Let $\Pi_i$ be the projector onto the $i$th symmetry sector corresponding to $S$. Then, given any input state that belongs to that symmetry sector $\rho = \Pi_i\rho \Pi_i$, Algorithms \ref{algo:overlap}, \ref{algo:observable}, \ref{algo:distribution} all succeed at their stated tasks with complexities given in Theorems \ref{thm:overlap}, \ref{thm:overlap-with-observable}, \ref{thm:distribution-recovery} respectively, with a refined extrapolated commutator factor $\lcomm \rightarrow \lcommi$, where 
    \begin{equation}
        \lcommi := \sup _{\substack{j \in \sigma \mathbb{Z}_{+} \geq \sigma m \\ 1 \leq l \leq \lfloor j/p \rfloor}} \bigg(\sum_{\substack{j_1 \ldots j_1 \in \sigma \mathbb{Z}_{+} \geq p \\ j_1+\cdots+j_l=j}} \prod_{\kappa=1}^l \frac{\alpha_{\text{comm},i}^{\left(j_\kappa+1\right)}}{\left(j_\kappa+1\right)^2}\bigg)^{1 /(j+l)}\,,
    \end{equation}
    where we have denoted  
    \begin{equation}
        \alpha_{\text{comm},i}^{(j)}:= \sum_{\gamma_1, \dots, \gamma_j = 1}^{\Gamma} \left\|
        \bigl[ H_{\gamma_1}, \dots,  H_{\gamma_j} \bigr] \Pi_i \right\|\,.
    \end{equation}
\end{theorem}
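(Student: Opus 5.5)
The plan is to show that every operator-norm bound used in the analysis can be replaced by the same bound with $\Pi_i$ attached on the right. The algorithms themselves never change; only the error analysis does. The first step is structural. Since $[H_\gamma,S]=0$ for every $\gamma$, each $H_\gamma$ commutes with each spectral projector $\Pi_i$ of $S$. Consequently every exponential $e^{-itH_\gamma}$, every product formula $\PC(t)$, the exact evolution $e^{-iHt}$, the effective Hamiltonian $\Heff(t)$, and each error operator $E_{j+1}$ of Lemma \ref{lem:Effective_Hamiltonian_Error_Series} commutes with $\Pi_i$. The $E_{j+1}$ are linear combinations of nested commutators of the $H_\gamma$, so this is immediate from the Magnus/BCH construction.

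Next, I would redo Lemma \ref{lem:Effective_Hamiltonian_Error_Series} in projected form. Each $E_{j+1}$ is a combination of nested commutators $[H_{\gamma_1},\dots,H_{\gamma_{j+1}}]$ whose coefficients are bounded exactly as in the unprojected proof. Applying the triangle inequality after right-multiplying by $\Pi_i$ therefore gives
\[
\|E_{j+1}\Pi_i\|\leq \frac{(a_{\max}\Upsilon)^{j+1}}{(j+1)^2}\alpha^{(j+1)}_{\mathrm{comm},i}.
\]
This requires inspecting the proof in \cite{watson2024exponentiallyReducedTrotter} and checking that the bound arises termwise from the triangle inequality. I expect it does, since the same argument underlies the fermionic seminorm results.

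Then I would redo Lemma \ref{lem:time-evol-error-series}. The iterated variation-of-parameters expansion writes $\tilde E_{j+1}(T)$ as integrals of products of the form $e^{-i(\cdot)H}E_{j_l+1}\cdots E_{j_1+1}e^{-is_lTH}$. Multiplying on the right by $\Pi_i$ and commuting $\Pi_i$ through every factor using $\Pi_i^2=\Pi_i$ lets us insert $\Pi_i$ after each $E$. We then bound each factor by $\|E_{j_\kappa+1}\Pi_i\|$ and use unitarity of the exponentials restricted to the sector. This yields $\|\tilde E_{j+1}(T)\Pi_i\|\leq\sum_l (a_{\max}\Upsilon T\lcommi)^{j+l}/l!$. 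The main obstacle is here, in handling convergence. The series is only asserted to converge under the hypothesis of Lemma \ref{lem:Effective_Hamiltonian_Error_Series}, which is phrased with the unprojected $\acomm$. My approach is to work entirely on the invariant subspace $\Pi_i\mathbb{C}^{2^n}$, where each restricted $H_\gamma\Pi_i$ is a legitimate Hamiltonian term. Nested commutators of the restrictions equal the restrictions of the nested commutators, so their norms are exactly $\|[\cdots]\Pi_i\|$. Rerunning the entire series argument for the restricted Hamiltonian $H\Pi_i$ therefore yields the claim, with convergence governed by $\alpha^{(j)}_{\mathrm{comm},i}$.

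With $\|(\cP^{(R)}_{p,m}(t)-e^{-iHt})\Pi_i\|\leq\varepsilon_R$ obtained through Lemmas \ref{lem:generic-Richardson-error}--\ref{lem:richardson-well-conditioned} with $\lcomm\to\lcommi$, I would then get $\|(f_{F,R}(H)-f(H))\Pi_i\|\leq\varepsilon_F+c\,\varepsilon_R$ as in Lemma \ref{lem:fourier-richardson}. Finally, in each theorem I would write $\rho=\Pi_i\rho\Pi_i$ and insert $\Pi_i$ wherever H\"older's inequality is used. For Theorem \ref{thm:overlap}, $|\tr[\rho U(f_{F,R}-f)]|=|\tr[\rho U(f_{F,R}-f)\Pi_i]|$. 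For Theorems \ref{thm:overlap-with-observable} and \ref{thm:distribution-recovery}, the differences act on $\rho$ or $\ket{\psi}=\Pi_i\ket{\psi}$ from the right, so each bound becomes $\|(f_{F,R}-f)\Pi_i\|$. Sampling and statistical parts are unchanged, so the stated complexities hold with $\lcommi$.
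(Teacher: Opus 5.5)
Your proposal is correct and follows essentially the same route as the paper. Both arguments commute $\Pi_i$ through the product formulae, the exact evolutions and the error operators, bound the projected operators $\tilde{E}_{j+1}(T)\Pi_i$ in terms of $\alpha_{\mathrm{comm},i}$, propagate this through the extrapolation lemmas, and use $\rho=\Pi_i\rho\Pi_i$ in each task; your restriction to the invariant subspace (which also settles convergence in terms of the projected commutator factors) and your use of cyclicity to handle $U$ make precise two points the paper's proof passes over.
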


We remark that $\alpha_{\text{comm},i}^{(j)}$ (c.f.~Definition \ref{def:acomm}) is exactly the commutator factor that characterizes the complexity of product formulae given an input state that belongs to the $i$th symmetry sector for a symmetric Hamiltonian.

\begin{proof}
    Under the stated conditions, we demonstrate a modified version of Lemmas \ref{lem:time-evol-error-series} and \ref{lem:richardson-well-conditioned}. These lemmas can be propagated through to Theorems \ref{thm:overlap}, \ref{thm:overlap-with-observable} and \ref{thm:distribution-recovery}. First, we note that time evolution unitaries of $H$ commute with $S$ and $\Pi_i$. Second, any product formula constructed from the decomposition given in the Theorem statement $H = \sum_{\gamma=1}^{\Gamma} H_{\gamma}$ commutes with $S$ and $\Pi_i$, as it consists of a product of time evolution operators of the form $\exp(-iH_{\gamma} \tau)$. Third, the Hamiltonian error operators $E_j$ (defined in Lemma \ref{lem:Effective_Hamiltonian_Error_Series}) also commute with $S$ and $\Pi_i$, as they consist of a linear combination of nested commutators of $H_{\gamma}$. Finally, the time evolution error operators $\tilde{E}_{j}$ (defined in Eq.~\eqref{eq:tilde_E_explicit}) also commute with $S$ and $\Pi_i$, as they are formed from products of time evolution unitaries and $E_j$.

    The first two statements allow us to write a projected error series (modifying Lemma \ref{lem:time-evol-error-series}):
    \begin{equation} 
       \Pi_i (\cP^{1/s}(sT) - e^{-iHT}) =  (\cP^{1/s}(sT) - e^{-iHT})\Pi_i  = \sum_{j\in \sym\mathbb{Z}_+\geq p} s^j  \tilde{E}_{j+1}(T) \Pi_i\,.
    \end{equation}
    Further, the first and fourth statements allow us to denote $\tilde{E}_{j+1}(T) \Pi_i  = \tilde{E}_{j+1}^{(i)}(T)$ where
    \begin{align}
        \norm{\tilde{E}^{(i)}_{j+1}(T)} &\leq   (a_\mathrm{max}\Upsilon T)^j \sum_{l=1}^{ \lfloor j/p \rfloor  }  \frac{(a_\mathrm{max}\Upsilon T)^l}{l!} \sum_{\substack{j_1\dots j_l\in \sym\mathbb{Z}_+\geq p \\ j_1+\dots+j_l=j}}  \left(\prod_{\kappa=1}^l  \frac{\alpha_{\textrm{comm},i}^{(j_\kappa + 1)}}{(j_\kappa + 1)^2}\right)\\
        &= \sum_{l=1}^{\lfloor j/p \rfloor  } \frac{(a_\mathrm{max}\Upsilon T \lcommi)^{j+l}}{l!}\,,
    \end{align}
    with $\alpha_{\textrm{comm},i}^{(j_\kappa + 1)}$ and $\lcommi$ defined in the theorem statement, where we have used the fact that $\Pi_i = \Pi_i^2$. From here we can modify Lemma \ref{lem:richardson-well-conditioned} to bound the projected error: we can attain extrapolation error $\| (e^{-iHT} -  \cP^{(R)}_{p,m}(T)) \Pi_i\| = \| \Pi_i(e^{-iHT} -  \cP^{(R)}_{p,m}(T)) \| \leq \varepsilon$  
    using $m=O(\log(1/\varepsilon))$  extrapolation circuits, each with at most
    $O\!\left(\left(a_{\max } \Upsilon \lcommi T\right)^{(1+1 / p)} \log (1 / \varepsilon)\right)$ Trotter steps. We note this is an identical result to Lemma \ref{lem:richardson-well-conditioned} except with $\lcomm \rightarrow \lcommi$.

    Let us now inspect the three quantities required of Tasks (\hyperref[eq:task-1]{1(i)}) (\hyperref[eq:task-1]{1(ii)}) and \hyperref[eq:task-2]{2}. In each case, we are either approximating the operator $\rho f(H)$ or $f(H) \rho f(H)$, where $f(H)$ is decomposed into a linear combination of time evolution operators. As $\rho = \Pi_i\rho \Pi_i$, we can recover all our results of Theorems \ref{thm:overlap}, \ref{thm:overlap-with-observable} and \ref{thm:distribution-recovery} using our new projected error bounds, attaining the same complexities except with $\lcomm \rightarrow \lcommi$.
\end{proof}

\subsection{Input states with well-defined Fermion number}

In the remainder of this section we discuss implications for Fermionic systems as a specific example.

In many quantum simulation settings (such as electronic structure problems and lattice fermion models) the dynamics are number preserving. That is, the Hamiltonians and observables of interest commute with the number operator $\hat{n}_{i} = \hat{a}_{i}^{\dag} \hat{a}_{i}$, where we denote $ \hat{a}_{i}$ and $ \hat{a}_{i}^{\dag}$ as the Fermionic annihilation and creation operators respectively for the $i$th mode. Under such dynamics, if the input state has a well-defined Fermion number, tighter Trotter error bounds can be established, by considering the size of the error operator under the so-called Fermionic seminorm \cite{mcardle2022ExploitingFermionNumber, su2021NearlyTightTrottInerElect, low2023complexityTrotter}. Specifically, the complexity depends on  where $\alpha_{\eta\text{-}\mathrm{comm}} 
\ll \alpha_{\mathrm{comm}}$ if the number of Fermions is much smaller than the number of basis states (such as orbitals).


We work in a Fermionic Fock space with fixed particle number sectors (symmetry sectors) labeled by $\eta$. For any operator $X$ that preserves particle number, we define the Fermionic $\eta$-seminorm:
\begin{equation}
\|X\|_\eta := \max_{|\psi_\eta\rangle,\, |\phi_\eta\rangle} \frac{|\langle \phi_\eta | X | \psi_\eta \rangle|}{\| \psi_\eta \| \cdot \| \phi_\eta \|},
\end{equation}
where $\ket{\psi_\eta}$ and $\ket{\phi_\eta}$ are $\eta$-electron states. Importantly, this seminorm quantifies the action of $X$ restricted to the $\eta$-particle subspace.

\begin{lemma}[Fermionic seminorm as a projected spectral norm \cite{su2021NearlyTightTrottInerElect}]
\label{prop:Fermionic_seminorm}
For any number-preserving operator $X$, the Fermionic $\eta$-seminorm satisfies
\begin{equation}
\|X\|_{\eta} = \max_{|\psi_{\eta}\rangle, |\phi_{\eta}\rangle} |\langle\phi_{\eta}|X|\psi_{\eta}\rangle| = \|X\Pi_{\eta}\|,
\end{equation}
where $\Pi_\eta$ is the projector onto the $\eta$-Fermion subspace.
\end{lemma}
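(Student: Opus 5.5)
Let $\mathcal{H}_\eta := \Pi_\eta \mathcal{H}$ denote the $\eta$-particle subspace. The plan is to show two inequalities, $\|X\Pi_\eta\| \le \|X\|_\eta$ and $\|X\|_\eta \le \|X\Pi_\eta\|$. Both use only one fact: a number-preserving $X$ commutes with $\Pi_\eta$. This holds because $X$ commutes with the total number operator $\hat{N} = \sum_i \hat{n}_i$, and $\Pi_\eta$ is a spectral projector of $\hat{N}$, hence a polynomial in $\hat{N}$. As a consequence, $X\Pi_\eta = \Pi_\eta X \Pi_\eta$, so $X$ maps $\mathcal{H}_\eta$ into itself.

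First, I would settle normalization. The defining expression for $\|X\|_\eta$ is homogeneous of degree zero in $\ket{\psi_\eta}$ and $\ket{\phi_\eta}$. So the maximum over nonzero $\eta$-electron vectors equals the maximum over unit vectors in $\mathcal{H}_\eta$, which gives the middle expression in the statement. Since $\mathcal{H}_\eta$ is finite dimensional, the unit sphere is compact and the maximum is attained.

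For the inequality $\|X\|_\eta \le \|X\Pi_\eta\|$, take unit vectors $\ket{\psi_\eta}, \ket{\phi_\eta}$ in $\mathcal{H}_\eta$. Then $\ket{\psi_\eta} = \Pi_\eta \ket{\psi_\eta}$, so
\begin{equation}
|\langle \phi_\eta | X | \psi_\eta\rangle| = |\langle \phi_\eta | X\Pi_\eta | \psi_\eta\rangle| \le \|X\Pi_\eta\|
\end{equation}
by Cauchy--Schwarz.

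For the inequality $\|X\Pi_\eta\| \le \|X\|_\eta$, recall $\|X\Pi_\eta\| = \sup_{\|u\|=1} \|X\Pi_\eta u\|$. Fix a unit vector $u$ and set $\ket{\psi} = \Pi_\eta u$, so $\|\psi\| \le 1$. If $X\ket{\psi} = 0$ there is nothing to show. Otherwise, since $X$ preserves $\mathcal{H}_\eta$, the normalized vector $\ket{\phi} := X\ket{\psi}/\|X\ket{\psi}\|$ lies in $\mathcal{H}_\eta$. Then $\|X\ket{\psi}\| = \langle \phi | X | \psi\rangle \le \|X\|_\eta \|\psi\| \le \|X\|_\eta$.

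The only substantive step is this last one, where the test vector $\ket{\phi}$ must be an $\eta$-electron state. This is exactly where number preservation is needed. Without it, $X\Pi_\eta$ could leak outside the sector, and the seminorm would only control the sandwiched operator $\Pi_\eta X\Pi_\eta$, not $X\Pi_\eta$. Combining the two inequalities gives $\|X\|_\eta = \|X\Pi_\eta\|$.
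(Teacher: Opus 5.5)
Your proof is correct and complete. Number preservation gives $[X,\Pi_\eta]=0$, since $\Pi_\eta$ is a polynomial in $\hat N$, so the $\eta$-particle sector is invariant under $X$. The bound $\|X\|_\eta\le\|X\Pi_\eta\|$ then follows from Cauchy--Schwarz, and choosing the in-sector test vector $\ket{\phi}=X\ket{\psi}/\|X\ket{\psi}\|$ gives the reverse inequality.

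The paper does not prove this lemma; it cites it from Su, Huang and Campbell, so there is no alternative argument to compare against. Your argument is the standard one, and it uses the same two facts the paper relies on in its symmetry-sector theorem: the sector projector commutes with the relevant operators, and it is idempotent.
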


We can thus use Theorem \ref{thm:symmetries} to write a refined statement for our algorithms when the input state has a well-defined Fermion number.

\begin{theorem}[Exploiting low Fermion number]\label{thm:fermions}
    Consider a number preserving Hamiltonian $H = \sum_{\gamma=1}^{\Gamma} H_{\gamma}$ such that $[H_{\gamma},\hat{n}]=0$ for all $\gamma$. Then, given any input state with well-defined Fermion number $\eta$, Algorithms \ref{algo:overlap}, \ref{algo:observable}, \ref{algo:distribution} all succeed at their stated tasks with complexities given in Theorems \ref{thm:overlap}, \ref{thm:overlap-with-observable}, \ref{thm:distribution-recovery} respectively, with a refined extrapolated commutator factor $\lcomm \rightarrow \lcomm^{(\eta)}$, where 
    \begin{equation}
        \lcomm^{(\eta)} := \sup _{\substack{j \in \sigma \mathbb{Z}_{+} \geq \sigma m \\ 1 \leq l \leq \lfloor j/p \rfloor}} \bigg(\sum_{\substack{j_1 \ldots j_1 \in \sigma \mathbb{Z}_{+} \geq p \\ j_1+\cdots+j_l=j}} \prod_{\kappa=1}^l \frac{\alpha_{\eta\text{-}\mathrm{comm}}^{(j_\kappa + 1)}}{\left(j_\kappa+1\right)^2}\bigg)^{1 /(j+l)}\,,
    \end{equation}
    where we have denoted  
    \begin{equation}
      \alpha_{\eta\text{-}\mathrm{comm}}^{(j)}:= \sum_{\gamma_1, \dots, \gamma_j = 1}^{\Gamma} \left\| \bigl[ H_{\gamma_1}, \dots,  H_{\gamma_j} \bigr] \right\|_{\eta}\,.
    \end{equation}
\end{theorem}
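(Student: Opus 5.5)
The plan is to obtain Theorem~\ref{thm:fermions} as a direct specialization of Theorem~\ref{thm:symmetries}. We take the symmetry to be the total number operator $S=\hat{n}=\sum_i \hat{n}_i$. The projectors $\Pi_i$ then become the projectors $\Pi_\eta$ onto the $\eta$-Fermion sectors.

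Two hypotheses of Theorem~\ref{thm:symmetries} need checking. First, $[H_\gamma,\hat{n}]=0$ for every $\gamma$, which is assumed in the statement. Second, an input state with well-defined Fermion number $\eta$ is supported on the $\eta$-particle subspace. This means exactly $\rho=\Pi_\eta\rho\Pi_\eta$, and it holds for mixed states as well, since each eigenvector of $\rho$ lies in the sector. Theorem~\ref{thm:symmetries} then gives the stated complexities, with $\lcomm$ replaced by $\lambda_{\mathrm{comm},\eta}$. That quantity is built from
\begin{equation*}
\alpha^{(j)}_{\mathrm{comm},\eta}=\sum_{\gamma_1,\dots,\gamma_j}\bigl\|[H_{\gamma_1},\dots,H_{\gamma_j}]\Pi_\eta\bigr\|.
\end{equation*}

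It remains to identify each summand with the Fermionic seminorm. The plan is to apply Lemma~\ref{prop:Fermionic_seminorm} to $X=[H_{\gamma_1},\dots,H_{\gamma_j}]$, which requires $X$ to be number preserving. This follows inductively: if $[A,\hat{n}]=[B,\hat{n}]=0$, the Jacobi identity gives $[[A,B],\hat{n}]=[A,[B,\hat{n}]]-[B,[A,\hat{n}]]=0$. Every nested commutator of the $H_\gamma$ therefore commutes with $\hat{n}$. Lemma~\ref{prop:Fermionic_seminorm} then yields $\|X\Pi_\eta\|=\|X\|_\eta$ term by term. Hence $\alpha^{(j)}_{\mathrm{comm},\eta}=\alpha^{(j)}_{\eta\text{-}\mathrm{comm}}$, and so $\lambda_{\mathrm{comm},\eta}=\lcomm^{(\eta)}$ exactly, since the two suprema range over the same index sets.

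The only genuinely delicate point is the legitimacy of the projected error analysis inside Theorem~\ref{thm:symmetries}, namely that the error operators $E_j$ and $\tilde E_{j}$ commute with $\Pi_\eta$. That has already been established there. So the main obstacle here is just ensuring Lemma~\ref{prop:Fermionic_seminorm} applies to every nested commutator, which the Jacobi-identity induction settles.
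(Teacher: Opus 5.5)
Your proposal is correct and follows the paper's route. The paper gives no separate proof: it obtains Theorem~\ref{thm:fermions} directly from Theorem~\ref{thm:symmetries} with $S=\hat n$ and $\Pi_i=\Pi_\eta$, then rewrites $\|[H_{\gamma_1},\dots,H_{\gamma_j}]\Pi_\eta\|$ as $\|[H_{\gamma_1},\dots,H_{\gamma_j}]\|_\eta$ via Lemma~\ref{prop:Fermionic_seminorm}. Your Jacobi-identity check that every nested commutator is number preserving spells out a fact the paper uses inside the proof of Theorem~\ref{thm:symmetries}; it is also what makes the $\eta$-seminorm well defined in the first place.
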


In \cite{su2021NearlyTightTrottInerElect, mcardle2022ExploitingFermionNumber, low2023complexityTrotter, watson2023QuantumAlgorithmNuclearEFT}, the Trotter error with respect to the Fermionic seminorm has been carefully studied -- more specifically, bounds on $\alpha_{\eta\text{-}\mathrm{comm}}^{(j)}$ are known. In what follows, we give an example of how this materially allows the same scaling to be translated to our extrapolation algorithms. We quote a theorem of \cite{low2023complexityTrotter}.

\begin{theorem}[Trotter error with Fermionic induced 1-norm scaling (\cite{low2023complexityTrotter}, Theorem 4)]\label{thm:fermion-low}
Let $H = T + V := \sum_{j,k} \tau_{j,k} \hat{a}_j^\dagger \hat{a}_k + \sum_{l,m} \nu_{l,m} \hat{n}_l \hat{n}_m$ be an interacting-electronic Hamiltonian, where the sums are over n  orbitals and $\hat{n}_{i}$ is the orbital occupation number. Let $\PC$ denote a $p$-th-order product formula based on this decomposition. Then,
\begin{equation}
(\alpha_{\eta\text{-}\mathrm{comm}}^{(p+1)})^{1/p} = O \!\left( \left( \vertiii{\tau }_1 + \vertiii{\nu}_{1,\eta} \right)^{1-1/p} \left(\vertiii{\tau }_1 \vertiii{\nu}_{1,\eta} \, \eta\right)^{1/p} \right)\,,
\end{equation}
where we define norms on Hamiltonian parameters as
\begin{equation}
\vertiii{\tau }_1 = \max_j \sum_k |\tau_{j,k}|, \quad
\vertiii{\nu}_{1,\eta} = \max_j \max_{k_1 < \cdots < k_\eta} \left( |\nu_{j,k_1}| + \cdots + |\nu_{j,k_\eta}| \right).
\end{equation}
\end{theorem}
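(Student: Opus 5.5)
The statement is a quoted result (Theorem 4 of \cite{low2023complexityTrotter}), so the plan is to reconstruct its proof rather than derive it from earlier results in this paper. It has two parts: a two-term Trotter error expansion, and a fermionic seminorm estimate for each nested commutator.

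The first step is to use the two-term decomposition $H = T + V$. This gives $\Gamma = 2$, and the $(p+1)$-fold nested commutators appearing in $\alpha_{\eta\text{-}\mathrm{comm}}^{(p+1)}$ are words in $T$ and $V$. Any word built only from $T$, or only from $V$, vanishes. Hence every surviving term contains at least one $T$ and at least one $V$. By Lemma \ref{prop:Fermionic_seminorm}, each term can be evaluated as $\|[\cdot]\Pi_\eta\|$, using that $T$, $V$, and all their commutators are number preserving.

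The core estimate I would aim for is
\begin{equation}
\big\|[X_1,\dots,X_{p+1}]\big\|_\eta \leq C^{p+1}\, \eta\, \vertiii{\tau}_1\, \vertiii{\nu}_{1,\eta}\, \big(\vertiii{\tau}_1+\vertiii{\nu}_{1,\eta}\big)^{p-1}, \qquad X_i\in\{T,V\}.
\end{equation}
Taking the $1/p$-th power of this bound gives the claim.

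To prove it, I would expand the commutators in second quantization. Commuting $\hat a_j^\dagger \hat a_k$ with $\hat n_l \hat n_m$ gives a one-body hopping operator dressed by number operators. Each further commutator with $T$ or $V$ either:
\begin{itemize}
\item moves a hopping index, weighted by a $\tau$ entry, or
\item multiplies by a factor $\sum_m \nu_{l,m}\hat n_m$, weighted by a $\nu$ entry.
\end{itemize}
On the $\eta$-particle subspace, the operator $\sum_m \nu_{j,m}\hat n_m$ has norm at most $\vertiii{\nu}_{1,\eta}$, since at most $\eta$ modes are occupied. The remaining bare one-body operator $\sum_{jk} A_{jk}\hat a_j^\dagger \hat a_k$ satisfies $\|A\|_\eta \le \eta\|A\|_{\text{op}} \le \eta \vertiii{A}_1$ for symmetric $A$. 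This produces the single factor of $\eta$.

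The structure of the bound then follows by tracking the letters of the word. Each $T$ contributes one factor $\vertiii{\tau}_1$ via induced $1$-norm submultiplicativity of the hopping matrices, and each $V$ contributes $\vertiii{\nu}_{1,\eta}$. Summing over words, and using that at least one of each letter appears, gives the factor $\vertiii{\tau}_1\vertiii{\nu}_{1,\eta}(\vertiii{\tau}_1+\vertiii{\nu}_{1,\eta})^{p-1}$.

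The main obstacle is making the commutator expansion rigorous. One must show that nested commutators never generate two-body terms whose seminorm grows like $\eta^2$, so that $\eta$ appears only linearly. My approach would be an induction on word length. The inductive invariant is that every term has the form
\begin{equation}
\sum_{j,k} \big(\text{coefficient tensor}\big)\, \hat a_j^\dagger\, \big(\text{polynomial in }\hat n\big)\, \hat a_k,
\end{equation}
with the coefficient tensor's induced norms controlled as above. The $\eta$-restricted norm of the number-operator polynomials would be bounded directly by occupation counting, following \cite{low2023complexityTrotter}.
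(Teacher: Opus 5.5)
The paper gives no proof of this statement. It imports it from \cite{low2023complexityTrotter}, recast as a bound on $\alpha_{\eta\text{-}\mathrm{comm}}^{(p+1)}$, so your reconstruction has to stand on its own. Your reduction is fine: with $\Gamma=2$ only mixed words survive, and a per-word bound $C^{p+1}\eta\,\vertiii{\tau}_1\vertiii{\nu}_{1,\eta}(\vertiii{\tau}_1+\vertiii{\nu}_{1,\eta})^{p-1}$ summed over $2^{p+1}$ words gives the claim after taking $p$-th roots.

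The gap is in the core estimate. Your case analysis of a further commutator omits the case where $T$ acts on the number-operator factor, and $T$ does not commute with it: $[T,\hat n_l]=\sum_i(\tau_{il}\hat a_i^\dagger\hat a_l-\tau_{li}\hat a_l^\dagger\hat a_i)$. So your invariant is not closed under $[T,\cdot\,]$. Concretely, $[V,T]=\sum_{j\neq k}\tau_{jk}\hat a_j^\dagger f_{jk}(\hat n)\hat a_k$ with $f_{jk}$ affine in $\hat n$. Taking $\nu$ symmetric, the coefficient of $\hat n_l$ is $2(\nu_{jl}-\nu_{kl})$ for $l\notin\{j,k\}$. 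Then $[T,[V,T]]$ contains $\sum_{j,k}\tau_{jk}\hat a_j^\dagger[T,f_{jk}(\hat n)]\hat a_k$. For pairwise distinct $i,j,k,l$ this yields two-body operators $\hat a_j^\dagger\hat a_i^\dagger\hat a_l\hat a_k$ with coefficients proportional to $\tau_{jk}\tau_{il}(\nu_{jl}-\nu_{kl}-\nu_{ji}+\nu_{ki})$. These move two electrons and do not cancel in general. The induction therefore breaks already at $p=2$, and these multi-hop terms are exactly where the single factor of $\eta$ is at stake. A naive count charges the second moved electron another factor $\eta$. You must show instead that it is paid for by the $\nu$-weight tying it to the first electron, a sum over at most $\eta$ occupied modes giving $\vertiii{\nu}_{1,\eta}$. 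Your proposal asserts this but gives no mechanism.

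Even inside your class the estimate does not go through as written. The operator $\sum_{j,k}\tau_{jk}\hat a_j^\dagger f_{jk}(\hat n)\hat a_k$ is not a product of the bare one-body operator and a number-operator polynomial, because $f_{jk}$ depends on the summed indices. So ``$\eta\vertiii{\tau}_1$ times $\vertiii{\nu}_{1,\eta}$'' is not justified by submultiplicativity. What is needed is:
(a) a class of operators closed under both $[T,\cdot\,]$ and $[V,\cdot\,]$, namely products of several hoppings and number operators whose indices are linked tree-like by $\tau$ and $\nu$ coefficients to a single root index; and
(b) a seminorm bound for that class that charges $\eta$ only to the root.

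Two tools that do this:
\begin{itemize}
\item Split $f_{jk}$ into single-index pieces and use $\bigl|\sum_{j,k}\tau_{jk}\langle A_j\phi|B_k\psi\rangle\bigr|\le\|\tau\|\bigl(\sum_j\|A_j\phi\|^2\bigr)^{1/2}\bigl(\sum_k\|B_k\psi\|^2\bigr)^{1/2}$, with $\sum_j\|\hat a_j\phi\|^2=\eta$.
\item Use Ky Fan-type bounds for the inner hoppings. For example, $[T,\sum_l\nu_{jl}\hat n_l]$ is the one-body operator with matrix $[\tau,\mathrm{diag}(\nu_{j\cdot})]$. Its $\eta$-seminorm is at most the sum of the $\eta$ largest singular values of that matrix, hence at most $2\|\tau\|\,\vertiii{\nu}_{1,\eta}$.
\end{itemize}
Supplying (a) and (b) for all $(p+1)$-fold words is the substantive content of the cited result, and it is what your sketch is missing.
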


The above bound (and similar results in \cite{su2021NearlyTightTrottInerElect, mcardle2022ExploitingFermionNumber}) can be used to give refined gate complexities for simulating systems of Fermions when the input state has well-defined Fermion number which is small. For instance, for the uniform electron gas (Jellium) in a small Fermion number regime $\eta = O(n)$, Theorem \ref{thm:fermion-low} leads to a gate complexity $O(n^{5/3 +o(1)})$, which can be compared to other state-of-art techniques which do not exploit any prior on the input state, which attain $O(n^{2+o(1)})$ gate complexity for Trotter methods or   $\widetilde{O}(n^2)$ for methods which use more ancillary qubits.

\begin{theorem}[Extrapolated error with Fermionic induced 1-norm scaling]\label{thm:lcomm-fermionic}
    Subsume the conditions of Theorem \ref{thm:fermion-low}, and suppose that 
    $\eta = \Omega (\max( \vertiii{\tau }_1 /\vertiii{\nu}_{1,\eta} , \vertiii{\nu}_{1,\eta}/\vertiii{\tau }_1)) $. We have 
    \begin{equation}
\lcomm^{1+1/p} = O \!\left( \left( \vertiii{\tau }_1 + \vertiii{\nu}_{1,\eta} \right)^{1-1/p} \left(\vertiii{\tau }_1 \vertiii{\nu}_{1,\eta} \, \eta\right)^{1/p} \right)\,,
\end{equation}
\end{theorem}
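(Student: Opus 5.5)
Here $\lcomm$ denotes the Fermionic version $\lcomm^{(\eta)}$ of Theorem \ref{thm:fermions}. The plan is to show that the order-$j$ Fermionic commutator factors obey a geometric bound. Once we have $\alpha_{\eta\text{-}\mathrm{comm}}^{(j)} \leq A\, c^{j}$ for all $j$, we can invoke the observation after Definition \ref{def:lcomm} (from \cite[Remark 3.2]{chakraborty2025quantum}, cf.\ Lemma \ref{lem:lcomm-tight-bound}) that such a bound gives $\lcomm^{1+1/p} = O((\alpha^{(p+1)})^{1/p})$. The target is then exactly the right-hand side of Theorem \ref{thm:fermion-low}. Concretely, I would write $a := \vertiii{\tau}_1$ and $b := \vertiii{\nu}_{1,\eta}$, and aim to prove
\begin{equation}
\alpha_{\eta\text{-}\mathrm{comm}}^{(j+1)} \leq A\,(a+b)^{j-1}\, a b \eta
\end{equation}
for all $j\geq p$, with $A$ an absolute constant up to factors absorbed by the $(j+1)^{-2}$ weights.

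\textbf{Step 1 (order-$j$ bound).} I would redo the proof of \cite[Theorem 4]{low2023complexityTrotter} at general order. That proof expands each nested commutator of $T$ and $V$ pieces. In any nonvanishing nested commutator, at least one $T$ and at least one $V$ must appear. One fixes such a pair and bounds it by $a b \eta$, using the $\eta$-seminorm and the induced 1-norms. Each remaining layer is bounded by $a$ or $b$, since both one-body hopping and density–density terms have induced 1-norm controlled on the $\eta$-sector. Summing over the $2^{j+1}$ choices of $T/V$ labels gives the $(a+b)^{j-1}$ factor, with the extra $2^j$ absorbed into $c$. The factorial growth from the nested structure must be absent here. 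I expect this to hold because the induced 1-norm argument bounds each layer multiplicatively rather than via term counting, and I would check this carefully.

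\textbf{Step 2 (matching the target scaling).} The generic remark yields $\lcomm^{1+1/p} = O(c^{1+1/p} A^{1/p})$-type expressions. With $c=a+b$ and $A = ab\eta/(a+b)^2$, this gives $(a+b)^{1+1/p}\,(ab\eta)^{1/p}(a+b)^{-2/p} = (a+b)^{1-1/p}(ab\eta)^{1/p}$, as desired. The condition $A \geq 1$ is needed for the remark: the supremum over $l$ must be dominated by $l=1$, with $A^{l/(j+l)} \leq A^{1/(p+1)}$ requiring $A\geq1$. This is where the hypothesis $\eta = \Omega(\max(a/b, b/a))$ enters. It gives $ab\eta \gtrsim \max(a^2,b^2) \gtrsim (a+b)^2$, so $A = \Omega(1)$, and one can rescale constants to get $A\geq 1$.

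\textbf{Main obstacle.} I expect Step 1 to be the hardest part: uniformly controlling the order-$j$ Fermionic commutators without factorial blowup, and ensuring that exactly one factor of $\eta$ appears rather than $\eta^{l}$ when several $T$–$V$ pairs occur. I would first try to bound all but one $V$-layer by $b$ (not $b\eta$), following the seminorm manipulations in \cite{low2023complexityTrotter}. If polynomial-in-$j$ factors appear, they can be absorbed by the $(j+1)^{-2}$ denominators and the $1/(j+l)$ root, losing only a constant.
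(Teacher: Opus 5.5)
Your Step~2 is exactly the paper's proof. The paper feeds $\alpha^{(j)}_{\eta\text{-}\mathrm{comm}}\le A\,c^{j}$, with $c=a+b$ and $A=ab\eta/(a+b)^2$ (your $a=\vertiii{\tau}_1$, $b=\vertiii{\nu}_{1,\eta}$), into Lemma~\ref{lem:lcomm-tight-bound}, and uses the hypothesis on $\eta$ only to make $A\gtrsim 1$; your handling of that constant is fine. The paper reads this all-order input directly off Theorem~\ref{thm:fermion-low}. You are right that this input is the real content, but your Step~1 does not deliver it.

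Theorem~\ref{thm:fermion-low} is a fixed-order statement whose $O(\cdot)$ hides $p$-dependent constants. The claim that ``each remaining layer costs $a$ or $b$'' is false. If an operator moves $r$ fermions, $\mathrm{ad}_V$ multiplies its matrix elements by an interaction-energy change of size up to $\sim r\,b$, and $\mathrm{ad}_T$ costs $\sim r\,a$. Moreover, $\mathrm{ad}_T$ hitting the number-operator dressings left by earlier $\mathrm{ad}_V$'s creates new hops: $[T,a_x^\dagger a_y n_m]$ contains $a_x^\dagger a_y[T,n_m]$. So $r$ grows along alternating patterns $[V,[T,[V,[T,\dots]]]]$, which pick up $\prod_r (ra)(rb)$-type factors. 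This is factorial growth in $j$ until $r$ saturates at $\eta$, after which each layer costs $\sim\eta(a+b)$. It is the same obstruction the paper meets for $k$-local systems, and neither absorbing polynomial factors nor the $(j+1)^{-2}$ weights can remove it.

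Nor is this an artefact of the bounding method. Since $\mathrm{ad}_H^{j}(T)$ is a sum of terms appearing in $\alpha^{(j+1)}_{\eta\text{-}\mathrm{comm}}$, any $\eta$-particle eigenstates $\psi,\phi$ of $H$ give $\alpha^{(j+1)}_{\eta\text{-}\mathrm{comm}}\ge\|\mathrm{ad}_H^{j}(T)\|_\eta\ge|E_\psi-E_\phi|^{j}\,|\langle\psi|T|\phi\rangle|$. Taking the $l=1$ terms and letting $j\to\infty$ in the definition then gives $\lcomm^{(\eta)}\ge|E_\psi-E_\phi|$ whenever $\langle\psi|T|\phi\rangle\neq 0$. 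Any all-order bound $A c^{j}$ therefore needs $c$ at least as large as the largest energy gap bridged by $T$ inside the $\eta$-sector.

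For interacting $H$ this gap generically grows with $\eta$. For example, perturbatively in $a/b$, $T$ couples eigenstates concentrated on configurations at Hamming distance $d$ at order $(a/b)^{d-1}$, while their energies can differ by $\sim d\,b$. The target, by contrast, permits only $\lcomm^{(\eta)}=O\big((a+b)^{(p-1)/(p+1)}(ab\eta)^{1/(p+1)}\big)$. So Step~1 fails in the generality stated, and the paper's one-line proof rests on the same unjustified assumption. A rigorous version needs additional hypotheses, or a truncation argument in the spirit of Section~\ref{sec:k-local}, which would change the final bound.
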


\begin{proof}
Lemma \ref{lem:lcomm-tight-bound} gives a bound for $\lcomm$ for any bound on $\acomm^{(j)}$ of the form $\acomm^{(j)} \leq A n^j$. Asserting the condition  $\vertiii{\tau }_1 \vertiii{\nu}_{1,\eta} \, \eta  \geq ( \vertiii{\tau }_1 + \vertiii{\nu}_{1,\eta} )^2$ gives $A\geq 1$ and this can be substituted into Lemma \ref{lem:lcomm-tight-bound} to obtain the stated result. 
\end{proof}

To illustrate an example of when the conditions of Theorem \ref{thm:lcomm-fermionic} are valid, take all $|\nu_{j,k}|, |\tau_{j,k}| = \Theta(1)$. Then, the theorem conditions are satisfied for any systems with Fermion number $\eta = \Omega(\sqrt{n})$.

\section{Gate complexities without extrapolated commutator factor}\label{sec:lcomm-fixed-order}

In our derivations on the error series for time evolution operators, we ultimately establish a bound on the size of error operators as

\begin{equation}
    \|\widetilde{E}_{j+1}(T) \| = O( (a_{\max} \Upsilon \lcomm T)^{j (1+1/p)} )\,,
\end{equation}

(see Lemmas \ref{lem:time-evol-error-series} and \ref{lem:generic-Richardson-error}). In this section we consider an alternative bound which is wholly independent of $\lcomm$ and can be propagated through to produce alternative gate complexities for all of our core algorithms. This enables refined analytical guarantees for when bounds on $\lcomm$ are not tight, yet non-trivial bounds on $\acomm^{(p+1)}$ are known. 

\begin{lemma}\label{lemma:error-operator-bound-acomm}
Consider the time evolution operator error series in Lemma \ref{lem:time-evol-error-series}. We have
    \begin{equation}
    \|\widetilde{E}_{j+1}(T) \| \leq \begin{cases}
         (e-1)\left(4 a_{\max } \Upsilon \Lambda T\right)^j \quad \qquad \text{if}\;\  \acomm^{(p+1)} T \leq \frac{(4 \Lambda)^p}{2 a_{\max } \Upsilon} \\
        (e-1) \left( (2 \acomm^{p+1})^{1/p}(a_{\max } \Upsilon T)^{1+1/p}) \right)^j \qquad  \text{else}
    \end{cases}
\end{equation}
\end{lemma}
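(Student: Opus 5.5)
The plan is to return to the explicit form of $\widetilde{E}_{j+1}(T)$ from the iterated variation-of-parameters argument behind Lemma \ref{lem:time-evol-error-series}. There, the unitaries are bounded by one and the time-ordered simplex integral gives $T^{j+l}/l!$, so
\begin{equation}
\|\widetilde{E}_{j+1}(T)\| \leq (a_{\max}\Upsilon T)^j \sum_{l=1}^{\lfloor j/p\rfloor} \frac{(a_{\max}\Upsilon T)^l}{l!} \sum_{\substack{j_1,\dots,j_l \geq p \\ j_1+\dots+j_l=j}} \prod_{\kappa=1}^l \frac{\acomm^{(j_\kappa+1)}}{(j_\kappa+1)^2}\,.
\end{equation}
Instead of packaging the inner sum into $\lcomm$, I bound each $\acomm^{(j_\kappa+1)}$ directly by a mixture of the generic bound and the fixed-order quantity $\acomm^{(p+1)}$.

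\textbf{Step 1 (per-block bound).} For $j_\kappa \geq p$, the nested commutator of length $j_\kappa+1$ has the form $[H_{\gamma_1},\dots,H_{\gamma_{j_\kappa-p}}, C]$, where $C$ is a $(p+1)$-fold commutator. Each outer commutator costs at most a factor $2\|H_\gamma\|$. Summing over indices gives
\begin{equation}
\acomm^{(j_\kappa+1)} \leq (2\Lambda)^{j_\kappa-p}\,\acomm^{(p+1)}.
\end{equation}
Also, trivially, $\acomm^{(j_\kappa+1)} \leq 2^{j_\kappa}\Lambda^{j_\kappa+1}$.

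\textbf{Step 2 (combinatorics).} The number of compositions of $j$ into $l$ parts is at most $\binom{j-1}{l-1} \leq 2^{j}$. This factor of $2^j$, together with the $2^{j}$ coming from the commutators, produces the $4^j$ in the first case. I drop the $1/(j_\kappa+1)^2 \leq 1$ factors.

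\textbf{Step 3 (the two regimes).} The product over blocks is at most
\begin{equation}
(2\Lambda)^{j-lp}\,(\acomm^{(p+1)})^l \leq 2^j \Lambda^j \bigl(\acomm^{(p+1)}/(2\Lambda)^p\bigr)^l,
\end{equation}
so
\begin{equation}
\|\widetilde{E}_{j+1}\| \leq (4a_{\max}\Upsilon\Lambda T)^j \sum_{l\geq 1} \frac{x^l}{l!}, \qquad x := \frac{a_{\max}\Upsilon T\,\acomm^{(p+1)}}{(2\Lambda)^p}.
\end{equation}
\begin{itemize}
\item In the first case, the hypothesis gives $x\leq 1$ (up to the constant in $4^p$ versus $2^p$, which only helps). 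Then $\sum_l x^l/l! \leq e-1$, yielding the first bound.
\item Otherwise $x>1$, so $x^l \leq x^{j/p}$ because $l \leq j/p$. Hence $(4a_{\max}\Upsilon\Lambda T)^j x^{j/p}(e-1)$. The $\Lambda$ factors should cancel to leave $\bigl((c\,\acomm^{(p+1)})^{1/p}(a_{\max}\Upsilon T)^{1+1/p}\bigr)^j$. For this cancellation I must use the Step 1 bound with exponent $j-lp$ chosen at $l=j/p$. So I will instead write, for each $l$, $\Lambda^{j-lp}(\acomm^{(p+1)})^l$ and bound it by $\max(\Lambda^{p}, \cdot)$ comparisons, extracting the dominant $l=\lfloor j/p\rfloor$ term.
\end{itemize}

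The main obstacle is getting the constants ($4$ versus $2$, and the exact threshold) to match the statement. That amounts to careful bookkeeping of the $2^j$ composition count and the $2^{j-p}$ commutator factor in the second regime. I will try absorbing the composition count into the $1/(j_\kappa+1)^2$ factors first, using $\sum_{j_\kappa}(j_\kappa+1)^{-2}\leq 1$, to save a factor of $2^j$ there.
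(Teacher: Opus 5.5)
Your architecture matches the paper's: the nesting bound $\acomm^{(j_\kappa+1)}\le(2\Lambda)^{j_\kappa-p}\acomm^{(p+1)}$, a composition count, then a two-case analysis of a single ratio. However, the constants do not close, and your claim in Step~3 that the $4^p$ versus $2^p$ discrepancy ``only helps'' is backwards. Your ratio is $x=a_{\max}\Upsilon T\acomm^{(p+1)}/(2\Lambda)^p = 2^{p-1}\cdot\frac{2a_{\max}\Upsilon T\acomm^{(p+1)}}{(4\Lambda)^p}$. The hypothesis of the first case therefore only gives $x\le 2^{p-1}$, and for $p\ge 2$ you cannot conclude $\sum_l x^l/l!\le e-1$. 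In the second case your route gives $(4a_{\max}\Upsilon\Lambda T)^j x^{j/p}=\bigl((2^p\acomm^{(p+1)})^{1/p}(a_{\max}\Upsilon T)^{1+1/p}\bigr)^j$, with $2^p$ in place of $2$. Note that $\Lambda$ already cancels identically there, since $x^{j/p}$ carries $(2\Lambda)^{-j}$. So the ``$\max(\Lambda^p,\cdot)$ comparisons'' and dominant-term extraction you propose are unnecessary; the only defect is a leftover factor $2^j$.

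The missing idea is the sharper count of compositions whose parts are all at least $p$. Your $\binom{j-1}{l-1}\le 2^j$ ignores the constraint $j_\kappa\ge p$. Writing $j_\kappa=p+m_\kappa$, the correct number is $\binom{j-lp+l-1}{l-1}\le 2^{j-lp+l-1}$, which recovers exactly the lost factor $2^{-(p-1)l}$. Inserting it gives
\[
\|\widetilde E_{j+1}(T)\|\le \tfrac12\left(4a_{\max}\Upsilon\Lambda T\right)^j\sum_{l=1}^{\lfloor j/p\rfloor}\frac{1}{l!}\left(\frac{2a_{\max}\Upsilon\acomm^{(p+1)}T}{(4\Lambda)^p}\right)^l .
\]
Up to a harmless factor $\tfrac12$, this is the intermediate bound the paper imports from Watson--Watkins. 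The first case is then precisely the condition that this ratio is at most $1$. In the second case $(4\Lambda)^j$ cancels exactly against $\bigl((4\Lambda)^{p}\bigr)^{j/p}$, leaving $(2\acomm^{(p+1)})^{1/p}(a_{\max}\Upsilon T)^{1+1/p}$. Your fallback of summing the $(j_\kappa+1)^{-2}$ factors improves the prefactor to $(2a_{\max}\Upsilon\Lambda T)^j$ and can also be made to work. That requires an extra step you have not done: trade the spare $2^j$ against the excess $2^{(p-1)l}\le 2^{(p-1)j/p}\le 2^j$ in $x^l$.
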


\begin{proof}
    In \cite{watson2024exponentiallyReducedTrotterb} the nesting property 
    \begin{equation}
        \alpha_{\mathrm{comm}}^{(j)} \leq(2 \Lambda)^{j-k} \alpha_{\mathrm{comm}}^{(k)}
    \end{equation}
    is demonstrated for any $k,j$ satisfying $2 \leq k \leq j$. From there the authors show that   
    \begin{equation}
        \left\|\tilde{E}_{j+1}(T)\right\| \leq\left(4 a_{\max } \Upsilon \Lambda T\right)^j \sum_{l=1}^{\lfloor j / p\rfloor} \frac{1}{l!}\left(\frac{2 a_{\max } \Upsilon \acomm^{(p+1)} T}{(4 \Lambda)^p}\right)^l\,.
    \end{equation}
    Now we can treat our two cases separately, based on the value of $\left(\frac{2 a_{\max } \Upsilon \acomm^{(p+1)} T}{(4 \Lambda)^p}\right)$. If this is $\leq 1$ then we can bound the sum by the first term as
    \begin{equation}
        \left\|\tilde{E}_{j+1}(T)\right\| \leq (e-1)\left(4 a_{\max } \Upsilon \Lambda T\right)^j  \left(\frac{2 a_{\max } \Upsilon \acomm^{(p+1)} T}{(4 \Lambda)^p}\right) \leq (e-1)\left(4 a_{\max } \Upsilon \Lambda T\right)^j\,,
    \end{equation}
    where we have used the fact that $\sum_{l=1}^{\infty} \frac{1}{l!} \leq (e-1)$.
    If the term is $\geq 1$ then we can bound the sum by the largest term and we have 
    \begin{align}
        \left\|\tilde{E}_{j+1}(T)\right\| \leq (e-1)\left(4 a_{\max } \Upsilon \Lambda T\right)^j  \left(\frac{2 a_{\max } \Upsilon \acomm^{(p+1)} T}{(4 \Lambda)^p}\right)^{j / p} \\
        \leq (e-1) \left( (2 \acomm^{(p+1)})^{1/p}(a_{\max } \Upsilon T)^{1+1/p}) \right)^j \,.
    \end{align}
\end{proof}

The alternative bound we establish above can be propagated through to give alternative gate complexities for our three core algorithms. Using the bound on the size of error operators in Lemma \ref{lemma:error-operator-bound-acomm}, the extrapolation error found in Lemma \ref{lem:generic-Richardson-error} is modified as 
\begin{equation}
    \left\| \cP^{(R)}_{p,m}(T) - e^{iHT} \right\| \leq 4 \|\vec{b}\|_1 s^{\sigma m} \cdot \max\big\{1,  (\acomm^{(p+1)})^{\lfloor \sigma m/p\rfloor}\left(a_{\max } \Upsilon T \right)^{\sigma m + \lfloor \sigma m/p\rfloor}, \left(a_{\max } \Upsilon \Lambda T \right)^{\sigma m} \big\}\,.
\end{equation}

Following prior proof steps, in order to obtain extrapolation error in the time evolution $\| \cP^{(R)}_{p,m}(T) - e^{iHT} \| \leq \varepsilon$, it is sufficient to take number of Trotter steps 
    \begin{equation}
        r_{\max} = q_m \left\lceil \frac{1}{q_1} \max \Big\{ 1,\left(a_{\max } \Upsilon \Lambda T\right), (\acomm^{(p+1)})^{1 / p}\left(a_{\max } \Upsilon T\right)^{1+1/p} \Big\} \left(\frac{4\|\vec{b}\|_1}{\varepsilon}\right)^{\frac{1}{\sigma m}}\right\rceil\,.
    \end{equation}
in the setting of Lemma \ref{lem:generic-Richardson-error}
and
    \begin{equation}
        r_{\max} = q_m \left\lceil \frac{1}{q_1} \max \Big\{ 1,\left(a_{\max } \Upsilon \Lambda T\right), (\acomm^{(p+1)})^{1 / p}\left(a_{\max } \Upsilon T\right)^{1+1/p} \Big\} \left(\frac{4\|\vec{b}\|_1}{\varepsilon}\right)^{\frac{1}{\sigma (m-1) + p}}\right\rceil\,.
    \end{equation}
in the setting of Lemma \ref{lem:generic-Richardson-error-2}. Finally, taking the extrapolation schedule of \cite{low2019Multiproduct} and $m=O(\log(1/\varepsilon))$ as before we can recover prior results with modified gate complexities, which we encapsulate with the following theorem.

\begin{theorem}[Gate complexities without extrapolated commutator factor]
Algorithms \ref{algo:overlap}, \ref{algo:observable}, \ref{algo:distribution} all succeed at their stated tasks with complexities given in Theorems \ref{thm:overlap}, \ref{thm:overlap-with-observable}, \ref{thm:distribution-recovery} respectively with the replacement
\begin{equation}
    (\lcomm T(\varepsilon))^{1+1/p} \rightarrow  (\acomm^{(p+1)})^{1/p} (T(\varepsilon) )^{1+1/p} + \Lambda T(\varepsilon)
\end{equation}
for each respective prescribed $\varepsilon$.
\end{theorem}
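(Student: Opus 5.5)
The plan is to re-run the chain Lemma \ref{lem:time-evol-error-series} $\to$ Lemma \ref{lem:generic-Richardson-error} $\to$ Lemma \ref{lem:richardson-well-conditioned} $\to$ Theorems \ref{thm:overlap}, \ref{thm:overlap-with-observable}, \ref{thm:distribution-recovery}, changing only the per-order bound on $\|\widetilde{E}_{j+1}(T)\|$. In the original chain, $\lcomm$ enters only through the estimate $\|\widetilde{E}_{j+1}(T)\| \leq (e-1)\beta^{j}$ with $\beta = \eta^{1/p}\, a_{\max}\Upsilon\lcomm T \le \max\{1,(a_{\max}\Upsilon\lcomm T)^{1+1/p}\}$. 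Lemma \ref{lemma:error-operator-bound-acomm} gives an estimate of exactly the same geometric form, now with
\begin{equation}
\beta' := \max\Big\{4a_{\max}\Upsilon\Lambda T,\ (2\acomm^{(p+1)})^{1/p}(a_{\max}\Upsilon T)^{1+1/p}\Big\}.
\end{equation}
The second entry of this maximum dominates exactly in the "else" case of the lemma, so taking the maximum covers both cases uniformly.

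First, I would redo the remainder estimate in the proof of Lemma \ref{lem:generic-Richardson-error} with $\beta'$ in place of $\beta$. Summing the tail $j\ge\sigma m$ geometrically under the condition $s_1\beta'\le 1/2$ gives
\begin{equation}
\|\cP^{(R)}_{p,m}(T)-e^{-iHT}\|\le 4\|\vec b\|_1 (s_1\beta')^{\sigma m}.
\end{equation}
Solving $4\|\vec b\|_1(s_1\beta')^{\sigma m}\le\varepsilon$ then gives
\begin{equation}
r_{\max}=q_m\big\lceil q_1^{-1}\max\{1,\beta'\}(4\|\vec b\|_1/\varepsilon)^{1/(\sigma m)}\big\rceil .
\end{equation}
This matches the displayed replacement formula up to constants. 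The extra $\varepsilon$-dependence in the threshold $s_1\beta'\le 1/2$ is absorbed, since $\varepsilon\le 4$.

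Second, I would plug in the LKW schedule of Lemma \ref{lemma:well-conditioned-Richardson} with $m=\lceil\log(1/\varepsilon)\rceil$, exactly as in Lemma \ref{lem:richardson-well-conditioned}. This yields
\begin{equation}
r_{\max}=O\big(\max\{1,\Lambda T,(\acomm^{(p+1)})^{1/p}T^{1+1/p}\}\log(1/\varepsilon)\big),
\end{equation}
with the factors of $a_{\max}\Upsilon$ treated as constants, as in the original. Finally, in each of Theorems \ref{thm:overlap}, \ref{thm:overlap-with-observable}, \ref{thm:distribution-recovery}, the gate-depth statement uses Lemma \ref{lem:richardson-well-conditioned} only as a black box, at time $T(\varepsilon_F)$ and precision $\varepsilon_R$. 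The sample complexity depends only on $S=\|\vec c\|_1\|\vec b\|_1$ and is unchanged. So substituting the new Trotter step count and bounding the maximum by the sum gives the stated replacement. The constant term $1$ is harmless, since depth is at least $\Gamma\Upsilon$ anyway.

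The main obstacle is the first step. The proof of Lemma \ref{lemma:error-operator-bound-acomm} bounds $\sum_{l\le\lfloor j/p\rfloor}x^l/l!$ by $(e-1)x^{j/p}$ when $x\ge1$, and by $(e-1)$ when $x\le1$. I need to check that this bound holds uniformly for all $j\ge\sigma m$, with the same $\beta'$, so that the geometric tail sum is legitimate. I also need to check that the nesting-property derivation imported from \cite{watson2024exponentiallyReducedTrotterb} applies for every $j\ge p$, and not merely asymptotically. If the uniformity fails at small $j$, the fallback is to start the tail at $j=\sigma m\ge p$, where the inequality $x^{\lfloor j/p\rfloor}\le x^{j/p}$ holds directly for $x\ge1$.
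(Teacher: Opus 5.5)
Your proposal follows the paper's own argument: you merge the two cases of Lemma \ref{lemma:error-operator-bound-acomm} into a single geometric rate (your $\beta'$ packages the paper's modified bound more cleanly), rerun the tail sum of Lemma \ref{lem:generic-Richardson-error} to get the new $r_{\max}$, and substitute the LKW schedule with $m=O(\log(1/\varepsilon))$ into Theorems \ref{thm:overlap}, \ref{thm:overlap-with-observable} and \ref{thm:distribution-recovery}, leaving the sample complexity unchanged. One small fix: $\varepsilon\le 4$ only gives $s_1\beta'\le(\varepsilon/4\|\vec b\|_1)^{1/(\sigma m)}\le 1$, not $\le 1/2$, so you must impose the convergence condition separately (for example by doubling the base step number), which costs only a constant factor and is glossed over in the same way in the paper's Lemma \ref{lem:generic-Richardson-error-2}.
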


\section{Doubly Randomized Extrapolation Trotterization}
\label{sec:partial-randomization}

In this section we demonstrate how to integrate our Randomized Extrapolation Trotterization scheme with the partial randomization ideas introduced in~\cite{gunther2025phase}. This ultimately allows us to construct a gate depth that interpolates between the gate depth of Randomized Extrapolation Trotterization and the randomized compilation of \cite{wan2021RandPhaseEst}.

\subsection{Matrix decompositions with long tails}

As in \cite{gunther2025phase}, we presume ability to adopt a Hamiltonian decomposition
\begin{equation}
\label{eq:partial-randomization-decomposition}
H = 
\underbrace{\sum_{l=1}^{\Gamma_A} H_l}_{H_A} + 
\underbrace{\sum_{m=1}^{\Gamma_B} h_m P_m}_{H_B}\,,
\end{equation}
where $H_l$ are generic Hermitian operators (each assumed as before to be simulable in $O(1)$ depth), and $P_m$ are Pauli operators (or more generally, Hermitian unitaries).
The idea will be to compile \( H_A \) via Randomized Extrapolation Trotterization and \( H_B \) is simulated via the randomized compiler of \cite{wan2021RandPhaseEst}.

This decomposition will be especially useful if it is the case that
\begin{equation}
    \Lambda_B := \sum_{m=1}^M |h_m| \ll \Lambda, 
    \quad \text{and} \quad \Gamma_A \ll L,
\end{equation}
where \( \Lambda = \sum_{l=1}^{\Gamma_A} \|H_l\| + \sum_{m=1}^{M} |h_m| \) is the total 1-norm of \( H \) and $L=\Gamma_A+\Gamma_B$. This is the case when the majority of the weight of the matrix is concentrated on very few terms in $H_A$, i.e.~if $H$ has a long tail. We remark that this feature of matrices is already exploited by the randomized compiler of \cite{wan2021RandPhaseEst} which avoids dependence on $L$ (see Table \ref{tab:time-signal}) -- partial randomization will allow us to exploit long-tailed matrices even more effectively.

Before we introduce how we use the core idea from partial randomization to add an extra randomization loop to Randomized Extrapolation Trotterization, we share a useful lemma that allows us to estimate time evolution operator as a linear combination of random unitaries. This is the core lemma that the partial randomization approach exploits.

\begin{lemma}[Random Compiler Lemma, adapted from \cite{wan2021RandPhaseEst} Lemma 2]
\label{lem:random-compiler-lemma}
Let $H_B = \sum_{k} h_k P_k$ be a Hermitian operator, where $P_k^2 = I$ and $h_k \in 
\mathbb{R}$ for all $k$. Define $\Lambda_B := \sum_k |h_k|$. Then, for any time $t > 0$ there exists a decomposition
\begin{equation}
    e^{-iH_Bt} = \beta(t,d) \sum_{j \in \mathbb{Z}_+} p_j(t,d)\, U_j(t,d),
\end{equation}
where:
\begin{itemize}
\item $\{p_j(t,d)\}_j$ is an (efficiently computable) probability distribution.
\item Each $U_j(t,d)$ is a sequence of quantum gates with tuneable non-Clifford depth $d$.
\item The normalization factor $\beta(t,d)$ is efficiently computable and satisfies $\beta(t,d)=O(\exp(\Lambda^2 t^2/ d))$.
\end{itemize}
\end{lemma}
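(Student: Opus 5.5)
We prove the Random Compiler Lemma with the construction of \cite{wan2021RandPhaseEst}: split $e^{-iH_Bt}$ into $d$ identical segments $e^{-iH_Bt/d}$, write each segment exactly as a nonnegative combination of short gate sequences, and multiply out.

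\emph{Step 1 (one segment).} Let $\tau = t/d$ and $q_k := |h_k|/\Lambda_B$. Absorb $\operatorname{sign}(h_k)$ into $P_k$, so that $H_B=\Lambda_B\sum_k q_k P_k$ with every $h_k \geq 0$. Expand
\begin{equation*}
e^{-iH_B\tau} = \sum_{r\geq 0}\frac{(-i\Lambda_B\tau)^r}{r!}\Big(\sum_k q_kP_k\Big)^r .
\end{equation*}
Group consecutive even and odd orders $(2r,2r+1)$. Each group is an expectation over strings $P_{k_1}\cdots P_{k_{2r}}$, with $k_i$ drawn i.i.d.\ from $\{q_k\}$, of the operator
\begin{equation*}
\frac{(-i\Lambda_B\tau)^{2r}}{(2r)!}\Big(I - \frac{i\Lambda_B\tau}{2r+1}P_{k}\Big).
\end{equation*}
Because $P_k^2=I$, the bracket equals $\sqrt{1+x^2}\,e^{-i\theta P_k}$ with $x=\Lambda_B\tau/(2r+1)$ and $\theta = \arctan x$, so it is a scaled Pauli rotation. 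The phase $(-i)^{2r}$ and the Pauli string are unitaries, so they can be placed into $U_j$. What remains is a nonnegative weight for each index $j$ (order $r$, the sampled indices, and the phase); normalizing these weights gives $\{p_j(\tau,1)\}$, and the one-segment normalization is
\begin{equation*}
\beta(\tau,1)=\sum_{r}\frac{(\Lambda_B\tau)^{2r}}{(2r)!}\sqrt{1+\Big(\frac{\Lambda_B\tau}{2r+1}\Big)^2}.
\end{equation*}
Each $U_j$ is a single non-Clifford rotation $e^{-i\theta P_k}$ together with Pauli (Clifford) gates.

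\emph{Step 2 (bounding $\beta$).} We show $\beta(\tau,1)\leq e^{(\Lambda_B\tau)^2}$. The bounds
\begin{equation*}
\sqrt{1+y^2}\leq e^{y^2/2}, \qquad \frac{y^{2r}}{(2r)!}\leq\frac{(y^2/2)^r}{r!}
\end{equation*}
give, with $y=\Lambda_B\tau$,
\begin{equation*}
\beta(\tau,1) \leq \sum_r \frac{(y^2/2)^r}{r!}\,e^{y^2/2}=e^{y^2}.
\end{equation*}
This estimate is the main obstacle: the even/odd pairing is what cancels the first-order term. A naive $\ell_1$ bound on the plain Taylor series gives $e^{\Lambda_B\tau}$, which is linear in $\tau$ in the exponent, and that would destroy the quadratic scaling.

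\emph{Step 3 (compose).} Take the $d$-fold product of the Step 1 decomposition. The distribution $\{p_j(t,d)\}$ is the product distribution, so it is efficiently sampleable and computable. Each $U_j(t,d)$ is a concatenation of $d$ segment unitaries, so its non-Clifford depth is $d$. The normalization multiplies:
\begin{equation*}
\beta(t,d)=\beta(t/d,1)^d\leq e^{d\,\Lambda_B^2t^2/d^2}=e^{\Lambda_B^2t^2/d},
\end{equation*}
as claimed, with $\Lambda$ in the statement read as $\Lambda_B$.

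Efficient computability of $\beta$ follows because the series in Step 1 converges rapidly and can be truncated. Alternatively, the truncation error can be folded into an arbitrarily small additive error, as in \cite{wan2021RandPhaseEst}.
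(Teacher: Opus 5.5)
Your proof is correct and is essentially the construction of Wan, Berta and Campbell that the paper relies on by citation rather than reproving. You split the evolution into $d$ segments and pair Taylor orders $2r$ and $2r+1$ so that each pair becomes a Pauli string times one rescaled Pauli rotation. You then bound the per-segment weight by $\cosh(y)\sqrt{1+y^2}\le e^{y^2}$ with $y=\Lambda_B t/d$.
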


Each $U_j(t,d)$ contains $O(d)$ Pauli rotations interleaved with layers of $O(j)$ Pauli matrices. In practice, the Pauli matrices can be classically compiled together in $O(jn)$ time, and so the quantum gate depth can be taken to be $O(d)$. In order to deal with the infinite sum, Wan et al.~demonstrate that the sum can be truncated with error $\varepsilon$ for truncation order scaling logarithmically in $1/\varepsilon$ and other problem parameters. Thus, the classical complexity remains efficient for a small controllable error, which we hereon disregard for simplicity. 

\subsection{Algorithm and analysis}

We now describe how to integrate Randomized Extrapolation Trotterization with partially randomized product formulae, and show that similar to standalone Randomized Extrapolation Trotterization, this improves the error scaling from \( \varepsilon^{-1/p} \) to \( \log(1/\varepsilon) \), whilst giving a new potentially advantageous scaling with respect to Hamiltonian parameters.

We denote the partially randomized product formula $\mathcal{S}_{p}(t)$  corresponding to the $\Gamma_A + 1$ term decomposition of the Hamiltonian $H$ into $\{H_l\}_{l=1}^{L_{A}} \cup \{H_B\}$. For instance, we can write the first-order formula as
\begin{equation}
\mathcal{S}_{1}(t) = \left( \prod_{l=1}^{\Gamma_A} e^{-i H_l t} \right) e^{-i H_B t}\,,
\end{equation}
for Trotter time step $t$.
This will be a helpful mathematical construction and we remark so far we are lacking a full prescription of how to implement this unitary as $H_B$ contains many terms. 
Similar to our constructions in Section \ref{sec:extrapolation-error}, we also define the $m$-term Richardson extrapolated approximation of the full evolution as
\begin{equation}
\mathcal{S}_{p,m}^{(R)}(T) := \sum_{k=1}^{m} b_k \, \mathcal{S}_p^{1/s_k}(s_k T),
\end{equation}
for any extrapolation schedule $R = \{b_k, q_k\}_{k=1}^m$.

Our strategy will be to on one side use Randomized Extrapolation Trotterization to control the error associated with the decomposition leading to product formula \( \mathcal{S}_{p}(t) \) leading to some contribution of extrapolated Trotter error plus statistical error. On the other side, we compile each instance of the operator $e^{-i H_B t}$ using the random compiler lemma (Lemma \ref{lem:random-compiler-lemma}) incurring only statistical error. By performing one randomized compilation all at once, the statistical error is essentially shared. 

We present an explicit algorithm for our warm-up problem of estimating time signals in Algorithm \ref{algo:PRPF-Richardson}, which we focus on to highlight the new mechanism. Algorithms for all our other tasks and ensuing corresponding analysis follow in the same way.

\begin{algorithm}[h]
\caption{\ Doubly Randomized Extrapolation Trotterization for time signal $\tr[\rho e^{-iHT}]$}
\label{algo:PRPF-Richardson}
\begin{algorithmic}[1]

\Require Hamiltonian $H = H_A + H_B$ of the form in Eq.~\eqref{eq:partial-randomization-decomposition}, Richardson extrapolation schedule $R=\{b_k,q_k\}_{k=1}^m$; base Trotter step number $1/s$; input state $\rho$ with efficient preparation; product formula $\mathcal{S}(t)$ with $\Upsilon$ stages consisting of evolution of $H_B$ for times $\{a_{\nu} t\}_{\nu}^{\Upsilon}$.

\State \textbf{Classical preprocessing:} Compute the  $m\Upsilon+1$ probability distributions:
\[
\left\{  \left| b_k  \right| / \|\vec{b}\|_1 \right\}_{k}\,,\ \big\{ \{p_j(a_\nu s_k T, \Lambda_B^2 T^2 s_{k})\}_j \big\}_{k,\nu} \quad \text{for }\ k = 1,\ldots,m\,,\quad \nu = 1, \ldots, \Upsilon
\]

\For{$i = 1$ to $M$}
    \State Sample index $k'$ from the distribution $\{ {|b_k|}/{\|\vec{b}\|_1} \}_k$\,.
    \State Take ${s_{k'}} \gets s/q_{k'}$, $d_{k'} \gets {\Lambda_B^2 T^2}{s_{k'}}$\,.
    
    \For{$u = 1$ to $1/s_{k'}$}
        \State Sample index $j_{\nu,u}$ from distribution $\{p_j(a_\nu s_{k'}T,d_{k'})\}_j$ for each $\nu \in [\Upsilon]$\,.
        \State {Take $\mathcal{S}_{u}(s_{k'}T) \gets$  $(\mathcal{S}(s_{k'}T)$ s.t.~each $\exp(-iHa_{\nu}t_{k'})$ is replaced with $U_{j_{\nu,u}}(a_{\nu}s_{k'}T,d_{k'}) )$}\,.
    \EndFor
    \State Prepare Hadamard test circuits corresponding to:
    \[
        \operatorname{Re}\left[\operatorname{Tr}[\rho \Pi_{u=1}^{1/s_{k'}}\mathcal{S}_u(s_{k'} T)]\right]\,,\quad \operatorname{Im}\left[\operatorname{Tr}[\rho \Pi_{u=1}^{1/s_{k'}}\mathcal{S}_{u}(s_{k'}T)]\right]\,,
    \]
    \hspace{1.25em} collect one measurement outcome (single-shot statistic) from each circuit $\left(X_{\operatorname{Re}}^{\left(i\right)}, X_{\operatorname{Im}}^{\left(i\right)}\right)$\,.
    \State Take $X^{(i)} \gets X_{\operatorname{Re}}^{(i)} + i \cdot X_{\operatorname{Im}}^{(i)}$\,.
    \State Take $Y^{(i)} \gets X^{(i)} \cdot \operatorname{sign}(b_{k'}) \cdot \|\vec{b}\|_1 \cdot \prod_{u=1}^{1/s_{k'}} \prod_{\nu=1}^{\Upsilon} \beta(a_{\nu}s_{k'}T, d_{k'})$\,.
\EndFor
\State \textbf{Return:} $\hat{Y}_M \leftarrow$ the mean of $\{Y^{(i)} \}_i$\,.

\end{algorithmic}
\end{algorithm}

Let us first take a moment to parse what Algorithm \ref{algo:PRPF-Richardson} is doing. We randomly compile according to Randomized Extrapolation Trotterization as before (first $\texttt{for}$ loop) -- that is, we collect Hadamard test data from randomly selected product formulae -- with one modification. Now, when an operator $e^{-i H_B t}$ appears in the product formula, we randomly select a $U_j(t,d)$ drawn from the probability distribution in Lemma \ref{lem:random-compiler-lemma}. This is done independently for each instance of an operator of the form $e^{-i H_B t}$ appearing in the product formula (second $\texttt{for}$ loop), and facilitates random compilation of each $e^{-i H_B t}$. In the following theorem we demonstrate the required complexities for the algorithm to succeed.

\begin{theorem}[Doubly Randomized Extrapolation Trotterization for time signals]
Consider a matrix $H \in \mathbb{C}^{2^n \times 2^n}$ decomposed as $H = H_A + H_B$ as in Eq.~\eqref{eq:partial-randomization-decomposition}. Let $\tilde{\lambda}_{\text{comm}}$ be the extrapolated commutator factor (as in Eq.~\eqref{eq:lcomm}) corresponding to the decomposition of $H$ into $\{H_{l}\}_{l=1}^{L_{A}} \cup H_{B}$. Call the LKW well-conditioned extrapolation strategy of Lemma~\ref{lemma:well-conditioned-Richardson}. 
Then, using Algorithm~\ref{algo:PRPF-Richardson} we can obtain an $\varepsilon$-additive approximation to $\mathrm{Tr}[\rho e^{-iHT}]$ with the following resources:
\begin{itemize}
    \item \textbf{Circuit samples:}
    \begin{equation}
    C_{\text{sample}} = O\!\left(\frac{(\log \log(1/\varepsilon))^2}{\varepsilon^{2}} \log\left(\frac{1}{\delta}\right)\right)\,,
    \end{equation}
    \item \textbf{Gate depth (per circuit):}
    \begin{equation}
    C_{\text{gate}} = O\!\left(\Gamma_A (a_{\max} \Upsilon \tilde{\lambda}_{\mathrm{comm}} T)^{1 + \frac{1}{p}} \log\left(1/\varepsilon\right) + \Lambda_{B}^{2}T^{2}\right) \,.
    \end{equation}
    \item \textbf{Classical preprocessing time:}
    \begin{equation} C_{\text{pre}} = O(\polylog(1/\varepsilon))\,,
    \end{equation}
\end{itemize}
and each circuit uses $n+1$ qubits.
\label{theorem:extrapolated-partial-randomization}
\end{theorem}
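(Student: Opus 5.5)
The plan is to split the error into an extrapolation (bias) term and a statistical term, following the proof of Theorem~\ref{thm:overlap} but with an extra layer of randomness. First I would show that the random variable $Y^{(i)}$ of Algorithm~\ref{algo:PRPF-Richardson} is an unbiased estimator of $\tr[\rho\, \mathcal{S}^{(R)}_{p,m}(T)]$. Fix the sampled index $k'$. The inner random choices $j_{\nu,u}$ are independent across all $(\nu,u)$, and every $e^{-iH_B a_\nu s_{k'}T}$ appears linearly in the product $\prod_u \mathcal{S}(s_{k'}T)$. By Lemma~\ref{lem:random-compiler-lemma}, $\beta\,\E_j[U_j] = e^{-iH_B a_\nu s_{k'} T}$ for each slot, so independence and multilinearity give
\begin{equation}
\E_{j}\Big[\prod_{u,\nu}\beta(a_\nu s_{k'}T,d_{k'})\, \textstyle\prod_{u}\mathcal{S}_u(s_{k'}T)\Big] = \mathcal{S}^{1/s_{k'}}(s_{k'}T).
\end{equation}
Averaging over $k'$ with weights $|b_{k'}|/\|\vec b\|_1$ and over the Hadamard-test outcomes, exactly as in Lemma~\ref{lem:overlap-correctness}, then gives $\E[Y]=\tr[\rho\,\mathcal{S}^{(R)}_{p,m}(T)]$.

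For the bias, I would apply Lemma~\ref{lem:richardson-well-conditioned} to the $(\Gamma_A+1)$-term decomposition $\{H_l\}\cup\{H_B\}$. This decomposition satisfies Definition~\ref{def:matrix} mathematically, since the analysis never needs $e^{-iH_B t}$ to be implementable. Its extrapolated commutator factor is $\tilde\lambda_{\rm comm}$. Choosing $m=O(\log(1/\varepsilon))$ then yields $\|\mathcal{S}^{(R)}_{p,m}(T)-e^{-iHT}\|\le\varepsilon/2$ with $1/s_{k}\le r_{\max}=O((a_{\max}\Upsilon\tilde\lambda_{\rm comm}T)^{1+1/p}\log(1/\varepsilon))$ Trotter steps. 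Each step contains $O(\Upsilon\Gamma_A)$ exponentials of $H_l$, which gives the first depth term.

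The main obstacle is to control both the depth and the magnitude of $Y$ coming from the $H_B$ pieces. The choice $d_{k'}=\Lambda_B^2T^2 s_{k'}$ is made so that these balance. Each $U_j$ has depth $O(d_{k'})$, and there are $\Upsilon/s_{k'}$ of them, so the $H_B$ contribution to the depth is $O(\Upsilon\Lambda_B^2T^2)$. For the normalization,
\begin{equation}
\prod_{u,\nu}\beta = \exp\Big(O\Big(\sum_{u,\nu}\Lambda_B^2 a_\nu^2 s_{k'}^2T^2/d_{k'}\Big)\Big)=\exp\Big(O\big(\Upsilon a_{\max}^2 \Lambda_B^2 s_{k'}T^2 / d_{k'}\big)\Big)=O(1).
\end{equation}
With constant stage data this is $O(1)$. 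I would need to check the constant in $\beta$ from Lemma~\ref{lem:random-compiler-lemma}; if it is not uniformly $O(1)$, I would rescale $d_{k'}$ by $\Upsilon a_{\max}^2$ instead. The truncation of the infinite $j$-sum is handled by Wan et al.'s logarithmic truncation bound, with error budget absorbed into $\varepsilon$.

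Finally, $|Y|\le 2\|\vec b\|_1\cdot O(1)=O(\log\log(1/\varepsilon))$. Hoeffding's inequality with a union bound over the real and imaginary parts then gives statistical error $\varepsilon/2$ using $M=O((\log\log(1/\varepsilon))^2\varepsilon^{-2}\log(1/\delta))$ samples. The triangle inequality combines the two error terms. Classical preprocessing consists of $m\Upsilon+1$ distributions, each truncated at polylogarithmic order, which gives $\polylog(1/\varepsilon)$. The circuits are Hadamard tests on $n$ qubits with one ancilla, so each circuit uses $n+1$ qubits.
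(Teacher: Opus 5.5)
Your proposal is correct and follows essentially the same route as the paper. Unbiasedness comes from independence of the inner draws together with Lemma~\ref{lem:random-compiler-lemma}; the bias is controlled by Lemma~\ref{lem:richardson-well-conditioned} applied to the $(\Gamma_A+1)$-term decomposition; the choice $d_{k'}=\Lambda_B^2T^2s_{k'}$ gives $H_B$-depth $O(\Lambda_B^2T^2)$ and $\prod_{u,\nu}\beta\le\exp\bigl(\sum_\nu a_\nu^2\bigr)=O(1)$; and Hoeffding finishes the argument. Your concern about hidden constants in $\beta$ is settled by the exact bound $\beta(t,d)\le\exp(\Lambda_B^2t^2/d)$ from Wan et al., which is the form the paper's proof uses.
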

\begin{proof}
Let \( \hat{Y}_{M}\) denote the empirical estimate returned by Algorithm~\ref{algo:PRPF-Richardson} after \( M \) samples. We start by verifying the expectation value of this estimator corresponds to the Richardson extrapolated product formula:
    \begin{align}
        \mathbb{E}[\hat{Y}_{M}] = \mathbb{E}[Y^{(i)}] 
        &= \sum_{k=1}^{m}\frac{|b_{k}|}{\|\vec{b}\|_1} \mathbb{E}\Big[ \prod_{u=1}^{1/s_{k}} \prod_{\nu=1}^{\Upsilon} \beta(a_{\nu}s_{k}T, d_{k})\ \text{sign}(b_{k}) \cdot \|\vec{b}\|_1 \cdot X^{(i)} \Big] \\
        &= \sum_{k=1}^{m} b_{k} \mathbb{E}\Big[ \prod_{u=1}^{1/s_{k}} \prod_{\nu=1}^{\Upsilon} \beta(a_{\nu}s_{k}T, d_{k})\ X^{(i)}\Big] \\
        &= \sum_{k=1}^{m} b_{k} \prod_{u=1}^{1/s_{k}} \prod_{\nu=1}^{\Upsilon} \beta(a_{\nu}s_{k}T, d_{k})\ \mathbb{E}\big[X^{(i)}]  \\
        &= \sum_{k=1}^{m} b_{k} \prod_{u=1}^{1/s_{k}} \prod_{\nu=1}^{\Upsilon}  \beta(a_{\nu}s_{k}T, d_{k})\  \mathbb{E}\big[\operatorname{Tr}[\rho \Pi_{u=1}^{1/s_{k}}\mathcal{S}_{u}(s_{k}T)] \big] \\
        &=\sum_{k=1}^{m} b_{k} \mathrm{Tr}\big[\rho \, \mathcal{S}(s_kT)^{1/s_k}\big] \\
        &=\mathrm{Tr}\big[\rho \, \mathcal{S}_{p,m}^{(R)}(T)\big]
    \end{align}
where in the first line we have explicitly written out the expectation value over product formulae, in the third line we use the fact that each draw in the second \texttt{for} loop in Algorithm \ref{algo:PRPF-Richardson} is independent, in the fourth line we have taken the expectation value over circuit measurement outcomes, in the fifth line we have used Lemma \ref{lem:random-compiler-lemma}, and in the final line follows from the definition of the extrapolated product formula $\mathcal{S}_{p,m}^{(R)}(T)$.

As in prior analyses, we decompose the total error contributions into statistical error $\varepsilon_{S}$ and Richardson extrapolation error $\varepsilon_{R}$:
\begin{equation}
\left| \hat{Y}_{M} - \mathrm{Tr}[\rho e^{-iHT}] \right| \leq \underbrace{\left|\hat{Y}_{M} - \mathrm{Tr}\left[\rho \, \mathcal{S}_{p,m}^{(R)}(T)\right]\right|}_{\varepsilon_{\text{S}}} + \underbrace{\left|\mathrm{Tr}\left[\rho \, \mathcal{S}_{p,m}^{(R)}(T)\right]-\mathrm{Tr}\left[\rho e^{-iHT}\right]\right|}_{\varepsilon_{\text{R}}}.
\end{equation}
To guarantee overall error at most \( \varepsilon \), it is sufficient to ask for 
$\varepsilon_{\text{S}}, \varepsilon_{\text{R}} \leq {\varepsilon}/{2}$.

\textit{Extrapolation error.}  
By Lemma~\ref{lem:richardson-well-conditioned}, the extrapolation can be constrained as $\varepsilon_{\text{R}} \leq {\varepsilon}/{2}$ if  $s$ is chosen such that each constituent product formula in $\mathcal{S}_{p,m}^{(R)}(T)$
uses \( r_k \leq r_{\max} \) Trotter steps, where
\begin{equation}
r_{\max} = O\!\left((a_{\max} \Upsilon \tilde{\lambda}_{\text{comm}} T)^{1 + \frac{1}{p}} \log\left(1/\varepsilon\right)\right).
\end{equation}
Given a $p=O(1)$ product formula, each Trotter step consists of $O(\Gamma_A)$ terms, of which $O(1)$ terms are operators of the form $O(e^{-iH_B t_k})$ where $t_k = O(T/r_k)$, and all remaining terms are each implementable in $O(1)$ time based on our starting assumptions (see Definition \ref{def:matrix}).

\textit{Sample Complexity.}  
Now let us consider the set of parameters specified in Algorithm \ref{algo:PRPF-Richardson}. We see that each instance of our estimator has size 
\begin{equation}
    |Y^{(i)}| \leq 2 \|\vec{b}\|_1 \cdot \prod_{u=1}^{1/s_{k'}} \prod_{\nu=1}^{\Upsilon} \beta(a_{\nu}s_{k'}T)
\end{equation}
where $\beta(a_{\nu}s_{k'}T, d_k) \leq \exp\left(\frac{\Lambda_B^2 (a_{\nu}s_{k'}T)^2}{d_k} \right) = \exp(a_{\nu}^2 s_{k'})$, recalling that Algorithm \ref{algo:PRPF-Richardson} sets $d_{k} = {\Lambda_B^2 T^2}{s_{k}}$. Thus, $|Y^{(i)}| = O(\|\vec{b}\|_1)$ as $s_{k'} \leq 1$ and we assume that $p,\Upsilon, a_{\max}=O(1)$. Our random variable has bounded size, and the stated sample complexity is given by Hoeffding's inequality, where we have used the fact from Lemma \ref{lemma:well-conditioned-Richardson} that $\|\vec{b}\|_1 = O(\log\log(1/\varepsilon_R)) = O(\log\log(1/\varepsilon))$.

\textit{Gate complexity.}  
For a given Trotter step size $s_kT$ chosen in one iteration of the Randomized Extrapolation Trotterization, the gate depth per Trotter step implemented is $O(\Gamma_A + d_k) = O(\Gamma_A + \Lambda_B^2 T^2 s_k)$, and thus the total gate depth is $O((\Gamma_A + d_k)/s_k) = O(\Gamma_A r_k + \Lambda_B^2T^2)$. The worst-case gate depth we state comes from substituting in our expression for $r_{\max}$. 

\textit{Classical preprocessing.} We have $O(m\Upsilon)$ probability distributions to precompute, each of dimension $\polylog(1/\varepsilon)$. The stated complexity follows by noting that $m=\log(1/\varepsilon)$ is sufficient for the well-conditioned extrapolation scheme of Lemma \ref{lemma:well-conditioned-Richardson}.
\end{proof}

We observe that the sample complexity and classical preprocessing cost are (asymptotically) unchanged from our core Randomized Extrapolation Trotterization.

Let us now focus on the new extrapolated commutator factor $\tilde{\lambda}_{\text{comm}}$. Note that in this setting we are explicitly presuming that we have chosen a distinct matrix decomposition and accompanying product formula $\mathcal{S}_p(t)$ we chose has a different commutator factor $\tilde{\alpha}_{\text{comm}}$, compared to a conventional choice (denoted as $\PC_p$ and $\alpha_{\text{comm}}$ elsewhere in this manuscript) where each element of the matrix decomposition is efficiently evolvable. This introduces a new $\tilde{\lambda}_{\text{comm}}$ which characterizes the complexity of our doubly Randomized Extrapolation Trotterization. For certain problems it may be that $\tilde{\alpha}_{\text{comm}}$ can be efficiently and strongly bounded, and indeed one may be able to make a smart choice of $H_B$ to exploit such a bound and obtain refined complexity guarantees. Alternatively, the fact that $H_B$ appears in $\tilde{\alpha}_{\text{comm}}$ makes it different to analyze. In the following lemma we demonstrate a worst-case bound which is always available, relating $\tilde{\lambda}_{\text{comm}}$ to the extrapolated commutator factor of a conventional decomposition $\tilde{\lambda}_{\text{comm}}$.

\begin{lemma}[Relation between extrapolated commutator factors]\label{lem:PRPF-lcomm}
Consider a partition $H = H_A + H_B$, where $H_A = \sum_{\gamma \in S_A} H_{\gamma}$ and $H_B = \sum_{\gamma \in S_B} H_{\gamma}$ is any decomposition of $H_B$. Let $\acomm$ be the commutator factor corresponding to decomposition $H = \sum_{{\gamma} \in S_A \cup S_B} H_{\gamma}$, and let $\tilde{\alpha}_{\text{comm}}$ be the commutator factor corresponding to decomposition $H = \sum_{i \in S_A \cup \{B\}} H_{\gamma}$. Finally, let $\lcomm$ and $\tilde{\lambda}_{\text{comm}}$ be the corresponding respective extrapolated commutator factors constructed as in Eq.~\eqref{eq:lcomm}. We have
\begin{equation}
\tilde{\lambda}_{\text{comm}} \leq \lambda_{\text{comm}}.
\end{equation}
\end{lemma}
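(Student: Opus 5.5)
The plan is to prove the inequality at the level of the commutator factors, $\tilde{\alpha}^{(j)}_{\mathrm{comm}} \leq \alpha^{(j)}_{\mathrm{comm}}$ for every order $j$, and then lift it through the definition of the extrapolated commutator factor in Eq.~\eqref{eq:lcomm}. That definition is monotone in each $\alpha^{(j)}_{\mathrm{comm}}$: every $\lambda_{j,l}$ is a $(j+l)$-th root of a sum of products of nonnegative quantities $\alpha^{(j_\kappa+1)}_{\mathrm{comm}}/(j_\kappa+1)^2$, and the supremum runs over the same index set in both cases. The index set depends only on $p$, $\sigma$ and $m$, which are properties of the chosen formula order and schedule, not of the decomposition. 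So termwise domination of the $\alpha$'s gives $\tilde{\lambda}_{j,l}\leq \lambda_{j,l}$ for every admissible $(j,l)$, and taking suprema gives $\tilde{\lambda}_{\mathrm{comm}} \leq \lambda_{\mathrm{comm}}$.

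For the order-$j$ comparison, write each of the $j$ slots of a nested commutator in $\tilde{\alpha}^{(j)}_{\mathrm{comm}}$ as ranging over $S_A \cup \{B\}$, where slot value $B$ means the operator $H_B = \sum_{\gamma\in S_B} H_\gamma$. Fix a tuple $(i_1,\dots,i_j)$ and expand every slot equal to $B$ by multilinearity of the nested commutator:
\begin{equation}
[H_{i_1},\dots,H_{i_j}] = \sum_{\gamma_1\in T_1}\cdots\sum_{\gamma_j\in T_j} [H_{\gamma_1},\dots,H_{\gamma_j}],
\end{equation}
where $T_\kappa = \{i_\kappa\}$ if $i_\kappa\in S_A$ and $T_\kappa = S_B$ if $i_\kappa = B$. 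The triangle inequality bounds the norm of the left-hand side by the sum of norms of the fine-grained commutators on the right. Summing over all tuples in $(S_A\cup\{B\})^j$, the sets $T_1\times\cdots\times T_j$ partition $(S_A\cup S_B)^j$, since $S_A$ and $S_B$ are disjoint index sets. Each fine-grained tuple therefore appears exactly once, and the total is exactly $\alpha^{(j)}_{\mathrm{comm}}$.

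The only step needing care is that $\tilde{\lambda}_{\mathrm{comm}}$ and $\lambda_{\mathrm{comm}}$ must be built with the same $p$, $\sigma$, $m$ and supremum domain. This is a matter of convention rather than difficulty: both product formulae use the same stage structure applied to different term lists, so I would state this assumption explicitly. Note also that the factors $a_{\max}\Upsilon$ are not part of $\lambda_{\mathrm{comm}}$, so they do not enter the comparison. I would also remark that the same argument applies to the projected quantities of Theorem~\ref{thm:symmetries}, because right-multiplication by $\Pi_i$ preserves the triangle inequality.
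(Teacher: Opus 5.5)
Your proposal is correct and follows the same route as the paper. Like the paper, you expand each occurrence of $H_B$ by multilinearity of the nested commutator, apply the triangle inequality to get $\tilde{\alpha}^{(j)}_{\mathrm{comm}} \leq \alpha^{(j)}_{\mathrm{comm}}$ at every order, and then use monotonicity of the extrapolated factor in the commutator factors; your explicit partition argument (each tuple in $(S_A\cup S_B)^j$ arises from exactly one coarse tuple, by disjointness of $S_A$ and $S_B$) and your requirement that both factors use the same $p$, $\sigma$, $m$ simply make precise what the paper calls ``unfurling'' all instances of $H_B$.
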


\begin{proof}
We compare the commutator factors $\tilde{\alpha}_{\text{comm}}^{(j)}$ and $\alpha_{\text{comm}}^{(j)}$ for fixed order $j$. We have
\begin{equation}
\tilde{\alpha}_{\text{comm}}^{(j)} := \sum_{\gamma_1, \dots, \gamma_j \in S_A \cup B} \left\| [H_{\gamma_1}, , \dots, H_{\gamma_j}]  \right\|.
\end{equation}
We now consider any arbitrary term in the above sum that contains $H_B$. We have 
\begin{align}
    \left\| [H_{\gamma_1}, , \dots, H_B, \dots, H_{\gamma_j}]  \right\| &=  \Big\| \big[H_{\gamma_1}, , \dots, \sum_{\gamma' \in S_B} H_{\gamma'}, \dots,  H_{\gamma_j}\big]  \Big\| \\
    &= \Big\| \sum_{\gamma' \in S_B} \big[H_{\gamma_1}, , \dots,  H_{\gamma'}, \dots,  H_{\gamma_j}\big]  \Big\| \\
    &\leq \sum_{\gamma' \in S_B} \Big\|  \big[H_{\gamma_1}, , \dots,  H_{\gamma'}, \dots,  H_{\gamma_j}\big]  \Big\|\,,
\end{align}
where the second equality is due to the bilinearity of the commutator, and the inequality is an application of the triangle inequality.

From this, all instances of $H_B$ can be unfurled and we have 
\begin{equation}
    \tilde{\alpha}_{\text{comm}}^{(j)} \leq \sum_{\gamma_1, \dots, \gamma_j \in S_A \cup S_B} \left\| [H_{\gamma_1}, , \dots, H_{\gamma_j}]  \right\| = \alpha_{\text{comm}}^{(j)}.
\end{equation}
As $\tilde{\lambda}_{\text{comm}}$ is an increasing function in all commutator factors $\tilde{\alpha}_{\text{comm}}^{(j)}$, it follows that $
\tilde{\lambda}_{\text{comm}} \leq \lambda_{\text{comm}}$ as required.
\end{proof}

We remark that the above lemma is agnostic to the choice of decomposition of $H_B$. Thus, whilst the Pauli decomposition in Eq.~\eqref{eq:partial-randomization-decomposition} is chosen for algorithmic implementation, any other decomposition can be used for analytical analysis.

We can put together the same methodology as in Algorithm \ref{algo:PRPF-Richardson} and analysis in Theorem \ref{theorem:extrapolated-partial-randomization} and Lemma \ref{lem:PRPF-lcomm} for all of Tasks \hyperref[eq:task-1]{1-2} to obtain refined complexities.

\begin{theorem}[Doubly Randomized Extrapolation Trotterization]
For a matrix decomposed as in Eq.~\eqref{eq:partial-randomization-decomposition},  Algorithms \ref{algo:overlap}, \ref{algo:observable}, \ref{algo:distribution} all succeed at their stated tasks with complexities given in Theorems \ref{thm:overlap}, \ref{thm:overlap-with-observable}, \ref{thm:distribution-recovery} respectively with the replacement to the gate depth
\begin{equation}
    L(\lcomm T(\varepsilon))^{1+1/p}\log(1/\varepsilon') \rightarrow  \Gamma_A  (\lcomm T(\varepsilon))^{1+1/p}\log(1/\varepsilon') + \Lambda_B^2 T^2(\varepsilon)
\end{equation}
for each respective prescribed $\varepsilon, \varepsilon'$ in each theorem.
\end{theorem}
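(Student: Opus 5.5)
The plan is to run the argument of Theorem \ref{theorem:extrapolated-partial-randomization} inside the outer Fourier randomization loop of each of Algorithms \ref{algo:overlap}, \ref{algo:observable}, \ref{algo:distribution}. Concretely, I would replace every occurrence of a product formula $\PC^{1/s_j}(s_j t_k)$ by the partially randomized formula $\mathcal{S}^{1/s_j}(s_j t_k)$. This formula is built from the decomposition $\{H_l\}_{l=1}^{\Gamma_A}\cup\{H_B\}$, and each factor $e^{-iH_B a_\nu s_j t_k}$ is compiled via Lemma \ref{lem:random-compiler-lemma} with depth parameter $d_{j,k}=\Lambda_B^2 t_k^2 s_j$. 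The estimator is then multiplied by the product of normalizations $\prod_{u}\prod_\nu \beta(a_\nu s_j t_k,d_{j,k})$.

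\textbf{Step 1: unbiasedness.} Because the random-compiler draws for distinct $u,\nu$ are independent, the chain of equalities in the proof of Theorem \ref{theorem:extrapolated-partial-randomization} shows that, conditional on $(k,j)$, the expected rescaled unitary equals $\mathcal{S}^{1/s_j}(s_jt_k)$. Averaging over $(k,j)$ then gives $f_{F,R}(H)$ built from $\mathcal{S}$, which is exactly the analogue of Lemma \ref{lem:overlap-correctness}.

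For Tasks 1(ii) and 2 there are two sampled unitaries. I would draw them with independent compiler randomness, so the expectation factorizes as in Lemma \ref{lem:observable-correctness}, and the $U_iU_j$ structure needed by Lemma \ref{lem:distribution-estimation} is preserved.

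\textbf{Step 2: bias (extrapolation error).} The mean is now the Fourier--Richardson approximation built from $\mathcal{S}$. Lemmas \ref{lem:fourier-richardson} and \ref{lem:richardson-well-conditioned} apply verbatim with $\tilde\lambda_{\rm comm}$ in place of $\lcomm$, and Lemma \ref{lem:PRPF-lcomm} gives $\tilde\lambda_{\rm comm}\le\lcomm$. So the same Trotter step counts $r_{\max}=O((\lcomm T(\varepsilon))^{1+1/p}\log(1/\varepsilon'))$ suffice, with the $\varepsilon,\varepsilon'$ prescribed in each theorem. The error propagation into $\tr[\rho U f]$, $\tr[f\rho f^\dagger O]$ and $\vec f$ is then identical to the proofs of the three theorems.

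\textbf{Step 3: variance (the step I expect to be the main obstacle).} I must check that the normalization product stays $O(1)$ uniformly in $(k,j)$, so that the bounds $S$ and $S^2$ are only inflated by a constant. With $d_{j,k}=\Lambda_B^2t_k^2s_j$, each factor satisfies $\beta\le\exp(a_\nu^2 s_j)$. Taking the product over $1/s_j$ steps and $\Upsilon$ stages gives at most $\exp(\Upsilon a_{\max}^2)=O(1)$, since $p,\Upsilon,a_{\max}=O(1)$.

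The subtlety is that $d$ must scale with the sampled $t_k$ rather than with $T$ alone. Otherwise short-time terms would waste depth, while using a $T$-independent $d$ would blow up $\beta$. With $t_k$-dependent $d$, the depth bound uses $|t_k|\le T(\varepsilon)$.

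For Tasks 1(ii) and 2 the random variable carries two such products, so its magnitude is $O(S^2)$. Hoeffding's inequality, and the vector concentration behind Lemma \ref{lem:distribution-estimation}, then give unchanged sample complexities.

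\textbf{Step 4: gate depth.} Each of the $r_j$ Trotter steps costs $O(\Gamma_A+d_{j,k})$. The total depth is therefore $O(\Gamma_A r_j+\Lambda_B^2 t_k^2)\le O(\Gamma_A r_{\max}+\Lambda_B^2T(\varepsilon)^2)$, which is the stated replacement. The classical preprocessing only adds the $O(Km\Upsilon)$ polylog-size compiler distributions.
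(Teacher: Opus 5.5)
Your proposal is correct and follows the paper's route. The paper obtains this theorem by running the mechanism of Algorithm~\ref{algo:PRPF-Richardson} and Theorem~\ref{theorem:extrapolated-partial-randomization} inside the Fourier sampling loop of each algorithm, then invoking Lemma~\ref{lem:PRPF-lcomm} to replace $\tilde{\lambda}_{\text{comm}}$ by $\lcomm$; your Steps 1--4 spell this out in more detail. Your $t_k$-dependent choice $d_{j,k}=\Lambda_B^2 t_k^2 s_j$ is an optional refinement rather than a necessity: taking $d=\Lambda_B^2T(\varepsilon)^2 s_j$ uniformly, as in Algorithm~\ref{algo:PRPF-Richardson}, also keeps the normalization product below $\exp(\Upsilon a_{\max}^2)$ and gives the same worst-case depth.
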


\section{Application: ground state energy estimation}\label{sec:phase-estimation}

In this section, we explore a method for ground state energy estimation (sometimes also referred to as phase estimation) that involves computing the cumulative distribution function (CDF) of the eigenvalue spectrum of a Hamiltonian \( H \). This core idea was first introduced by Lin and Tong in \cite{lin2022HeisenbergLimited} and studied further in \cite{wan2021RandPhaseEst}. We will see that this procedure is an instance of Task \hyperref[eq:task-1]{1(i)}. We start this section by providing a self-contained exposition of the CDF approach to ground state energy estimation. Readers interested in skipping ahead to algorithm complexities can go to Theorem \ref{thm:phase-estimation}, and we recall a complexity comparison is given in Table \ref{tab:phase-estimation}.

Our primary quantity of interest, the eigenvalue CDF associated with \(H\), can be formulated using the Heaviside step function which we approximate via a Fourier series expansion over the interval \([- \pi, \pi]\). Consequently, it is necessary to rescale the Hamiltonian such that its spectrum lies within this interval. We define the normalized Hamiltonian as \(\hat{H} := \kappa H\), where the scaling factor \(\kappa\) will be specified later. Throughout this section, we assume the existence of a known upper bound \(\hat{K}\) on the spectral norm of \(H\), meaning \(\hat{K} \geq \|H\|\). The CDF can be expressed as
\begin{equation}\label{eq:CDF}
C(x) = (\hat{p} * \Theta)(x) = \operatorname{Tr}[\rho \, \Theta(x I - \kappa H)] = \sum_{x \geq \kappa E_i} \operatorname{Tr}[\rho \Pi_i] \,,
\end{equation}
where \(\Theta(x)\) is the Heaviside step function, \(\hat{p}(x) = \sum_i \operatorname{Tr}[\rho \Pi_i] \, \delta(x - \kappa E_i)\) represents the probability density function corresponding to \(\rho\) and the normalized Hamiltonian \(\hat{H}\), and \( * \) denotes convolution. Here, \(\Pi_i\) is the projector onto the eigenspace associated with the \(i\)-th eigenvalue of \(H\), arranged in increasing order. We also define \(\eta = \operatorname{Tr}[\rho \Pi_0]\) as the overlap between the ground state and the ansatz state. 

When generalized to generic functions, we see that Eq.~\eqref{eq:CDF} gives a useful perspective on the state overlap (Task \hyperref[eq:task-1]{1(i)}) -- the state overlap with $f(H)$ can be thought of as a value of the convolution of the probability density function of $\rho$ with $f$ -- it can be used to extract spectral properties of $H$ with reference to a probe state $\rho$. Further, when we consider Fourier series approximations, we will see that the offset of the spectrum by $x$ can be inconsequentially treated as a phase shift to be kept track of classically.


\subsection{Heaviside approximation and approximate CDF construction}

In order to enable us to use Algorithm \ref{algo:overlap}, we first present a useful Fourier series approximation to the Heaviside function valid in the domain \([- \pi, \pi]\).

\begin{lemma}[Heaviside function  Fourier approximation (\cite{wan2021RandPhaseEst}, Lemma 1)]\label{lemma-heaviside-fourier}
    There exists a Fourier series $\widetilde{\Theta}(x)=\sum_{j \in S} \widetilde{\Theta}_j e^{i j x}$ with $S:=\{0\} \cup\{ \pm(2 j+1)\}_{j=0}^d$, maximum ``time'' parameter $d={O}(u^{-1} \log \left(\varepsilon_F^{-1}\right))$, Fourier approximation parameter $\varepsilon_{F}$, and $u \in [0,\pi/2]$ serves as a resolution parameter. This Fourier series satisfies
\begin{enumerate}
    \item Approximation guarantee: $|\widetilde{\Theta}(x)-\Theta(x)| \leq \varepsilon_F \quad \forall x \in[-\pi+u,-u] \cup[u, \pi-u]$,  
    \item Bounded maximum value: $-\varepsilon_F \leq\widetilde{\Theta}(x) \leq 1+\varepsilon_F \quad \forall x \in \mathbb{R}$,
    \item Bounded size of coefficients: $\|\widetilde{\Theta}\|_1:=\sum_{j \in S_1} |\widetilde{\Theta}_j|=O(\log d)$\,.
\end{enumerate}
\end{lemma}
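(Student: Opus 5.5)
The plan is to build $\widetilde{\Theta}$ by first smoothing the $2\pi$-periodic square wave $\Theta(x)$ (equal to $1$ on $(0,\pi)$ and $0$ on $(-\pi,0)$) with a narrow normalized kernel. We then truncate the Fourier series of the smoothed function. Concretely, take a periodic bump $M_{d,\beta}(x)$ of width $\sim u$: a normalized $\cos^{2d}$-type or Chebyshev/Gaussian-windowed kernel with parameter $\beta = \Theta(\log(1/\varepsilon_F))$. Its mass outside $[-u,u]$ is at most $\varepsilon_F$ once $d = O(u^{-1}\log(1/\varepsilon_F))$. Set $F = M_{d,\beta} * \Theta$.

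For the approximation guarantee, note that for $x\in[u,\pi-u]$ the convolution window lies inside $(0,\pi)$ up to the kernel's tail mass, so $|F(x)-1|\le\varepsilon_F/2$. Symmetrically, $|F(x)|\le \varepsilon_F/2$ on $[-\pi+u,-u]$. Since $M\ge0$ with unit mass, $0\le F\le 1$ everywhere.

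Next, write $F = \tfrac12 + \sum_k \hat M_k \hat\Theta_k e^{ikx}$. The square wave has $\hat\Theta_0=1/2$, $\hat\Theta_k = 0$ for even $k\ne0$, and $\hat\Theta_k = 1/(i\pi k)$ for odd $k$. This is exactly why the support is $S=\{0\}\cup\{\pm(2j+1)\}$. Truncating to $|k|\le 2d+1$ gives $\widetilde\Theta$. The kernel coefficients $\hat M_k$ satisfy $|\hat M_k|\le 1$ and decay like $e^{-\Omega(\beta (k/d)^2)}$ beyond $k\sim d$ (or vanish exactly if $M$ is a trigonometric polynomial of degree $d$). Hence the truncation error is at most $\varepsilon_F/2$ uniformly.

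Combining the two error contributions yields property 1. For property 2, $0\le F\le1$ together with the uniform truncation error $\varepsilon_F$ gives $-\varepsilon_F\le\widetilde\Theta\le1+\varepsilon_F$ on all of $\mathbb R$. For property 3, bound $\sum_{j\in S}|\widetilde\Theta_j|\le \tfrac12+\sum_{\text{odd }|k|\le 2d+1}\frac{1}{\pi|k|} = O(\log d)$ using $|\hat M_k|\le1$. This is the harmonic sum.

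The main obstacle is choosing a kernel that is simultaneously nonnegative (needed for property 2 without extra error), concentrated in width $u$ with tail $\le\varepsilon_F$, and spectrally supported or decaying by degree $O(u^{-1}\log(1/\varepsilon_F))$. The tradeoff between tail mass and degree needs care. I would take the Lin--Tong kernel $M_{d,\delta}(x)\propto T_d\!\big(1+2\tfrac{\cos x-\cos\delta}{1+\cos\delta}\big)$, whose width/degree estimates are established in \cite{lin2022HeisenbergLimited}. Since this is a trigonometric polynomial, no truncation error arises at all. That reduces the statement to quoting those kernel bounds, which is essentially how \cite{wan2021RandPhaseEst} proceeds.
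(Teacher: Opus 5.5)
The paper does not prove this lemma; it imports it from \cite{wan2021RandPhaseEst}. Your skeleton is sound. Smoothing the square wave with a unit-mass kernel concentrated on $[-u,u]$ gives coefficients $\hat M_k/(i\pi k)$ on odd $k$. This explains the support $S$, and with $|\hat M_k|=O(1)$ it gives the harmonic-sum bound $O(\log d)$. The gap is the kernel you commit to.

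Quoting the Lin--Tong Chebyshev kernel does not give the stated degree. The estimates in \cite{lin2022HeisenbergLimited} yield only $d=O\bigl(u^{-1}\log(u^{-1}\varepsilon_F^{-1})\bigr)$, and any tail-mass argument with that kernel is stuck there. On $|x|>u$ the kernel equals $\cos(d\theta(x))/\mathcal N_{d,u}$, so its tail $\ell_1$ mass is of order $1/\mathcal N_{d,u}$. Since $\mathcal N_{d,u}$ is of order $e^{du}\sqrt{u/d}$, pushing the tail below $\varepsilon_F$ forces $du\gtrsim\log(1/\varepsilon_F)+\log(1/u)$. That extra $\log(1/u)$ is not harmless here: the paper later uses the lemma with $\varepsilon_F=\eta/8$ fixed and $u=\kappa\varepsilon\to0$. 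Removing it is exactly what the construction in \cite{wan2021RandPhaseEst} achieves, so that construction is not ``essentially'' the Lin--Tong argument.

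Two further steps fail for the kernels you list. First, the Chebyshev kernel is not nonnegative, since $T_d$ oscillates in $[-1,1]$ for $|x|>u$. So ``$M\ge0$, hence $0\le F\le1$'' does not apply; property 2 would have to come from $-\int M_-\le F\le 1+\int M_-$, with $\int M_-$ bounded by the tail mass. Second, $\cos^{2d}(x/2)$ has width of order $d^{-1/2}$. Its mass outside $[-u,u]$ is at most $\varepsilon_F$ only when $d$ is of order $u^{-2}\log(1/\varepsilon_F)$, not $u^{-1}\log(1/\varepsilon_F)$.

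The repair is the Gaussian option you mention but do not carry out. Take the periodized Gaussian of width $\sigma=u/\sqrt{2\log(4/\varepsilon_F)}$. It is nonnegative with unit mass, puts at most $\varepsilon_F/2$ outside $[-u,u]$, and has multipliers $\hat M_k=e^{-k^2\sigma^2/2}\in(0,1]$. Truncating at $|k|\le 2d+1$ then costs at most $\frac{2}{\pi}\sum_{k>2d+1}e^{-k^2\sigma^2/2}/k\le\varepsilon_F/2$, once $(2d+1)\sigma\ge\max\{1,\sqrt{2\log(2/\varepsilon_F)}\}$, i.e.\ $d=O(u^{-1}\log(1/\varepsilon_F))$. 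With that choice, your arguments for properties 1--3 go through verbatim.

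This is in substance the cited construction. \cite{wan2021RandPhaseEst} truncate the Fourier series of $\tfrac12\bigl(1+\operatorname{erf}(k\sin x)\bigr)$ with $k\asymp u^{-1}\sqrt{\log(1/\varepsilon_F)}$. Its odd coefficients are explicitly proportional to $e^{-\beta}\bigl(I_j(\beta)+I_{j+1}(\beta)\bigr)/(2j+1)$ with $\beta=k^2/2$. Values in $[0,1]$ come from $\operatorname{erf}\in[-1,1]$, and the truncation bound comes from the decay of $e^{-\beta}I_j(\beta)$ in $j$. The $\ell_1$ bound comes from $\sqrt{\beta}\,e^{-\beta}I_j(\beta)=O(1)$.
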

\noindent We use this Fourier series to construct an approximate CDF which we denote as
\begin{align}
    \widetilde{C}(x) = (\hat{p} * \widetilde{\Theta} )(x) = \sum_{j \in S} \widetilde{\Theta}_j \int_{-\pi}^\pi \hat{p}(y) e^{i j(x-y)} \mathrm{d}y = \sum_{j \in S} \widetilde{\Theta}_j e^{i j x} \operatorname{Tr}\left[\rho e^{-i j \kappa H}\right] =   \tr[\rho \widetilde{\Theta}(xI-\kappa H)] \,.
\end{align}

The following lemma ensures that this Fourier approximation gives a reasonable approximation to our desired cumulative distribution function. 
\begin{lemma}[Approximation to CDF from Fourier series, adapted from \cite{wan2021RandPhaseEst} Proposition 12]\label{lemma:CDF-approximation}
Take $\widetilde{\Theta}(x)$ from Lemma \ref{lemma-heaviside-fourier}, and set $\kappa = \frac{\pi-u}{2\hat{K}}$
. Then, the quantity $\widetilde{C}(x) = \tr[\rho \widetilde{\Theta}(xI-\kappa H)]$ satisfies
\begin{equation}\label{eq:aCDF-to-CDF}
    C(x-u)-\varepsilon_F \leq \widetilde{C}(x) \leq C(x+u)+\varepsilon_F\,.
\end{equation}    
\end{lemma}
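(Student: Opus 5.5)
The plan is to work eigenvalue by eigenvalue. The approximate CDF is a weighted sum of the scalar function $\widetilde{\Theta}$ evaluated at shifted eigenvalues, so the inequality should follow from pointwise bounds on $\widetilde{\Theta}$ combined with the positive weights $\tr[\rho\Pi_i]$.

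\textbf{Step 1: expand in the eigenbasis.} Using Definition~\ref{def:eigval-transform}, we write
\begin{equation*}
\widetilde{C}(x)=\sum_i \tr[\rho\Pi_i]\,\widetilde{\Theta}(x-\kappa E_i),
\end{equation*}
where the weights $w_i:=\tr[\rho\Pi_i]\ge 0$ sum to one.

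\textbf{Step 2: locate the arguments.} The choice $\kappa=(\pi-u)/(2\hat K)$ together with $\|H\|\le \hat K$ gives $|\kappa E_i|\le (\pi-u)/2$. We assume, as is implicit in how the lemma is used, that $x$ lies in a window with $|x|\le (\pi-u)/2$. Then $y_i:=x-\kappa E_i$ satisfies $|y_i|\le \pi-u$, which is inside the region where Lemma~\ref{lemma-heaviside-fourier} gives its approximation guarantee. If $x$ can fall outside this window, one would instead use the $2\pi$-periodicity of $\widetilde{\Theta}$ and restrict the claim accordingly; I expect this domain bookkeeping to be the only real obstacle.

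\textbf{Step 3: pointwise sandwich.} For every $y\in[-\pi+u,\pi-u]$ we claim
\begin{equation*}
\Theta(y-u)-\varepsilon_F\le \widetilde{\Theta}(y)\le \Theta(y+u)+\varepsilon_F .
\end{equation*}
We split into three cases.
\begin{itemize}
\item If $|y|\ge u$, property 1 of Lemma~\ref{lemma-heaviside-fourier} gives $|\widetilde{\Theta}(y)-\Theta(y)|\le\varepsilon_F$. Monotonicity of $\Theta$ gives $\Theta(y-u)\le\Theta(y)\le\Theta(y+u)$, which yields both inequalities.
\item If $|y|<u$, then $\Theta(y-u)=0$ and $\Theta(y+u)=1$.
\item In that last case the bound reduces to $-\varepsilon_F\le\widetilde{\Theta}(y)\le 1+\varepsilon_F$, which is exactly property 2 of Lemma~\ref{lemma-heaviside-fourier}.
\end{itemize}

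\textbf{Step 4: sum with positive weights.} Multiply the sandwich at $y_i$ by $w_i\ge0$ and sum over $i$. Since $\sum_i w_i\Theta(x\mp u-\kappa E_i)=C(x\mp u)$ by Eq.~\eqref{eq:CDF}, and $\sum_i w_i=1$ turns the $\varepsilon_F$ terms into a single $\varepsilon_F$, we obtain Eq.~\eqref{eq:aCDF-to-CDF}.

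One convention needs checking: $\Theta(0)$ at jump points can be taken to be either $0$ or $1$, and the inequalities hold under either choice.
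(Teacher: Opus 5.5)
Your argument is correct. The paper gives no proof of its own here; it defers to Wan--Berta--Campbell (Prop.~12) and Lin--Tong. Those works argue exactly as you do: the pointwise sandwich $\Theta(y-u)-\varepsilon_F\le\widetilde{\Theta}(y)\le\Theta(y+u)+\varepsilon_F$ on $[-\pi+u,\pi-u]$, averaged against the nonnegative weights $\tr[\rho\Pi_i]$.

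Your restriction $|x|\le(\pi-u)/2$ is more than bookkeeping: it is a necessary hypothesis that the stated lemma omits. By $2\pi$-periodicity, $\widetilde{\Theta}$ must fall from $\approx 1$ at $\pi-u$ back to $\approx 0$ at $\pi+u$. So if $\rho$ is an eigenstate with $\kappa E_i=-(\pi-u)/2$ and $\varepsilon_F<1/2$, the lower bound $C(x-u)-\varepsilon_F=1-\varepsilon_F$ fails for some $x$ just above $(\pi-u)/2$. This window is also slightly narrower than the range $[-\pi/2,\pi/2]$ that the paper uses later in Lemma~\ref{lemma:CDF-from-aCDF}.
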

We observe that the probability density function is defined over an interval contained within
$[-\kappa \hat{K}, \kappa \hat{K}]$, which for the sake of the above lemma is $[-\frac{1}{2}(\pi - u), \frac{1}{2}(\pi - u)]$. This choice of support is necessary to guarantee the error bounds established in Eq.~\eqref{eq:aCDF-to-CDF}. From this point onward, we fix the normalization factor as $\kappa = \frac{\pi - u}{2\hat{K}}$ as above to ensure that this lemma remains applicable throughout our analysis.

Now, we demonstrate that the ground state energy can be characterized by a condition on the approximate CDF, allowing us to use the approximate CDF as a tool to determine the ground state energy. The key idea is that the exact CDF $C(x)$ marks the ground state energy by the smallest value of $x$ for which the CDF is non-zero (at the ground state energy, the CDF takes value $\geq \eta$). The story is similar with the approximate CDF, within the resolution of the approximation. The contents of the following lemma are essentially identical to the analysis in \cite{lin2022HeisenbergLimited}.

\begin{lemma}\label{lemma:CDF-from-aCDF}
    Suppose we have an algorithm to approximate $\widetilde{C}(x)$ to precision $\eta/8$ for some $0< u < \pi/2$, $\varepsilon_F = \eta/8$ and for any $x \in [-\pi/2,\pi/2]$. Then we may determine whether    \begin{equation}\label{eq:CDF-criterion}
        C(x+u)>\eta / 2 \quad \text{or}\quad C(x-u)<\eta \,.
    \end{equation} 
    Suppose further that we locate an $x^*$ that simultaneously satisfies both conditions. Then, we have that 
    \begin{equation}
        \left| x^*/\kappa - E_0 \right| \leq u/\kappa \,,
    \end{equation}
    i.e., we have an additive estimate to the ground state energy $E_0$.
\end{lemma}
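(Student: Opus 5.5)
Given an estimate $\hat{C}(x)$ with $|\hat{C}(x)-\widetilde{C}(x)|\le \eta/8$, I would use the fixed threshold test: declare ``$C(x+u)>\eta/2$'' if $\hat{C}(x)>5\eta/8$, and declare ``$C(x-u)<\eta$'' otherwise.

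To see this is correct, combine the estimate with Lemma \ref{lemma:CDF-approximation}, taking $\varepsilon_F=\eta/8$. If $\hat{C}(x)>5\eta/8$, then $\widetilde{C}(x)>\eta/2$. The upper bound $\widetilde{C}(x)\le C(x+u)+\eta/8$ then gives $C(x+u)>3\eta/8$. This does not yet reach $\eta/2$, so the constants must be adjusted. Instead I would use the threshold $3\eta/4$: declare the first condition if $\hat{C}(x)>3\eta/4$. Then $\widetilde{C}(x)>5\eta/8$, so $C(x+u)>\eta/2$. If instead $\hat{C}(x)\le 3\eta/4$, then $\widetilde{C}(x)\le 7\eta/8$. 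The lower bound $C(x-u)-\eta/8\le\widetilde{C}(x)$ gives $C(x-u)\le\eta$. This is non-strict, so I would shift the threshold slightly (or track strictness) to get $C(x-u)<\eta$. The bookkeeping of these constants is the only delicate part, and it works because the gap $\eta/2$ to $\eta$ exceeds the total slack $\eta/8+\eta/8$ by $\eta/4$. We also need $x\pm u$ to be handled by Lemma \ref{lemma:CDF-approximation}; the choice $\kappa=(\pi-u)/(2\hat K)$ keeps the spectrum in $[-(\pi-u)/2,(\pi-u)/2]$, and for $x\in[-\pi/2,\pi/2]$ the shifted arguments $x-\kappa E_i$ stay within the periodic domain where the Heaviside approximation applies.

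For the second claim, suppose $x^*$ satisfies both $C(x^*+u)>\eta/2$ and $C(x^*-u)<\eta$. Recall that $C(y)=0$ for $y<\kappa E_0$ and $C(y)\ge\eta$ for $y\ge\kappa E_0$, since $C$ is a step function whose first jump at $\kappa E_0$ has height $\mathrm{Tr}[\rho\Pi_0]=\eta$.

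From $C(x^*+u)>\eta/2>0$ we get $x^*+u\ge\kappa E_0$. From $C(x^*-u)<\eta$ we get $x^*-u<\kappa E_0$. Hence $|x^*-\kappa E_0|\le u$, and dividing by $\kappa>0$ gives $|x^*/\kappa-E_0|\le u/\kappa$.

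The main obstacle is only choosing the decision threshold and the strict versus non-strict inequalities consistently, so that the two bounds from Lemma \ref{lemma:CDF-approximation} plus the $\eta/8$ estimation error land on the correct sides of $\eta/2$ and $\eta$.
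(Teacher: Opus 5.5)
Your route is the paper's: threshold the estimate at $3\eta/4$, convert it via Lemma~\ref{lemma:CDF-approximation} into $C(x+u)>\eta/2$ or $C(x-u)<\eta$, then use that $C$ vanishes below $\kappa E_0$ and is at least $\eta$ from $\kappa E_0$ on. You are right that the second branch only gives $C(x-u)\le\eta$; the paper glosses over exactly this by writing $\widetilde C(x)<7\eta/8$. Your proposed repair, however, fails as stated, because your slack count is off. Take a threshold $\tau$. Certifying $C(x+u)>\eta/2$ from $\hat C(x)>\tau$, through the two $\eta/8$ errors, requires $\tau-\eta/4\ge\eta/2$. Certifying $C(x-u)<\eta$ from $\hat C(x)\le\tau$ requires $\tau+\eta/4<\eta$. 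The admissible window $[3\eta/4,3\eta/4)$ is empty. The slack is $\eta/4$ on each side of the threshold, so it uses up the whole gap $\eta/2$ rather than leaving $\eta/4$ spare. No nudge of $\tau$ works by bookkeeping alone.

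The missing ingredient is the step structure of $C$ that you already use in the second half. Since $C$ takes no values in $(0,\eta)$, $C(x+u)>0$ already forces $C(x+u)\ge\eta>\eta/2$. So any $\tau\in[\eta/4,3\eta/4)$ works, for example $\tau=\eta/2$:
if $\hat C(x)>\eta/2$, then $C(x+u)\ge\widetilde C(x)-\eta/8>\eta/4>0$, hence $C(x+u)\ge\eta$;
if $\hat C(x)\le\eta/2$, then $C(x-u)\le\widetilde C(x)+\eta/8\le3\eta/4<\eta$.
Alternatively, keep $\tau=3\eta/4$ and estimate $\widetilde C(x)$ to precision strictly below $\eta/8$, which costs nothing. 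A smaller point: your claim that $x-\kappa E_i$ stays where the Heaviside approximation holds is false for all $x\in[-\pi/2,\pi/2]$. Here $|x-\kappa E_i|$ can reach $\pi-u/2>\pi-u$, and the two-sided bound of Lemma~\ref{lemma:CDF-approximation} is only guaranteed for $|x|\le(\pi-u)/2$. This mismatch is already in the statement, so cite the lemma directly rather than justify it this way.
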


\begin{proof}
We use the algorithm to determine whether $\widetilde{C}(x) > 3\eta/4$ or not. Within the specified precision, we can guarantee either
\begin{equation}
    \widetilde{C}(x)>(5 / 8) \eta \quad \textit{ or } \quad \widetilde{C}(x)<(7 / 8) \eta\,,
\end{equation}
which respectively implies either $C(x+u)>\eta / 2 $ {or} $C(x-u)<\eta$ respectively by use of Lemma \ref{lemma:CDF-approximation}, and we have satisfied the first claim. Now we suppose both conditions are simultaneously satisfied for some $x^*$. We recall that $\eta \leq \tr[\Pi_0 \rho]$ lower bounds the ground space overlap. We additionally note that the exact CDF $C(x)$ cannot take values in $(0,\tr[\Pi_0 \rho]) \supseteq (0,\eta)$ and so we have that 
\begin{align}
    C(x^{*}+u)\geq \eta\,,& \quad C(x^{*}-u)=0\,, \\
    \implies 
    x^* + u \geq \kappa E_0\,,& \quad x^*- u < \kappa E_0\,.
\end{align}
Thus, $|x^* - \kappa E_0| \leq u$.
\end{proof}

\subsection{Binary search for the ground state energy}

Next we present an algorithm and lemma showing that the search algorithm can locate \( x^* \) using only a logarithmic number of queries to the approximate CDF \( \widetilde{C}(x) \).

\begin{lemma}[Binary search with approximate CDF]\label{lemma:binary-search}
    Suppose that we have an algorithm ${A}(x,\widetilde{u},\widetilde{\varepsilon},\widetilde{\delta})$ which evaluates $\widetilde{C}(x)$ for any $x$ to additive error $\widetilde{\varepsilon}$ and success probability at least $(1-\widetilde{\delta})$, for $\varepsilon_F = \eta/8$ and some resolution parameter $\widetilde{u} \in [0,\pi/2]$. Then, the ground state energy can be found to additive precision $u/\kappa$ and success probability at least $(1-\delta)$ by running ${A}(x,0.9u,\eta/8,\delta/L)$ at $L=O(\log(1/u))$ different values of $x$.
\end{lemma}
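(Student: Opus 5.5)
The plan is to run a noisy binary search on $x$ over $[-\pi/2,\pi/2]$, using each call to ${A}$ as the test from Lemma \ref{lemma:CDF-from-aCDF}. Both Lemma \ref{lemma:CDF-from-aCDF} and Lemma \ref{lemma:CDF-approximation} are applied with the reduced resolution $\widetilde{u}=0.9u$; the slack $0.1u$ absorbs the discretization of the search.

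The search maintains an interval $[\ell, r]$ with two invariants: $C(\ell+\widetilde u)<\eta$ (more usefully, $\kappa E_0 > \ell - \widetilde u$) and $C(r+\widetilde u)\ge \eta/2$ (so $\kappa E_0\le r+\widetilde u$). Initially $\ell=-\pi/2$ and $r=\pi/2$. These invariants hold because the normalized spectrum lies in $[-\tfrac12(\pi-u),\tfrac12(\pi-u)]$, which is strictly inside $[-\pi/2,\pi/2]$ with margin at least $u/2>\widetilde u$.

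Each step queries ${A}(x_{\mathrm{mid}},\widetilde u,\eta/8,\delta/L)$ at the midpoint and thresholds the output at $3\eta/4$, as in the proof of Lemma \ref{lemma:CDF-from-aCDF}.
\begin{itemize}
\item If the estimate exceeds $3\eta/4$, then $\widetilde C(x_{\mathrm{mid}})>5\eta/8$, hence $C(x_{\mathrm{mid}}+\widetilde u)>\eta/2>0$. Since $C$ vanishes below $\kappa E_0$, this gives $\kappa E_0\le x_{\mathrm{mid}}+\widetilde u$, and we set $r\leftarrow x_{\mathrm{mid}}$.
\item Otherwise $\widetilde C(x_{\mathrm{mid}})<7\eta/8$, hence $C(x_{\mathrm{mid}}-\widetilde u)<\eta$. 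Since $C\ge \tr[\Pi_0\rho]\ge\eta$ at and above $\kappa E_0$, this gives $\kappa E_0> x_{\mathrm{mid}}-\widetilde u$, and we set $\ell\leftarrow x_{\mathrm{mid}}$.
\end{itemize}
Both branches use only the sandwich in Eq.~\eqref{eq:aCDF-to-CDF} with $\varepsilon_F=\eta/8$ and the fact that $C$ jumps from $0$ to at least $\eta$ at $\kappa E_0$. Both preserve the invariants $\ell-\widetilde u<\kappa E_0\le r+\widetilde u$.

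We stop once $r-\ell\le 0.2u$, which takes $L=\lceil\log_2(\pi/(0.2u))\rceil=O(\log(1/u))$ iterations. We output $x^*=(\ell+r)/2$. Then
\[
|x^*-\kappa E_0|\le \tfrac12(r-\ell)+\widetilde u\le 0.1u+0.9u=u,
\]
so $|x^*/\kappa-E_0|\le u/\kappa$, as claimed.

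For the failure probability, each call fails with probability at most $\delta/L$. A union bound over the $L$ adaptively chosen queries shows that all of them succeed with probability at least $1-\delta$, and on that event the deterministic argument above applies. Adaptivity causes no problem, because each call's guarantee holds for whatever $x$ is passed to it.

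The main obstacle is the boundary bookkeeping. We must check that every queried $x$ lies in the domain where Lemma \ref{lemma:CDF-approximation} applies, which holds since the queries stay in $[-\pi/2,\pi/2]$ and the rescaling by $\kappa$ was chosen for this purpose. We must also check that the invariants remain valid in the "ambiguous" regime, where both conditions of Lemma \ref{lemma:CDF-from-aCDF} hold and either branch may be taken. The case analysis above shows that either branch is consistent there. The choice $0.9u$ is exactly what leaves room for the final interval width.
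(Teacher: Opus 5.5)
Your proof is correct and is essentially the paper's argument: a noisy bisection on $[-\pi/2,\pi/2]$ that uses the test of Lemma~\ref{lemma:CDF-from-aCDF} at resolution $0.9u$, runs $O(\log(1/u))$ iterations, and applies a union bound with $\widetilde{\delta}=\delta/L$; the only difference is bookkeeping, since you keep the endpoints at the queried midpoints and carry the $0.9u$ slack in the invariant $\ell-\widetilde{u}<\kappa E_0\le r+\widetilde{u}$ (so $r-\ell$ halves exactly down to $0.2u$), whereas the paper moves an endpoint to $x_\ell\pm 0.9u$ so the bracket width converges to $1.8u$ and stops at $2u$. One small slip: your first-stated invariant $C(\ell+\widetilde{u})<\eta$ should read $C(\ell-\widetilde{u})<\eta$, and your claimed margin $u/2>\widetilde{u}=0.9u$ is false, but neither matters because the invariant you actually use already holds at $\ell=-\pi/2$, $r=\pi/2$ with no margin at all.
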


\begin{proof}
    The contents of \cite[Section 5]{lin2022HeisenbergLimited} prove this statement, but we present it here for completeness. We see from Lemma \ref{lemma:CDF-from-aCDF} that setting $\widetilde{\varepsilon} = \eta/8$ allows us to determine the criterion \eqref{eq:CDF-criterion} (with some success probability to be later addressed and some resolution parameter). We first show how to use this to search for $x^*$, before discussing how to tune the success probability.
    The search procedure operates by finding successively tighter upper and lower bounds to $x^*$. Set
    \begin{equation}
        x_{0,0} = -\pi/2, \quad x_{1,0} = \pi/2\,,
    \end{equation}
    where we have that $C(x_{0,0})<\eta$ and $C(x_{1,0})>\eta / 2$. We now specify an update rule to generate $(C(x_{0,\ell}), C(x_{1,\ell}))$ that have decreasing separation for increasing $\ell$, but still satisfy $C(x_{0,\ell})<\eta$ and $C(x_{1,\ell})>\eta / 2$ (recall that as $C(x)$ cannot take values in $(0,\eta)$,  we have $x_{0,\ell} \leq x^*\leq x_{1,\ell}$). For all $\ell\geq 0$ until termination, first construct $x_{\ell} = (x_{0,\ell} + x_{1,\ell})/2$. Second, run ${A}(x_{\ell},0.9u,\eta/8,\widetilde{\delta})$ (note the prefactor $0.9$ for the resolution parameter can be arbitrarily chosen to any number smaller than $1$). Due to \eqref{eq:CDF-criterion} this determines whether (i): $C(x_{\ell} + 0.9u)> \eta/2$ or (ii): $C(x_{\ell} - 0.9u)< \eta$. We now apply the following update rule based on the outcome
    \begin{align}
        \text{(i)}&: \qquad x_{0,\ell+1} = x_{0,\ell}\,,\qquad x_{1,\ell+1} = x_{\ell} + 0.9u\,, \\
        \text{(ii)}&: \qquad x_{0,\ell+1} = x_{\ell}- 0.9u\,,\qquad x_{1,\ell+1} = x_{1,\ell} \,, 
    \end{align}
    and it is simple to check that the conditions $C(x_{0,\ell})<\eta$ and $C(x_{1,\ell})>\eta / 2$ are satisfied, and that the separation $x_{1,\ell}-x_{0,\ell}$ is decreasing with increasing $\ell$ so long as $x_{1,\ell}-x_{0,\ell} \geq 1.8u$. 
    We terminate the procedure  at step $L$ when $x_{1,\ell}-x_{0,\ell} \leq 2u$, as this implies that $|x_L-\kappa E_0|\leq u$. Then, $x_L/\kappa$ satisfies our desired approximation of the ground state energy $E_0$. One can check that the separation satisfies
    \begin{equation}
        x_{1,\ell}-x_{0,\ell} = \frac{\pi - 1.8u}{2^{\ell}} + 1.8u\,,
    \end{equation}
    and thus $L=O(\log(1/u))$ evaluations of ${A}(x_{\ell},0.9u,\eta/8,\widetilde{\delta})$ are sufficient.

    Finally, let us determine a sufficient success probability $\widetilde{\delta}$ for the algorithm that evaluates $\widetilde{C}(x)$. If each use of ${A}(x_{\ell},0.9u,\eta/8,\widetilde{\delta})$ fails with probability at most $\widetilde{\delta}$, then in $L$ uses the failure probability is at most $L\widetilde{\delta}$. Our stated claim follows by requiring that this equals our desired overall failure probability $L\widetilde{\delta} = \delta$.
\end{proof}

Our last piece to construct a full phase estimation algorithm is to specify the algorithm ${A}(x,\widetilde{u},\widetilde{\varepsilon},\widetilde{\delta})$ to evaluate $\widetilde{C}(x)$. We observe that $\widetilde{C}(x)$ can be estimated statistically via our Algorithm~\ref{algo:overlap}, where the relevant function is the Heaviside function $\Theta$. For this algorithm we use the LKW extrapolation strategy of Lemma \ref{lemma:well-conditioned-Richardson}, and thus are able to import the analysis of Theorem \ref{thm:overlap}. Whilst the translation to the Heaviside function is direct, we display the full algorithm pseudocode here for clarity.

\begin{algorithm}[H]
\caption{\ Estimation of approximate CDF via Fourier–Heaviside Expansion}
\label{algo:approx-CDF}
\begin{algorithmic}[1]
\State\textbf{Classical preprocessing:} Compute distribution \( \left\{ \frac{|\widetilde{\Theta}_j b_k|}{\mathcal{G}} \right\}_{j,k} \) where \( \mathcal{G} = \sum_{j,k} |\widetilde{\Theta}_j b_k| \)
\For{$i = 1$ to $\cC_{\mathrm{sample}}$}
    \State Sample \( (j', k') \) from distribution, \(T \gets \kappa \cdot j' \)
    \State Prepare Hadamard tests corresponding to:
    \[\operatorname{Re}[\operatorname{Tr}( \rho \PC^{1/s_{k'}}(s_{k'} T))]\,\quad \operatorname{Im}[\operatorname{Tr}( \rho \PC^{1/s_{k'}}(s_{k'} T))] \]
    \State $(X^{(k')}_{\text{Re}}, X^{(k')}_{\text{Im}}) \gets$ one measurement statistic from each circuit; set \( Z^{(i)} = X^{(k')}_{\text{Re}} + i X^{(k')}_{\text{Im}} \)
    \State \( C^{(i)} \gets \|\vec{b}\|_1 \cdot \|\widetilde{\Theta}\|_1 \cdot \operatorname{sign}(b_{k'} \widetilde{\Theta}_{j'}) \cdot e^{i j' x} \cdot Z^{(i)} \)
\EndFor
\State \textbf{Return:} Mean of $C^{(i)}$  over  $\cC_{\mathrm{sample}}$ values.
\end{algorithmic}
\end{algorithm}

As before with Algorithm~\ref{algo:overlap}, the repeating subroutine of this algorithm is structured into two main parts: a quantum component (Step 4) followed by classical post-processing (Step 6). While we may be tempted to rerun Algorithm \ref{algo:approx-CDF} in its entirety to instantiate ${A}(x, \widetilde{u}, \widetilde{\varepsilon}, \widetilde{\delta})$ at each desired value of $x$, it is useful to note that the quantum step does not need to be repeated for each $x$. Instead, we can run the quantum step once, record its outcomes, and then apply different classical post-processing (Step 6) corresponding to each desired $x$. As a result, obtaining outputs for $L$ different values of $x$ requires only a single round of quantum measurements, with all additional computation handled classically. Let us now analyze the complexity of running Algorithm~\ref{algo:overlap} for our purposes.

\begin{lemma}[Approximate CDF estimation]\label{lemma:aCDF-algorithm}
Algorithm~\ref{algo:approx-CDF} implements the procedure $A(x,\widetilde{u},\widetilde{\varepsilon},\widetilde{\delta})$, preparing a $\widetilde{\varepsilon}$-additive approximation to the approximate CDF $\widetilde{C}(x)$, with Fourier parameters $u = \widetilde{u}$ and $\varepsilon_F = \eta/8$. The algorithm has success probability at least $1 - \widetilde{\delta}$ using the following resources:
\begin{align}
    \cC_{\text{gate}} &= O\!\left( \Gamma \left(a_{\max} \Upsilon \lambda_{\mathrm{comm}} \kappa \widetilde{u}^{-1} \log(\eta^{-1})\right)^{1 + 1/p} \cdot \log\left( \widetilde{\varepsilon}^{-1} \log\big( \widetilde{u}^{-1} \log(\eta^{-1}) \big) \right) \right), \\
    \cC_{\text{sample}} &= O\!\left( \frac{ \left( \log \log(\widetilde{\varepsilon}^{-1}) \cdot \log(\widetilde{u}^{-1} \log(\eta^{-1})) \right)^2 }{ \widetilde{\varepsilon}^2 } \cdot \log\left( \frac{1}{\widetilde{\delta}} \right) \right)\,,\\
    \cC_{\text{pre}} &= O\!\left(  \widetilde{u}^{-1} \log(\eta^{-1}) \log\left(\frac{\log(\widetilde{u}^{-1} \log(1/\eta))}{\widetilde{\varepsilon}} \right)\right)\,.
\end{align}
where $\cC_{\text{sample}}$ denotes the number of quantum circuit repetitions, $\cC_{\text{gate}}$ is the maximum depth of any individual circuit, and $\cC_{\text{pre}}$ is the classical preprocessing cost.
\end{lemma}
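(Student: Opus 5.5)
The plan is to treat Algorithm~\ref{algo:approx-CDF} as a direct instance of Algorithm~\ref{algo:overlap} with $U=I$. The function is the shifted Heaviside $\Theta(xI-\kappa H)$, approximated by the Fourier series of Lemma~\ref{lemma-heaviside-fourier}. We then import the complexity bounds of Theorem~\ref{thm:overlap} with the Fourier parameters substituted in. Writing $\widetilde{\Theta}(xI-\kappa H)=\sum_{j\in S}\widetilde{\Theta}_j e^{ijx} e^{-ij\kappa H}$, the Fourier data are: coefficients $c_j=\widetilde{\Theta}_j e^{ijx}$, times $t_j=\kappa j$, maximal time $T=\kappa d=O(\kappa \widetilde{u}^{-1}\log(\eta^{-1}))$ (using $\varepsilon_F=\eta/8$), and weight $c=\|\widetilde{\Theta}\|_1=O(\log d)=O(\log(\widetilde{u}^{-1}\log\eta^{-1}))$. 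The phases $e^{ijx}$ have unit modulus. They therefore leave the sampling distribution $|\widetilde{\Theta}_j b_k|/\mathcal{G}$ unchanged and enter only through the classical reweighting in Step~6, which is why the same quantum data serve every $x$.

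Correctness follows as in Lemma~\ref{lem:overlap-correctness}: by linearity, the expectation of $C^{(i)}$ is $\sum_{j,k}\widetilde{\Theta}_j e^{ijx} b_k\,\tr[\rho\PC^{1/s_k}(s_k\kappa j)]$. By Lemma~\ref{lem:fourier-richardson}, this is within $c\,\varepsilon_R$ of $\widetilde{C}(x)$. Here the target is $\widetilde{C}$ itself rather than $C$, so no Fourier error enters. The error budget is then split into $\widetilde{\varepsilon}/2$ of extrapolation error and $\widetilde{\varepsilon}/2$ of statistical error.

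\begin{itemize}
\item \textbf{Extrapolation.} Set $\varepsilon_R=\widetilde{\varepsilon}/(2c)$. Lemma~\ref{lem:richardson-well-conditioned} with maximal time $T$ gives $O((a_{\max}\Upsilon\lcomm T)^{1+1/p}\log(c/\widetilde{\varepsilon}))$ Trotter steps. Multiplying by $\Gamma$ (absorbing $\Upsilon$ as a constant) gives the stated $\cC_{\text{gate}}$.
\item \textbf{Statistics.} Each $C^{(i)}$ has real and imaginary parts bounded by $\mathcal{G}=\|\vec b\|_1 c$. Hoeffding with a union bound gives $M=O(\mathcal{G}^2\widetilde{\varepsilon}^{-2}\log(1/\widetilde{\delta}))$. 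Here $\|\vec b\|_1=O(\log m)=O(\log\log(c/\widetilde\varepsilon))$, which we bound by $O(\log\log\widetilde{\varepsilon}^{-1})$ up to the lower-order $\log\log$ of $c$. This matches $\cC_{\text{sample}}$.
\item \textbf{Preprocessing.} Forming the product distribution costs $O(|S|\cdot m)=O(d\log(c/\widetilde{\varepsilon}))$, which is the stated $\cC_{\text{pre}}$.
\end{itemize}

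The main obstacle is the bookkeeping of how $\log(c/\widetilde\varepsilon)$ and $\log\log$ factors compose. The step needing care is the claim $\|\vec b\|_1=O(\log\log\widetilde\varepsilon^{-1})$ rather than $O(\log\log(c/\widetilde\varepsilon))$. I would handle this by noting $c=O(\log d)$ is polylogarithmic, so $\log\log(c/\widetilde\varepsilon)=O(\log\log\widetilde\varepsilon^{-1}+\log\log\log d)$. The residual term is absorbed into the stated bounds, or alternatively the statement is read with this loose factor implicit.

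A second point to check is the $\max\{1,\cdot\}$ in Lemma~\ref{lem:generic-Richardson-error-2}. When $\lcomm T<1$ the depth bound would otherwise read incorrectly. I would handle this by tacitly assuming the nontrivial regime $\lcomm T\gtrsim 1$, as elsewhere in the paper.
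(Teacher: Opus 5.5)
Your proposal is correct and matches the paper's proof. The paper likewise treats Algorithm~\ref{algo:approx-CDF} as Algorithm~\ref{algo:overlap} applied to the Fourier series $\widetilde{\Theta}$, and substitutes $c=O(\log d)$, $T=\kappa d$, $K=4d+3$ with $d=O(\widetilde{u}^{-1}\log(1/\eta))$ into Theorem~\ref{thm:overlap}. Your explicit error split, with no Fourier error since $\widetilde{C}$ is itself the target, is accurate. So is your remark that the stated sample bound drops a lower-order $\log\log c$ from $\|\vec{b}\|_1=O(\log\log(c/\widetilde{\varepsilon}))$; the paper's ``direct substitution'' glosses over both points.
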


\begin{proof}
Algorithm~\ref{algo:approx-CDF} is a direct instance of Algorithm~\ref{algo:overlap} applied to the approximate Heaviside function $\widetilde{\Theta}$. We call Algorithm~\ref{algo:overlap} using the following parameters:

\begin{itemize}
  \item Heaviside Fourier approximation parameters $u=\widetilde{u}$, $\varepsilon_F = \eta/8$, and $d = O(\widetilde{u}^{-1} \log(1/\eta))$
  \item Target precision $\widetilde{\varepsilon}$
  \item Failure probability upper bound $\widetilde{\delta}$
\end{itemize}
Taking the notation of Section \ref{sec:fourier-algos}, the size of the Fourier coefficients, maximum time parameter, and number of Fourier modes are bounded as (taking Fourier approximation error $\varepsilon_F = \eta/8$)
\begin{align}
&c = \|\widetilde{\Theta}\|_1 = O(\log(\widetilde{u}^{-1} \log(1/\eta)))\,,\\
&T = \kappa d = O(\kappa \widetilde{u}^{-1} \log(1/\eta))\,,  \\
&K = 4d+3 = O(\widetilde{u}^{-1} \log(1/\eta))\,.
\end{align}
These can be directly substituted into Theorem~\ref{thm:overlap} to yield the stated results.
\end{proof}

Now we have established the resource costs of the approximate CDF estimation procedure implemented by Algorithm~\ref{algo:approx-CDF}, we can write down the complexity of our full algorithm for ground state energy estimation.

\begin{theorem}[Ground state energy estimation]\label{thm:phase-estimation} Take any $n$-qubit Hamiltonian $H$ of the form in Defintion \ref{def:matrix} with spectral norm upper bound $\|H\|\leq \hat{K}$. Take any even $p$-th order product formula, and the LKW extrapolation scheme of Lemma \ref{lemma:well-conditioned-Richardson}. Suppose we have a trial state $\rho$ who has overlap with the ground energy space $\eta = \tr[\rho \Pi_0]$.
We give an algorithm to estimate the ground state energy of $H$ to additive precision $\varepsilon$ and with success probability at least $(1-\delta)$ using resources 
\begin{itemize}
    \item \textbf{Gate depth (per sample):}
    \begin{equation}
        \cC_{\text{gate}} = \widetilde{O}\!\left(\Gamma \Upsilon \left(\frac{a_{\max } \Upsilon \lcomm }{\varepsilon} \right)^{1+1 / p} \Big(\log \frac{1}{\eta}\Big)^{2+1/p}\right)\,,
    \end{equation}
    
    \item \textbf{Sample complexity:}
    \begin{equation}
        \cC_{\text{sample}} = \widetilde{O}\!\left(\frac{1}{\eta^2} \Big(\log\Big(\frac{\hat{K}}{\varepsilon}\Big)\Big)^2 \log\Big(\frac{1}{\delta}\Big)  \right)\,,
    \end{equation}

    \item \textbf{Classical preprocessing time:}
    \begin{equation}
        \cC_{\text{pre}} = \widetilde{O}\!\left( \frac{\hat{K}}{\varepsilon} \log^2\frac{1}{\eta}\right)\,,
    \end{equation}
\end{itemize}
where $\cC_{\textit{sample}}$ is the number of measurement samples required from $(n+1)$-qubit circuits consisting on Hadamard tests of the product formula, each with gate depth at most $\cC_{\textit{gate}}$.
\end{theorem}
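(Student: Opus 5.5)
The proof combines Lemma \ref{lemma:binary-search} with Lemma \ref{lemma:aCDF-algorithm}, fixing parameters so that the final energy precision is $\varepsilon$.

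\textbf{Step 1: fix the resolution.} By Lemma \ref{lemma:binary-search}, the binary search returns $E_0$ to additive precision $u/\kappa$. We therefore set $u/\kappa = \varepsilon$. Since $\kappa = (\pi-u)/(2\hat{K})$, this means $u = \Theta(\varepsilon/\hat{K})$, assuming $\varepsilon \lesssim \hat{K}$ so that $u\le \pi/2$.

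\textbf{Step 2: fix the subroutine calls.} The search makes $L = O(\log(1/u)) = O(\log(\hat{K}/\varepsilon))$ calls to $A(x,0.9u,\eta/8,\delta/L)$. Each call is instantiated by Algorithm \ref{algo:approx-CDF} with $\widetilde{u}=0.9u$, $\widetilde{\varepsilon}=\eta/8$ and $\widetilde{\delta}=\delta/L$.

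\textbf{Step 3: gate depth.} Substitute these parameters into Lemma \ref{lemma:aCDF-algorithm}. The key simplification is $\kappa\widetilde{u}^{-1} = \Theta(1/\varepsilon)$, since $\kappa/u = 1/\varepsilon$ exactly. The depth then becomes
\[
O\!\left(\Gamma\Upsilon\,(a_{\max}\Upsilon\lcomm \varepsilon^{-1}\log(1/\eta))^{1+1/p}\log\!\big(\eta^{-1}\log(\hat{K}\varepsilon^{-1}\log\eta^{-1})\big)\right).
\]
The factor $\Upsilon$ is restored from Theorem \ref{thm:overlap}. The final logarithm is $O(\log(1/\eta))$ up to $\log\log$ factors hidden in $\widetilde{O}$. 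This gives the stated $(\log 1/\eta)^{2+1/p}$.

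\textbf{Step 4: sample complexity.} Substituting into $\cC_{\mathrm{sample}}$ of Lemma \ref{lemma:aCDF-algorithm} gives
\[
\eta^{-2}\,(\log\log\eta^{-1})^2\,\big(\log(\hat{K}\varepsilon^{-1}\log\eta^{-1})\big)^2\,\log(L/\delta).
\]
Under $\widetilde{O}$ this is $\eta^{-2}(\log(\hat{K}/\varepsilon))^2\log(1/\delta)$; the $\log L = \log\log(\hat K/\varepsilon)$ term is absorbed.

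This step rests on one point. By the remark preceding Lemma \ref{lemma:aCDF-algorithm}, the quantum data are independent of $x$. So a single batch of samples can be classically reprocessed for all $L$ query points. A union bound over the $L$ points, already built into $\widetilde{\delta}=\delta/L$, then controls the failure probability, and no multiplicative factor of $L$ is paid.

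This is the main subtlety to justify. The same sample set is reused across adaptively chosen $x_\ell$, so independence across queries fails. Two routes handle this:
\begin{itemize}
\item Discretize $x$ to a grid of size $O(1/u)$ and union bound over the whole grid. This only changes $\log(1/\delta)$ to $\log(\hat{K}/(\varepsilon\delta))$, which is hidden in $\widetilde{O}$.
\item Alternatively, note that Hoeffding holds for each fixed $x$, and the adaptive $x_\ell$ are determined by the event that all fixed-$x$ estimates on the grid are accurate.
\end{itemize}

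\textbf{Step 5: preprocessing.} Preprocessing is $\cC_{\mathrm{pre}}$ of Lemma \ref{lemma:aCDF-algorithm} with $\widetilde{u}^{-1} = O(\hat{K}/\varepsilon)$, giving $\widetilde{O}(\hat{K}\varepsilon^{-1}\log^2(1/\eta))$. The $L$ classical postprocessing passes add only polylogarithmic overhead.
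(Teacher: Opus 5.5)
Your proposal is correct and matches the paper's proof essentially step for step. Like the paper, you plug $u=\kappa\varepsilon$ (so that $\kappa\widetilde{u}^{-1}=\Theta(1/\varepsilon)$), $\widetilde{u}=0.9u$, $\widetilde{\varepsilon}=\eta/8$ and $\widetilde{\delta}=\delta/L$ from Lemma \ref{lemma:binary-search} into Lemma \ref{lemma:aCDF-algorithm}, share one batch of quantum samples across all $L$ query points, and simplify. Your extra treatment of the adaptively chosen query points goes beyond the paper, which simply sets $\widetilde{\delta}=\delta/L$. Union-bounding over all $O(2^L)=O(\hat{K}/\varepsilon)$ points the search could ever query is the rigorous justification for reusing the samples, and it costs only a $\log(\hat{K}/(\varepsilon\delta))$ factor, which is hidden in $\widetilde{O}$.
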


\begin{proof}
    We use the search algorithm outlined in Lemma \ref{lemma:binary-search} which makes calls to an algorithm ${A}(x,\widetilde{u},\widetilde{\varepsilon},\widetilde{\delta})$ which approximately returns the approximate CDF, and use Lemma \ref{lemma:aCDF-algorithm} to instantiate ${A}(x,\widetilde{u},\widetilde{\varepsilon},\widetilde{\delta})$. Set $\widetilde{\varepsilon} = \eta/8$, $\widetilde{u}=0.9u$, $\widetilde{\delta}=\delta/L$ with $L=O(\log(1/u))$, as discussed in Lemma \ref{lemma:binary-search}.  Further, recall from Lemma \ref{lemma:binary-search} that to reach additive precision $\varepsilon$ in the ground state energy we set $u=\kappa \varepsilon$, and that our CDF is defined such that  $\kappa = O(\hat{K}^{-1})$, where $\hat{K}$ is an upper bound on the spectral norm of $H$ (see Lemma \ref{lemma:CDF-approximation}). Substituting all these parameters into Lemma \ref{lemma:aCDF-algorithm} yields 
    \begin{align}
    \cC_{\text{sample}} &= O\!\left(\frac{\big(\log\log(\eta^{-1})\cdot \log(\hat{K} \varepsilon^{-1}\log(\eta^{-1})\big)^2}{\eta^2} \log\Big(\frac{\log(\hat{K}\varepsilon^{-1})}{\delta}\Big)  \right)\,, \\\cC_{\text{gate}} &= O\!\left(\Gamma \left(a_{\max } \Upsilon \lcomm \varepsilon^{-1}\log(\eta^{-1}))\right)^{(1+1 / p)} \log (\eta^{-1}\log(\hat{K} \varepsilon^{-1}\log(\eta^{-1}))\right)\,,\\
    \cC_{\text{pre}} &= O\!\left( \log\big( \hat{K} \varepsilon^{-1} \log(\eta^{-1}) \big) \log\left(\frac{\log(\hat{K} \varepsilon^{-1} \log(1/\eta))}{\eta} \right)\right)
\end{align}
and we simplify the expressions in $\widetilde{O}$-notation in the theorem statement.
\end{proof}

\section{Application: Green's functions}\label{sec:greens}

In this section we apply our algorithm framework to the estimation of Green's functions in the context of many-body physics. Green's functions capture key dynamical and spectral properties of quantum systems, and their accurate estimation provides insights into physical properties such as particle propagation, kinetic energy, and spectral densities. 

We define the advanced and retarded Green's function in the frequency domain (denoted as $G^{(+)}(\omega)$ and $G^{(-)}(\omega)$ respectively) as the matrix-valued functions with matrix elements
\begin{align}
    G_{ij}^{(+)}(\omega) &:= \bra{E_0}\hat{a}_i \left(\omega - (H-E_0) + i\eta_{\textrm{broad}} \right)^{-1} \hat{a}^{\dag}_j \ket{E_0}\,,\label{eq:greens1} \\
    G_{ij}^{(-)}(\omega) &:= \bra{E_0}\hat{a}^{\dag}_i \left(\omega - (H-E_0) - i\eta_{\textrm{broad}} \right)^{-1} \hat{a}_j \ket{E_0}\,, \label{eq:greens2}
\end{align}
where $E_0$ is the ground state energy of $H$, $\eta_{\textrm{broad}}$ is a broadening parameter that determines the resolution of the Green's function, and $\hat{a}^{\dag}_i,\hat{a}_i$ are Fermionic single-particle creation and annihilation operators. At its core, we see that there is a matrix function, sometimes called the resolvent operator 
\begin{equation}
    R(\omega \pm i \eta_{\text{broad}}, {H}) = (\omega \pm i\eta_{\textrm{broad}} - \hat{H})^{-1}\,,
\end{equation}
which amounts to a shifted matrix inverse (we denote $\hat{H} = (H-E_0)$). In the rest of this section we will assume the ground state (or an approximation thereof) is given to us for simplicity of exposition -- this can be instantiated via a separate algorithm, or even within our matrix function framework by considering a function that (approximately) projects some input state to the ground state. The creation and annihilation operators can be mapped to qubit systems using standard mappings such as the Jordan-Wigner transformation, and lead to $O(1)$-depth unitaries. Thus, we see that  estimation of the quantities in Eqs.~\eqref{eq:greens1} and \eqref{eq:greens2} correspond exactly to Task \hyperref[eq:task-1]{1(i)}.

In order to estimate Green's functions, we need to approximate the resolvent operator with a Fourier series. Such approximations have conveniently already been found in prior art, and we quote one such approximation here, originally used in the context of LCU algorithms \cite{keen2021quantumGSGreens}. We consider a Fourier series of the form 

\begin{equation}
    h(\omega + i \eta_{\text{broad}}, \hat{H}) = -i \sum_{k=0}^{N_c} \Delta t \, e^{i(\omega + i \eta_{\text{broad}} - \hat{H}) k \Delta t}.
\end{equation}

\begin{lemma}[Fourier approximation of the resolvent](\cite[Theorem 3]{keen2021quantumGSGreens}) 
\label{lemma:resolvent_lcu}
Let $\hat{H}$ have spectrum normalized $\lambda(\hat{H}) \subseteq [0,1]$. For any $\omega \in \sigma(\hat{H})$ and broadening parameter $\eta_{\text{broad}}>0$ we have
\begin{equation}
\| R(\omega + i \eta_{\text{broad}}, \hat{H}) - h(\omega + i \eta_{\text{broad}}, \hat{H}) \| \le \varepsilon
\end{equation}
provided that
\begin{equation}
N_c = O\Big(\frac{1}{\eta_{\text{broad}} \varepsilon} \log \frac{1}{\eta_{\text{broad}} \varepsilon} \Big), \qquad
\Delta t = \min\Big\{\frac{\varepsilon}{2}, \frac{3}{\|\hat{H}\|} \Big\} = O(\varepsilon).
\end{equation}
\end{lemma}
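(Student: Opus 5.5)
The plan is to work eigenvalue by eigenvalue. Since $\hat{H}$ is Hermitian, both $R(\omega + i\eta_{\text{broad}},\hat{H})$ and $h(\omega+i\eta_{\text{broad}},\hat{H})$ are eigenvalue transforms in the sense of Definition \ref{def:eigval-transform}. The operator-norm error is therefore the maximum over $x\in\lambda(\hat{H})\subseteq[0,1]$ of the scalar error $|r(x)-h(x)|$. Here $r(x)=(z-x)^{-1}$ with $z=\omega+i\eta_{\text{broad}}$, and $h(x) = -i\Delta t\sum_{k=0}^{N_c} e^{i(z-x)k\Delta t}$. The starting point is the integral representation
\begin{equation}
    (z-x)^{-1} = -i\int_0^\infty e^{i(z-x)t}\,dt\,,
\end{equation}
which holds because $\operatorname{Im} z = \eta_{\text{broad}}>0$, so the integrand decays like $e^{-\eta_{\text{broad}} t}$. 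Then $h$ is a left Riemann sum of this integral with step $\Delta t$, truncated at $t_{\max}=N_c\Delta t$. I would split the error into a truncation part and a discretization part, and make each at most $\varepsilon/2$.

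\emph{Truncation.} The tail satisfies $\left|\int_{t_{\max}}^\infty e^{i(z-x)t}dt\right|\le e^{-\eta_{\text{broad}} t_{\max}}/\eta_{\text{broad}}$. This is at most $\varepsilon/2$ once $t_{\max} \ge \eta_{\text{broad}}^{-1}\log(2/(\eta_{\text{broad}}\varepsilon))$. With $\Delta t = \Theta(\varepsilon)$, this gives $N_c = O\big(\frac{1}{\eta_{\text{broad}}\varepsilon}\log\frac{1}{\eta_{\text{broad}}\varepsilon}\big)$, matching the statement.

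\emph{Discretization.} This is the main obstacle, because a naive Riemann-sum bound grows with $t_{\max}$. Instead I would evaluate the geometric series exactly. With $w = e^{i(z-x)\Delta t}$ and $|w| = e^{-\eta_{\text{broad}}\Delta t}<1$, the infinite Riemann sum is $-i\Delta t/(1-w)$. Writing $y=i(z-x)\Delta t$, so that $w = e^y$, the quantity to compare with $r(x) = -i\Delta t\cdot(-1/y)$ is
\begin{equation}
    \frac{\Delta t}{1-e^{y}} + \frac{\Delta t}{y} = \Delta t\left(\frac{1}{y}-\frac{1}{e^y-1}\right)\,.
\end{equation}
The bracket is analytic with value $1/2$ at $y=0$. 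It stays bounded by an absolute constant on a disc $|y|\le y_0$ that avoids the poles at $2\pi i\mathbb{Z}\setminus\{0\}$. Here $|y| \le \Delta t\,(|\omega|+1+\eta_{\text{broad}})$, since $x\in[0,1]$ and $\omega\in\sigma(\hat{H})\subseteq[0,1]$. The constraint $\Delta t\le 3/\|\hat{H}\|$, together with $\Delta t \le \varepsilon/2$ and $\varepsilon,\eta_{\text{broad}}$ at most constant, keeps $|y|$ inside that disc. The discretization error is therefore at most $\Delta t/2$ up to a constant, which is $\le\varepsilon/2$ for $\Delta t=\varepsilon/2$ after adjusting constants. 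The finite truncation of the discrete sum contributes $\Delta t\,|w|^{N_c+1}/|1-w|$. Since $|1-w|\gtrsim \eta_{\text{broad}}\Delta t$ for small $y$, this is of the same order as the continuous tail and is handled by the same choice of $N_c$.

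Finally I would combine the two bounds by the triangle inequality and take the supremum over $x$ in the spectrum. This requires checking that every constant is uniform in $\omega$ and $x$ in $[0,1]$. That uniformity is where the hypothesis that $\omega$ lies in the normalized spectral window is used.
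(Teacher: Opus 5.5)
The paper does not prove this lemma; it is quoted directly from Keen et al.\ (Theorem 3), so there is no in-paper argument to compare against. Your proof is correct and self-contained:
\begin{itemize}
\item Both operators are functions of $\hat H$, so the reduction to $\max_{x\in\sigma(\hat H)}|r(x)-h(x)|$ is valid.
\item The splitting $r-h=(r-h_\infty)+(h_\infty-h)$ with the exact identity $h_\infty-r=-i\Delta t\,g(y)$, where $g(y)=\tfrac1y-\tfrac1{e^y-1}$, is right.
\item The discrete tail satisfies $\Delta t\,|w|^{N_c+1}/|1-w|\le e^{-\eta_{\text{broad}}N_c\Delta t}/\eta_{\text{broad}}$, using $1-e^{-u}\ge ue^{-u}$. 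This gives the stated $N_c$ when $\Delta t=\Theta(\varepsilon)$.
\end{itemize}
You identified the crux correctly. Summing the geometric series exactly makes the discretization error $O(\Delta t)$ uniformly in $\eta_{\text{broad}}$; it is essentially the Euler--Maclaurin endpoint correction $\Delta t/2$. A termwise Riemann-sum bound would only give about $|z-x|\Delta t/\eta_{\text{broad}}$, costing an extra factor $1/\eta_{\text{broad}}$ in $N_c$. Your separate estimate of the continuous tail is not needed, since the two-term splitting above already closes the argument.

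Two small sharpenings.

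First, you do not need the disc, nor any smallness of $\varepsilon$ and $\eta_{\text{broad}}$. We have $\operatorname{Re}y=-\eta_{\text{broad}}\Delta t\le 0$, and $|\operatorname{Im}y|=|\omega-x|\,\Delta t\le\|\hat H\|\Delta t\le 3$ because $\omega,x\in[0,\|\hat H\|]$ and $\Delta t\le 3/\|\hat H\|$. So you only need $g$ on the half-strip $\{\operatorname{Re}y\le 0,\ |\operatorname{Im}y|\le 3\}$, which stays away from the poles $\pm 2\pi i$. This is exactly the role of the hypotheses $\Delta t\le 3/\|\hat H\|$ and $\omega\in\sigma(\hat H)$, independent of any normalization.

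Second, $|g|\le 1$ on this half-strip. The argument goes as follows:
\begin{itemize}
\item On the imaginary axis, $g(i\theta)=\tfrac12+i\bigl(\tfrac12\cot\tfrac\theta2-\tfrac1\theta\bigr)$, which has modulus below $0.6$ for $|\theta|\le 3$.
\item As $\operatorname{Re}y\to-\infty$, $g\to 1$ with $|g|<1$.
\item A short computation handles the edges $\operatorname{Im}y=\pm 3$.
\item The maximum modulus principle on truncated rectangles then covers the interior.
\end{itemize}
Hence the discretization error is at most $\Delta t\le\varepsilon/2$ for the stated choice $\Delta t=\min\{\varepsilon/2,3/\|\hat H\|\}$, with no ``adjusting constants'' needed.
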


We are now able to write down an algorithm for estimating Green's functions using Randomized Extrapolation Trotterization.


\begin{theorem}
[Green's functions]
\label{thm:resolvent_estimate}
Consider a Hamiltonian of the form in Def.~\ref{def:matrix} with spectrum in $[0,1]$. Assuming access to the ground state $\ket{E_0}$, the matrix entries to the retarded and advanced Green's functions in Eq.~\eqref{eq:greens1} and \eqref{eq:greens2} can be estimated to additive error $\varepsilon$ with success probability at least $(1-\delta)$ using Algorithm~\ref{algo:overlap} with the following complexities:
\begin{itemize}
    \item \textbf{Gate depth (per sample):}
    \begin{equation}
        \cC_{\text{gate}} = \widetilde{O}\Bigg(\Gamma \Upsilon \cdot \Big( a_{\text{max}} \, \Upsilon \, \lambda_{\text{comm}} \cdot \frac{1}{\eta_{\text{broad}}} \Big)^{1 + \frac{1}{p}} \cdot \left(\log \frac{1}{ \varepsilon}\right)^{2+\frac{1}{p}} \Bigg)\,,\,,
    \end{equation}
    
    \item \textbf{Sample complexity:}
    \begin{equation}
        \cC_{\text{sample}} = \widetilde{O}\left(\frac{1}{ \eta_{\text{broad}}^2 \varepsilon^2} \cdot \log \frac{1}{\delta}\right)\,,
    \end{equation}

    \item \textbf{Classical preprocessing time:}
    \begin{equation}
        \cC_{\text{pre}} = \widetilde{O}\left( \frac{1}{ \eta_{\text{broad}}^2 \varepsilon^2} \right)\,.
    \end{equation}
\end{itemize}
Each quantum circuit consists of an $n+1$ qubit Hadamard test on the product formula.
\end{theorem}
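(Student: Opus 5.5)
The plan is to cast the Green's function entries as an instance of Task 1(i) and invoke Theorem \ref{thm:overlap} directly, with the Fourier data supplied by Lemma \ref{lemma:resolvent_lcu}.

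\emph{Reduction to Task 1(i).} Under Jordan--Wigner, each $\hat a_i$ is a sum of two Pauli strings, $\hat a_i = (X_i + iY_i)/2$ dressed by $Z$-strings. Expanding $G^{(\pm)}_{ij}$ by bilinearity gives at most four terms of the form $\frac14 \bra{E_0} P\, R(\omega\pm i\eta_{\text{broad}},\hat H)\, Q\ket{E_0}$ with Pauli strings $P,Q$. Each term equals $\tr[\rho\, U f(H)]$ with $\rho = Q\ket{E_0}\!\bra{E_0}Q^\dagger$, $U = P$ (an $O(1)$-depth unitary), and $f = R$. Estimating each of the four terms to precision $\varepsilon/4$ with failure probability $\delta/4$ and applying a union bound preserves the asymptotic scaling.

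\emph{Fourier data.} Use the series $h$ from Lemma \ref{lemma:resolvent_lcu} with error $\varepsilon_F = \varepsilon/3$. Writing it as $\sum_k c_k e^{-i\hat H t_k}$ gives $t_k = k\Delta t$ and $c_k = -i\Delta t\, e^{i(\omega+i\eta_{\text{broad}})k\Delta t}$. The shift $E_0$ contributes only a scalar phase $e^{iE_0 t_k}$, which is absorbed classically into $c_k$; the advanced function is handled by conjugation. Since $|c_k|\le \Delta t\, e^{-\eta_{\text{broad}} k\Delta t}$, the weight is
\begin{equation}
c \le \sum_k \Delta t\, e^{-\eta_{\text{broad}} k \Delta t} = O(1/\eta_{\text{broad}})
\end{equation}
by a geometric sum. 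This estimate is the crucial one: it beats the naive $N_c\Delta t$. The number of terms is $K = N_c + 1 = \widetilde O(1/(\eta_{\text{broad}}\varepsilon))$, and the maximum time is
\begin{equation}
T = N_c \Delta t = O\!\left(\eta_{\text{broad}}^{-1}\log\frac{1}{\eta_{\text{broad}}\varepsilon}\right).
\end{equation}

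\emph{Substitution.} Plugging into Theorem \ref{thm:overlap}:
\begin{itemize}
\item \textbf{Sample complexity.} $C_{\mathrm{sample}} = \widetilde O(c^2/\varepsilon^2)\log(1/\delta) = \widetilde O\big(1/(\eta_{\text{broad}}^2\varepsilon^2)\big)\log(1/\delta)$.
\item \textbf{Gate depth.} The bound $\Gamma\Upsilon \log(c/\varepsilon)\,(a_{\max}\Upsilon\lcomm T)^{1+1/p}$ yields the stated depth. The factor $\log(1/\varepsilon)^{2+1/p}$ arises from $\log(c/\varepsilon)$ together with $T$'s logarithmic factor raised to the power $1+1/p$; I would absorb the $\log(1/\eta_{\text{broad}})$ pieces into $\widetilde O$.
\item \textbf{Preprocessing.} $C_{\mathrm{pre}} = O(K\log(c/\varepsilon)) = \widetilde O\big(1/(\eta_{\text{broad}}\varepsilon)\big)$, which is within the stated $\widetilde O\big(1/(\eta_{\text{broad}}^2\varepsilon^2)\big)$.
\end{itemize}

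\emph{Main obstacle.} The delicate step is checking the hypotheses of Lemma \ref{lemma:resolvent_lcu}: spectrum in $[0,1]$ and the condition on $\omega$. I also have to confirm that the operator-norm guarantee is applied to $\hat H = H - E_0$, whose spectrum still lies in $[0,1]$ because $H$'s spectrum does and $E_0\ge 0$. Beyond that, all that remains is to verify the geometric bound on $c$ uniformly in $\Delta t = O(\varepsilon)$, which should give $c \le \Delta t/(1-e^{-\eta_{\text{broad}}\Delta t}) = O(1/\eta_{\text{broad}})$ for $\eta_{\text{broad}}\Delta t\le 1$.
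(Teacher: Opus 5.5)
Your argument is the paper's: read off $c$, $T=N_c\Delta t$ and $K=N_c+1$ from Lemma~\ref{lemma:resolvent_lcu} and substitute them into Theorem~\ref{thm:overlap}. Your geometric-sum bound $c=O(1/\eta_{\text{broad}})$ is a valid sharpening of the paper's cruder $c=O(N_c\Delta t)=O(\eta_{\text{broad}}^{-1}\log\frac{1}{\eta_{\text{broad}}\varepsilon})$, but it only removes logarithmic factors already hidden by $\widetilde O$, so it is not actually crucial.

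There is one small slip in your Jordan--Wigner reduction. With $\rho=Q\ket{E_0}\!\bra{E_0}Q^\dagger$ you get $\tr[\rho U f(H)]=\bra{E_0}Q^\dagger U f(H) Q\ket{E_0}$, so you need $U=QP$ (still a Pauli string up to a phase) rather than $U=P$.
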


\begin{proof}
We use the Fourier approximation of Lemma~\ref{lemma:resolvent_lcu}. In the notation of Section \ref{sec:fourier-algos}, the Fourier parameters for Fourier error $\varepsilon_F$ are
\begin{align}
    &c(\varepsilon_F) = \sum_{k=0}^{N_c} \Delta t = O(N_c \Delta t) = O\Big(\frac{1}{\eta_{\text{broad}}} \log \frac{1}{\eta_{\text{broad}} \varepsilon_F} \Big)\,,\\
    &t_{\max}(\varepsilon_F) = N_c \Delta t = O\Big(\frac{1}{\eta_{\text{broad}}} \log \frac{1}{\eta_{\text{broad}} \varepsilon_F} \Big)\,,\\
    &K(\varepsilon_F) = N_c = O\Big(\frac{1}{\eta_{\text{broad}} \varepsilon_F} \log \frac{1}{\eta_{\text{broad}} \varepsilon_F} \Big)\,.
\end{align}
We can very simply substitute these parameters into Theorem~\ref{thm:overlap} to yield:
\begin{align}
\cC_{\text{sample}} &= O\Bigg( \frac{1}{\varepsilon^2}
    \left( \frac{1}{\eta_{\text{broad}}} \log \frac{1}{\eta_{\text{broad}} \varepsilon} \right)^2 \left(\log \log 
    \left( \frac{1}{\eta_{\text{broad}} \varepsilon} \log \frac{1}{\eta_{\text{broad}} \varepsilon}\right) \right)^2
    \cdot \log \frac{1}{\delta}
\Bigg)\,,\\
\cC_{\text{gate}} &= O\Bigg( \Gamma \Upsilon \cdot
    \Big( a_{\text{max}} \, \Upsilon \, \lambda_{\text{comm}} \cdot \frac{1}{\eta_{\text{broad}}} \log \frac{1}{\eta_{\text{broad}} \varepsilon} \Big)^{1 + \frac{1}{p}}   \cdot  \log \left( \frac{1}{\eta_{\text{broad}} \varepsilon} \log \frac{1}{\eta_{\text{broad}} \varepsilon} \right)
\Bigg)\,, \\
\cC_{\text{pre}} &= O\left( \frac{1}{\eta_{\text{broad}} \varepsilon} \log \frac{1}{\eta_{\text{broad}} \varepsilon} \log\Big(\frac{1}{\eta_{\text{broad}} \varepsilon} \log \frac{1}{\eta_{\text{broad}} \varepsilon} \Big) \right)\,,
\end{align}
and we simplify these expressions in $\widetilde{O}$ notation in the theorem statement.
\end{proof}

\section{Application: Time-Evolved States}\label{sec:time-evol-distribution}

Theorem \ref{thm:distribution-recovery} directly gives an application for probing the distributions of time-evolved states for some time of interest $t$, that is, by considering the function $f(H) = \exp(-iHt)$. Here the Fourier parameters are trivial ($c=1, T=t)$ and the result can be directly stated.

\begin{theorem}[Distribution recovery for time-evolved states]
\label{thm:distribution-time-evol}
Consider an $n$-qubit Hamiltonian of the form in Definition \ref{def:matrix} and denote its time-evolved distribution on accessible input state $\ket{\psi}$ as the vector $\vec{p}$ with entries $p_j := |\bra{j}e^{-iHt}\ket{\psi}|^2$.
We give an algorithm to return $\vec{v}$ such that $\|\vec{v} -\vec{p} \|_2 \leq \varepsilon$ with success probability at least \( (1 - \delta) \) and

\begin{itemize}
    \item \textbf{Gate depth (per sample):}
    \begin{equation}
    C_{\mathrm{gate}} = 
    O\!\left( \Gamma \Upsilon \left( a_{\max} \Upsilon \lambda_{\mathrm{comm}} \, t \right)^{1 + \frac{1}{p}} \log(1/\varepsilon) \right),
    \end{equation}

    \item \textbf{Sample complexity:}
    \begin{equation}
    C_{\mathrm{sample}} = O\!\left( \frac{(\log \log (1/\varepsilon))^4}{\varepsilon^2} \cdot \log\left(\frac{1}{\delta}\right) \right),
    \end{equation}

    \item \textbf{Classical preprocessing time:}
    \begin{equation}
    C_{\mathrm{pre}} = O(\log(1/\varepsilon))\,.
    \end{equation}
\end{itemize}
Each quantum circuit consists of an $n+1$ qubit Hadamard test on the product formula.
\end{theorem}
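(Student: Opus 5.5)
The plan is to obtain this as a direct corollary of Theorem~\ref{thm:distribution-recovery}, specialised to $f(H)=e^{-iHt}$ with $U=I$. No approximation is needed here: the trivial Fourier series $F=\{(c_1,t_1)\}=\{(1,t)\}$ with $K=1$ represents $f$ exactly. Hence $\varepsilon_F=0$, the Fourier weight is $c=1$, and the maximal time parameter is $T=t$.

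We also have $\hat f=\max(1,\|e^{-iHt}\ket{\psi}\|_2)=1$ by unitarity. This means the rescalings $\varepsilon/9\hat f$ appearing in Theorem~\ref{thm:distribution-recovery} collapse to absolute constants times $\varepsilon$. The vector $\vec f$ of Eq.~\eqref{eq:f-distribution} is then exactly $\vec p$.

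The single substantive adjustment is the error budget. Because $\varepsilon_F=0$, I would rerun the error decomposition from the proof of Theorem~\ref{thm:distribution-recovery} with the Richardson error alone, choosing $\varepsilon_R=\Theta(\varepsilon)$ and $\varepsilon_s=\Theta(\varepsilon)$. The $\ell_1$ bound
\[
\|\vec f_{K,m}(H)-\vec p\|_1\le 3(c\,\varepsilon_R)
\]
still holds under the condition $\varepsilon\le 1$, since it used only Lemma~\ref{lem:fourier-richardson} and H\"older's inequality. For the statistical part, Lemma~\ref{lem:distribution-estimation} applies to $G=\sum_j b_j\PC^{1/s_j}(s_jt)$ with weight $S=\|\vec b\|_1=O(\log m)=O(\log\log(1/\varepsilon))$ by Lemma~\ref{lemma:well-conditioned-Richardson}. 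This gives $O(S^4\varepsilon^{-2}\log(1/\delta))$ samples, which is the stated sample complexity.

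The gate depth follows from Lemma~\ref{lem:richardson-well-conditioned} with $T=t$ and $\varepsilon_R=\Theta(\varepsilon)$. Each of the $\Upsilon$ stages costs $\Gamma$ exponentials, so the depth is $\Gamma\Upsilon$ times $O\big((a_{\max}\Upsilon\lcomm t)^{1+1/p}\log(1/\varepsilon)\big)$ Trotter steps. Classical preprocessing is the construction of the $K^2m^2=O(\log^2(1/\varepsilon))$-entry distribution over pairs $(j,j')$. This can be reduced to $O(m)=O(\log(1/\varepsilon))$ by sampling the two indices independently from the $m$-entry distribution $\{|b_j|/\|\vec b\|_1\}$, which gives the stated $C_{\mathrm{pre}}$. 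The circuit count of $n+1$ qubits comes from the generalized Hadamard test, which uses one control qubit.

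The only step I expect to need care is this product-distribution remark and the consistency of the constants. The remark is that the joint distribution in Algorithm~\ref{algo:distribution} factorizes as $\frac{|b_j|}{S}\cdot\frac{|b_{j'}|}{S}$, so it never needs to be tabulated. Otherwise everything is bookkeeping inherited from Theorem~\ref{thm:distribution-recovery}.
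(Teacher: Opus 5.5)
Your proposal is correct and follows the paper's route: the paper obtains this theorem by directly substituting the trivial Fourier data $c=1$, $T=t$ (with $\hat f=1$) into Theorem~\ref{thm:distribution-recovery}. Rerunning the error budget with $\varepsilon_F=0$ is therefore optional, since an exact series already satisfies $\varepsilon_F\le\varepsilon/9$. Your observation that the sampling distribution in Algorithm~\ref{algo:distribution} factorizes as $\frac{|b_j|}{S}\cdot\frac{|b_{j'}|}{S}$ is a worthwhile addition. Literal substitution into Theorem~\ref{thm:distribution-recovery} only gives $C_{\mathrm{pre}}=O(\log^2(1/\varepsilon))$, and the factorization is what justifies the stated $O(\log(1/\varepsilon))$.
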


Let us discuss now what quality of approximation is available if one uses regular Trotter formulae or any other conventional quantum algorithm. Conventionally, one asks for a unitary $U$ which approximates the unitary $e^{-iHt}$ to operator norm $\varepsilon$, which corresponds to a state whose associated probability distribution approximates the true probability distribution $\ell_1$-norm error $O(\varepsilon)$. However, in practice one must collect samples to estimate the probability distribution corresponding to $U$. This incurs statistical noise, and it is not possible to obtain an $\varepsilon$-additive $\ell_1$-norm error without incurring dimension dependence in the sample complexity. Instead, one can appeal to vector Bernstein inequalities, which state that one can incur an $\varepsilon$-additive $\ell_2$-norm error by spending $O(1/\varepsilon^2)$ shots. Thus, one can directly compare the above result to algorithms for time evolution such as regular Trotterization, where the gate depth per sample is $O\Big( \Gamma \Upsilon (\acomm^{(p+1)})^{1/p} \left( \Upsilon \, t \right)^{1 + \frac{1}{p}} (1/\varepsilon)^{1/p} \Big)$, and the number of samples is $O(\log(1/\delta)/\varepsilon^2)$.

\section{Numerical studies}
\label{sec:numerical-comparison}

In this section we carry out some numerical studies to give some insight into how well Randomized Extrapolation Trotterization (or Trotter extrapolation in general) may perform in practice.

We start by explicitly plotting the error bounds we derive for an electronic structure problem. We plot analytical bounds using both the well-conditioned LKW extrapolation schedule (as defined in \cref{lemma:well-conditioned-Richardson}), as well as extrapolation schedules found by simple brute-force search. We see that under the assumptions made, Randomized Extrapolation Trotterization starts to beat regular Trotter (in upper bounds) at a mild target precision. Moreover, extrapolation schedules generated by simple brute-force search improve on the asymptotically well-conditioned LKW extrapolation schedule. 

\subsection{Constant factor comparison and heuristic optimization}\label{sec:trotter-constants}

We analyze resource costs for the Hamiltonian electronic structure of plane-waves, for which it is known $\acomm^{(j)} = O(n^j)$, where $n$ is the number of basis functions. For simplicity, we consider the task of compiling a time signal. In order to generate resource estimates, we pull out constant factors for Randomized Extrapolation Trotterization and regular Trotterised time evolution.

\medskip

\textbf{Constant factors for regular product formulae.} The number of Trotter steps for a $p$-th-order product formula to attain additive error $\varepsilon$ in the time signal is shown in \cite[Appendix C]{childs2021TheoryTrotter} to be bounded as
\begin{equation}\label{eq:trotter-steps}
r_{\mathrm{Trotter}} \leq \left(\frac{2}{1+p}\right)^{\frac{1}{p}}
\bigl(\alpha_{\mathrm{comm}}^{(p+1)}\bigr)^{\frac{1}{p}}
(\Upsilon T)^{1+\frac{1}{p}}
\varepsilon^{-\frac{1}{p}}
.
\end{equation}
For our resource comparison, we take the Suzuki-Trotter formulae \cite{suzuki1991general}, for which the number of stages for the $2k$th order formula (for $k\geq 1$) is
\begin{equation}
    \Upsilon_{2k} = 2 \cdot 5^{k-1}\,,
\end{equation}
Consequently, the total gate depth for a Suzuki-Trotter formula is 
\begin{align}
    \Gamma
\bigl(\alpha_{\mathrm{comm}}^{(p+1)}\bigr)^{\frac{1}{p}}
T^{1+\frac{1}{p}}
\varepsilon^{-\frac{1}{p}} \quad &\text{for $p=1$\,,}\\
\Gamma \left(\frac{2}{1+p}\right)^{\frac{1}{p}} \cdot \left( 2 \cdot 5^{k-1} \right)^{2+\frac{1}{p}}
\bigl(\alpha_{\mathrm{comm}}^{(p+1)}\bigr)^{\frac{1}{p}}
T^{1+\frac{1}{p}}
\varepsilon^{-\frac{1}{p}} \quad &\text{for $p=2k$; $k \geq 1$\,.}
\end{align}

\medskip

\textbf{Refined condition number.} We make a further refinement to the bounds in Lemmas \ref{lem:generic-Richardson-error-2} and \ref{lem:specialized-Richardson-error}. We recall in both derivations we  amplify the error series remainder by the full extrapolation coefficient norm $\|\vec{b}\|_1$. However this is an overestimate. Recall from Lemma~\ref{lem:time-evol-error-series} that each Trotterized time evolution operator admits the operator-valued expansion
\begin{equation}\label{eq:trotter-series-recall}
\cP^{1/s_k}(s_k T) - e^{-iHT} = \sum_{\substack{j \in \sigma \mathbb{Z}_+\\
j \geq p}} s_{k}^{\,j}\, \tilde{E}_{j+1}(T)\,,
\end{equation} with $s_k =  s/ q_k$. Inserting this into the extrapolation $\cP^{(R)}_{p, m}(T) =  \sum_{k=1}^m b_k \cP^{1/s_k} (s_k T)$ and using the fact that the schedule $\{b_k, q_k\}_{k=1}^m$ cancels the powers $s^{p}, s^{p+\sigma}, \ldots, s^{p + \sigma (m-2)}$ (as in Lemma~\ref{lem:specialized-Richardson-error}), the residual error operator takes the form
\begin{align}\label{eq:tilde-b-bound}
\cP^{(R)}_{p, m}(T)- e^{-iHT}\;&=\;\sum_{\substack{j \in \sigma\mathbb{Z}_+ \\ j \geq \sigma(m-1)+p}} s^{\,j}\ \Big(\sum_{k=1}^{m} \frac{b_k}{q_k^{\,j}}\Big) \tilde{E}_{j+1}(T)\,. \\
&= \;\sum_{\substack{j \in \sigma\mathbb{Z}_+ \\ j \geq \sigma(m-1)+p}} s_1^{\,j}\ \sum_{k=1}^{m} {b_k} \Big(\frac{q_1}{q_k}\Big)^{j} \tilde{E}_{j+1}(T) \\
\implies \left\| \cP^{(R)}_{p, m}(T)- e^{-iHT} \right\| &\leq \; \sum_{k=1}^{m} |{b_k}| \Big(\frac{q_1}{q_k}\Big)^{p} \sum_{\substack{j \in \sigma\mathbb{Z}_+ \\ j \geq \sigma(m-1)+p}} | s_1^{\,j}| \left\| \tilde{E}_{j+1}(T) \right\| \,.
\end{align}
As a result, for Lemma \ref{lem:specialized-Richardson-error}, rather than $\|\vec{b}\|_1$, the remainder can be thought of as being amplified by $\|\vec{b}^{(p)}\|_1$, the norm of the vector whose entries are $b_kq_1^p/q_k^p$. Similarly, for Lemma \ref{lem:generic-Richardson-error-2} the remainder is amplified by $\|\vec{b}^{(1)}\|_1$. In practice this can lead to much smaller bounds when few extrapolation points are used. We note however that the sample complexity is still controlled by $\|\vec{b}\|_1$.

\medskip

\textbf{Constant factors for plane-wave dual basis.} For electronic-structure Hamiltonians in the plane-wave dual basis with $n$ basis functions it is known that 
\begin{equation}\label{eq:comm-scaling-basic}
\alpha^{(j)}_{\text{comm}} = O(n^{j})\,,
\qquad
\lambda_{\text{comm}} = O(n)\,,
\end{equation}
from \cite{childs2021TheoryTrotter} and \cite{chakraborty2025quantum} respectively. Thus, asymptotically, the dependence of extrapolated product formulae on $n$ is the same as that for standalone product formulae. In the following Lemma we prove a result with refined constant factors for any exponentially growing $\acomm^{(j)}$.

\begin{lemma}[Bound on $\lambda_{\mathrm{comm}}$ for exponentially growing $\acomm^{(j)}$]\label{lem:lcomm-tight-bound}
Consider a $p$-th order product formula and let $\alpha_{\mathrm{comm}}^{(j)}$ and $\lcomm$ be defined as in Definitions \ref{def:acomm} and \ref{def:lcomm} respectively.
Assume there exists $A > 0$ such that
\begin{equation}\label{eq:alpha-comm-bound}
\alpha_{\mathrm{comm}}^{(t)} \le A\, n^{j}, \qquad \forall\, j \ge 1.
\end{equation}
Then, we have
\begin{equation}\label{eq:lambda-comm-bound-explicit}
\lambda_{\mathrm{comm}}
\;\le\;
n \cdot \max\{A^{\frac{1}{p+1}}, 1\} \cdot \sup_{0\leq k \leq 1/p} \left(\frac{e }{(p + 1)^{2}}\left( \frac{1}{k} - p \right) \right)^{k}\,.
\end{equation}
\end{lemma}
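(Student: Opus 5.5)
We will bound each $\lambda_{j,l}$ uniformly in $(j,l)$ and then take the supremum. Substitute the assumed bound $\alpha_{\mathrm{comm}}^{(j_\kappa+1)} \le A n^{j_\kappa+1}$ into the definition. For a composition $j_1+\dots+j_l=j$ with every $j_\kappa\ge p$, the product becomes
\begin{equation}
\prod_{\kappa=1}^l \frac{A n^{j_\kappa+1}}{(j_\kappa+1)^2} \le \frac{A^l\, n^{j+l}}{(p+1)^{2l}},
\end{equation}
using $(j_\kappa+1)^2\ge (p+1)^2$. The $n$-dependence is therefore exactly $n^{j+l}$, and after taking the $(j+l)$-th root it contributes the overall factor $n$. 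Next we count compositions. Writing $j_\kappa = p + j'_\kappa$ with $j'_\kappa\ge 0$ and $\sum_\kappa j'_\kappa = j-lp$, stars-and-bars gives at most $\binom{j-lp+l-1}{l-1}\le \binom{j-lp+l}{l}$ terms. We bound this by $\left(e(j-lp+l)/l\right)^l = \left(e(j/l - p + 1)\right)^l$.

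Taking the $(j+l)$-th root gives
\begin{equation}
\lambda_{j,l} \le n\, A^{\frac{l}{j+l}} \left(\frac{e\,(j/l-p+1)}{(p+1)^2}\right)^{\frac{l}{j+l}}.
\end{equation}
For the $A$ factor, note $l\le j/p$, so $l/(j+l)\le 1/(p+1)$. Hence $A^{l/(j+l)}\le \max\{A^{1/(p+1)},1\}$, which covers both $A\ge1$ and $A<1$. For the remaining factor, set $k:=l/(j+l)$, so that $j/l = 1/k - 1$ and $j/l - p + 1 = 1/k - p$. The factor then becomes $\left(\frac{e}{(p+1)^2}(1/k-p)\right)^k$. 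Since $l\ge1$ and $l\le j/p$, we have $k\in(0,1/(p+1)]\subseteq[0,1/p]$. Taking the supremum over this range yields exactly the stated bound, with $1/k - p\ge 1$ on the relevant range, so the base is positive.

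The main obstacle is the combinatorial count. The estimate needs to be stated precisely (for example $\binom{N}{l}\le (eN/l)^l$ with $N=j-lp+l$) so that the resulting expression depends on $(j,l)$ only through the ratio $k$; otherwise the supremum does not reduce to a one-variable optimization. A secondary point is the even-only restriction when $\sigma=2$. It only reduces the number of compositions, so the upper bound still holds, and we note this explicitly. The typo $\alpha^{(t)}_{\mathrm{comm}}$ in the hypothesis is read as $\alpha^{(j)}_{\mathrm{comm}}\le An^j$.
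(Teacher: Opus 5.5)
Your proof is correct and follows essentially the same route as the paper. You substitute $\alpha_{\mathrm{comm}}^{(j_\kappa+1)}\le A n^{j_\kappa+1}$, bound $(j_\kappa+1)^{-2}\le (p+1)^{-2}$, count compositions by stars-and-bars with an $(eN/l)^l$-type binomial estimate, take the $(j+l)$-th root, and reduce to a single variable $k$. The only difference is a cleaner reparametrization: using $\binom{N}{l}\le (eN/l)^l$ and keeping all $l$ factors of $(p+1)^{-2}$ makes your bound depend on $(j,l)$ exactly through $k=l/(j+l)$, whereas the paper uses $\binom{N-1}{\ell-1}\le\bigl(e(N-1)/(\ell-1)\bigr)^{\ell-1}$, drops one factor of $(p+1)^{-2}$, and then needs $k=(\ell-1)/(j+\ell)$, the inequality $(\ell p+1)/(\ell-1)\ge p$, and an implicit convention for $\ell=1$.
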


\begin{proof}
    Starting from Definition \ref{def:lcomm} and imposing the bound in Eq.~\eqref{eq:alpha-comm-bound}, we have
    \begin{equation}
    \prod_{\kappa=1}^{\ell}
    \frac{\alpha_{\mathrm{comm}}^{(j_\kappa + 1)}}{(j_\kappa + 1)^2}
    \le
    A^{\ell} n^{j+\ell}
    \prod_{\kappa=1}^{\ell} (j_\kappa + 1)^{-2}.
    \end{equation}
    Since $j_\kappa + 1 \ge p + 1$, we have $\prod_{\kappa} (j_\kappa + 1)^{-2} \le (p + 1)^{-2\ell}$.
    The number of integer tuples $\{j_1, \ldots, j_\ell\}$ satisfying $\sum_{\kappa} j_\kappa = j$ with $j_\kappa \ge p$ is bounded by
    \begin{equation}
    \sum_{\substack{j_1, \dots, j_\ell \ge p \\ j_1 + \cdots + j_\ell = j}} 1
    =
    \binom{j - \ell p + \ell - 1}{\ell - 1}
    \le
    \left(\frac{e (j-\ell p+\ell-1)}{\ell-1}\right)^{\ell-1}\,.
    \end{equation}
    by the standard bound on binomial coefficients.
    Combining these bounds yields
    \begin{equation}
    \sum_{\substack{j_1, \dots, j_\ell \ge p \\ j_1 + \cdots + j_\ell = j}}
    \prod_{\kappa=1}^{\ell}
    \frac{\alpha_{\mathrm{comm}}^{(j_\kappa + 1)}}{(j_\kappa + 1)^2}
    \le n^{j+\ell}
    \frac{A^{\ell} }{(p + 1)^{2\ell}}   \left(\frac{e (j-\ell p+\ell-1)}{\ell-1}\right)^{\ell-1} \leq n^{j+\ell}
    A^{\ell}    \left(\frac{e }{(p + 1)^{2}} \frac{  j-\ell p+\ell-1}{\ell-1}\right)^{\ell-1}\,.
    \end{equation}
    Taking the $(j+\ell)$-th root and the supremum, we can write
    \begin{equation}
        \lcomm \leq n \sup_{\substack{j \in \mathbb{Z}^+ {\geq p} \\ 1 \le \ell \le \lfloor j/p \rfloor}} \left( A^{\frac{\ell}{j + \ell}} \right) \sup_{\substack{j \in \mathbb{Z}^+ {\geq p} \\ 1 \le \ell \le \lfloor j/p \rfloor}} \left(\left(\frac{e }{(p + 1)^{2}} \frac{j-\ell p+\ell-1}{\ell-1}\right)^{\frac{\ell-1}{j+ \ell}} \right)\,.
    \end{equation}
    The first supremum takes value $A^{\frac{1}{p+1}}$ for $A \geq 1$.
    The second supremum can be bounded as
    \begin{align}
        \sup_{\substack{j \in \mathbb{Z}^+ {\geq p} \\ 1 \le \ell \le \lfloor j/p \rfloor}} \left( \left( \frac{e }{(p + 1)^{2}} \frac{j-\ell p+\ell-1}{\ell-1}\right)^{\frac{\ell-1}{j+ \ell}} \right) &\leq  \sup_{0\leq k \leq 1/p} \inf_{\ell; \geq 0} \left(\frac{e }{(p + 1)^{2}} \left( \frac{1}{k} - \frac{\ell p +1}{\ell-1} \right) \right)^{k}  \\
        &\leq  \sup_{0\leq k \leq 1/p} \left(\frac{e }{(p + 1)^{2}}
         \left( \frac{1}{k} - p \right) \right)^{k}\,,
    \end{align}
    where $\inf_{\ell; ()\geq 0}$ indicates an infimum such that the quantity inside the brackets is non-negative. 
\end{proof}

When $A \geq 1$, the ratio of $\lcomm^{1+1/p}$ to $(\acomm^{(p+1)})^{1/p}$ (which represents respective contributions to gate complexities) is
\begin{equation}
    \frac{\lcomm^{1+1/p}}{(\acomm^{(p+1)})^{1/p}} = \sup_{0\leq k \leq 1/p} \left(\frac{e }{(p + 1)^{2}}\left( \frac{1}{k} - p \right) \right)^{k(1+1/p)}\,.
\end{equation}
We can numerically determine this value and find it to be $\leq 1.5035$ for $p=1$, $\leq 1.1487$ for $p=2$ and $\leq 1.0445$ for $p=4$. Assuming $A\geq 1$, the choice $A=1$ maximizes the bound for $\lcomm$ (it is the worst-case choice). From hereon we take this choice.

\medskip

\textbf{Choice of extrapolation schedule.} In Lemmas \ref{lem:generic-Richardson-error-2} and \ref{lem:specialized-Richardson-error} we demonstrate bounds on the maximum number of Trotter steps for two types of extrapolation schedules. The first extrapolation schedule removes error series powers starting from $s^1$ (or $s^2$ for symmetric error series), and is compatible with the well-conditioned extrapolation schedule of \cite{low2019Multiproduct} which we used to derive our main theoretical results (Theorems \ref{thm:overlap}, \ref{thm:overlap-with-observable}, \ref{thm:distribution-recovery}). The second uses the information that a $p$-th order Trotterized time evolution error series has leading order power $s^p$. In this setting we will find extrapolation schedules by simple brute-force optimization. In both settings, we take Trotter-Suzuki product formulae where we can use the bound $a_{\max} \leq 1$.

We note that in both bounds there is a ceiling function, and we omit this for simplicity -- this would be valid if $\Upsilon \lcomm T^{1+1/p} \log(1/\varepsilon)$ were large compared to the smallest error amplification factor $q_1$. This simplification allows us to compare bounds with regular Trotter formulae (Eq.~\eqref{eq:trotter-steps}) such that we can disregard the dependency on $\Upsilon$ and $T$, as the dependence of all bounds with respect to these parameters is the same.

For the LKW well-conditioned grid we plot the following quantity (referring back to Theorem \ref{thm:time-signal}):
\begin{equation}
    \frac{q_{\max}}{q_{\min}} \lambda_{\text{scale}}^{1+1/p} \left(\frac{a(\varepsilon)\|\vec{b}^{(1)}\|_1}{\varepsilon}\right)^{\frac{1}{\sigma m}}\,,
\end{equation}
where (as with all schemes) we omit dependencies on $\Upsilon$, $T$ and $n$ due to identical contribution -- we denote $\lambda_{\text{scale}}$ as the upper bound in Lemma \ref{lem:lcomm-tight-bound} divided by $n$, i.e.~the constant factor contribution to the extrapolated commutator factor. The quantity $a(\varepsilon)$ was chosen to be $4$ for simplicity in our previous results following the analysis in Lemma \ref{lem:generic-Richardson-error} -- here, we numerically pick a refined constant for each target precision chosen to be the smallest number such that an upper bound on step number condition $s_1 (a_{\max } \Upsilon \lcomm T)^{1+1/p}$ is consistent. 

For our brute-force search we optimize and plot the following objective over a simple search space of all possible integer sequences $\{q_k\}_{k=1}^m$ (referring back to Theorem \ref{thm:time-signal-generic}):
\begin{equation}\label{eq:optimization-objective}
\mathcal{C}(q_1, \ldots, q_m) = \frac{q_{\max}}{q_{\min}} \lambda_{\text{scale}}^{1+1/p} \left(\frac{a(\varepsilon)\|\vec{b}^{(p)}\|_1}{\varepsilon}\right)^{\frac{1}{\sigma (m-1)+p}},
\end{equation}
in the domain $q_k \in [1, 10]$, and where $\vec{b}$ is computed as the solution of the Vandermonde linear system as in Eq.~\eqref{eq:vandermonde-system}. 

We present our plot of upper bounds of the LKW well-conditioned grid, the brute-force optimized grid, alongside the regular Trotter bound across orders $p \in {1,2,4}$ in Figure~\ref{fig:overhead-plots}. We recall that the LKW grid only works for symmetric product formulae, and so is not available for $p=1$. We record the sample overhead amplification factor $\|\vec{b}\|_{1}^{2}$ as the color of each data marker. During brute-force search, candidate grids with $\|\vec{b}\|_{1}^{2} > \texttt{thresh}$ are rejected for $\texttt{thresh} \in  \{10,100,1000\}$, leading to three distinct lines. 

\begin{figure}[t]
    \centering
    \includegraphics[width=\linewidth]{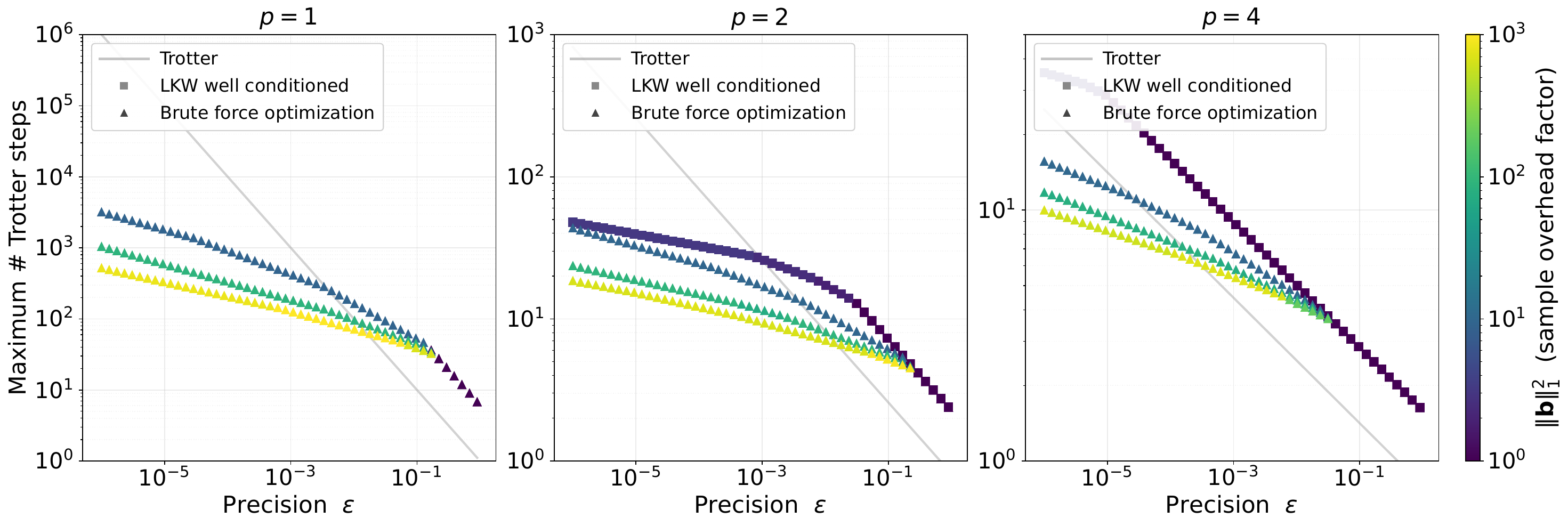}
    \caption{\textbf{Bounds on maximum number of Trotter steps.} We plot our best known bounds on the maximum required Trotter step number to attain a given additive error for the time signal task. The number we plot at each target precision is proportional to the true Trotter step number, where the true value contains a factor dependent on the target simulation time and problem size.}
    \label{fig:overhead-plots}
\end{figure}

We observe first that despite adopting a very simplistic search strategy for extrapolation schedules, our brute-force optimization improves upon the prescribed extrapolation schedule of~\cite{low2019Multiproduct}. This demonstrates that heuristic optimization could lead to useful extrapolation schedules that go beyond the performance of our (asymptotically strong) analytical results. Second, our bounds on extrapolated Trotter formulae outperform Trotter formulae for small enough precision, but is not found for large precision. We remark that for each order the extrapolation schemes at the largest target precision are the trivial one (no extrapolation). Thus, the deviation between extrapolated data and the regular Trotter curve is purely due to difference in constant factor; were constant factors to be equal, Randomized Extrapolation Trotterization would outperform regular Trotter unconditionally across our studied orders. 
We thus envisage improvements can be found with refined analysis of constant factors and a refined optimization procedure in future work.

\medskip

\textbf{Gate depths.} In the previous figure we have only investigated Trotter step number, which is only one part of the contribution to total gate complexity per circuit. Inspection of Figure \ref{fig:overhead-plots} shows that the crossover point where Randomized Extrapolation Trotterization is more efficient than regular Trotter appears to get worse as the order increases. In reality, the number of stages prohibits use of high-order product formulae (e.g.~the sixth-order Trotter--Suzuki formula requires $50$ operators per Hamiltonian term per Trotter step, compared to $2$ for second-order). Figure~\ref{fig:gate-depth} presents (a number proportional to) the gate depth for the same data displayed in Figure \ref{fig:overhead-plots}. For simplicity, for lines corresponding to extrapolated data we have selected only extrapolation schedules found via heuristic search with sample amplification factor at most $10$, and recall that stronger performance can be obtained if a larger sample budget is allowed. We choose a system size of $n=100$ basis functions, deemed a large enough regime to potentially observe quantum advantage for the uniform electron gas (Jellium) \cite{babbush2018LowDepthQSimMaterial}. In the setting chosen for comparison we see that Randomized Extrapolation Trotterization beats regular product formulae (in bounds) across order $p=1$ to $p=6$ below any error rate $\leq 10^{-2}$.

 \begin{figure}[ht]
    \centering
\includegraphics[width=\linewidth]{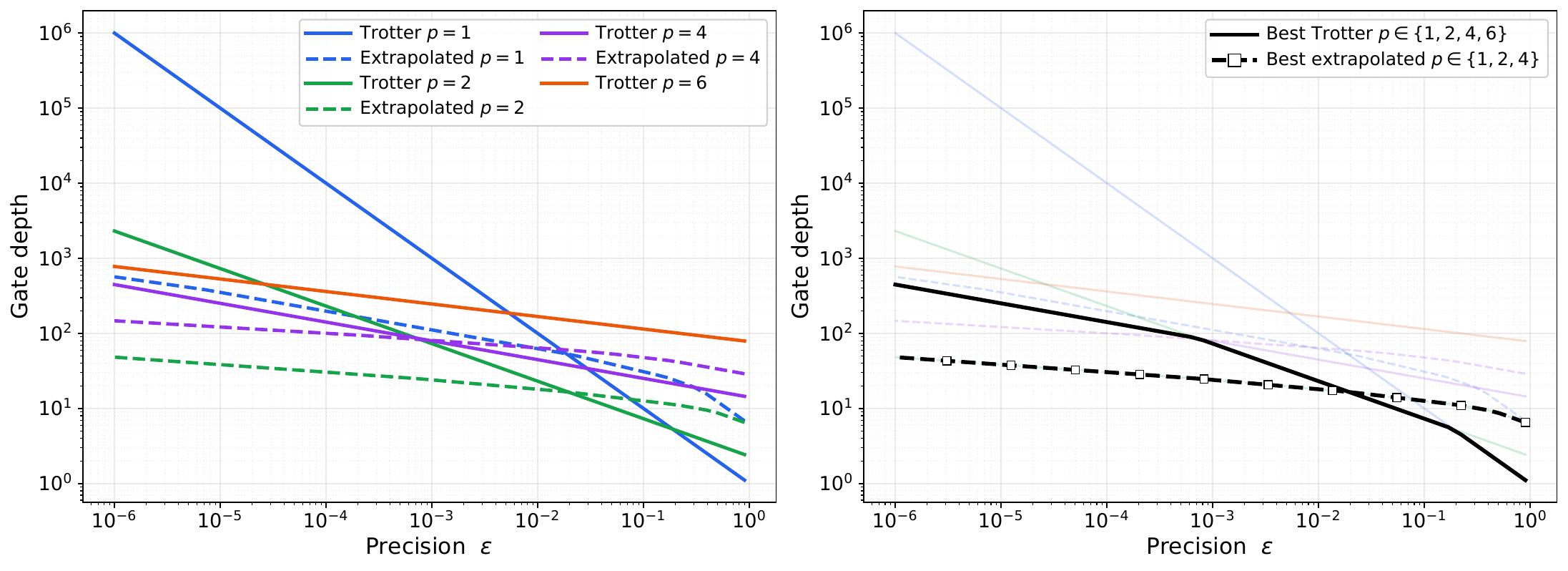}
    \caption{\textbf{Bounds on gate depth.} We plot best available bounds on (a quantity proportional to) the gate depth for a plane-wave dual basis Hamiltonian with $100$ basis functions. \textit{Left:} Superposed data taken from Figure \ref{fig:overhead-plots} where we have selected only brute-force optimized extrapolation schedules with sample overhead factor 10. \textit{Right:} We highlight the best curve across all orders from the plot on the left.}
    \label{fig:gate-depth}
\end{figure}

\subsection{Empirical error}

So far we have compared analytical bounds on the gate depth to reach some target precision. This is useful if one wishes to perform resource estimates, or to perform an experiment with precision guarantees (assuming perfect execution). However, bounds only tell one side of the story, and the true error attained by Randomized Extrapolation Trotterization may be much better than the bounds we derive in practice. Indeed, the true error of regular Trotter circuits may also be substantially better. 

In order to gain some insight into the exact error attainable by Randomized Extrapolation Trotterization, we take the extrapolation schedule data from our previous numerical studies and import them in a "blind" fashion into a new problem: simulating a Heisenberg Hamiltonian. We consider an anisotropic Heisenberg Hamiltonian with an external magnetic field
\begin{equation}
    H = J \sum_{i=1}^{L-1} ( X_i X_{i+1} + Y_i Y_{i+1} + 0.9 Z_i Z_{i+1}) - h\sum_{i=1}^L Z_i\,,  
\end{equation}
with a choice of $J=h=1$.

\begin{figure}[t]
    \centering
    \includegraphics[width=\linewidth]{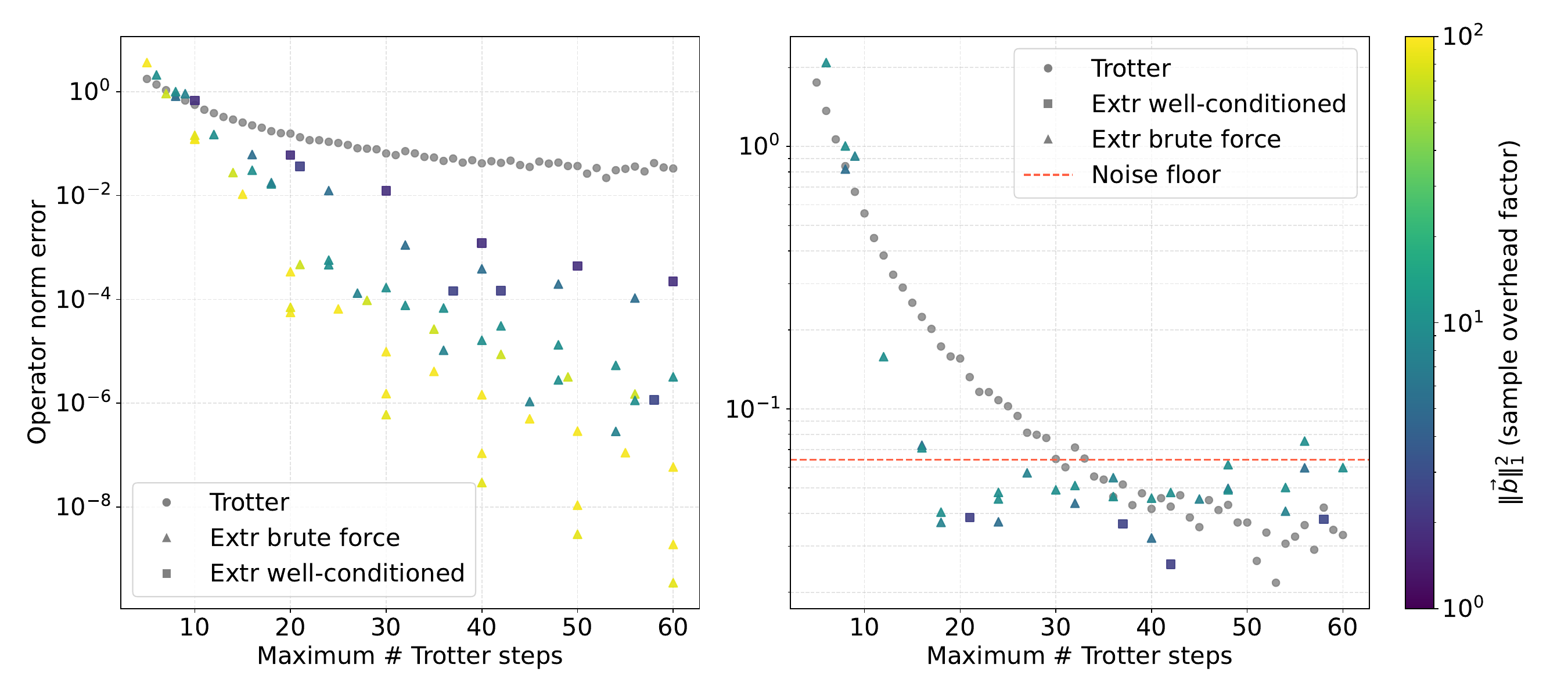}
    \caption{\textbf{Exact error for Heisenberg model.} We plot the exact error on the time evolution operator for a Heisenberg chain Hamiltonian on $8$ qubits using the second-order Suzuki-Trotter formula. Brute-force extrapolation schedules are the same that were found in the optimization of Figures \ref{fig:overhead-plots} and \ref{fig:gate-depth}. \textit{Left:} We plot the operator norm error for every extrapolation schedule previously found. \textit{Right:} We introduce a layer of random single-qubit $X$-rotations to simulate measurement noise. The red dotted line signifies the noise rate corresponding to one standard deviation.}
    \label{fig:exact}
\end{figure}

On a system of 8 qubits we evaluate the exact operator norm error of a simulation of the time evolution operator for both a Suzuki-Trotter circuit and Randomized Extrapolation Trotterization implementations for $p=2$. We import all of our extrapolation schemes found for $p=2$ in our previous numerical study. We display our results in the left subfigure of Figure \ref{fig:exact}, and plot the operator norm error against the maximum number of Trotter steps required in one coherent circuit run. As before, sample overhead is indicated by the color of each data point. We see that blow any large target error ($\sim 1$), all the previously found extrapolation schemes allow substantially reduced error rates compared to regular Trotter circuits.

In the right subfigure of Figure \ref{fig:exact} we investigate the robustness of our extrapolation schemes in the presence of high circuit noise. We consider a simple model of measurement noise described by random Pauli-$X$ rotations. This incurs a noise floor below which no strategy can pass. We plot only our best conditioned extrapolation schemes, that is schemes with a condition number $\|\vec{b}\|_1 \leq \sqrt{10}$. We see that the extrapolated data is able to hit the noise floor using fewer Trotter steps than the regular Trotter circuits (see region between 10-30 Trotter steps). We interpret this as evidence that heuristically found extrapolation schemes are in practice robust to noise, despite having non-trivial condition numbers. Additionally, such schemes can improve on the performance of regular Trotter circuits, despite the presence of a high noise rate.

\section*{Acknowledgments}

The authors thank the following for insightful discussions (listed in alphabetical order): Fernando Brand\~ao, Garnet Chan, Abhinav Deshpande, Sam McArdle, Yuan Su. AM was supported by Caltech’s Summer Undergraduate Research Fellowship (SURF) program. We acknowledge funding provided by the Institute for Quantum Information and Matter, an NSF Physics Frontiers Center (NSF Grant PHY-2317110).

\vspace{-0.5cm}
\printbibliography[heading=bibintoc]

@preamble{ "\newcommand{\lName}{0}" }

@preamble{ "\newcommand{\arxiv}[1]{arXiv:\href{https://arxiv.org/abs/#1}{\ttfamily{#1}}\removefirstdot}" }

@preamble{ "\newcommand{\arXiv}[1]{arXiv:\href{https://arxiv.org/abs/#1}{\ttfamily{#1}}\removefirstdot}" }

@preamble{ "\newcommand{\iacr}[1]{ePrint:\href{https://eprint.iacr.org/#1}{\ttfamily{#1}}\removefirstdot}" }

@preamble{ "\def\removefirstdot#1{\if.#1{}\else#1\fi}" }

@preamble{ "\providecommand{\multiletter}[1]{#1}\renewcommand{\multiletter}[1]{#1}" }

@preamble{ "\DeclareRobustCommand{\dutchPrefix}[2]{#2}" }

@preamble{ "\providecommand{\dutchPrefix}[2]{#2}\renewcommand{\dutchPrefix}[2]{#2}" }

@preamble{ "\newcommand{\skp}[3]{#2}" }

@preamble{"\newcommand{\focs       }[1]{Proceedings of the #1 {IEEE} Symposium on Foundations of Computer Science ({FOCS})}"}

@preamble{"\newcommand{\stoc       }[1]{Proceedings of the #1 {ACM} Symposium on the Theory of Computing ({STOC})}"}

@preamble{"\newcommand{\soda       }[1]{Proceedings of the #1 {ACM-SIAM} Symposium on Discrete Algorithms ({SODA})}"}

@preamble{"\newcommand{\stacs      }[1]{\if\lName1\skp{  }{Proceedings of the #1 Symposium on Theoretical Aspects of Computer Science ({STACS})}{                          }\else{STACS}\fi}"}

@preamble{"\newcommand{\itcs       }[1]{\if\lName1\skp{  }{Proceedings of the #1 Innovations in Theoretical Computer Science Conference ({ITCS})}{                         }\else{ITCS}\fi}"}

@preamble{"\newcommand{\fsttcs     }[1]{\if\lName1\skp{  }{Proceedings of the #1 International Conference on Foundations of Software Technology and Theoretical Computer Science ({FSTTCS})}{ }\else{FSTTCS}\fi}"}

@preamble{"\newcommand{\mfcs       }[1]{\if\lName1\skp{  }{Proceedings of the #1 International Symposium on Mathematical Foundations of Computer Science ({MFCS})}{        }\else{MFCS}\fi}"}

@preamble{"\newcommand{\ccc        }[1]{\if\lName1\skp{  }{Proceedings of the #1 {IEEE} Conference on Computational Complexity ({CCC})}{                                   }\else{CCC}\fi}"}

@preamble{"\newcommand{\isit       }[1]{\if\lName1\skp{  }{Proceedings of the #1 {IEEE} International Symposium on Information Theory ({ISIT})}{                           }\else{ISIT}\fi}"}

@preamble{"\newcommand{\colt       }[1]{\if\lName1\skp{  }{Proceedings of the #1 Conference On Learning Theory ({COLT})}{                                                  }\else{COLT}\fi}"}

@preamble{"\newcommand{\nips       }[1]{\if\lName1\skp{  }{Advances in Neural Information Processing Systems #1 ({NeurIPS})}{                                                 }\else{NeurIPS}\fi}"}

@preamble{"\newcommand{\aistats    }[1]{\if\lName1\skp{  }{Proceedings of the #1 International Conference on Artificial Intelligence and Statistics ({AISTATS})}{          }\else{AISTATS}\fi}"}

@preamble{"\newcommand{\icml       }[1]{\if\lName1\skp{  }{Proceedings of the #1 International Conference on Machine Learning ({ICML})}{                                   }\else{ICML}\fi}"}

@preamble{"\newcommand{\icalp      }[1]{\if\lName1\skp{  }{Proceedings of the #1 International Colloquium on Automata, Languages, and Programming ({ICALP})}{              }\else{ICALP}\fi}"}

@preamble{"\newcommand{\esa        }[1]{\if\lName1\skp{  }{Proceedings of the #1 Annual European Symposium on Algorithms ({ESA})}{                                         }\else{ESA}\fi}"}

@preamble{"\newcommand{\tqc        }[1]{\if\lName1\skp{  }{Proceedings of the #1 Conference on the Theory of Quantum Computation, Communication, and Cryptography ({TQC})}{}\else{TQC}\fi}"}

@preamble{"\newcommand{\isca        }[1]{\if\lName1\skp{  }{Proceedings of the #1 International Symposium on       Computer Architecture ({ISCA})}{}\else{ISCA}\fi}"}

@preamble{"\newcommand{\isaac      }[1]{\if\lName1\skp{  }{Proceedings of the #1 International Symposium on Algorithms and Computation ({ISAAC})}{                         }\else{ISAAC}\fi}"}

@preamble{"\newcommand{\aaai       }[1]{\if\lName1\skp{  }{Proceedings of the #1 AAAI Conference on Artificial Intelligence}{                                              }\else{AAAI}\fi}"}

@preamble{"\newcommand{\socg       }[1]{\if\lName1\skp{  }{Proceedings of the #1 Annual Symposium on Computational geometry}{                                              }\else{SoCG}\fi}"}

@preamble{"\newcommand{\sofsem     }[1]{\if\lName1\skp{  }{SOFSEM #1: Theory and Practice of Computer Science}{                                                            }\else{SOFSEM}\fi}"}

@preamble{"\newcommand{\ecc        }[1]{\if\lName1\skp{  }{#1 European Control Conference (ECC)}{                                                                          }\else{ECC}\fi}"}

@preamble{"\newcommand{\crypto     }[1]{\if\lName1\skp{  }{Advances in Cryptology -- CRYPTO #1}{                                                                           }\else{CRYPTO}\fi}"}

@preamble{"\newcommand{\asiacrypt  }[1]{\if\lName1\skp{  }{Advances in Cryptology -- ASIACRYPT #1}{                                                                        }\else{ASIACRYPT}\fi}"}

@preamble{"\newcommand{\eurocrypt  }[1]{\if\lName1\skp{  }{Advances in Cryptology -- EUROCRYPT #1}{                                                                        }\else{EUROCRYPT}\fi}"}

@preamble{"\newcommand{\pqcrypto   }[1]{\if\lName1\skp{  }{Post-Quantum Cryptography}{                                                                                     }\else{PQCrypto}\fi}"}

@preamble{"\newcommand{\scConference}[1]{\if\lName1\skp{  }{Proceedings of the International Conference for High Performance Computing, Networking, Storage and Analysis}{  }\else{SC}\fi}"}

@preamble{"\newcommand{\fccm}[1]{\if\lName1\skp{  }{#1 {IEEE} Annual International Symposium on Field-Programmable Custom Computing Machines}{  }\else{FCCM}\fi}"}

@preamble{"\newcommand{\lattice       }[1]{\if\lName1\skp{  }{Proceedings of the #1 International Symposium on Lattice Field Theory}{                                              }\else{Lattice}\fi}"}

@preamble{"\newcommand{\aft       }[1]{\if\lName1\skp{  }{Proceedings of the #1 ACM Conference on Advances in Financial Technologies}{                                              }\else{AFT}\fi}"}

@preamble{"\newcommand{\secConference       }[1]{\if\lName1\skp{  }{Proceedings of the IEEE/ACM #1 Symposium on Edge Computing}{                                              }\else{SEC}\fi}"}

@preamble{"\newcommand{\ims       }[1]{\if\lName1\skp{  }{#1 {IEEE} {MTT}-{S} International Microwave Symposium (IMS)}{                                              }\else{IMS}\fi}"}

@preamble{"\newcommand{\jacm          }{\if\lName1\skp{  }{Journal of the ACM}{                             }\else{Journal of the ACM}\fi}"}

@preamble{"\newcommand{\acmta         }{\if\lName1\skp{  }{ACM Transactions on Algorithms}{                 }\else{{ACM} Trans. Algorithms}\fi}"}

@preamble{"\newcommand{\acmtct        }{\if\lName1\skp{  }{ACM Transactions on Computation Theory}{         }\else{ACM Trans. Comput. Theory}\fi}"}

@preamble{"\newcommand{\acmtqc        }{\if\lName1\skp{  }{ACM Transactions on Quantum Computing}{          }\else{ACM Trans. Quantum Comput.}\fi}"}

@preamble{"\newcommand{\acmjetcs        }{\if\lName1\skp{  }{   ACM Journal on Emerging Technologies in Computing Systems }{          }\else{ACM J. Emerg. Technol. Comput. Syst.}\fi}"}

@preamble{"\newcommand{\canadianjmath }{\if\lName1\skp{  }{Canadian Journal of Mathematics      }{          }\else{Can. J. Math.}\fi}"}

@preamble{"\newcommand{\jams          }{\if\lName1\skp{  }{Journal of the AMS}{                             }\else{J. AMS}\fi}"}

@preamble{"\newcommand{\bullAMS        }{\if\lName1\skp{  }{Bulletin of the American Mathematical Society}{                                }\else{Bull. AMS}\fi}"}

@preamble{"\newcommand{\pams          }{\if\lName1\skp{  }{Proceedings of the AMS}{                         }\else{Proc. AMS}\fi}"}

@preamble{"\newcommand{\linalgappl    }{\if\lName1\skp{  }{Linear Algebra and its Applications}{            }\else{Linear Algebra App.}\fi}"}

@preamble{"\newcommand{\jalgo         }{\if\lName1\skp{  }{Journal of Algorithms}{                          }\else{J. Algorithms}\fi}"}

@preamble{"\newcommand{\jcss          }{\if\lName1\skp{  }{Journal of Computer and System Sciences}{        }\else{J. Comput. Syst. Sci.}\fi}"}

@preamble{"\newcommand{\jcomputapplmath}{\if\lName1\skp{  }{Journal of Computational and Applied Mathematics}{        }\else{J. Comput. Appl. Math.}\fi}"}

@preamble{"\newcommand{\cc            }{\if\lName1\skp{  }{Computational Complexity}{                       }\else{Comput. Complex.}\fi}"}

@preamble{"\newcommand{\algor         }{\if\lName1\skp{  }{Algorithmica}{                                   }\else{Algorithmica}\fi}"}

@preamble{"\newcommand{\comb          }{\if\lName1\skp{  }{Combinatorica}{                                  }\else{Combinatorica}\fi}"}

@preamble{"\newcommand{\cacm          }{\if\lName1\skp{  }{Communications of the ACM}{                      }\else{Commun. ACM}\fi}"}

@preamble{"\newcommand{\sigart        }{\if\lName1\skp{  }{SIGART Bulletin}{                                }\else{SIGART Bull.}\fi}"}

@preamble{"\newcommand{\sigactn       }{\if\lName1\skp{  }{SIGACT News}{                                    }\else{SIGACT News}\fi}"}

@preamble{"\newcommand{\eatcsbul      }{\if\lName1\skp{  }{Bulletin of the {EATCS}}{                        }\else{Bull. {EATCS}}\fi}"}

@preamble{"\newcommand{\siamrev       }{\if\lName1\skp{  }{SIAM Review}{                                    }\else{SIAM Rev.}\fi}"}

@preamble{"\newcommand{\siamjc        }{\if\lName1\skp{  }{SIAM Journal on Computing}{                      }\else{SIAM J. Comp.}\fi}"}

@preamble{"\newcommand{\siamjo        }{\if\lName1\skp{  }{SIAM Journal on Optimization}{                   }\else{SIAM J. Opt.}\fi}"}

@preamble{"\newcommand{\siamjdm       }{\if\lName1\skp{  }{SIAM Journal on Discrete Mathematics}{           }\else{SIAM J. Disc. Math.}\fi}"}

@preamble{"\newcommand{\siamjnum      }{\if\lName1\skp{  }{SIAM Journal on Numerical Analysis}{             }\else{SIAM J. Num. Anal.}\fi}"}

@preamble{"\newcommand{\siamjmathanal }{\if\lName1\skp{  }{SIAM Journal on Mathematical Analysis}{          }\else{SIAM J. Math. Anal.}\fi}"}

@preamble{"\newcommand{\discmath      }{\if\lName1\skp{  }{Discrete Mathematics}{                           }\else{Disc. Math.}\fi}"}

@preamble{"\newcommand{\das           }{\if\lName1\skp{  }{Discrete Applied Mathematics}{                   }\else{Disc. App. Math.}\fi}"}

@preamble{"\newcommand{\annmath      }{\if\lName1\skp{  }{Annals of Mathematics}{              }\else{Ann. Math.}\fi}"}

@preamble{"\newcommand{\amatstat      }{\if\lName1\skp{  }{Annals of Mathematical Statistics}{              }\else{Ann. Math. Stat.}\fi}"}

@preamble{"\newcommand{\rms           }{\if\lName1\skp{  }{Russian Mathematical Surveys}{                   }\else{Russ. Math. Surv.}\fi}"}

@preamble{"\newcommand{\invmath       }{\if\lName1\skp{  }{Inventiones Mathematicae}{                       }\else{Inv. Math.}\fi}"}

@preamble{"\newcommand{\jnumber       }{\if\lName1\skp{  }{Journal of Number Theory}{                       }\else{J. Num. Th.}\fi}"}

@preamble{"\newcommand{\tcs           }{\if\lName1\skp{  }{Theoretical Computer Science}{                   }\else{Theor. Comput. Sci.}\fi}"}

@preamble{"\newcommand{\numeralgorithms}{\if\lName1\skp{  }{Numerical Algorithms}{                          }\else{Numer. Algorithms}\fi}"}

@preamble{"\newcommand{\toc           }{\if\lName1\skp{  }{Theory of Computing}{                            }\else{Theory Comput.}\fi}"}

@preamble{"\newcommand{\cjtcs         }{\if\lName1\skp{  }{Chicago Journal of Theoretical Computer Science}{}\else{Chicago J. Theoret. Comput. Sci.}\fi}"}

@preamble{"\newcommand{\mathprogram   }{\if\lName1\skp{  }{Mathematical Programming}{}\else{Math. Program.}\fi}"}

@preamble{"\newcommand{\mathcomput    }{\if\lName1\skp{  }{Mathematics of Computation}{}\else{Math. Comput.}\fi}"}

@preamble{"\newcommand{\jfourieranalappl   }{\if\lName1\skp{  }{Journal of Fourier Analysis and Applications}{}\else{J. Fourier Anal. Appl.}\fi}"}

@preamble{"\newcommand{\dcg           }{\if\lName1\skp{  }{Discrete \& Computational Geometry}{}\else{Discrete Comput. Geom.}\fi}"}

@preamble{"\newcommand{\randstructalg }{\if\lName1\skp{  }{Random Structures \& Algorithms}{}\else{Rand. Struct. Algorithms}\fi}"}

@preamble{"\newcommand{\intjunconvent }{\if\lName1\skp{  }{International Journal of Unconventional Computing}{}\else{Int. J. Unconv. Comput.}\fi}"}

@preamble{"\newcommand{\machlearnscitech }{\if\lName1\skp{  }{Machine Learning: Science and Technology}{}\else{Mach. Learn.: Sci. Technol.}\fi}"}

@preamble{"\newcommand{\machlearning  }{\if\lName1\skp{  }{Machine Learning}{}\else{Mach. Learn.}\fi}"}

@preamble{"\newcommand{\neuralprocesslett } {\if\lName1\skp{  }{Neural Processing Letters}{}\else{Neural Process. Lett.}\fi}"}

@preamble{"\newcommand{\neuralcomput }{\if\lName1\skp{  }{Neural Computation}{}\else{Neural Comput.}\fi}"}

@preamble{"\newcommand{\frontartifintell}{\if\lName1\skp{  }{Frontiers in Artificial Intelligence}{}\else{Front. Artif. Intell.}\fi}"}

@preamble{"\newcommand{\jgloboptim}{\if\lName1\skp{  }{Journal of Global Optimization}{}\else{J. Glob. Optim.}\fi}"}

@preamble{"\newcommand{\epjdatasci}{\if\lName1\skp{  }{EPJ Data Science}{}\else{EPJ Data Sci.}\fi}"}

@preamble{"\newcommand{\epjquantumtechnol}{\if\lName1\skp{  }{EPJ Quantum Technology}{}\else{EPJ Quantum Technol.}\fi}"}

@preamble{"\newcommand{\applsci       }{\if\lName1\skp{  }{Applied Sciences}{    }\else{Appl. Sci.}\fi}"}

@preamble{"\newcommand{\applphysrev   }{\if\lName1\skp{  }{Applied Physics Reviews}{    }\else{Appl. Phys. Rev.}\fi}"}

@preamble{"\newcommand{\advquantumtechnol          }{\if\lName1\skp{  }{Advanced Quantum Technologies}{    }\else{Adv. Quantum Technol.}\fi}"}

@preamble{"\newcommand{\annphys       }{\if\lName1\skp{  }{Annals of Physics}{    }\else{Ann. Phys.}\fi}"}

@preamble{"\newcommand{\annualrevCMP          }{\if\lName1\skp{  }{Annual Review of Condensed Matter Physics}{    }\else{Annu. Rev. Condens. Matter Phys.}\fi}"}

@preamble{"\newcommand{\annualrevNPS          }{\if\lName1\skp{  }{Annual Review of Nuclear and Particle Science}{    }\else{Annu. Rev. Nucl. Part. Sci.}\fi}"}

@preamble{"\newcommand{\quantum       }{\if\lName1\skp{  }{Quantum}{                                          }\else{Quantum}\fi}"}

@preamble{"\newcommand{\quantumscitechnol       }{\if\lName1\skp{  }{Quantum Science and Technology}{                                          }\else{Quantum Sci. Technol.}\fi}"}

@preamble{"\newcommand{\cmp           }{\if\lName1\skp{  }{Communications in Mathematical Physics}{             }\else{Commun. Math. Phys.}\fi}"}

@preamble{"\newcommand{\frontphys     }{\if\lName1\skp{  }{Frontiers in Physics}{                               }\else{Front. Phys.}\fi}"}

@preamble{"\newcommand{\jmp           }{\if\lName1\skp{  }{Journal of Mathematical Physics}{                    }\else{J. Math. Phys.}\fi}"}

@preamble{"\newcommand{\rspa          }{\if\lName1\skp{  }{Proceedings of the Royal Society A}{                 }\else{Proc. R. Soc. A}\fi}"}

@preamble{"\newcommand{\philostransroyal}{\if\lName1\skp{  }{Philosophical Transactions of the Royal Society A}{                 }\else{Philos. Trans. R. Soc. A}\fi}"}

@preamble{"\newcommand{\qic           }{\if\lName1\skp{  }{Quantum Information and Computation}{                }\else{Quantum Inf. Comput.}\fi}"}

@preamble{"\newcommand{\qip           }{\if\lName1\skp{  }{Quantum Information Processing}{                }\else{Quantum Inf. Process.}\fi}"}

@preamble{"\newcommand{\physrev       }{\if\lName1\skp{  }{Physical Review}{                                    }\else{Phys. Rev.}\fi}"}

@preamble{"\newcommand{\pra           }{\if\lName1\skp{  }{Physical Review A}{                                  }\else{Phys. Rev. A}\fi}"}

@preamble{"\newcommand{\prb           }{\if\lName1\skp{  }{Physical Review B}{                                  }\else{Phys. Rev. B}\fi}"}

@preamble{"\newcommand{\prc           }{\if\lName1\skp{  }{Physical Review C}{                                  }\else{Phys. Rev. C}\fi}"}

@preamble{"\newcommand{\prd           }{\if\lName1\skp{  }{Physical Review D}{                                  }\else{Phys. Rev. D}\fi}"}

@preamble{"\newcommand{\pre           }{\if\lName1\skp{  }{Physical Review E}{                                  }\else{Phys. Rev. E}\fi}"}

@preamble{"\newcommand{\prr           }{\if\lName1\skp{  }{Physical Review Research}{                           }\else{Phys. Rev. Res.}\fi}"}

@preamble{"\newcommand{\prx           }{\if\lName1\skp{  }{Physical Review X}{                                  }\else{Phys. Rev. X}\fi}"}

@preamble{"\newcommand{\prxq          }{\if\lName1\skp{  }{PRX Quantum}{                                        }\else{PRX Quantum}\fi}"}

@preamble{"\newcommand{\prl           }{\if\lName1\skp{  }{Phys. Rev. Lett.}{                            }\else{Phys. Rev. Lett.}\fi}"}

@preamble{"\newcommand{\europhyslett  }{\if\lName1\skp{  }{Europhysics Letters}{                                }\else{Europhys. Lett.}\fi}"}

@preamble{"\newcommand{\epja          }{\if\lName1\skp{  }{The European Physical Journal A}{                    }\else{Euro. Phys. J. A}\fi}"}

@preamble{"\newcommand{\epjd          }{\if\lName1\skp{  }{The European Physical Journal D}{                    }\else{Euro. Phys. J. D}\fi}"}

@preamble{"\newcommand{\njp           }{\if\lName1\skp{  }{New Journal of Physics}{                             }\else{New J. Phys.}\fi}"}

@preamble{"\newcommand{\prapp         }{\if\lName1\skp{  }{Physical Review Applied}{                            }\else{Phys. Rev. Appl.}\fi}"}

@preamble{"\newcommand{\physrep       }{\if\lName1\skp{  }{Physics Reports}{                                    }\else{Phys. Rep.}\fi}"}

@preamble{"\newcommand{\rmp           }{\if\lName1\skp{  }{Reviews of Modern Physics}{                          }\else{Rev. Mod. Phys.}\fi}"}

@preamble{"\newcommand{\repprogphys   }{\if\lName1\skp{  }{Reports on Progress in Physics}{                     }\else{Rep. Prog. Phys.}\fi}"}

@preamble{"\newcommand{\physplasmas   }{\if\lName1\skp{  }{Physics of Plasmas}{                                 }\else{Phys. Plasmas}\fi}"}

@preamble{"\newcommand{\phystoday     }{\if\lName1\skp{  }{Physics Today}{                                      }\else{Phys. Today}\fi}"}

@preamble{"\newcommand{\physics       }{\if\lName1\skp{  }{Physics}{                                            }\else{Phys.}\fi}"}

@preamble{"\newcommand{\nature        }{\if\lName1\skp{  }{Nature}{                                             }\else{Nature}\fi}"}

@preamble{"\newcommand{\natcomm       }{\if\lName1\skp{  }{Nature Communications}{                              }\else{Nat. Commun.}\fi}"}

@preamble{"\newcommand{\natcomputsci  }{\if\lName1\skp{  }{Nature Computational Science}{                       }\else{Nat. Comput. Sci.}\fi}"}

@preamble{"\newcommand{\natphys       }{\if\lName1\skp{  }{Nature Physics}{                                     }\else{Nat. Phys.}\fi}"}

@preamble{"\newcommand{\natphotonics  }{\if\lName1\skp{  }{Nature Photonics}{                                     }\else{Nat. Photonics}\fi}"}

@preamble{"\newcommand{\natrevphys    }{\if\lName1\skp{  }{Nature Reviews Physics}{                                     }\else{Nat. Rev. Phys.}\fi}"}

@preamble{"\newcommand{\natrevmater   }{\if\lName1\skp{  }{Nature Reviews Materials}{                                     }\else{Nat. Rev. Mater.}\fi}"}

@preamble{"\newcommand{\natrevmethodsprimers}{\if\lName1\skp{}{Nature Reviews Methods Primers}{                               }\else{Nat. Rev. Methods Primers}\fi}"}

@preamble{"\newcommand{\npjqi         }{\if\lName1\skp{  }{npj Quantum Information}{                            }\else{npj Quant. Inf.}\fi}"}

@preamble{"\newcommand{\scirep        }{\if\lName1\skp{  }{Scientific Reports}{                                 }\else{Sci. Rep.}\fi}"}

@preamble{"\newcommand{\science       }{\if\lName1\skp{  }{Science}{                                            }\else{Science}\fi}"}

@preamble{"\newcommand{\sciadv      }{\if\lName1\skp{  }{Science Advances}{                                   }\else{Sci. Adv.}\fi}"}

@preamble{"\newcommand{\scibull      }{\if\lName1\skp{  }{Science Bulletin}{                                   }\else{Sci. Bull.}\fi}"}

@preamble{"\newcommand{\jchemphys           }{\if\lName1\skp{  }{The Journal of Chemical Physics}{ }\else{J. Chem. Phys.}\fi}"}

@preamble{"\newcommand{\jphyschemlett      }{\if\lName1\skp{  }{The Journal of Physical Chemistry Letters}{ }\else{J. Phys. Chem. Lett.}\fi}"}

@preamble{"\newcommand{\jpa           }{\if\lName1\skp{  }{Journal of Physics A: Mathematical and Theoretical}{ }\else{J. Phys. A}\fi}"}

@preamble{"\newcommand{\jpg           }{\if\lName1\skp{  }{Journal of Physics G: Nuclear and Particle Physics}{ }\else{J. Phys. G}\fi}"}

@preamble{"\newcommand{\ijtp          }{\if\lName1\skp{  }{International Journal of Theoretical Physics}{       }\else{Int. J. Th. Phys.}\fi}"}

@preamble{"\newcommand{\jmo           }{\if\lName1\skp{  }{Journal of Modern Optics}{                           }\else{J. Mod. Opt.}\fi}"}

@preamble{"\newcommand{\jhep           }{\if\lName1\skp{  }{Journal of High Energy Physics}{                           }\else{J. High Energy Phys.}\fi}"}

@preamble{"\newcommand{\jstatph       }{\if\lName1\skp{  }{Journal of Statistical Physics}{                     }\else{J. Stat. Phys.}\fi}"}

@preamble{"\newcommand{\jcompphys       }{\if\lName1\skp{  }{Journal of Computational Physics}{                 }\else{J. Comput. Phys.}\fi}"}

@preamble{"\newcommand{\computphyscommun}{\if\lName1\skp{  }{Computer Physics Communications}{               }\else{Comput. Phys. Commun.}\fi}"}

@preamble{"\newcommand{\jstatmech     }{\if\lName1\skp{  }{Journal of Statistical Mechanics: Theory and Experiment}{ }\else{J. Stat. Mech. Theory Exp.}\fi}"}

@preamble{"\newcommand{\pnas          }{\if\lName1\skp{  }{Proceedings of the National Academy of Sciences}{    }\else{Proc. Natl. Acad. Sci.}\fi}"}

@preamble{"\newcommand{\avsquantsci          }{\if\lName1\skp{  }{AVS Quantum Science}{    }\else{AVS Quantum Sci.}\fi}"}

@preamble{"\newcommand{\quantummachintell    }{\if\lName1\skp{  }{Quantum Machine Intelligence}{    }\else{Quantum Mach. Intell.}\fi}"}

@preamble{"\newcommand{\natmachintell    }{\if\lName1\skp{  }{Nature Machine Intelligence}{    }\else{Nat. Mach. Intell.}\fi}"}

@preamble{"\newcommand{\scichinainfsci   }{\if\lName1\skp{  }{ Science China Information Sciences}{    }\else{Sci. China Inf. Sci.}\fi}"}

@preamble{"\newcommand{\jmagnreson   }{\if\lName1\skp{  }{Journal of Magnetic Resonance}{    }\else{J. Magn. Reson.}\fi}"}

@preamble{"\newcommand{\lncs          }{\if\lName1\skp{  }{Lecture Notes in Computer Science}{                  }\else{L. Notes Comp. Sci.}\fi}"}

@preamble{"\newcommand{\lnai          }{\if\lName1\skp{  }{Lecture Notes in Artificial Intelligence}{           }\else{L. Notes Art. Int.}\fi}"}

@preamble{"\newcommand{\lnm           }{\if\lName1\skp{  }{Lecture Notes in Mathematics}{                       }\else{L. Notes Math.}\fi}"}

@preamble{"\newcommand{\tams          }{\if\lName1\skp{  }{Transactions of the American Mathematical Society}{  }\else{Trans. AMS}\fi}"}

@preamble{"\newcommand{\ieeetit       }{\if\lName1\skp{  }{{IEEE} Transactions on Information Theory}{          }\else{{IEEE} Trans. Inf. Theory}\fi}"}

@preamble{"\newcommand{\ieeetnnls       }{\if\lName1\skp{  }{{IEEE} Transactions on Neural Networks and Learning Systems}{          }\else{{IEEE} Trans. Neural Netw. Learn. Syst.}\fi}"}

@preamble{"\newcommand{\ieeetcad       }{\if\lName1\skp{  }{{IEEE} Transactions on Computer-Aided Design of Integrated Circuits and Systems}{          }\else{{IEEE} Trans. Comput.-Aided Des. Integr. Circuits Syst.}\fi}"}

@preamble{"\newcommand{\ieeetqe       }{\if\lName1\skp{  }{{IEEE} Transactions on Quantum Engineering}{          }\else{{IEEE} Trans. Quantum Eng.}\fi}"}

@preamble{"\newcommand{\ieeejsac       }{\if\lName1\skp{  }{{IEEE} Journal on Selected Areas in Communications}{          }\else{{IEEE} J. Sel. Areas Commun.}\fi}"}

@preamble{"\newcommand{\ieeebits       }{\if\lName1\skp{  }{{IEEE} {BITS} the Information Theory Magazine}{          }\else{{IEEE} {BITS} Inf. Theory Mag.}\fi}"}

@preamble{"\newcommand{\quantumeng       }{\if\lName1\skp{  }{Quantum Engineering}{          }\else{Quantum Eng.}\fi}"}

@preamble{"\newcommand{\iscs          }{\if\lName1\skp{  }{International Series in Computer Science}{           }\else{Int. Ser. Comp. Sci.}\fi}"}

@preamble{"\newcommand{\tocl          }{\if\lName1\skp{  }{Theory of Computing Library}{                        }\else{Th. Comp. Lib.}\fi}"}

@preamble{"\newcommand{\actanumer     }{\if\lName1\skp{  }{Acta Numerica}{                        }\else{Acta Numer.}\fi}"}

@preamble{"\newcommand{\jderiv     }{\if\lName1\skp{  }{The Journal of Derivatives}{                        }\else{J. Deriv.}\fi}"}

@preamble{"\newcommand{\chemrev     }{\if\lName1\skp{  }{Chemical Reviews}{                        }\else{Chem. Rev.}\fi}"}

@preamble{"\newcommand{\WIREsCMS     }{\if\lName1\skp{  }{WIREs Molecular Computational Science}{                        }\else{WIREs Comput. Mol. Sci.}\fi}"}

@preamble{"\newcommand{\accchemres     }{\if\lName1\skp{  }{Accounts of Chemical Research}{                        }\else{Acc. Chem. Res.}\fi}"}

@preamble{"\newcommand{\chemphyslett     }{\if\lName1\skp{  }{Chemical Physics Letters}{                        }\else{Chem. Phys. Lett.}\fi}"}

@preamble{"\newcommand{\molphys     }{\if\lName1\skp{  }{Molecular Physics}{                        }\else{Mol. Phys.}\fi}"}

@preamble{"\newcommand{\commchem     }{\if\lName1\skp{  }{Communications Chemistry}{                        }\else{Commun. Chem.}\fi}"}

@preamble{"\newcommand{\acscentsci     }{\if\lName1\skp{  }{ACS Central Science}{                        }\else{ACS Cent. Sci.}\fi}"}

@preamble{"\newcommand{\jchemtheorycomput}{\if\lName1\skp{  }{Journal of Chemical Theory and Computation}{                        }\else{J. Chem. Theory Comput.}\fi}"}

@preamble{"\newcommand{\chemsci        }{\if\lName1\skp{  }{Chemical Science}{                        }\else{Chem. Sci.}\fi}"}

@preamble{"\newcommand{\intjqchem        }{\if\lName1\skp{  }{International journal of quantum chemistry}{                        }\else{Int. J. Quantum Chem.}\fi}"}

@preamble{"\newcommand{\jfqa        }{\if\lName1\skp{  }{Journal of Financial and Quantitative Analysis}{                        }\else{J. Financial Quant. Anal.}\fi}"}

@article{low2023complexityTrotter,
   title={Complexity of Implementing Trotter Steps},
   volume={4},
   ISSN={2691-3399},
   url={http://dx.doi.org/10.1103/PRXQuantum.4.020323},
   DOI={10.1103/prxquantum.4.020323},
   number={2},
   journal={PRX Quantum},
   publisher={American Physical Society (APS)},
   author={Low, Guang Hao and Su, Yuan and Tong, Yu and Tran, Minh C.},
   year={2023}, note={\arxiv{2211.09133}}
   }

@unpublished{aftab2024multi,
  title={Multi-product Hamiltonian simulation with explicit commutator scaling},
  author={Aftab, Junaid and An, Dong and Trivisa, Konstantina},
  note={\arxiv{2403.08922}},
  year={2024}
}

@inproceedings{aharonov2003adiabQStateGeneration,
  title={Adiabatic quantum state generation and statistical zero knowledge},
  author={Aharonov, Dorit and Ta-Shma, Amnon},
  booktitle={Proceedings of the thirty-fifth annual ACM symposium on Theory of computing},
  pages={20--29},
  year={2003},
  note={\arxiv{quant-ph/0301023}}
}

@Article{	  babbush2018LowDepthQSimMaterial,
  author	= {Babbush, Ryan and Wiebe, Nathan and Mcclean, Jarrod and
		  Mcclain, James and Neven, Hartmut and Chan, Garnet
		  Kin-Lic},
  doi		= {10.1103/PhysRevX.8.011044},
  issn		= {2160-3308},
  journal	= {\prx},
  number	= {1},
  pages		= {11044},
  publisher	= {American Physical Society},
  title		= {{Low-Depth Quantum Simulation of Materials}},
  url		= {https://doi.org/10.1103/PhysRevX.8.011044},
  volume	= {8},
  year		= {2018}
}

@article{blunt2023statistical,
  title={Statistical phase estimation and error mitigation on a superconducting quantum processor},
  author={Blunt, Nick S and Caune, Laura and Izs{\'a}k, R{\'o}bert and Campbell, Earl T and Holzmann, Nicole},
  journal={PRX Quantum},
  volume={4},
  number={4},
  pages={040341},
  year={2023},
  publisher={APS},
doi={https://doi.org/10.1103/PRXQuantum.4.040341},
note={\arxiv{2304.05126}}
}

@article{blunt2025monteCarlo,
  title={A Monte Carlo approach to bound Trotter error},
  author={Blunt, Nick S and Ivanov, Aleksei V and Bay-Smidt, Andreas Juul},
  note={\arxiv{2510.11621}},
  year={2025}
}

@InCollection{	  brassard2002AmpAndEst,
  author	= {Gilles Brassard and Peter H{\o}yer and Michele Mosca and
		  Alain Tapp},
  booktitle	= {Quantum Computation and Quantum Information: A Millennium
		  Volume},
  doi		= {10.1090/conm/305/05215},
  note		= {\arxiv{quant-ph/0005055}},
  pages		= {53--74},
  publisher	= {AMS},
  series	= {Contemporary Mathematics Series},
  title		= {Quantum Amplitude Amplification and Estimation},
  volume	= {305},
  year		= {2002}
}

@article{campbell2021earlyFTHubbard,
  title={Early fault-tolerant simulations of the Hubbard model},
  author={Campbell, Earl T},
  journal={Quantum Science and Technology},
  volume={7},
  number={1},
  pages={015007},
  year={2021},
  publisher={IOP Publishing},
doi={10.1088/2058-9565/ac3110},
note={\arxiv{2012.09238}}
}

@Article{	  campbell2019randomCompiler,
  author	= {Earl Campbell},
  doi		= {10.1103/physrevlett.123.070503},
  journal	= {\prl},
  month		= {8},
  note		= {\arxiv{1811.08017}},
  number	= {7},
  publisher	= {American Physical Society ({APS})},
  title		= {Random Compiler for Fast Hamiltonian Simulation},
  volume	= {123},
  year		= 2019
}

@unpublished{chakraborty2025quantum,
  title={Quantum singular value transformation without block encodings: Near-optimal complexity with minimal ancilla},
  author={Chakraborty, Shantanav and Hazra, Soumyabrata and Li, Tongyang and Shao, Changpeng and Wang, Xinzhao and Zhang, Yuxin},
  note={\arxiv{2504.02385}},
  year={2025}
}

@Unpublished{	  chen2021conTrotter,
  author	= {Chen, Chi-Fang and Brand{\~a}o, Fernando G. S. L.},
  note		= {\arXiv{2111.05324}},
  title		= {Average-case Speedup for Product Formulas},
  year		= {2021}
}

@Article{	  childs2021TheoryTrotter,
  author	= {Andrew M. Childs and Yuan Su and Minh C. Tran and Nathan
		  Wiebe and Shuchen Zhu},
  doi		= {10.1103/physrevx.11.011020},
  journal	= {\prx},
  month		= {2},
  note		= {\arxiv{1912.08854}},
  number	= {1},
  publisher	= {American Physical Society ({APS})},
  title		= {Theory of Trotter Error with Commutator Scaling},
  volume	= {11},
  year		= 2021
}

@unpublished{clinton2024quantumPhaseEstimation,
  title={Quantum Phase Estimation without Controlled Unitaries},
  author={Clinton, Laura and Cubitt, Toby S and Garcia-Patron, Raul and Montanaro, Ashley and Stanisic, Stasja and Stroeks, Maarten},
  journal={arXiv preprint arXiv:2410.21517},
  year={2024},
note={\arxiv{2410.21517}},
doi={https://doi.org/10.48550/arXiv.2410.21517}
}

@unpublished{dalzell2023EndToEnd,
  title={Quantum algorithms: A survey of applications and end-to-end complexities},
  author={Dalzell, Alexander M and McArdle, Sam and Berta, Mario and Bienias, Przemyslaw and Chen, Chi-Fang and Gily{\'e}n, Andr{\'a}s and Hann, Connor T and Kastoryano, Michael J and Khabiboulline, Emil T and Kubica, Aleksander and others},
  journal={arXiv preprint arXiv:2310.03011},
  year={2023},
note={\arxiv{2310.03011}}
}

@unpublished{ding2025end2end,
  title={End-to-end efficient quantum thermal and ground state preparation made simple},
  author={Ding, Zhiyan and Zhan, Yongtao and Preskill, John and Lin, Lin},
  journal={arXiv preprint arXiv:2508.05703},
  note={\arxiv{2508.05703}},
  year={2025}
}

@article{dong2022groundStatePrep,
  title={Ground-state preparation and energy estimation on early fault-tolerant quantum computers via quantum eigenvalue transformation of unitary matrices},
  author={Dong, Yulong and Lin, Lin and Tong, Yu},
  journal={PRX Quantum},
  volume={3},
  number={4},
  pages={040305},
  year={2022},
  publisher={APS},
note={\arxiv{2204.05955}},
doi={https://doi.org/10.1103/PRXQuantum.3.040305}
}

@article{faehrmann2022randomizing,
  title={Randomizing multi-product formulas for Hamiltonian simulation},
  author={Faehrmann, Paul K and Steudtner, Mark and Kueng, Richard and Kieferova, Maria and Eisert, Jens},
  journal={Quantum},
  volume={6},
  pages={806},
  year={2022},
  publisher={Verein zur F{\"o}rderung des Open Access Publizierens in den Quantenwissenschaften},
  doi={10.22331/q-2022-09-19-806},
  note={\arxiv{2101.07808}}
}

@InProceedings{	  gilyen2018QSingValTransf,
  author	= {Andr{\'a}s Gily{\'e}n and Yuan Su and Guang Hao Low and
		  Nathan Wiebe},
  booktitle	= {\stoc{51st}},
  doi		= {10.1145/3313276.3316366},
  note		= {\arxiv{1806.01838}},
  numpages	= {12},
  pages		= {193--204},
  title		= {Quantum singular value transformation and beyond:
		  {E}xponential improvements for quantum matrix arithmetics},
  year		= {2019}
}

@article{google2025observation,
  title={Observation of constructive interference at the edge of quantum ergodicity},
  journal={Nature},
  volume={646},
  number={8086}, author={Abanin, Dmitry A and Acharya, Rajeev and Aghababaie-Beni, Laleh and Aigeldinger, Georg and Ajoy, Ashok and Alcaraz, Ross and Aleiner, Igor and Andersen, Trond I and Ansmann, Markus and Arute, Frank and others},
  pages={825--830},
  year={2025},
  publisher={Nature Publishing Group UK London}, 
doi={10.1038/s41586-025-09526-6},
    note={\arxiv{2506.10191}}
}

@unpublished{gunther2025phase,
  title={Phase estimation with partially randomized time evolution},
  author={G{\"u}nther, Jakob and Witteveen, Freek and Schmidhuber, Alexander and Miller, Marek and Christandl, Matthias and Harrow, Aram},
  journal={arXiv preprint arXiv:2503.05647},
  year={2025},
note={\arxiv{2503.05647}}
}

@unpublished{haghshenas2025digitalMagnetism,
  title={Digital quantum magnetism at the frontier of classical simulations},
  author={Haghshenas, Reza and Chertkov, Eli and Mills, Michael and Kadow, Wilhelm and Lin, Sheng-Hsuan and Chen, Yi-Hsiang and Cade, Chris and Niesen, Ido and Begu{\v{s}}i{\'c}, Tomislav and Rudolph, Manuel S and others},
  journal={arXiv preprint arXiv:2503.20870},
  note={\arxiv{2503.20870}},
  year={2025}
}

@Article{	  harrow2009QLinSysSolver,
  author	= {Harrow, Aram W. and Hassidim, Avinatan and Lloyd, Seth},
  doi		= {10.1103/PhysRevLett.103.150502},
  journal	= {\prl},
  note		= {\arxiv{0811.3171}},
  number	= {15},
  pages		= {150502},
  title		= {Quantum algorithm for linear systems of equations},
  volume	= {103},
  year		= {2009}
}

@unpublished{hejazi2024LowEnergyProduct,
  title={Better bounds for low-energy product formulas},
  author={Hejazi, Kasra and Zini, Modjtaba Shokrian and Arrazola, Juan Miguel},
  note={\arxiv{2402.10362}},
  year={2024}
}

@unpublished{keen2021quantumGSGreens,
      title={Quantum Algorithms for Ground-State Preparation and Green's Function Calculation}, 
      author={Trevor Keen and Eugene Dumitrescu and Yan Wang},
      year={2021},
      note={\arxiv{2112.05731}}
}

@article{kim2023evidence,
  title={Evidence for the utility of quantum computing before fault tolerance},
  author={Kim, Youngseok and Eddins, Andrew and Anand, Sajant and Wei, Ken Xuan and Van Den Berg, Ewout and Rosenblatt, Sami and Nayfeh, Hasan and Wu, Yantao and Zaletel, Michael and Temme, Kristan and others},
  journal={Nature},
  volume={618},
  number={7965},
  pages={500--505},
  year={2023},
  publisher={Nature Publishing Group UK London},
doi={10.1038/s41586-023-06096-3}
}

@article{kiss2025earlyFTinPractice,
  title={Early fault-tolerant quantum algorithms in practice: Application to ground-state energy estimation},
  author={Kiss, Oriel and Azad, Utkarsh and Requena, Borja and Roggero, Alessandro and Wakeham, David and Arrazola, Juan Miguel},
  journal={Quantum},
  volume={9},
  pages={1682},
  year={2025},
  publisher={Verein zur F{\"o}rderung des Open Access Publizierens in den Quantenwissenschaften},
doi={https://doi.org/10.22331/q-2025-04-01-1682},
note={\arxiv{2405.03754}}
}

@Article{	  lin2022HeisenbergLimited,
  author	= {Lin, Lin and Tong, Yu},
  doi		= {10.1103/PRXQuantum.3.010318},
  number		= {1},
  journal	= {\prxq},
  note		= {\arxiv{2102.11340}},
  numpages	= {21},
  pages		= {010318},
  publisher	= {American Physical Society},
  title		= {Heisenberg-Limited Ground-State Energy Estimation for
		  Early Fault-Tolerant Quantum Computers},
  volume	= {3},
  year		= {2022}
}

@Article{	  low2016HamSimQSignProc,
  author	= {Low, Guang Hao and Chuang, Isaac L.},
  doi		= {10.1103/PhysRevLett.118.010501},
  journal	= {\prl},
  note		= {\arxiv{1606.02685}},
  number	= {1},
  numpages	= {5},
  pages		= {010501},
  title		= {Optimal {H}amiltonian Simulation by Quantum Signal
		  Processing},
  volume	= {118},
  year		= {2017}
}

@Article{	  low2016HamSimQubitization,
  author	= {Low, Guang Hao and Chuang, Isaac L.},
  doi		= {10.22331/q-2019-07-12-163},
  journal	= {\quantum},
  note		= {\arxiv{1610.06546}},
  pages		= {163},
  title		= {Hamiltonian Simulation by Qubitization},
  volume	= {3},
  year		= {2019}
}

@Unpublished{	  low2019Multiproduct,
  author	= {Low, Guang Hao and Kliuchnikov, Vadym and Wiebe, Nathan},
  journal	= {arXiv preprint arXiv:1907.11679},
  note		= {\arxiv{1907.11679}},
  title		= {Well-conditioned multiproduct Hamiltonian simulation},
  year		= {2019}
}

@Article{	  martyn2021GrandUnificationQAlgs,
  author	= {Martyn, John M. and Rossi, Zane M. and Tan, Andrew K. and
		  Chuang, Isaac L.},
  doi		= {10.1103/PRXQuantum.2.040203},
  journal	= {\prx},
  note		= {\arXiv{2105.02859}},
  number	= {4},
  numpages	= {40},
  pages		= {040203},
  title		= {Grand Unification of Quantum Algorithms},
  volume	= {2},
  year		= {2021}
}

@unpublished{maxwell2026practical,
  title={Practical Estimation of Trotter Error for Hamiltonian Simulation},
  author={Maxwell, William and Casares, Pablo AM and Lang, Robert A and Fomichev, Stepan and Arrazola, Juan Miguel and Jahangiri, Soran and Asadi, Ali and Meneses, Luis Alfredo Nunez and Germain, Thomas and Motlagh, Danial},
  note={\arxiv{2606.30738}},
  year={2026}
}

@Article{	  mcardle2022ExploitingFermionNumber,
  author	= {McArdle, Sam and Campbell, Earl and Su, Yuan},
  doi		= {10.1103/PhysRevA.105.012403},
  issue		= {1},
  journal	= {\pra},
  month		= {1},
  note		= {\arxiv{2107.07238}},
  numpages	= {21},
  pages		= {012403},
  publisher	= {American Physical Society},
  title		= {Exploiting fermion number in factorized decompositions of
		  the electronic structure Hamiltonian},
  url		= {https://link.aps.org/doi/10.1103/PhysRevA.105.012403},
  volume	= {105},
  year		= {2022}
}

@article{mizuta2025trotterizationLowEnergy,
  title={Trotterization is substantially efficient for low-energy states},
  author={Mizuta, Kaoru and Kuwahara, Tomotaka},
  note={\arxiv{2504.20746}},
  year={2025}
}

@article{mizuta2026commutator,
  title={On the commutator scaling in Hamiltonian simulation with multi-product formulas},
  author={Mizuta, Kaoru},
  journal={Quantum},
  volume={10},
  pages={1974},
  year={2026},
  publisher={Verein zur F{\"o}rderung des Open Access Publizierens in den Quantenwissenschaften},
note={\arxiv{2507.06557}},
doi={10.22331/q-2026-01-19-1974}
}

@Article{	  montanaro2015QMonteCarlo,
  author	= {Ashley Montanaro},
  doi		= {10.1098/rspa.2015.0301},
  journal	= {\rspa},
  note		= {\arxiv{1504.06987}},
  number	= 2181,
  title		= {Quantum speedup of {M}onte {C}arlo methods},
  volume	= 471,
  year		= 2015
}

@article{patel2026quantumPhaselift,
  title={Quantum Phaselift},
  author={Patel, Dhrumil and Clinton, Laura and Flammia, Steven T and Garc{\'\i}a-Patr{\'o}n, Ra{\'u}l},
  note={\arxiv{2602.09119}},
  year={2026}
}

@Article{	  Sahinoglu2021Hamiltonian,
  author	= {Burak {\c{S}}ahino{\u{g}}lu and Rolando D. Somma},
  doi		= {10.1038/s41534-021-00451-w},
  journal	= {\npjqi},
  month		= {7},
  note		= {\arxiv{2006.02660}},
  number	= {1},
  publisher	= {Springer Science and Business Media {LLC}},
  title		= {Hamiltonian simulation in the low-energy subspace},
  url		= {https://doi.org/10.1038\%2Fs41534-021-00451-w},
  volume	= {7},
  year		= 2021
}

@article{somma2019QEE,
  title={Quantum eigenvalue estimation via time series analysis},
  author={Somma, Rolando D},
  journal={New Journal of Physics},
  volume={21},
  number={12},
  pages={123025},
  year={2019},
  publisher={IOP Publishing},
note={\arxiv{1907.11748}}
}

@Article{	  su2021NearlyTightTrottInerElect,
  archiveprefix	= {arXiv},
  arxivid	= {2012.09194},
  author	= {Su, Yuan and Huang, Hsin Yuan and Campbell, Earl T.},
  doi		= {10.22331/Q-2021-07-05-495},
  eprint	= {2012.09194},
  issn		= {2521327X},
  journal	= {\quantum},
  note		= {\arxiv{2012.09194}},
  number	= {1},
  pages		= {1--58},
  title		= {{Nearly tight Trotterization of interacting electrons}},
  volume	= {5},
  year		= {2021}
}

@article{suzuki1991general,
  author  = {Masuo Suzuki},
  title   = {General theory of higher-order decomposition of exponential operators and symplectic integrators},
  journal = {Physics Letters A},
  volume  = {165},
  number  = {5-6},
  pages   = {387--395},
  year    = {1992},
  doi     = {10.1016/0375-9601(92)90962-N}
}

@Article{	  wan2021RandPhaseEst,
  author	= {Wan, Kianna and Berta, Mario and Campbell, Earl T.},
  doi		= {10.1103/PhysRevLett.129.030503},
  number		= {3},
  journal	= {\prl},
  note		= {\arxiv{2110.12071}},
  numpages	= {7},
  pages		= {030503},
  publisher	= {American Physical Society},
  title		= {Randomized Quantum Algorithm for Statistical Phase
		  Estimation},
  url		= {https://link.aps.org/doi/10.1103/PhysRevLett.129.030503},
  volume	= {129},
  year		= {2022}
}

@article{wang2023QAlgGSEE,
  title={Quantum algorithm for ground state energy estimation using circuit depth with exponentially improved dependence on precision},
  author={Wang, Guoming and Fran{\c{c}}a, Daniel Stilck and Zhang, Ruizhe and Zhu, Shuchen and Johnson, Peter D},
  journal={Quantum},
  volume={7},
  pages={1167},
  year={2023},
  publisher={Verein zur F{\"o}rderung des Open Access Publizierens in den Quantenwissenschaften},
note={\arxiv{2209.06811}},
doi={https://doi.org/10.22331/q-2023-11-06-1167}
}

@article{wang2023fasterGSEE,
  title={Efficient ground-state-energy estimation and certification on early fault-tolerant quantum computers},
  author={Wang, Guoming and Fran{\c{c}}a, Daniel Stilck and Rendon, Gumaro and Johnson, Peter D},
  note={\arxiv{2304.09827}},
  journal={Physical Review A},
  volume={111},
  number={1},
  pages={012426},
  year={2025},
  publisher={APS},
doi={https://doi.org/10.1103/PhysRevA.111.012426}
}

@article{	  wang2023qubitefflinalg,
  author	= {Wang, Samson and McArdle, Sam and Berta, Mario},
  note		= {\arxiv{2302.01873}},
  title		= {Qubit-efficient randomized quantum algorithms for linear
		  algebra},
  journal={PRX Quantum},
  volume={5},
  number={2},
  pages={020324},
  year={2024},
  publisher={APS},
  doi={https://doi.org/10.1103/PRXQuantum.5.020324}
}

@Unpublished{	  watkins2022TimeDependentClockSimulation,
  author	= {Watkins, Jacob and Wiebe, Nathan and Roggero, Alessandro
		  and Lee, Dean},
  journal	= {arXiv preprint arXiv:2203.11353},
  note		= {\arxiv{2203.11353}},
  title		= {Time-dependent Hamiltonian simulation using discrete clock
		  constructions},
  year		= {2022}
}

@Unpublished{	  watson2023QuantumAlgorithmNuclearEFT,
  author	= {Watson, James D and Bringewatt, Jacob and Shaw, Alexander
		  F and Childs, Andrew M and Gorshkov, Alexey V and Davoudi,
		  Zohreh},
  note		= {\arxiv{2312.05344}},
  title		= {Quantum Algorithms for Simulating Nuclear Effective Field
		  Theories},
  year		= {2023}
}

@article{watson2024exponentiallyReducedTrotter,
  title = {Exponentially Reduced Circuit Depths Using Trotter Error Mitigation},
  author = {Watson, James D. and Watkins, Jacob},
  journal = {PRX Quantum},
  volume = {6},
  issue = {3},
  pages = {030325},
  numpages = {31},
  year = {2025},
  month = {Aug},
  publisher = {American Physical Society},
  doi = {10.1103/kw39-yxq5},
  note={\arxiv{2408.14385}}
}

@unpublished{watson2024exponentiallyReducedTrotterb,
  title={Exponentially Reduced Circuit Depths Using Trotter Error Mitigation},
  author={Watson, James D and Watkins, Jacob},
note={\arxiv{2408.14385}},
  year={2024},
  addendum={(in arXiv update to appear)}
}

@article{yoshioka2025krylovDiagonalization,
  title={Krylov diagonalization of large many-body Hamiltonians on a quantum processor},
  author={Yoshioka, Nobuyuki and Amico, Mirko and Kirby, William and Jurcevic, Petar and Dutt, Arkopal and Fuller, Bryce and Garion, Shelly and Haas, Holger and Hamamura, Ikko and Ivrii, Alexander and others},
  journal={Nature Communications},
  volume={16},
  number={1},
  pages={5014},
  year={2025},
  publisher={Nature Publishing Group UK London},
  doi={10.1038/s41467-025-59716-z},
  note={\arxiv{2407.14431}}
}

@Article{	  Zhao2021HamiltonianSW,
  author	= {Zhao, Qi and Zhou, You and Shaw, Alexander F. and Li,
		  Tongyang and Childs, Andrew M.},
  doi		= {10.1103/PhysRevLett.129.270502},
  issue		= {27},
  journal	= {\prl},
  month		= {12},
  note		= {\arxiv{2111.04773}},
  numpages	= {7},
  pages		= {270502},
  publisher	= {American Physical Society},
  title		= {Hamiltonian Simulation with Random Inputs},
  url		= {https://link.aps.org/doi/10.1103/PhysRevLett.129.270502},
  volume	= {129},
  year		= {2022}
}

\appendix

\section{Proof of \Cref{lem:time-evol-error-series}}
\label{sec:proof_approx_series}

\begin{proof}[Proof of \cref{lem:time-evol-error-series}]
Starting from Eq.~\eqref{eq:var-of-params-1},
we can iteratively substitute the same formula into itself.
Doing so $K \in \mathbb{Z}^+$ times, we can write the following:
\begin{align}
    &e^{-i{T}\Heff(s)} = e^{-iH{T}} + \\
    &\ + \sum_{l=1}^{K-1} \int^T_0 d\tau_1 \int^{\tau_1}_0 d\tau_2 \dots \int^{\tau_{l-1}}_0 d\tau_l e^{-i(T-\tau_1)H } i\Delta(sT) e^{-i(\tau_1-\tau_2)H } i\Delta(sT) \dots  e^{-i(\tau_{l-1}-\tau_l)H } i\Delta(sT)e^{-iH\tau_l}   \label{Eq:Iterate_K} \\
    &\ + \int^T_0 d\tau_1\int^{\tau_1}_0 d\tau_2 \dots \int_0^{\tau_{K-1}} d\tau_K e^{-i(T-\tau_1)H } i\Delta(sT) e^{-i(\tau_1-\tau_2)H } i\Delta(sT) \dots  e^{-i(\tau_{K-1}-\tau_K)H } i\Delta(sT)e^{-i\tau_K \Heff(s)}. \label{Eq:Remainder_Term}
\end{align}
 Taking line~(\ref{Eq:Iterate_K}) and expanding the definition of $\Delta(sT)$, 
\begin{align}
        \sum_{l=1}^{K-1} \int^T_0 &d\tau_1\int^{\tau_1}_0 d\tau_2\dots \int^{\tau_{l-1}}_0 d\tau_l e^{-i(T-\tau_1)H } i\Delta(sT) e^{-i(\tau_1-\tau_2)H} i\Delta(sT) \dots  e^{-i\tau_{l-1}-\tau_l)H } i\Delta(sT)e^{-iH\tau_l} 
        \nonumber\\ 
        =& \sum_{l=1}^{K-1} \int^T_0 d\tau_1 \int^{\tau_1}_0 d\tau_2  \dots \int^{\tau_{l-1}}_0 d\tau_l \left( \prod_{\kappa=l}^1\left( \sum_{j_\kappa\in \sym\mathbb{Z}_+\geq p } e^{-i(\tau_{\kappa-1}-\tau_\kappa)H}    i{E_{j_\kappa+1}} t^{j_\kappa}  \right)   \right)
        e
        ^{iH\tau_l}
        \\ 
        =&\sum_{l=1}^{K-1} \int^T_0 d\tau_1 \int^{\tau_1}_0 d\tau_2 \dots \int^{\tau_{l-1}}_0 d\tau_l \left( \sum_{\substack{j\in \sym\mathbb{Z}_+\geq pl}} t^j\sum_{\substack{j_1\dots j_l\in \sym\mathbb{Z}_+\geq p\\ j_1+\dots+ j_l=j}}\left(\prod_{\kappa=l}^1 e^{-i(\tau_{\kappa-1}-\tau_\kappa)H}    i{E_{j_\kappa+1}}\right) \right)e^{-iH\tau_l} \,, 
    \end{align}
    {where we have denoted $\tau_0 = T$.}
    We now reinsert $t = sT$, and make a change of variables $s_i = \tau_i /T$. This gives
    \begin{align}
        \sum_{l=1}^{K-1} T^l \int^1_0 ds_1 \int^{s_1}_0 ds_2 \dots \int^{s_{l-1}}_0 ds_l  \left( \sum_{j\in \sym\mathbb{Z}_+ \geq pl} (sT)^j\sum_{\substack{j_1\ldots j_l\in \sym\mathbb{Z}_+\geq p \\ j_1+\dots+j_l=j}} \left(\prod_{\kappa=l}^1  e^{-i(s_{\kappa-1}-s_\kappa)T
        H}    i{E_{j_\kappa+1}}\right)\right)e^{-iH s_lT} . 
    \end{align}
    Next, we regroup the sum according to the degree of $s$, which yields
    \begin{align}
       \sum_{j \in \sym\mathbb{Z}_+\geq p} &(sT)^j \sum_{l=1}^{\min\{K-1, \lfloor j/p \rfloor \}} T^l \int^1_0 ds_1 \int^{s_1}_0 ds_2 \dots \int^{s_{l-1}}_0 ds_l \sum_{\substack{j_1\dots j_l\in \sym\mathbb{Z}_+\geq p \\ j_1+\dots+j_l=j}}  \left(\prod_{\kappa=l}^1 e^{-iT(s_{\kappa-1}-s_\kappa)H}    i{E_{j_\kappa+1}}\right) e^{-iH s_lT} 
       \nonumber \\
       = &\sum_{j \in \sym\mathbb{Z}_+\geq p}  s^j \tilde{E}_{j+1,K}(T).
    \end{align}
     Here, we have defined
    \begin{multline} \label{eq:tilde_E_explicit}
        \tilde{E}_{j+1,K}(T) \coloneqq\\
        \sum_{l=1}^{\min\{K-1, \lfloor j/p \rfloor\}} T^{j+l} \int^1_0 ds_1 \int^{s_1}_0 ds_2 \dots \int^{s_{l-1}}_0 ds_l \sum_{\substack{j_1\dots j_l\in \sym\mathbb{Z}_+\geq p \\ j_1+\dots+j_l=j}}  \left(\prod_{\kappa=l}^1 e^{-i(s_{\kappa-1}-s_\kappa)TH} i{E_{j_\kappa+1}}\right) e^{-is_l T H}.
    \end{multline}
    We now want to put bounds on the norm of $\tilde{E}_{j+1,K}$. Using the triangle inequality, unitarity of $e^{-i \tau H}$, and evaluating the remaining integral,
     \begin{align}
     \begin{aligned}
        \norm{\tilde{E}_{j+1,K}(T)} &\leq \sum_{l=1}^{\min\{K-1, \lfloor j/p \rfloor  \}} T^{j+l} \int^1_0\int^{s_1}_0 \dots \int^{s_{l-1}}_0 ds_l \sum_{\substack{j_1\dots j_l\in \sym\mathbb{Z}_+\geq p \\ j_1+\dots+j_l=j}}  \left( \prod_{\kappa=1}^l \norm{{E_{j_\kappa+1}}} \right)   \\
        &\leq  T^j \sum_{l=1}^{\min\{K-1, \lfloor j/p \rfloor  \}}  \frac{T^l}{l!}\sum_{\substack{j_1\dots j_l\in \sym\mathbb{Z}_+\geq p \\ j_1+\dots+j_l=j}}  \left(\prod_{\kappa=1}^l     \norm{{E_{j_\kappa+1}} }  \right).   
    \end{aligned}
    \end{align}
    Applying~\cref{lem:Effective_Hamiltonian_Error_Series},
    \begin{align}
    \begin{aligned}
        \norm{\tilde{E}_{j+1,K}(T)} &\leq   T^j \sum_{l=1}^{\min\{K-1, \lfloor j/p \rfloor  \}}  \frac{T^l}{l!}\sum_{\substack{j_1\dots j_l\in \sym\mathbb{Z}_+\geq p \\ j_1+\dots+j_l=j}}  \left(\prod_{\kappa=1}^l  \acomm^{(j_\kappa + 1)} \frac{(a_\mathrm{max} \Upsilon)^{j_\kappa + 1}}{(j_\kappa + 1)^2}\right) \\
        &= (a_\mathrm{max}\Upsilon T)^j \sum_{l=1}^{\min\{K-1, \lfloor j/p \rfloor  \}}  \frac{(a_\mathrm{max}\Upsilon T)^l}{l!}\sum_{\substack{j_1\dots j_l\in \sym\mathbb{Z}_+\geq p \\ j_1+\dots+j_l=j}}  \left(\prod_{\kappa=1}^l  \frac{\acomm^{(j_\kappa + 1)}}{(j_\kappa + 1)^2}\right).
    \end{aligned}
    \end{align} 
    So far we have considered the terms in line~(\ref{Eq:Iterate_K}).
    We now consider line~(\ref{Eq:Remainder_Term}), which we will denote as $\tilde{F}_K(T,s)$.
    Applying the triangle inequality and utilizing unitarity in a similar manner as above, we can check that the operator norm of $\tilde{F}_{K}(T,s)$ is bounded by
    \begin{align}
    \begin{aligned}
        \norm{\tilde{F}_K(T,s)} 
        &\leq \frac{T^K}{K!} \norm{\Delta(sT)}^K \\
        &\leq \frac{T^K}{K!}\left( \sum_{j\in \sym\mathbb{Z}_+\geq p} \norm{E_{j+1}} (sT)^j\right)^K \,,
    \end{aligned}
    \end{align}
    which vanishes for $K \rightarrow \infty$, and we take this limit to complete the proof and simplify our notation as $\tilde{E}_{j+1,\infty}(T) \rightarrow \tilde{E}_{j+1}(T)$.
\end{proof}

\end{document}